\documentclass[a4paper]{article}
\usepackage[pdftex]{hyperref}
\usepackage{amsmath}
\usepackage[english]{babel}
\usepackage{latexsym}
\usepackage{amssymb}
\usepackage{amscd}
\usepackage{amsgen,amstext,amsbsy,amsopn}
\usepackage{math rsfs}

\usepackage{amsthm,epsfig,graphicx,graphics}
\usepackage[latin1]{inputenc}
\usepackage{xspace}
\usepackage{amsxtra}

\usepackage{color}

\usepackage{amstext}
\usepackage{amsxtra}

\usepackage{latexsym}

\usepackage{amscd}

\usepackage{amsgen,amstext,amsbsy,amsopn}
\usepackage{math rsfs}


\setlength{\textwidth}{16cm}
\addtolength{\evensidemargin}{-1.5cm}
\addtolength{\oddsidemargin}{-1.5cm}


\numberwithin{equation}{section}
\newcommand{\bdm}{\begin{displaymath}}
\newcommand{\edm}{\end{displaymath}}
\newcommand{\bdn}{\begin{eqnarray}}
\newcommand{\edn}{\end{eqnarray}}
\newcommand{\bay}{\begin{array}{c}}
\newcommand{\eay}{\end{array}}
\newcommand{\ben}{\begin{enumerate}}
\newcommand{\een}{\end{enumerate}}
\newcommand{\beq}{\begin{equation}}
\newcommand{\eeq}{\end{equation}}

\newcommand{\lf}{\left}
\newcommand{\ri}{\right}

\newcommand{\rv}{\vec{r}}

\newcommand{\diff}{\mathrm{d}}
\newcommand{\eps}{\varepsilon}

\newcommand{\avi}{a_i}

\newcommand{\gpf}{\mathcal{E}^{\mathrm{GP}}}
\newcommand{\gpe}{E^{\mathrm{GP}}}
\newcommand{\gpm}{\Psi^{\mathrm{GP}}}
\newcommand{\chem}{\mu^{\mathrm{GP}}}

\newcommand{\hgpf}{\hat{\mathcal{E}}^{\mathrm{GP}}}
\newcommand{\hgpe}{\hat{E}^{\mathrm{GP}}}
\newcommand{\hchem}{\hat{\mu}^{\mathrm{GP}}}

\newcommand{\Fg}{\mathcal{F}}

\newcommand{\dg}{\mbox{deg}}
\newcommand{\curl}{\mbox{curl}}

\newcommand{\tff}{\mathcal{E}^{\mathrm{TF}}}
\newcommand{\tfd}{\mathcal{D}^{\mathrm{TF}}}
\newcommand{\tfe}{E^{\mathrm{TF}}}
\newcommand{\tfm}{\rho^{\mathrm{TF}}}
\newcommand{\rtf}{R_{\mathrm{h}}}
\newcommand{\rt}{R_<}

\newcommand{\tfchem}{\mu^{\mathrm{TF}}}

\newcommand{\ann}{\mathcal{A}}

\newcommand{\optphtf}{\omega^{\mathrm{TF}}}

\newcommand{\gaintf}{H^{\mathrm{TF}}}

\newcommand{\costtf}{F^{\mathrm{TF}}}

\newcommand{\half}{\hbox{$\frac12$}}


\newcommand{\Z}{\mathbb{Z}}

\newcommand{\NN}{\mathcal{N}}

\newcommand{\CC}{\mathcal{C}}
\newcommand{\D}{\mathcal{D}}
\newcommand{\F}{\mathcal{F}}
\newcommand{\E}{\mathcal{E}}
\newcommand{\A}{\mathcal{A}}
\newcommand{\B}{\mathcal{B}}

\newcommand{\OO}{\mathcal{O}}

\newcommand{\al}{\alpha}
\newcommand{\alt}{\tilde{\alpha}}
\newcommand{\ep}{\varepsilon}

\newcommand{\Om}{\Omega}
\newcommand{\om}{\omega}

\newcommand{\dd}{\partial}

\newcommand{\one}{\mathrm{Id}}

\newcommand{\tfa}{\A ^{\mathrm{TF}}}

\newcommand{\gpfa}{\mathcal{E}^{\mathrm{GP}}_{\A}}

\newcommand{\DirC}{\delta_*}
\newcommand{\xiBL}{\xi_{\mathrm{BL}}}
\newcommand{\rtt}{\tilde{R}}
\newcommand{\rhob}{\rho}

\newcommand{\rb}{\bar{R}}
\newcommand{\rd}{R_>}

\newcommand{\barh}{\bar{h}}
\newcommand{\Fbin}{\bar{F}_{\mathrm{in}}}
\newcommand{\Fbout}{\bar{F}_{\mathrm{out}}}
\newcommand{\Fb}{\bar{F}}

\newcommand{\xt}{\tilde{x}}
\newcommand{\yt}{\tilde{y}}
\newcommand{\Gt}{\tilde{G}}
\newcommand{\rhot}{\tilde{\rho}}
\newcommand{\Kt}{\tilde{K}}

\newcommand{\at}{\tilde{\A}}
\newcommand{\ab}{\A_{\mathrm{bulk}}}
\newcommand{\Rb}{R_{\mathrm{bulk}}}

\newcommand{\Gb}{\Gamma}
\newcommand{\Cet}{\CC_{R_*}}

\newcommand{\jt}{\tilde{j}}
\newcommand{\mut}{\tilde{\mu}}

\newcommand{\muc}{\check{\mu}}

\newcommand{\ac}{\ab}
\newcommand{\hche}{\check{h}}
\newcommand{\Iche}{\check{I}}

\newcommand{\musta}{\mu_{*}}

\newcommand{\xiin}{\xi_{\mathrm{in}}}

\newcommand{\chiin}{\chi_{\mathrm{in}}}
\newcommand{\chiout}{\chi_{\mathrm{out}}}

\newcommand{\Jin}{J_{\mathrm{in}}}
\newcommand{\Jout}{J_{\mathrm{out}}}

\newcommand{\jm}{j_{\mathrm{mod}}}
\newcommand{\mum}{\mu_{\mathrm{mod}}}

\newcommand{\rgaintf}{\tilde{H}^{\mathrm{TF}}}


\newcommand{\Omc}{\Om_{\mathrm{c}1}}
\newcommand{\Omcc}{\Om_{\mathrm{c}2}}
\newcommand{\Omccc}{\Om_{\mathrm{c}3}}

\newcommand{\Hc}{H_{\mathrm{c}1}}
\newcommand{\Hcc}{H_{\mathrm{c}2}}
\newcommand{\Hccc}{H_{\mathrm{c}3}}

\newcommand{\diam}{\mathrm{diam}}


\newtheorem{teo}{Theorem}[section]
\newtheorem{lem}{Lemma}[section]
\newtheorem{pro}{Proposition}[section]

\newtheorem{defi}{Definition}[section]

\newcounter{remark}[section]
\newenvironment{rem}{\stepcounter{remark} \vspace{0,1cm} \noindent \textbf{Remark \thesection.\theremark.}}{\vspace{0,2cm}}


\pagestyle{myheadings} \sloppy

\begin{document}

\markboth{\scriptsize{Vortex Rings in rotating BECs}}{\scriptsize{Vortex Rings in rotating BECs}}

\title{Vortex Rings in Fast Rotating Bose-Einstein Condensates}

\author{N. Rougerie	\\	\normalsize\it	Universit\'{e} de Cergy-Pontoise et CNRS\\ \normalsize\it Laboratoire AGM \\ \normalsize\it 2, avenue Adolphe Chauvin, 95302 Cergy-Pontoise Cedex FRANCE \\ \normalsize\it \hspace{-.5 cm}}

\date{April, 2011}

\maketitle

\begin{abstract} 

When Bose-Einstein condensates are rotated sufficiently fast, a giant vortex phase appears, that is the condensate becomes annular with no vortices in the bulk but a macroscopic phase circulation around the central hole. In a former paper [\textsc{M. Correggi, N. Rougerie, J. Yngvason}, {\it Communications in Mathematical Physics} \textbf{303}, 451-308 (2011)] we have studied this phenomenon by minimizing the two dimensional Gross-Pitaevskii energy on the unit disc. In particular we computed an upper bound to the critical speed for the transition to the giant vortex phase. In this paper we confirm that this upper bound is optimal by proving that if the rotation speed is taken slightly below the threshold there are vortices in the condensate. We prove that they gather along a particular circle on which they are uniformly distributed. This is done by providing new upper and lower bounds to the GP energy. 
 
	\vspace{0,2cm}

	MSC: 35Q55,47J30,76M23. PACS: 03.75.Hh, 47.32.-y, 47.37.+q.
	\vspace{0,2cm}
	
	Keywords: Bose-Einstein Condensates, Vortices, Gross-Pitaevskii energy.
\end{abstract}

\tableofcontents

\section{Introduction}\label{sec:Intro}

A Bose gas trapped in a magnetic potential exhibits the remarkable property that, at sufficiently low temperatures,  a macroscopic fraction of the atoms are in the same quantum state. This phenomenon is referred to as \emph{Bose Einstein Condensation} and has been first observed experimentally by the Jila and MIT groups in 1995 (2001 Nobel prize in physics attributed to Cornell, Wieman and Ketterle).
Particularly interesting experiments consist in creating a Bose-Einstein condensate (BEC) by cooling atomic gases in a magnetic trap, and setting the trap into rotation. Indeed, a BEC is a superfluid and responds to rotation by the nucleation of quantized vortices. There is a rich literature on this subject, both in physics and in mathematics. We refer the reader to \cite{A,Fe1} for extensive lists of references.\\
When dealing with rotating BECs, a strong distinction has to be made according to the type of trapping potential that is being used. If the potential is harmonic \footnote{which, in this context, always means `quadratic'}, there exists a critical speed at which the centrifugal forces overcome the trapping force and drive the gas out of the trap. By contrast, in the experiments of \cite{Exp1,Exp2}, a blue-detuned laser is applied to the condensate, resulting in a stronger confinement that prevents such a behavior. To illustrate the difference, let us recall briefly what is observed or expected when rotating a BEC at increasing speeds.\\

If the trapping potential is harmonic, as in most experiments, the gas set into rotation nucleates more and more vortices as the rotation speed is increased. Eventually, when vortices become densely packed in the condensate, they form a hexagonal lattice, called Abrikosov lattice by analogy with the physics of type II superconductors (see \cite[Chapter 5]{A} for a mathematical study of this phenomenon). If the rotation is very close to the limiting speed set by the confinement it is believed that the gas will enter a regime where the atoms are strongly correlated. Such a phenomenon has not yet been observed, see \cite{Co} for a review of the physics literature on this subject and \cite{LS} for recent mathematical results.\\

If the trapping potential increases faster than the square of the distance from the center of the trap, it is theoretically possible to explore regimes of arbitrary rotation rates, as first noted in \cite{Fe2}. For a potential of this type, variational arguments have been proposed in \cite{FB} to support the following picture~: three successive phase transitions should occur when increasing the rotation speed, at which the ground state and its vortex structure change drastically. For slow rotation speeds, there are no vortices and the condensate is at rest in the rotating frame. When reaching a first critical speed $\Omc$, vortices start to appear in the condensate and organize themselves in order to minimize their repulsion. This leads to a vortex-lattice state where a dense hexagonal lattice of vortices is observed. There is then a second critical speed $\Omcc$ where the centrifugal force dips a hole in the center of the condensate. This results in a vortex-lattice-plus-hole state where the condensate is annular and still supports a dense lattice of vortices. When the third critical speed $\Omccc$, that will be our main interest in this paper, is reached, vortices are expected to retreat from the annular bulk of the BEC. There should however remain a macroscopic phase circulation (circular superflow) in the annulus. These phenomena have motivated numerous theoretical and numerical studies  \cite{FJS,FZ,KB,KF,KTU}. In particular, it is observed numerically (see e.g. \cite{FJS}) that in the transition regime between the vortex-lattice-plus-hole state and the giant vortex state, an other phase appears which displays a single vortex circle around the central hole.\\
In this paper we aim at providing an estimate of the third critical speed in the limit where the coupling constant measuring the inter-atomic interactions tends to infinity (Thomas-Fermi regime). More precisely we build on the results of \cite{CRY} where a rigorous upper bound to the critical speed has been obtained and aim at proving the corresponding lower bound. We also want to prove rigorously that when the rotation speed is close but below the critical speed, a circle of vortices is present in the annular bulk of the condensate.\\

Our setting is the same as in \cite{CDY1,CY,CRY} : we consider the Gross-Pitaevskii energy for a Bose-Einstein condensate in a two-dimensional `flat' trap, i.e. a potential equal to $0$ in the unit disc $\B$ and $+\infty$ outside, leading to a problem posed on the unit disc. Formally this is the limit as $s\rightarrow +\infty$ of a homogeneous trap of the form $V(r)=r^s$ (see \cite{CDY2}). In the rotating frame, the energy reads 
\begin{equation}\label{fonctionelleGP}
	\gpf[\Psi] := \int_{\B}  \left| \left( \nabla - i \Om x^{\perp} \right) \Psi \right|^2 - \Omega^2 r^2 |\Psi|^2 + \eps^{-2} |\Psi|^4,
\end{equation}
and it should be minimized under the mass constraint
\begin{equation}\label{masse}
\int_{\B} |\Psi | ^2 = 1.
\end{equation}
The terms in the energy have the following interpretation. The first one is the kinetic energy, including the contribution of Coriolis forces due to the transformation to the rotating frame. The second term takes into account the centrifugal force while the third one models the interactions between the atoms. We denote by $\gpe$ and $\gpm$ respectively the ground-state energy and a ground state \footnote{its existence follows by standard techniques, see e.g. \cite[Section 2]{IM1}} (a priori not unique) of (\ref{fonctionelleGP})
\begin{equation}\label{fondamental}
 \gpe := \inf _ {\int_{\B} |\Psi | ^2 = 1} \gpf [\Psi] = \gpf [\gpm].
\end{equation}
Any GP minimizer satisfies the Euler-Lagrange equation
\begin{equation}
	\label{equationGP}
	- \left( \nabla -i \Om x^{\perp} \right) ^2 \gpm - \Om ^2 r^2 \gpm+ 2 \eps^{-2} \lf| \gpm \ri|^2 \gpm = \chem \gpm,
\end{equation}
where $\chem$ is the chemical potential associated with the mass constraint (\ref{masse}).\\ 
To evaluate the critical speed for the appearance of the giant vortex we consider the asymptotic behavior of $\Psi^{\rm GP}$ and $\gpe$ as $\eps\to 0$. For $|\log \ep |\ll \Om \ll \frac{1}{\ep ^2 |\log \ep|}$ it has been proved \cite{CY,CPRY} that many vortices are present and evenly distributed in the bulk of the condensate. Also the critical speed for the condensate to develop a hole (second critical speed in the above terminology) is estimated. It behaves as  $\Om \sim 2 \left( \sqrt \pi \ep \right)^{-1}$.\\ 
In a recent work \cite{CRY} (see also \cite{R} where a related problem is treated) we have proved that if 
\begin{equation}\label{ordreOmega}
\Om = \frac{\Om_0}{\ep ^2 |\log \ep|} 
\end{equation}
where $\Om_0$ is a fixed constant chosen above the critical value $2(3\pi )^{-1}$
then the support of any GP minimizer is essentially vortex free. We have thus confirmed that a transition occurs in the regime where $\Om$ scales as (\ref{ordreOmega}) from a vortex lattice plus hole state (as described in \cite{CY}) to a pure giant vortex state. We have also provided an upper bound to the critical speed at which the transition is expected to occur : 
\begin{equation}\label{seuil}
\Omccc \leq \frac{2}{3\pi \ep ^2 |\log \ep|}(1+o(1)) .
\end{equation}
In the present paper we aim at providing the corresponding lower bound. Namely, we want to show that if 
\begin{equation}\label{seuil2}
 \Om < \frac{2}{3 \pi \ep ^2 |\log \ep|} 
\end{equation}
then there are individual vortices in the bulk of the condensate.\\
More precisely we consider rotation speeds of the form
\begin{equation}\label{regime}
 \Om = \frac{2}{3 \pi \ep ^2 |\log \ep|} - \frac{\Om_1}{\ep^{2}|\log \ep|} 
\end{equation}
where $\Om_1 = \Om_1 (\ep)$ satisfies
\begin{eqnarray}\label{Omega1}
  \Om_1 &>& 0 
\\ |\Om _ 1| &\ll& 1 \label{Omega1bis}
\\ |\Om _ 1| &\gg& \frac{\log |\log \ep |}{|\log \ep|} \label{Omega1ter} .
\end{eqnarray}
The first two assumptions on $\Om_1$ are natural : we want to study slightly subcritical speeds, namely we want to be below the threshold $2(3 \pi \ep ^2 |\log \ep|) ^{-1}$ by a relatively small amount. The third assumption is technical. We do believe that the results we are going to present stay true without this assumption but the proofs require some improvements to apply to a regime where $\Om_1$ is allowed to be extremely small.\\

Stated loosely, our result is that in the regime (\ref{regime}), with $\Om_1$ satisfying (\ref{Omega1}), (\ref{Omega1bis}) and (\ref{Omega1ter}), there are vortices in the annular bulk of the condensate. They are uniformly distributed on a particular circle of radius $ R_*$ and we are able to estimate their number, which is of order $\Om_1\ep ^{-1}$.\\ 
If one assumes that there is a uniquely determined third critical speed separating the giant vortex phase from a phase containing vortices in the bulk, these results complete the proof started in \cite{CRY} that the third critical speed is asymptotically equal to $2\left( 3 \pi \ep ^2 |\log \ep| \right)^{-1}$ in the limit of small $\ep$ :
\begin{equation}\label{Omtrans}
\Omccc = \frac{2}{3\pi \ep ^2 |\log \ep|}(1+o(1)).
\end{equation}
In fact, there exists actually a uniquely determined third critical speed. This can be seen from the fact that in our analysis we use the asumptions on $\Om_1$ only when proving that the vortices are distributed along a circle. The existence of vortices in the bulk follows from our analysis under the weaker assumption that $\Om = \al \left( \ep^2 |\log \ep|\right) ^{-1}$ with $\al < 2 (3\pi) ^{-1}$.\\
On the other hand, determining the distribution of vortices for rotation speeds of the form $\Om = \al \left( \ep^2 |\log \ep|\right) ^{-1}$ with $\al < 2 (3\pi) ^{-1}$ requires additional ingredients. As this will be the subject of a future work we do not elaborate more on this point here and leave aside for the present contribution the proof of the existence of a unique $\Omccc$. 

\subsection{Statement of the Main Results}\label{sousec:Results}

We first recall some notation that was introduced in \cite{CRY}. The following Thomas-Fermi energy functional will be of importance in our analysis because it gives the leading order of the GP energy when $\ep$ becomes small with the above scaling of the rotation speed : 
\begin{equation}\label{fonctionelleTF}
	\tff[\rho] : = \int_{\B}  \left( - \Omega^2 r^2 \rho + \eps^{-2} \rho^2 \right).
\end{equation}	
Here $\rho$ is a matter density, it thus plays the role of $| \Psi | ^2$. In particular it is positive. We minimize $\tff$ with respect to all positive $\rho$'s normalized in $L^1$
\[
\tfe = \inf \left\lbrace \tff [\rho], \: \rho \geq 0,\: \int_{\B} \rho = 1 \right\rbrace
\]
and find a unique ground-state given by the radial density
\begin{equation}\label{minimiseurTF}
	\tfm(r) = \frac{1}{2} \lf[ \eps^2 \tfchem + \eps^2 \Omega^2 r^2 \ri]_+ = \frac{\eps^2 \Omega^2}{2} \lf[ r^2 - \rtf^2 \ri]_+,
\end{equation}		
where the chemical potential is fixed by normalizing $ \tfm $ in $ L^1(\B) $, i.e.,
\begin{equation}
	\label{tfchem}
	\tfchem = \tfe + \eps^{-2} \lf\| \tfm \ri\|^2_2.
\end{equation}
Note that the TF minimizer is a compactly supported function, since it vanishes outside $ \tfa $, i.e., for $ r \leq \rtf $, where
\begin{equation}\label{anneau}
	\rtf = \sqrt{1 - \frac{2}{\sqrt{\pi} \eps \Omega}}, \quad \tfa = \left\lbrace \rtf \leq r \leq 1\right\rbrace .
\end{equation}
The corresponding ground state energy can be explicitly evaluated and is given by
\begin{equation}
	\label{TFe}
	\tfe = - \Omega^2 \lf( 1 - \frac{4}{3 \sqrt{\pi} \eps \Omega} \ri). 
\end{equation}
We stress that the annulus $ \tfa $ has a shrinking width of order $ \eps |\log\eps| $ and that the leading order term in the ground state energy asymptotics is $ - \Omega^2 $, which is due to the convergence of $ \tfm $ to a delta function supported at the boundary of the trap.\\

When $\ep \rightarrow 0$ with the scaling (\ref{ordreOmega}), the leading order of the energy is given by the TF energy and the GP density  $|\gpm|^2$ is very close to $\tfm$. In particular, the mass of $\gpm$ becomes exponentially small in the central hole $\B \setminus \tfa$ and the properties of the ground state and ground state energy are well approximated by a functional restricted to the TF annulus. For technical reasons however it is necessary to consider a slightly larger annulus
\beq
	\label{the annulus}
	\A : = \lf\{ \rv  \: : \: \rt \leq r = \left|\:\vec{r}\:\right| \leq 1 \ri\},
\eeq 
with the choice \footnote{the exponent $8/7$ in formula \eqref{the inner radius} is choosed for definiteness : any exponent between $1$ and $7/6$ would do, see the discussion at the beginning of \cite[Section 3.1]{CRY}}
\beq
	\label{the inner radius}
	\rt : = \rtf - \eps^{8/7}.
\eeq 

In order to give a precise statement, we need to be more specific. Namely we need to take into account the leading order (i.e. $\tfe$) and the sub-leading order of the GP energy via a simplified functional : For any $a \in \Z$ and any \textbf{real-valued} wave-function $\phi$ we introduce the `giant vortex energy':
\begin{equation}\label{energieVG}
\hgpf_{\A,a} [\phi]:= \gpfa [\phi \: e^{i\left( \left[\Om\right] - a \right)\theta }] = \int_{\A} \left| \nabla \phi \right| ^2 +\frac{1}{\ep ^2} |\phi |^4 +\left( \frac{\left( \left[\Om\right] - a \right) ^2}{r^2}-2\Om \left( \left[\Om\right] - a \right) \right)|\phi | ^2
\end{equation} 
where $[\Om]$ stands for the integer part of $\Om$. Let us denote $g_{\A,a}$ and $\hgpe_{\A,a}$ the ground state (unique up to a sign that we fix by requiring that $g_{\A,a}>0$) and ground state energy of this functional, see \cite[Proposition 2.3]{CRY}. We recall the following result (Proposition 3.2 in \cite{CRY})

\begin{pro}[\textbf{\cite{CRY} Properties of the optimal phase and associated density}]\label{pro:optimalphase}\mbox{}\\
There exists at least one $\om \in \Z$ minimizing $\hgpe_{ \A, a}$ with respect to $a$. Moreover we have
\beq\label{est omega}
\omega=\frac2{3\sqrt\pi \eps} \left(1+ \OO(|\log\eps|^{-1})\right)
\eeq
and
\begin{equation}\label{compatibility}
\int_{\A} g_{\om} ^2 \left( \Om -  \left( [\Om] - \om \right) r ^{-2}\right) = \OO (1)
\end{equation}
where $g_{\om}$ is the unique positive normalized minimizer of $\hgpf_{\A,\om}$ .
\end{pro}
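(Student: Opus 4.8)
\medskip

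\noindent\textbf{Proof proposal.} The argument (in the spirit of \cite{CRY}) rests on completing the square in the potential of \eqref{energieVG}: for every $a\in\Z$ (indeed for any real number in place of $[\Om]-a$) and any normalized real $\phi$,
\begin{equation}\label{eq:csq}
\hgpf_{\A,a}[\phi]=\Eg_0[\phi]+\int_{\A}\frac{\big([\Om]-a-\Om r^{2}\big)^{2}}{r^{2}}\,|\phi|^{2},
\qquad
\Eg_0[\phi]:=\int_{\A}\Big(|\nabla\phi|^{2}+\eps^{-2}|\phi|^{4}-\Om^{2}r^{2}|\phi|^{2}\Big),
\end{equation}
so that $\Eg_0$ is independent of $a$ and the remaining integral is nonnegative. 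Dropping the gradient and using that $\tfm$, the $L^{1}$-normalized minimizer of $\tff$, is supported in $\{r\ge\rtf\}\subset\A$ by \eqref{minimiseurTF}, one has the coercivity bound $\Eg_0[\phi]\ge\tff[|\phi|^{2}]\ge\tfe+\eps^{-2}\big\||\phi|^{2}-\tfm\big\|_{L^{2}(\A)}^{2}$. Existence of a minimizing $\om\in\Z$ is then immediate: since $r\le1$ on $\A$, \eqref{energieVG} gives $\hgpf_{\A,a}[\phi]\ge\inf_{\rt\le r\le1}\big(([\Om]-a)^{2}r^{-2}-2\Om([\Om]-a)\big)=([\Om]-a-\Om)^{2}-\Om^{2}\to+\infty$ as $|a|\to\infty$, while $\hgpe_{\A,a}$ stays bounded for $a$ in a fixed window; for fixed $a$, existence, positivity and uniqueness of $g_{\A,a}$ are standard, see \cite[Proposition 2.3]{CRY}.

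\noindent\emph{The relation \eqref{compatibility} --- the easy part.} Since $\om$ minimizes $\hgpe_{\A,a}$ over $\Z$ and $g_{\om}e^{i([\Om]-(\om\pm1))\theta}$ competes in $\gpfa$, minimality and \eqref{energieVG} give $\hgpf_{\A,\om}[g_{\om}]\le\hgpf_{\A,\om\pm1}[g_{\om}]$; as the gradient and quartic terms do not depend on $a$, this reads $\int_{\A}(V_{\om\pm1}-V_{\om})g_{\om}^{2}\ge0$ with $V_{a}(r):=([\Om]-a)^{2}r^{-2}-2\Om([\Om]-a)$. Writing $n:=[\Om]-\om$, a direct computation yields $V_{\om+1}-V_{\om}=2(\Om-nr^{-2})+r^{-2}$ and $V_{\om-1}-V_{\om}=-2(\Om-nr^{-2})+r^{-2}$, hence
\[
-\frac12\int_{\A}\frac{g_{\om}^{2}}{r^{2}}\ \le\ \int_{\A}g_{\om}^{2}\Big(\Om-\frac{[\Om]-\om}{r^{2}}\Big)\ \le\ \frac12\int_{\A}\frac{g_{\om}^{2}}{r^{2}}\,.
\]
Since $\|g_{\om}\|_{L^{2}(\A)}=1$ and $r\ge\rt\to1$, the two outer quantities are $\OO(1)$, which is \eqref{compatibility}.

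\noindent\emph{The estimate \eqref{est omega}.} First I would control $\hgpe_{\A,\om}$ from both sides. For the upper bound, insert in \eqref{eq:csq} the trial function obtained by truncating $\sqrt{\tfm}$ in a thin layer near the inner edge $r=\rtf$ (remaining inside $\A$, which the enlarged annulus \eqref{the inner radius} accommodates) so as to keep its Dirichlet energy --- dominated by the square-root singularity of $\sqrt{\tfm}$ at $\rtf$ --- and the renormalization cost small, of order $\OO(\eps^{-2})$; choosing the integer competitor $a_{0}$ with $[\Om]-a_{0}$ the closest integer to $\Om$, the penalty term in \eqref{eq:csq} for this trial state is $\OO(\eps^{-2})$, so $\hgpe_{\A,\om}\le\hgpe_{\A,a_{0}}\le\tfe+\OO(\eps^{-2})$. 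With $\Eg_0[g_{\om}]\ge\tfe$ and the coercivity of $\tff$ this gives $\big\|g_{\om}^{2}-\tfm\big\|_{L^{2}(\A)}^{2}\le\eps^{2}(\hgpe_{\A,\om}-\tfe)=\OO(1)$ and therefore, by Cauchy--Schwarz and $|\A|=\OO(\eps|\log\eps|)$,
\[
\big\|g_{\om}^{2}-\tfm\big\|_{L^{1}(\A)}=\OO\big((\eps|\log\eps|)^{1/2}\big)\longrightarrow0 .
\]
The same penalty bound forces $|n-\Om|=\OO(\eps^{-1})$, so $n=\Om(1+\OO(\eps|\log\eps|))$ and $\|\Om-nr^{-2}\|_{L^{\infty}(\A)}=\OO(\eps^{-1})$. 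Now combine \eqref{compatibility} with the explicit moments of $\tfm\propto(r^{2}-\rtf^{2})_{+}$, namely $\int_{\A}\tfm(1-r^{2})=\tfrac13(1-\rtf^{2})=\tfrac{2}{3\sqrt\pi\,\eps\Om}$ (using \eqref{anneau}), so that $\int_{\A}\tfm\,r^{-2}=1+\tfrac{2}{3\sqrt\pi\,\eps\Om}+\OO((\eps|\log\eps|)^{2})$, and pair the $L^{1}$-closeness of $g_{\om}^{2}$ to $\tfm$ against the small quantity $\Om-nr^{-2}$:
\[
\OO(1)=\int_{\A}g_{\om}^{2}\big(\Om-nr^{-2}\big)=(\Om-n)-\frac{2}{3\sqrt\pi\,\eps}+\OO\big(\eps^{-1/2}|\log\eps|^{1/2}\big).
\]
Thus $\Om-n=\tfrac{2}{3\sqrt\pi\,\eps}+\OO(\eps^{-1/2}|\log\eps|^{1/2})$ and $\om=[\Om]-n=\tfrac{2}{3\sqrt\pi\,\eps}\big(1+\OO(|\log\eps|^{-1})\big)$, which is \eqref{est omega}.

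\noindent The main obstacle is the energy upper bound in the last step: one needs a trial state close to $\sqrt{\tfm}$ whose Dirichlet energy near $r=\rtf$ is controlled at the required precision, and this is precisely what the enlarged annulus \eqref{the inner radius} is there to permit. Granted that, the coercivity step, the discrete Euler--Lagrange inequality for \eqref{compatibility}, and the moment computations are routine.
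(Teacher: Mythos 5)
Your proposal is correct, and the first thing to note is that the present paper does not prove this proposition at all: it is recalled verbatim from \cite{CRY} (Proposition 3.2 there), with only the remark that the proof carries over under the assumptions \eqref{regime}--\eqref{Omega1ter}. Your argument follows essentially the same strategy as \cite{CRY}: the compatibility relation \eqref{compatibility} is obtained from the discrete variation $a=\om\pm 1$ tested on $g_{\om}$, and the value \eqref{est omega} of $\om$ then follows from the $L^2$-closeness of $g_{\om}^2$ to $\tfm$ (via completing the square and the coercivity of $\tff$ around $\tfm$) combined with the moment identity $\int\tfm\,(1-r^2)=\tfrac13(1-\rtf^2)=\tfrac{2}{3\sqrt{\pi}\eps\Om}$; your error $\OO(\eps^{-1/2}|\log\eps|^{1/2})$ is even smaller than the stated $\OO(\eps^{-1}|\log\eps|^{-1})$, so \eqref{est omega} follows. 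The only step you leave schematic, the upper bound $\hgpe_{\A,\om}\le\tfe+\OO(\eps^{-2})$, is indeed the one that genuinely requires the regularization you mention, since $\sqrt{\tfm}$ is not in $H^1$ (its gradient energy diverges logarithmically at $r=\rtf$); a truncation in a layer of width polynomial in $\eps$ gives gradient energy $\OO(\eps^2\Om^2|\log\eps|)=\OO(\eps^{-2})$, which is exactly the precision needed, so the gap is only one of detail, not of substance.
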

Note that the above proposition was proved in \cite{CRY} with a slightly different assumption on $\Om$ but the results stays true with no modification of the proof under asumptions \eqref{regime} to \eqref{Omega1ter}.\\

The key to the results of \cite{CRY} in the super-critical case is that the asymptotics of the ground state and ground state energy of (\ref{fonctionelleGP}) are described with a very good precision by $g_{\A,\om}e^{i\left( \left[\Om\right] - \om \right)\theta }$ (we will denote $g_{\A,\om}$ by $g$ for short) and $\hgpe_{\A,\om}$ respectively.\\ 
In particular, a minimizer of (\ref{fonctionelleGP}) contains a large vorticity similar to that generated by a giant vortex located at the origin, which is apparent in the phase factor $e^{i\left( \left[\Om\right] - \om \right)\theta }$. To identify individual vortices in the annulus where the mass is concentrated we thus have to look at a reduced function as in Section 5 of \cite{ABM}. More precisely we extract the giant-vortex density and phase factor by introducing
\begin{equation}\label{u}
 u : = \frac{\gpm}{g \: e^{i\left([\Om] - \om \right) \theta} }
\end{equation}
which is defined only on $\A$.\\

Our main results deal with the vortices of $u$ and are stated, as is usual, in terms of \emph{vorticity measures} of $u$.
There actually exist two main paths to the definition of an appropriate vorticity measure. One is to use an `intrinsic' vorticity measure as, e.g., in \cite{ABM,SS}. This measure is defined as the $\curl$ of the superfluid current :
\begin{equation}\label{mesurevorticite}
 \mu := \curl (iu, \nabla u) = \frac{1}{2} \curl \left( i \left(u \nabla \bar{u} - \bar{u} \nabla u  \right)\right).
\end{equation}
The other one, used for example in \cite{CY,CPRY}, is to explicitly identify a collection of balls \footnote{$B(a_j,r_j)$ denotes the ball of center $a_j$ and radius $r_j$} $B_j = B (a_j,r_j),j\in J$ on the boundary of which the degree $d_j$ of $u$ is well defined and set 
\begin{equation}\label{mesurevorticite exp}
\mu_e := 2\pi \sum_{j\in J} d_j \delta_{a_j}
\end{equation}
with $\delta_{a_j}$ the Dirac mass at $a_j$. Actually, a classical technique (`Jacobian estimates' due to Jerrard and Soner \cite{JS}) allows to relate these two notions of vorticity, see Section \ref{sousec:Sketch}. In this paper we will prove results about both the `intrinsic vorticity measure' (Theorem \ref{theo:vorticity}) and an `explicit vorticity measure' (Theorem \ref{theo:vorticity exp} that includes the complete definition of $\mu_e$) : both vorticity measures are close to a Delta function concentrated on some particular circle of radius $R_*$ (see Appendix A for the definition of $R_*$).\\

We have to deal with a technical point before stating our main theorems. In \cite{CRY} we had identified the `short-range' energetic cost (by opposition to a `long-range' cost that we will identify in the present paper) of a vortex of degree $1$ at some point $\vec{r}$. It is given in terms of the following \emph{cost function}
\beq \label{fonctioncout}
H(r) :=\frac{1}{2} g^2(r)|\log\eps|+F(r)
\eeq
where $F$ is defined as 
\begin{equation}\label{F}
F(r):= 2 \int_{\rt} ^r \: g ^2  (s) \left(\Om  s- \left([\Om] - \om \right)\frac{1}{s}\right) ds .
\end{equation}
The dependence on $g$ makes it very difficult to identify vortices close to the inner boundary of $\A$. Indeed, $\tfm$ vanishes on $\partial B_{R_h}$, so $g$ is very small there (compared for example with its value on $\partial \B$) and the energetic cost of a vortex too close to $\partial B_{R_h}$ can not be taken into account with our method. To avoid this difficulty we limit ourselves to a statement about the asymptotics of $\mu$ on a smaller domain $\ab$ where the bulk of the mass is concentrated and the density is large enough:
\begin{equation}\label{defiAb}
\ab := \left\{ \vec{r} \: | \: \Rb \leq r \leq 1\right\}, \quad \Rb := R_h + \ep |\log \ep| \Om_1 ^{1/2}.
\end{equation}
Note that, it follows from the analysis in \cite[Section 2.1]{CRY} (the relevant result is recalled in Proposition \ref{cry pro:GP exp small} below)
\[
\int_{\ab} |\gpm | ^2  = 1 - \OO (\Om _ 1 ^{1/2}).
\]
This justifies the notation : $\ab$ indeed contains the bulk of the mass when $\Om_1 \ll 1$. The radius $R_*$ appearing in the theorems below is an approximate minimizer of the cost function $H$ (see Appendix A).\\

Before stating the vorticity asymptotics we need to anticipate on the energy asymptotics of Theorem \ref{theo:energy} and introduce the following notions. For any Radon measure $\nu$ supported in $\A$ we define $h_{\nu}$ as the unique solution to the elliptic problem 
\begin{equation}\label{defihnu}
	 \begin{cases}
		-\nabla \left( \frac{1}{g^2} \nabla h_{\nu}\right) = \nu \mbox{ in } \A \\
		h_{\nu} = 0 \mbox{ on } \partial \A.
	\end{cases} 
\end{equation}
Note that $g^2$ is bounded below on $\A$ because it is the ground state of a one-dimensional Schrödinger operator (see Lemma \ref{lem:rhob} below). There is thus no difficulty in defining $h_{\nu}$ as above. Next we introduce an `electrostatic energy' associated to $\nu$
\begin{equation}\label{defiInu}
 I(\nu) := \int_{\A} \frac{1}{g^2} \left| \nabla h_{\nu} \right| ^2.
\end{equation}
Then
\begin{equation}\label{defiIstar}
 I_* := \inf_{\nu \in \D_* ,\int \nu = 1} I(\nu) = I(\DirC )
\end{equation}
where the infimum is taken over the set $\D_*$ of positive Radon measures with support on the circle of radius $R_*$.
The fact that $\DirC$ is a solution (actually, the unique solution) of the above minimization problem will be proved in Proposition \ref{pro:electrostatic} below. The energy (\ref{defiInu}) is similar to that of a charge distribution $\nu$ in a shell with inhomogeneous conductivity described by $g^{-2}$.\\

In the following theorems, we use the norm
\begin{equation}\label{normeg}
\Vert \nu \Vert_g := \sup_{\phi \in C ^1 _c (\ab)} \frac{\left| \int_{\ab} \nu \phi \right|}{\left(\int_{\ab} \frac{1}{g ^2} |\nabla \phi| ^2 \right)^{1/2} + \ep |\log \ep| \Vert \nabla \phi \Vert_{L ^{\infty}(\ab)}},
\end{equation}
where $\nu$ is a Radon measure, to estimate $\mu$ and $\mu_e$. 

\begin{teo}[\textbf{Asymptotics for the intrinsic vorticity}]\label{theo:vorticity}\mbox{}\\
Let $\Om_1$, $u$ and $\mu$ be defined as above. Let $R_*$ be the radius defined in (\ref{AdefiRstar}), which satisfies 
\begin{equation}\label{defiRstar}
R_*  = \sqrt{R_h ^2  + \left(\eps \Omega \right)^{-1} \left(\frac{1}{\sqrt{\pi}} + \OO (\Om_1 ) \right)}.
\end{equation}
We denote $\DirC$ the normalized arclength measure on the circle of radius $R_*$.\\
There holds
\begin{equation}\label{resultvorticity}
 \left \Vert \mu  + \frac{H(R_*)}{2 I_*}  \DirC  \right\Vert_g \ll \frac{\Om_1}{\ep} \leq C \left\Vert \frac{H(R_*)}{2 I_*}  \DirC \right\Vert_g
\end{equation}
in the limit $\ep \rightarrow 0$, where $H$ is defined as in (\ref{fonctioncout}) and $I_*$ as in \eqref{defiIstar}.
\end{teo}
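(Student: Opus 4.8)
The plan is to supplement the energy \emph{upper} bound of Theorem~\ref{theo:energy} — obtained from an explicit trial state carrying $\sim\Om_1\ep^{-1}$ degree‑one vortices spread uniformly on $\{r=R_*\}$, and giving $\gpe\le\hgpe_{\A,\om}-\tfrac{H(R_*)^{2}}{4I_*}(1+o(1))$ — with a matching \emph{lower} bound accurate to $o(\Om_1^{2}\ep^{-2})$, and then to turn the resulting near‑optimality of $\gpm$ into the vorticity estimate via the strict convexity of the electrostatic functional $I$. To set things up I would write $\gpm=g\,e^{i([\Om]-\om)\theta}u$ on $\A$ (as in \eqref{u}) and expand $\gpfa[\gpm]$, using the variational equation for $g$ to cancel the cross terms between $g$ and $|u|$ (the Lassoued--Mironescu splitting); up to contributions that are exponentially small, because $\gpm$ carries only exponentially little mass in $\B\setminus\A$ (Proposition~\ref{cry pro:GP exp small}),
\[
\gpfa[\gpm]=\hgpe_{\A,\om}+\mathcal{F}[u],\qquad
\mathcal{F}[u]:=\int_{\A}g^{2}|\nabla u|^{2}+\frac{1}{2\ep^{2}}\int_{\A}g^{4}(1-|u|^{2})^{2}-2\int_{\A}g^{2}\,\big(\Om\,x^{\perp}-([\Om]-\om)\,x^{\perp}r^{-2}\big)\cdot(iu,\nabla u)+(\text{l.o.t.}).
\]
Together with $\gpfa[\gpm]\le\gpe+(\text{exp.\ small})$ this yields the a priori bounds $\mathcal{F}[u]=\OO(\Om_1^{2}\ep^{-2})$, $|u|\le 1+o(1)$ (maximum principle on the equation for $u$), $\ep^{-2}\int_{\A}g^{4}(1-|u|^{2})^{2}=\OO(\Om_1^{2}\ep^{-2})$, and $\int_{\ab}g^{2}|\nabla u|^{2}=\OO(\Om_1^{2}\ep^{-2})$. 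Recall that $g^{2}$ is bounded below on $\A$ (Lemma~\ref{lem:rhob}) but is extremely small near the inner boundary $r=\rt$, where $\tfm$ vanishes; this is exactly why vortex contributions can only be controlled on $\ab$, where $g^{2}\gtrsim\Om_1^{1/2}(\ep|\log\ep|)^{-1}$.

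\emph{The sharp lower bound.} This is the heart of the matter. On $\ab$ I would run a standard vortex‑ball construction (cf.\ \cite{SS}) for the set $\{|u|\le\tfrac12\}$, producing disjoint balls $B_{j}=B(a_{j},r_{j})\subset\ab$ with $\sum_{j}r_{j}\le\ep^{s}$, degrees $d_{j}=\deg(u,\partial B_{j})$, and a core lower bound
\[
\int_{\bigcup_{j}B_{j}}g^{2}|\nabla u|^{2}\ \ge\ 2\pi\sum_{j}|d_{j}|\,g^{2}(a_{j})\,\log\frac{r_{j}}{\ep}\,\big(1-o(1)\big),
\]
using that $g^{2}$ is slowly varying at the scale of the balls; grown to the inter‑vortex scale $\sim\ep\,\Om_1^{-1}$ and combined with the healing length $\sim\ep^{3/2}$ this produces the factor $\tfrac12 g^{2}|\log\ep|$ appearing in \eqref{fonctioncout}. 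The cross term is treated by Stokes' theorem — after the Jacobian estimate of \cite{JS} has replaced $\mu$ by $\mu_{e}=2\pi\sum_{j}d_{j}\delta_{a_{j}}$ up to an error $o(\Om_1^{2}\ep^{-2})$ once $s$ is taken large enough — and yields $\int_{\A}F\,d\mu_{e}$ with $F$ as in \eqref{F}; the part of the superfluid current surviving outside the cores is handled by introducing $h_{\mu_{e}}$ as in \eqref{defihnu}, writing $g\,(iu,\nabla u)=g^{-1}\nabla^{\perp}h_{\mu_{e}}+(\text{a remainder with controlled curl})$, and completing the square to extract $I(\mu_{e})$. Collecting these pieces and invoking \eqref{fonctioncout},
\[
\mathcal{F}[u]\ \ge\ \mathcal{J}(\mu_{e})-o\!\left(\Om_1^{2}\ep^{-2}\right),\qquad\mathcal{J}(\nu):=\int_{\A}H\,d\nu+I(\nu).
\]
One must moreover check that vortices of negative degree and vortices away from $\{r=R_*\}$ contribute a nonnegative surplus to leading order — the former because a reversed vortex still pays its full self‑energy while the sign of $F$ cannot compensate, the latter because $H$ has a nondegenerate minimum at $R_*$ (Appendix~A), so $H(a_{j})-H(R_*)\gtrsim(a_{j}-R_*)^{2}$. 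Keeping each of these remainders — the per‑vortex logarithmic loss, the $\ep$‑powers, the excess $H(a_{j})-H(R_*)$ — strictly below the gain $\sim\Om_1^{2}\ep^{-2}$ is exactly what forces the assumptions $\Om_1\ll 1$ and $\Om_1\gg\log|\log\ep|/|\log\ep|$ of \eqref{Omega1bis}--\eqref{Omega1ter}.

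\emph{From near‑optimal energy to the vorticity measure.} Since $I$ is a strictly convex quadratic form (Proposition~\ref{pro:electrostatic}) and, by Appendix~A, $H$ has a nondegenerate minimum at $R_*$ with $|H(R_*)|\sim\Om_1\ep^{-1}$ while $I_*\sim 1$, the functional $\mathcal{J}$ is minimized over positive Radon measures supported on $\{r=R_*\}$ by the unique $\nu_*:=-\tfrac{H(R_*)}{2I_*}\DirC$, with $\mathcal{J}(\nu_*)=-\tfrac{H(R_*)^{2}}{4I_*}$; as $\nu_*$ is an interior minimizer, $\mathcal{J}(\nu)-\mathcal{J}(\nu_*)=I(\nu-\nu_*)$. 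From the weak formulation of \eqref{defihnu} and Cauchy--Schwarz one has the elementary bound $\Vert\eta\Vert_g\le I(\eta)^{1/2}$ for every Radon measure $\eta$. Comparing the lower bound $\mathcal{F}[u]\ge\mathcal{J}(\mu_{e})-o(\Om_1^{2}\ep^{-2})$ (after discarding the nonnegative contributions of bad vortices) with $\mathcal{F}[u]=\gpe-\hgpe_{\A,\om}+(\text{exp.\ small})\le\mathcal{J}(\nu_*)(1+o(1))$ forces $\mathcal{J}(\mu_{e})=\mathcal{J}(\nu_*)+o(\Om_1^{2}\ep^{-2})$, hence (by the identity above, applied after the harmless projection of $\mu_e$ onto $\{r=R_*\}$) $I(\mu_{e}-\nu_*)=o(\Om_1^{2}\ep^{-2})$ and therefore $\Vert\mu_{e}-\nu_*\Vert_g=o(\Om_1\ep^{-1})$; the Jacobian estimate $\Vert\mu-\mu_{e}\Vert_g=o(\Om_1\ep^{-1})$ then gives $\Vert\mu-\nu_*\Vert_g\ll\Om_1\ep^{-1}$, which is the left‑hand inequality of \eqref{resultvorticity}. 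The right‑hand one follows by testing $\DirC$ against a cutoff of $h_{\DirC}$, which gives $\Vert\DirC\Vert_g\gtrsim I_*^{1/2}$, so that $\big\Vert\tfrac{H(R_*)}{2I_*}\DirC\big\Vert_g\gtrsim\tfrac{|H(R_*)|}{2I_*^{1/2}}=\tfrac12\big(\tfrac{H(R_*)^{2}}{I_*}\big)^{1/2}\gtrsim\Om_1\ep^{-1}$.

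The main obstacle is the lower bound of the second paragraph: performing the ball construction and the completing‑the‑square step simultaneously with the \emph{non‑constant, degenerate} weight $g^{2}$; localizing everything to $\ab$ without spoiling the precise constant in front of $I(\mu_{e})$; and, above all, keeping every remainder strictly below the small gain $\sim\Om_1^{2}\ep^{-2}$, which is itself the tiny difference of two quantities of size $\sim\ep^{-1}$ — namely $\tfrac12 g^{2}(R_*)|\log\ep|$ and $F(R_*)$ — so that only very tight bookkeeping survives. It is this last requirement that fixes the admissible range \eqref{Omega1bis}--\eqref{Omega1ter} of $\Om_1$, and it is also why the statement is confined to the bulk region $\ab$ and to the dual norm $\Vert\cdot\Vert_g$.
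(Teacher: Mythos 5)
Your overall strategy is the same as the paper's (match the upper bound of Section 2 with a sharp lower bound, keep the quadratic electrostatic term, use the quadratic structure of $\nu\mapsto\int H\,d\nu+I(\nu)$ together with $\Vert\eta\Vert_g\le I(\eta)^{1/2}$, and transfer from the regularized object to $\mu$ by the Jacobian estimate), but there is one genuine gap: you run the convexity argument on the explicit vorticity $\mu_e=2\pi\sum_j d_j\delta_{a_j}$. The energy $I$ of \eqref{defiInu} is \emph{infinite} on any measure containing Dirac masses (the potential $h_{\mu_e}$ has logarithmic singularities, so $\int_{\A}g^{-2}|\nabla h_{\mu_e}|^2=+\infty$), and this remains true after "projecting'' $\mu_e$ onto the circle $\Cet$. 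Consequently the claimed lower bound $\mathcal{F}[u]\ge \int H\,d\mu_e+I(\mu_e)-o(\Om_1^2\ep^{-2})$ cannot hold, and the conclusion $I(\mu_e-\nu_*)=o(\Om_1^2\ep^{-2})$ is vacuously false; the completing-the-square step applied to the current outside the vortex cores does not produce $I(\mu_e)$ but the energy of a \emph{regularized} vorticity. This is exactly the point flagged in Remark \ref{rem:com vortic exp}: the paper works with $\mut=\curl\,\jt$, the vorticity of the current truncated on the balls and the bad set, for which $\Iche(\mut)<\infty$, obtains $\Iche(\mut-D\DirC)\ll\Om_1^2\ep^{-2}$ by keeping the third term in \eqref{calculImut}, and only then compares $\mu$ (not $\mu_e$) with $\mut$.

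Two further consequences of this substitution are skated over in your write-up. First, the "harmless projection onto $\{r=R_*\}$'' is not free: since $\mut$ is not supported on $\Cet$, the cross term $\int g^{-2}\nabla\hche_{\muc}\cdot\nabla\hche_{\musta}$ must be controlled using the explicit potential $\hche_*$ (constant and maximal on $\Cet$), the Jacobian estimate, and the degree bounds \eqref{vorticite moins}--\eqref{vorticite plus} for negative-degree and far-away vortices — your cost-function remark supplies the latter but is never connected to the quadratic term. Second, the passage from the regularized vorticity to $\mu$ is precisely what generates the term $\ep|\log\ep|\Vert\nabla\phi\Vert_{L^\infty(\ab)}$ in the denominator of \eqref{normeg}: the Jacobian estimate and the neglected regions (bad cells, vortex balls, the set where $\chiout\neq 0$) are only controlled against $\Vert\nabla\phi\Vert_{L^\infty}$, not against $\left(\int g^{-2}|\nabla\phi|^2\right)^{1/2}$. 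Your proposal asserts $\Vert\mu-\mu_e\Vert_g=o(\Om_1\ep^{-1})$ without accounting for these regions, whereas in the paper this bookkeeping (equations \eqref{terme principal}--\eqref{merde}) is where the specific form of the norm $\Vert\cdot\Vert_g$ is actually needed.
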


\begin{rem}\label{rem:com vortic}
\begin{enumerate}
\item The vorticity measure $\mu$ is really the intrinsic vorticity quantity associated with $u$. Indeed, we expect that $|u|\sim 1$ (which implies $\curl (iu,\nabla u) \sim 0$) outside of the vortex cores, so formally, for any domain which boundary $\sigma$ does not intersect any vortex core,
\[
2\pi \deg \{u, \sigma \} \sim \int_{\sigma} (iu,\dd_{\tau} u) 
\]
is given by the integral of $\mu$ over the domain (using Stokes' formula). Thus, if the vortex cores are small, we should have $\mu \sim 2 \pi \sum_j d_j \delta_{a_j}$ where the $d_j$ are the degrees of the vortices and $a_j$ their locations. This is stated rigorously in \eqref{mue-mui} below.\\
Note that  by definition the vortices of $u$ are identical to those of $\gpm$. 

\item Let us explain a bit why (\ref{resultvorticity}) indeed captures the leading order of the vorticity. The radius $R_*$ is such that 
\[
0 > H(R_*) \sim - C \frac{\Om_1}{\ep}.
\] 
Taking a positive test function (or rather a sequence of test functions) in $C^1 _c (\ab)$ we have
\[
  -  \frac{H(R_*)}{2 I_*} \int_{\ab}  \phi \DirC = -  \frac{ H(R_*)}{ 4\pi I_*} \int_{r= R_*} \phi (r,\theta) d\theta.
\]
If in addition $\phi$ is radial then
\[
  - \frac{H(R_*)}{2 I_*} \int_{\ab}   \phi  \DirC = -  \frac{ H(R_*)}{ 2 I_*} \phi (R_*).
\]
Choosing $\phi$ such that $\max \phi = \phi (R _*)$  and recalling that the thickness of the annulus $\ab$ is of order $\ep |\log \ep|$, one can obviously construct $\phi$ such that 
\[
\phi(R_*) \geq C \Vert \nabla \phi \Vert_{L^{\infty}(\ab)} \ep |\log \ep|.
\]
If in addition the support of $\phi$ is included in a region where $g^2\geq C(\ep |\log \ep|)^{-1}$ we have
\[
\left( \int_{\ab} \frac{1}{g^2} |\nabla \phi| ^2 \right)^{1/2} \leq C \ep |\log \ep| \Vert \nabla \phi \Vert_{L^{\infty}(\ab)} \leq C \Vert \phi \Vert_{L^{\infty}(\ab)}. 
\]
Thus
\begin{multline}\label{commentvorticity1}
 - \frac{H(R_*)}{2 I_*} \int_{\ab}  \phi \DirC  \geq - C  \ep |\log \ep |\frac{H(R_*)}{ I_*}  \Vert \nabla \phi \Vert_{L^{\infty}(\ab)} 
\\ > C \Om_1 |\log \ep| \Vert \nabla \phi \Vert_{L^{\infty}(\ab)} > C \frac{\Om_1}{\ep} \left( \int_{\ab} \frac{1}{g^2} |\nabla \phi| ^2 \right)^{1/2} 
\end{multline}
which proves 
\[
\frac{\Om_1}{\ep} \leq C \left\Vert \frac{H(R_*)}{2 I_*}  \DirC \right\Vert_g.
\]

\item Let us comment on the norm (\ref{normeg}) that we use in the theorem. In the denominator of the definition (\ref{normeg}), the first term is rather natural in view of the energy estimates below. Indeed, the quantity 
\[
\left( \int_{\ab} g^{-2} |\nabla \phi | ^2\right)^{1/2}
\]
 defines a norm on $C^1_c (\ab)$ that is associated with the problem (\ref{defiIstar}). The second term appears when regularizing $\mu$ in the course of the proof.\\  
Of course if the above norm could be controlled by $\ep |\log \ep| \Vert \nabla \phi \Vert_{L^{\infty}(\ab)} $ for any test function, then (\ref{resultvorticity}) would be equivalent to
\begin{equation}\label{commentvorticity2}
 \left\Vert \mu + \frac{H(R_*)}{2 I_*} \DirC \right\Vert_{(C_c ^1 (\ab)) ^*} \ll \Om_1 |\log \ep|,
\end{equation}
whereas the $(C_c ^1 (\ab)) ^*$  norm of both terms in the above right-hand side are $\gtrsim \Om_1 |\log \ep|$ as demonstrated by (\ref{commentvorticity1}). However, such a control is not possible because $g^2$ is not uniformly bounded below by $C(\ep |\log \ep|)^{-1}$ on $\ab$ : close to the inner boundary of $\ab$ it is of order $\Om_1 ^{1/2} (\ep |\log \ep|)^{-1}$. The norm $\Vert . \Vert_g$ thus takes into account the effect of the inhomogeneous density background.

\item If we define a change of coordinates mapping $\ab$ to a fixed annulus of radius $1$ and width say $1/2$ and denote $\tilde{R}_*$ the image of $R_*$ under this change of coordinates and $\tilde{\mu}$ the push-forward of $\mu$ we will have
\begin{equation}\label{ResultRescale}
- 2 I_* \frac{\tilde{\mu}}{H(R_*)} \rightharpoonup \tilde{\DirC}
\end{equation}
in the weak sense of measures. Here $\tilde{\DirC}$ is the normalized arc-length measure on the circle of radius $\tilde{R}_*$. We do not want to be too precise here in order not to mislead the reader : such a change of coordinates will not be used in the paper. The reason is that, applied to the energy functional, it would lead to a highly anisotropic (and $\ep$-dependent) kinetic energy term because of the shrinking width of the original annulus $\ab$. The convergence (\ref{ResultRescale}) is useful however to identify the optimal number of vortices : $I_*$ is of the order of a constant, $-H(R_*)$ of the order of $\Om_1 \ep ^{-1}$ so from (\ref{ResultRescale}) the number of vortices is to leading order equal to $-H(R_*) (2 I_*) ^{-1} \propto \Om_1 \ep ^{-1}$, as we announced. 
\end{enumerate}
 
\end{rem}

\begin{teo}[\textbf{Asymptotics for the explicit vorticity}]\label{theo:vorticity exp}\mbox{}\\
With the assumptions and notation of Theorem \ref{theo:vorticity}, there exists a finite collection of disjoint balls 
\[
B_k = B(a_k,r_k), k\in K
\]
such that $B_k \subset \ab$ and $|u| > 1-o(1)$ in the limit $\ep \to 0$ on $\dd B_k$ for any $k\in K$. Defining 
\begin{equation}\label{mesurevorticite exp 2}
\mu_e := 2\pi \sum_{k\in K} d_k \delta_{a_k},
\end{equation}
with
\[
d_k = \deg (u,\dd B_k),
\]
there holds 
\begin{equation}
\left\Vert \mu_e - \mu \right\Vert_{(C^1_c (\ab)) ^*} \leq C \Om_1 ^{3/2} |\log \ep| \ll  \left\Vert \mu \right\Vert_{(C^1_c (\ab)) ^*} \label{mue-mui}
\end{equation}
and
\begin{equation}
 \left\Vert \mu_e + \frac{H(R_*)}{2 I_*} \DirC \right\Vert_g \ll \frac{\Om_1}{\ep} \leq C \left\Vert \frac{H(R_*)}{2 I_*}  \DirC \right\Vert_g \label{asympt mue}.
\end{equation}
\end{teo}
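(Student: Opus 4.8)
Since \eqref{asympt mue} is an easy consequence of \eqref{mue-mui}, I would establish the latter first. For any Radon measure $\nu$ supported in $\ab$ the denominator in the definition \eqref{normeg} is always at least $\ep|\log\ep|\,\|\nabla\phi\|_{L^{\infty}(\ab)}$, hence $\|\nu\|_g\le(\ep|\log\ep|)^{-1}\|\nu\|_{(C^1_c(\ab))^*}$. Thus \eqref{mue-mui} gives $\|\mu_e-\mu\|_g\le C\Om_1^{3/2}\ep^{-1}\ll\Om_1\ep^{-1}$, and inserting $\mu$ and using \eqref{resultvorticity} yields the first inequality in \eqref{asympt mue}, the second being that of \eqref{resultvorticity}. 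Conversely, \eqref{resultvorticity} and the triangle inequality give $\|\mu\|_g\gtrsim\Om_1\ep^{-1}$, whence $\|\mu\|_{(C^1_c(\ab))^*}\ge\ep|\log\ep|\,\|\mu\|_g\gtrsim\Om_1|\log\ep|\gg\Om_1^{3/2}|\log\ep|$, which is the last relation in \eqref{mue-mui}. So the whole task is to build the balls $B_k$ and to prove $\|\mu_e-\mu\|_{(C^1_c(\ab))^*}\le C\Om_1^{3/2}|\log\ep|$.

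I would obtain the balls by the now classical vortex-ball growth-and-merging scheme of Jerrard and Sandier, in the weighted-energy version used in \cite{CY,CPRY,ABM}, applied to the reduced function $u$ and to its Ginzburg--Landau energy with weight $g^2$. Two inputs make this possible: on $\ab$ the weight $g^2$ is bounded below (of order $\Om_1^{1/2}(\ep|\log\ep|)^{-1}$ near the inner boundary $r=\rtf$, of order $(\ep|\log\ep|)^{-1}$ elsewhere, by Lemma \ref{lem:rhob}), and the a priori pointwise estimates on $\gpm$ recalled in Proposition \ref{cry pro:GP exp small} give $|u|=|\gpm|/g=1+o(1)$ outside a small bad set and in particular near $\partial\ab$, so that no vortex reaches the boundary and the balls may be taken inside $\ab$. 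The construction then produces a finite, pairwise disjoint family $B_k=B(a_k,r_k)\subset\ab$ with $|u|\ge 1-o(1)$ on each $\partial B_k$, degrees $d_k=\deg(u,\partial B_k)$, and lower bounds $\int_{B_k}g^2|\nabla u|^2\gtrsim g^2(a_k)\,|d_k|\,|\log\ep|$. Inserting these into the energy estimates of Theorem \ref{theo:energy} --- which, through the splitting of the reduced energy of $u$ into a core (self-energy) part $\propto\sum_k g^2(a_k)|d_k||\log\ep|$, a long-range part governed by the functional $I$, and a current term governed by $F$, show that vortices lower the energy by an amount of order $\Om_1^2\ep^{-2}$ --- pins $\sum_k|d_k|\le C\Om_1\ep^{-1}$ and forces the configuration to be close to the optimal one, with the vortices near the circle of radius $R_*$ and $\mu_e$ close to $-H(R_*)(2I_*)^{-1}\DirC$.

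With the balls at hand, the Jacobian estimate of Jerrard and Soner \cite{JS} controls $\mu-\mu_e$: tested against $\phi\in C^1_c(\ab)$, the mass of $\mu$ in $B_k$ is $2\pi d_k(1+o(1))$ and transporting it to $a_k$ costs at most $C|d_k|r_k\|\nabla\phi\|_{L^{\infty}(\ab)}$, while the contribution of $\ab\setminus\bigcup_k B_k$ is lower order, controlled through the Ginzburg--Landau bounds on $u$. Running the construction at the right scale --- coarse enough to retain the $|\log\ep|$ in the energy lower bounds (which is where assumption \eqref{Omega1ter} enters) and fine enough that the resulting radii, together with $\sum_k|d_k|\le C\Om_1\ep^{-1}$, make the above sum at most $C\Om_1^{3/2}|\log\ep|$ --- yields $\|\mu_e-\mu\|_{(C^1_c(\ab))^*}\le C\Om_1^{3/2}|\log\ep|$, which is \eqref{mue-mui}; then \eqref{asympt mue} follows as explained above.

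The abstract tools here are standard; the \textbf{main difficulty} is the quantitative bookkeeping forced by the inhomogeneous weight $g^2$, which degenerates towards the inner rim of $\A$. This is precisely why the analysis must be restricted to $\ab$ rather than $\A$ (near $r=\rtf$ the vortex cost $H$ is too small to resolve, which also dictates the shape of the norm \eqref{normeg}), why the energy estimates of Theorem \ref{theo:energy} must be sharp enough to separate the sub-leading vortex gain $\sim\Om_1^2\ep^{-2}$ from the error terms, and why the scale of the ball construction must be optimised against the energy it captures in order to produce the precise exponent $\Om_1^{3/2}$ in \eqref{mue-mui}, rather than a crude $o(\Om_1|\log\ep|)$ bound. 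Carrying through this optimisation, together with the matching lower bound that pins $\sum_k|d_k|$, is where the real work lies.
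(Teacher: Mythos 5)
Your opening and closing reductions are correct and coincide with the paper's: since the denominator in \eqref{normeg} is bounded below by $\ep|\log\ep|\,\|\nabla\phi\|_{L^{\infty}(\ab)}$, any bound in $(C^1_c(\ab))^*$ transfers to $\|\cdot\|_g$ upon division by $\ep|\log\ep|$, so \eqref{asympt mue} and the last inequality in \eqref{mue-mui} do follow from the first inequality in \eqref{mue-mui} combined with Theorem \ref{theo:vorticity}. The gap lies in how you propose to obtain $\|\mu_e-\mu\|_{(C^1_c(\ab))^*}\leq C\Om_1^{3/2}|\log\ep|$. The paper does not re-run a ball construction on $\ab$: it keeps the balls of Proposition \ref{pro:vortexballs}, which exist only in the good set $GS_{\al}\cap\at$, and discards those centered below $\Rb$. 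In the bad cells the energy is too large for the growth-and-merging method and there is no pointwise information on $|u|$ whatsoever; in particular Proposition \ref{cry pro:GP exp small} is an upper bound on $|\gpm|^2$ inside the hole and does not give $|u|=1+o(1)$ near $\partial\ab$, so your claim that the contribution of $\ab\setminus\cup_k B_k$ is "lower order, controlled through the Ginzburg--Landau bounds on $u$" is unsupported precisely where it matters.

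Concretely, the proof must split a test function as $\chiin\phi+\chiout\phi$. On the support of $\chiin$ the Jacobian estimate gives an error $\OO(|\log\ep|^{-2})\|\nabla\phi\|_{L^\infty}$; the current in the region where $\chiout\neq 0$ is handled by Cauchy--Schwarz against the total energy bound $\F[u]\leq C\ep^{-2}$ and the smallness of the bad set, as in \eqref{merde}, with $\alt$ chosen large; and the residual term $\sum_j d_j\chiout(a_j)\phi(a_j)$ is the one that produces the exponent $3/2$. Its only available control is the weighted degree estimate \eqref{dernier estim}, $\sum_j g^2(a_j)\chiout(a_j)|d_j||\log\ep|\ll\Om_1^2\ep^{-2}$, obtained by retaining this term in the energy lower bound and comparing with the upper bound; converting it into an unweighted bound costs a factor $\Om_1^{-1/2}\ep|\log\ep|$ because $g^2\gtrsim\Om_1^{1/2}(\ep|\log\ep|)^{-1}$ on $\ab$, and together with $|\phi(a_j)|\leq C\ep|\log\ep|\|\nabla\phi\|_{L^\infty}$ this yields exactly $C\Om_1^{3/2}|\log\ep|$. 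In your proposal the $3/2$ is instead attributed to an optimisation of the ball-construction scale and to the transport term $\sum_k|d_k|r_k$; but the radii (or, equivalently, the Jacobian-estimate error $C\ep^2|\log\ep|^{-2}\F[u]$) only contribute $\OO(|\log\ep|^{-2})\|\nabla\phi\|_{L^\infty}$, far below the true dominant error, and no choice of scale removes the vorticity sitting in, or adjacent to, the bad cells. Without the good/bad cell decomposition, the estimate \eqref{merde}, and the energy control \eqref{dernier estim} on the degrees carried where $\chiout\neq0$, your argument cannot produce \eqref{mue-mui}.
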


\begin{rem}\label{rem:com vortic exp}
\begin{enumerate}
\item The `vortex balls' in the above theorem satisfy some properties that we now explain. In the parameter regime we consider, there is a `bad' region on which we do not have enough information to say something meaningfull about the vortices of $u$. What saves the day is that we have a control on the area of this region, saying that the complementary `good' region covers the major part of the annulus.\\
After removing fom the `good' region a boundary layer on which the matter density is too small, we can cover the possible zeros of $u$ in the remaining part by the balls in the theorem, and the area covered by this balls is much smaller than that of the annulus where the bulk of the mass of the condensate resides.
\item We actually prove explicitly that most of the balls carry a positive degree and that the degree carried by the balls outside some region close to the circle of radius $R_*$ is small (see Equations \eqref{borne inf vorticite}, \eqref{vorticite moins}, \eqref{vorticite plus}, \eqref{propD} and \eqref{calculD2}). All this is included in the statement \eqref{asympt mue}. We note that \eqref{asympt mue} also implies that the vortex balls must be in some sense evenly ditributed along the circle $\Cet$, but it seems difficult to prove it without using the intrinsic vorticity measure : the electrostatic energy \eqref{defiInu} that appears naturally in this context is not well-defined for a measure of the form \eqref{mesurevorticite exp 2}. The extra regularity of the intrinsic vorticity measure is thus used in a crucial way and there does not seem to be an easy way to prove \eqref{asympt mue} without using \eqref{mue-mui}, i.e. the Jacobian estimate.\\
Note also that the discussion in Item 2 of Remark \ref{rem:com vortic} implies
\[
\left\Vert \mu \right\Vert_{(C^1_c (\ab)) ^*} \geq C \Om_1 |\log \ep|,
\]
thus justifying the second inequality in \eqref{mue-mui}.
\end{enumerate}

\end{rem}

We now present our results on the ground state energy $\gpe$. The problem (\ref{defiIstar}) naturally appears in our analysis because it is linked in a crucial way to the asymptotics of the energy $\gpe$, as demonstrated in the 

\begin{teo}[\textbf{Energy asymptotics}]\label{theo:energy}\mbox{}\\
Let $\Om_1$ be as above and $\ep$ be small enough. There holds   
\begin{equation}\label{resultenergy}
 \gpe = \hgpe _{\A,\om} - \frac{H(R_*) ^2}{4 I_*} (1+o(1))
\end{equation}
in the limit $\ep \rightarrow 0$.
\end{teo}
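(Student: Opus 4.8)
\emph{Overview and reduction.} The plan is to prove \eqref{resultenergy} as matching upper and lower bounds, after reducing $\gpe$ to a functional on $\A$ governed entirely by the vortices of $u$. First I would use that $\int_{\B\setminus\A}|\gpm|^2$ is exponentially small (Proposition~\ref{cry pro:GP exp small}) to localise the energy to $\A$, write $\gpm = g\,e^{i([\Om]-\om)\theta}u$ with $g = g_{\A,\om}$ and $u$ as in \eqref{u}, and expand $\gpf[\gpm]$. Cancelling the terms linear in $u$ by the variational equation for $g$ and rearranging the cross term as usual gives, up to errors to be checked to be $o(\Om_1^2\eps^{-2})$,
\[
\gpe = \hgpe_{\A,\om} + \Fg[u], \qquad \Fg[u] := \int_{\A} g^2 |\nabla u|^2 + \frac{g^4}{2\eps^2}\lf(1-|u|^2\ri)^2 - 2\int_{\A} g^2\,\vec{B}\cdot(iu,\nabla u),
\]
with $\vec{B} := \lf(\Om r - ([\Om]-\om)r^{-1}\ri)\vec{e}_\theta$. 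It then suffices to bound $\Fg[u]$ below by $-H(R_*)^2(4I_*)^{-1}(1+o(1))$ for the $u$ coming from a minimizer, and to exhibit a trial state attaining this value.

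\emph{Structure: cost function plus electrostatic energy.} The mechanism is that the cross term is an integral against the vorticity. Writing $g^2\vec{B} = \nabla^\perp\Phi$ with $\Phi = \tfrac12 F$ (so that $\Phi'(r) = g^2(r)(\Om r - ([\Om]-\om)r^{-1})$, see \eqref{F}) and integrating by parts with $\curl(iu,\nabla u)=\mu$, one gets $-2\int_\A g^2\vec{B}\cdot(iu,\nabla u) = \int_\A F\,d\mu + (\text{boundary})$. A degree-$d$ vortex at $\vec{r}$ thus carries a short-range cost coming from $\int g^2|\nabla u|^2 + \tfrac{g^4}{2\eps^2}(1-|u|^2)^2$ near its core plus the interaction $d\,F(\vec{r})$ with the giant-vortex superflow, the sum reproducing $d\,H(\vec{r})$ of \eqref{fonctioncout}, while the remaining long-range kinetic energy away from the cores is the electrostatic energy $I(\mu)$ of \eqref{defiInu}. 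The target identity for both bounds is therefore
\[
\Fg[u] = \int_{\A} H\,d\mu_e + I(\mu_e) + o\lf(\Om_1^2\eps^{-2}\ri),
\]
where $\mu_e$ is the explicit vorticity (its mollification being close to $\mu$ in $\Vert\cdot\Vert_g$, see \eqref{mue-mui}) and $I$ acts on the corresponding regularized object. Granting this, \eqref{resultenergy} follows from a scalar optimization: a positive measure $\nu$ of total mass $t$ obeys $\int H\,d\nu + I(\nu) = t\int H\,d(\nu/t) + t^2 I(\nu/t) \geq -\tfrac14(\int H\,d(\nu/t))^2/I(\nu/t)$, and since $R_*$ approximately minimizes $H$ (Appendix~A) while $\DirC$ minimizes $I$ over probability measures on the circle of radius $R_*$ (Proposition~\ref{pro:electrostatic}), the right-hand side is $\geq -H(R_*)^2(4I_*)^{-1}(1+o(1))$, with near-equality for $\nu$ close to $-\tfrac{H(R_*)}{2I_*}\DirC$.

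\emph{Upper bound.} I would take $\trial = g\,e^{i([\Om]-\om)\theta}v$ with $v$ a standard multi-vortex test function carrying $n$ degree-$+1$ vortices equally spaced on the circle of radius $R_*$, the integer $n$ of order $\Om_1\eps^{-1}$ chosen so that the resulting vorticity best approximates $-\tfrac{H(R_*)}{2I_*}\DirC$. Computing $\Fg[v]$ directly — core energies, interaction with $\vec{B}$, and the regularized pairwise interaction, which converges to $I$ of the limiting uniform measure because $n\to\infty$ by \eqref{Omega1ter} — yields $\Fg[v] = \sum_k H(a_k) + I(\mu_e) + o(\Om_1^2\eps^{-2}) = -H(R_*)^2(4I_*)^{-1}(1+o(1))$, hence the upper bound in \eqref{resultenergy}.

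\emph{Lower bound and the main obstacle.} This is the hard part. Using the a priori estimates on $\gpm$ and $g$ (which give $|u|\leq 1+o(1)$ on $\ab$ and a positive lower bound on $g^2$ there), I would discard the thin low-density region $\A\setminus\ab$, checking that its contribution to $\Fg[u]$ is $o(\Om_1^2\eps^{-2})$ despite the degeneracy of $g^2$ near the inner boundary of $\A$, remove a further boundary layer, and run a growth-and-merge vortex-ball construction adapted to the weight $g^2$ and to the width $\sim\eps|\log\eps|$ of $\ab$. This gives $\int_\A g^2|\nabla u|^2 + \tfrac{g^4}{2\eps^2}(1-|u|^2)^2 \geq (\text{core energies}) + (\text{residual})$; writing the supercurrent as $g^{-2}\nabla^\perp\psi$ plus a curl-free part, with $\psi$ close to $h_{\mu_e}$, bounds the residual below by $I(\mu_e) - o(\Om_1^2\eps^{-2})$, and adding the cross-term identity reconstructs $\int_\A H\,d\mu_e + I(\mu_e)$ up to the admissible error; the scalar optimization then closes the argument. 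I expect the main obstacle to be precisely this lower bound, and inside it: (i) keeping every error term below $\Om_1^2\eps^{-2}$, in particular near the inner boundary where $g^2$ degenerates; (ii) cleanly separating the logarithmic core self-energies from the long-range energy and identifying the latter with $I(\mu_e)$ with the correct constant; and (iii) ruling out runaway configurations — excessively large total degree, or vortices drifting toward the outer boundary where the electrostatic cost is cheaper — via the energy upper bound and the fact that $H$ is not negative enough away from $R_*$.
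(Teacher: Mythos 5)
Your overall route is the same as the paper's (decoupling via the equation for $g$, a ring-of-vortices trial function for the upper bound, vortex balls + Jacobian estimate + an electrostatic energy + the cost function for the lower bound), but the final step of your lower bound has a genuine gap. The scalar optimization you invoke needs $I(\nu/t)\geq I_*$ for the \emph{normalized actual vorticity}, whereas $I_*$ in \eqref{defiIstar} is an infimum only over measures supported on the circle $\Cet$. For measures supported elsewhere the inequality is false: by the explicit formula \eqref{h*R*}, the energy of the arclength measure on a circle of radius $R$ tends to $0$ as $R\to 1$ (the Dirichlet condition on $\partial\A$ kills the potential), so a vorticity partly located near the outer boundary can have much smaller electrostatic energy than $I_*$ would predict. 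Hence "$\DirC$ minimizes $I$ over probability measures on $\Cet$" cannot be applied to $\mu_e$, and your obstacle (iii) is not a peripheral check but the heart of the proof. The cost function together with the a priori upper bound does give that the degree carried by negative vortices or by vortices away from $\Cet$ is $\OO(\Om_1^{1/2})$ relative to the main degree $D$ (the paper's \eqref{vorticite moins}--\eqref{vorticite plus}), but this alone does not yield $I(\mut)\geq D^2 I_*(1-o(1))$: the remainder $\muc:=\mut-D\DirC$ is a signed object controlled only through the Jacobian estimate, and the cross term in the electrostatic energy could a priori be large and negative. The missing ingredient is potential-theoretic: $\hche_*=\hche_{\DirC}$ is \emph{constant on $\Cet$ and maximal there} (Proposition \ref{pro:electrostatic}), so that the cross term $\int_{\ab}g^{-2}\nabla\hche_{\muc}\cdot\nabla\hche_{D\DirC}=D\bigl(\int\hche_*\,\mut-D\,\hche_*(R_*)\bigr)$ can be evaluated by the Jacobian estimate; negative-degree vortices then contribute with the favorable sign ($\hche_*(a_j)\leq\hche_*(R_*)$) and misplaced positive-degree vortices are few by \eqref{vorticite plus}. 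Without this (or a compactness/rescaling argument, which the paper explains is delicate because the annulus geometry is $\ep$-dependent and $g^2$ degenerates at the inner boundary), the argument as written does not close.

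Two smaller points. The decoupled functional carries the interaction term $\ep^{-2}g^4(1-|u|^2)^2$, not $(2\ep^2)^{-1}g^4(1-|u|^2)^2$; this is harmless at leading order since the core contribution of that term is subdominant to $\frac12 g^2|\log\ep|$. In the upper bound, the assertion that the regularized pairwise interaction "converges to $I$ of the uniform measure" hides the main technical difficulty of that part: one needs uniform control of the Green function of $-\nabla(\rho^{-1}\nabla\cdot)$ on an annulus of fixed radius but width $\propto\ep|\log\ep|$. The paper handles this by making the trial vorticity periodic in $\theta$, reducing to Green functions on single cells (which rescale to domains of fixed size), and replacing $g^2$ by the modified density $\rhob$ of \eqref{defirhot}, whose gradient is controlled near $\Cet$; some substitute for these devices would be needed to make your upper-bound computation rigorous.
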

The term $-H(R_*) ^2(4 I_*) ^{-1}$ in \eqref{resultenergy}, of order $\Om_1 ^2 \ep ^{-2}$, is a small correction compared to $\hgpe_{\A,\om}$.  We refer to \cite[Remark 1.4]{CRY} for a discussion of the different contributions to the latter.\\ 

The way the value $\frac{H(R_*) ^2}{4 I_*}$ appears is through the following minimization problem :
\begin{equation}\label{energyrenorm}
- \frac{H(R_*) ^2}{4 I_*} = \inf_{\nu \in \D_*} \left( \int_{\A} \frac{1}{g^2} \left| \nabla h_{\nu} \right| ^2 + H(R_*) \int_{\A} \nu \right).  
\end{equation}
The above functional of $\nu$ describes the energy of a vortex distribution corresponding to a vorticity measure $\nu$ concentrated on $\Cet$.
The right-hand side can be thought of as a renormalized energy (in analogy with \cite{BBH,SS}). The first term represents the interaction of the vortices, which is computed through the potential $h_{\nu}$ that they create (other formulae, including a particularly useful Green representation will be used in the sequel). The second term represents the energy gain of having a vortex partly compensating the rotation field. The unique minimizer of (\ref{energyrenorm}) is given by $-(2I_*) ^{-1} H(R_*) \DirC $, which explains why the vorticity measure of $u$ has to be close, at least in the sense of Theorem \ref{theo:vorticity}, to this particular measure. Note the close analogy between our renormalized energy and that obtained in \cite[Theorem 1.1]{ABM}, the difference being essentially apparent in the weight $g^{-2}$.\\

We recall that superfluids or superconductors in simply-connected geometries generically exhibit vortex concentration around isolated points (see \cite{A,SS} and references therein). Here, although the original domain $\B$ is simply connected, the strong centrifugal forces impose an annular form to the condensate. As a consequence, \emph{vortex concentration along a curve} occurs for this model at rotation rates $\Om \propto \ep ^{-2} |\log \ep| ^{-1}$. In \cite{ABM}, a model case for vortex concentration along a curve in the regime $\Om \propto |\log \ep|$ is considered and the limiting vorticity measure is identified via $\Gamma$-convergence. \\
Other physical situations where concentration along a curve occurs include superconductors with normal inclusions \cite{AB1,AB2}, the case described in \cite{AAB} of a BEC whose trapping potential impose an annular shape in the regime $\Om \propto |\log \ep|$, and that of two superconductors with different physical properties in contact along a circle \cite{Kac}. In those three cases the question of the distribution of the vortices along the curve that we solve here is left open. In particular our result is as far as we know the only one in the literature where the limiting vorticity measure is computed for a case where there is vortex concentration along a curve against an inhomogeneous background. Indeed, the inhomogeneous density profile $g^2$ is a new feature compared to the situation in \cite{ABM}. Our method could be used in the context of \cite{AAB} where a similar inhomogeneity prevents from using directly the analysis of \cite{ABM}.\\
A formula such as (\ref{Omtrans}) seems to be absent from the physics literature : in \cite{FB} the critical speed is estimated by comparing the energy of the giant vortex state to that of the vortex-lattice-plus-hole state. One then argues that the critical speed is that at which the former is smaller than the latter. This is too rough an estimate, for it does not take into account the fact that a circle of vortices appears at the transition. As a consequence, the formula we find by rigorous analysis differs from that given in \cite{FB} \footnote{the latter, given in formula (20) of \cite{FB}, is found to be $\frac{1}{9\pi \ep ^2 |\log \ep|}$ in our units}. We note that for a problem in a slightly different setting (namely, for a condensate trapped in a harmonic plus quartic trap) the papers \cite{FZ,KF} propose methods to numerically compute the critical speed. This is done by comparing the giant vortex energy to the energy of a condensate containing a single ring of vortices. Varying the parameters defining the latter state (size of the vortex cores, number of vortices ...) to find the best possible choice yields the critical speed. It is again defined as the one above which the giant vortex energy is the smaller of the two. This method does not provide an explicit formula such as (\ref{Omtrans}) and of course neither does a direct numerical minimization of the GP energy.\\ 

The response of type II superconductors to imposed external magnetic fields and that of superfluids to rotation of the container bear some striking similarities. This analogy between superfluidity (usually described by the Gross-Pitaevskii theory) and superconductivity (usually described by the Ginzburg-Landau theory) has been well-known to physicists for tens of years. More recently it turned out that mathematical tools originally developed for the GL theory could be successfully used in GP theory. Also there is an analogy between our (somewhat informal) terminology about critical speeds and that of critical fields in GL theory. In particular, the analogy between the first critical speed and the field $\Hc$ is well-known and of great use in the papers \cite{AAB,IM1,IM2}.\\
We want to emphasize however that the Gross-Pitaevskii theory in the regime we consider largely deviates from the Ginzburg-Landau theory. This is due to the presence of a mass constraint (which reflects the fact that a BEC has no normal state to relax to, contrary to a superconductor) and of centrifugal forces (which could be interpreted as electric fields in GL theory). In particular, the second and third critical speeds we have informally defined have little to do with the second and third critical fields in GL theory (to our knowledge this has been first noticed in \cite{FB}). The discrepancy between GP theory and GL theory arises when the centrifugal forces can no longer be neglected, namely close to the second critical speed. We refer to \cite{CDY1,CDY2,CY} for this aspect of the theory.\\
As for the third critical speed, that is our main concern here, it turns out that it bears more similarities with $\Hc$ than with $\Hcc$ or $\Hccc$. In fact, the transition happening there can be seen as a $\Hc$ type transition but \emph{backwards}, which is apparent in the fact that vortices \emph{disappear} from the bulk of the condensate when the rotation speed is increased. As a consequence, many mathematical tools characteristic of the study of type II superconductors (strictly) between $\Hc$ and $\Hcc$ will be of great use in this paper, as they have been in \cite{CRY}. We refer to Section 1.3 of that paper for a heuristic explanation of this surprising fact. 

\subsection{Formal Derivations}\label{sousec:Sketch}

The main intuitions in the proofs of our results are :
\begin{itemize}
\item The rotation field being along the azimuthal vector $\vec{e}_{\theta}$ with positive amplitude, a vortex of negative degree can only create energy. All the vortices should then be of positive degree.
\item $\mu$ should necessarily have its support on $\Cet$ because the cost function (\ref{fonctioncout}) is minimum there. Indeed, the cost function gives the short range energetic cost (self-interacting kinetic energy plus interaction with the rotation field) of a vortex situated at a particular location.
\item It is well-known that two vortices of same degree should repel each other while two vortices of opposite degree should attract each other. The vortices, all of positive degrees and located on $\Cet$ should organize in order to minimize their repulsion. This effect leads to the uniform distribution on the circle. 
\end{itemize}

Let us explain how the renormalized energy (\ref{energyrenorm}) appears in our setting. A large part of the analysis, in particular the 
introduction of the cost function (\ref{fonctioncout}),  has been carried out in \cite{CRY}. It is reminiscent of the method of \cite{AAB} 
although important new difficulties occur due to the different parameter regime we consider.\\
A first step is to extract from $\gpe$ the contribution of the giant vortex profile and phase factor. This uses a classical method of energy 
decoupling originating in \cite{LM} and an exponential decay property for $\gpm$ in $\B \setminus \A$ (see \cite[Propositions 2.2 and 3.1]{CRY}). 
We obtain \footnote{The notation $\OO (\ep ^{\infty})$ refers to a quantity decreasing faster than any power of $\ep$, e.g. exponentially fast.}
\begin{equation}\label{sketch:decouple}
\gpe = \hgpe_{\A,\om} + \E [u] +\OO (\ep ^{\infty}).
\end{equation} 
The reduced energy functional $\E$ is defined as follows
\beq\label{sketch:eomega0} 
\mathcal E[v] :=\int_{\mathcal \A} g^2|\nabla v|^2-2 g^2\vec B\cdot (iv,\nabla v)+\frac {g^4}{\eps^2}(1-|v|^2)^2 
\eeq
where
\beq
	\label{vec B 1}
	\vec B(r) : =B(r) \vec e_\theta = \lf( \Omega r - \lf( [\Omega] - \omega \ri) r^{-1} \ri)  \vec e_\theta
\eeq 
and we have used the notation 
\[
(iv,\nabla v) : =\half i(  v\nabla \bar v - \bar v \nabla  v).
\]
The energy $\E[u]$ effectively takes into account the energy added to the giant vortex contribution $\hgpe_{\A,\om}$ when individual vortices are present in the annular bulk of the condensate. Note in particular that in the case where $v\equiv 1$ the energy is $0$. We are thus interested in estimates for $\E[u]$. A strictly negative value of this energy will indicate the presence of vortices. We note that this energy functional is very similar to a functional appearing in \cite{AAB} where an annular condensate at slow rotation speeds is considered (see also \cite{IM1,IM2}). The major difference is that the domain $\A$ depends on $\ep$ in a crucial way : its width tends to zero proportionally to $\ep |\log \ep|$ when $\ep \rightarrow 0$. We refer to \cite[Section 1.3]{CRY} for a more detailed discussion on this point.\\
The study of the energy $\E[u]$ starts with the construction of vortex balls : we isolate the set where $u$ could possibly vanish and cover it 
with a collection of disjoint balls with relatively small radii. The growth and merging method, introduced by Jerrard and Sandier independently (see \cite[Chapter 4]{SS}) yields lower bounds of the form 
\begin{equation}\label{sketch:Ekin1}
\int_{B_i} g ^2 |\nabla u | ^2 \geq \pi |d_i| g^2 (a_i) |\log \ep| 
\end{equation}    
where $B_i=B_i (a_i,r_i)$ is a generic ball in the collection and $d_i$ the degree of $u$ around the ball $B_i$. We neglect remainder terms for the sake of clarity. We stress that strictly speaking we do not cover all possible zeros of $u$ by this method : there is first a layer close to the inner boundary of $\A$ where the density $g^2$ is too small to use this method. More importantly, the shrinking width of $\A$ makes it difficult to obtain the estimates needed for the implementation of the method. There is thus a region where we have basically no information on the vortices of $u$. We will neglect this fact in this sketch since it produces only remainder terms in the energy.\\
The next step is to use a Jacobian Estimate (see \cite[Chapter 6]{SS}) to obtain
\begin{equation}\label{sketch:JE}
\mu \sim 2\pi \sum_i d_i \delta_{a_i}
\end{equation}
where $\delta_{a_i}$ is the Dirac mass at $a_i$. We then note that the function $F$ in (\ref{F}) is constructed to satisfy
\[
\nabla ^{\perp} (F) = 2 g^2 \vec{B}, \: F(R_<) = 0.
\]
Thus, integrating by parts the second term in (\ref{sketch:eomega0}) we have
\begin{equation}\label{sketch:Erot}
- \int_{\mathcal \A} 2 g^2\vec B\cdot (iv,\nabla v) = \int_{\A} F \mu - \int_{\dd \B} F (iu, \dd _{\tau } u ).
\end{equation}
The boundary term has been shown to be negligible in \cite{CRY}. At this stage we thus have essentially, gathering (\ref{sketch:Ekin1}), (\ref{sketch:Erot}) and (\ref{sketch:JE}),
\begin{equation}\label{sketch:borneinf1}
\int_{\mathcal \A} g^2|\nabla u|^2-2 g^2\vec B\cdot (iu,\nabla u) \geq \sum_i 2 \pi \left( \frac{1}{2}|d_i| g^2 (a_i) |\log \ep| + d_i F(a_i) \right)+ \int_{\A \setminus \cup_i B_i} g^2 |\nabla u| ^2. 
\end{equation} 
In \cite{CRY} the first two terms in the above right-hand side were enough to conclude. Indeed, $F$ is negative in the bulk, thus we can already read on the above formula that negative degree vortices can only increase the energy. All vortices should then be of positive degree and we can write
\begin{equation}\label{sketch:borneinf2}
\int_{\mathcal \A} g^2|\nabla u|^2-2g^2\vec B\cdot (iu,\nabla u) \geq \sum_i 2 \pi d_i H(a_i)+ \int_{\A \setminus \cup_i B_i} g^2 |\nabla u| ^2
\end{equation}
where $H$ is the (radial) cost function (\ref{fonctioncout}). Above the critical speed, $H$ is positive everywhere in the bulk, thus no vortices can be present in the condensate. By contrast, when the rotation speed is sub-critical, the cost function has a negative minimum at $R_*$ (actually, close to $R_*$, see Appendix A for the precise definition of $R_*$). Positive degree vortices are thus favorable in this regime, and they will gather close to the circle $\Cet$.\\
We have now to understand what mechanism limits the nucleation of vortices and leads to their uniform distribution on $\Cet$. This is where the third term in the right-hand side (\ref{sketch:borneinf1}) comes into play. In \cite{CRY} this term was neglected, but in the present context it is crucial to bound it from below.\\
Let us introduce the \emph{superfluid current} of $u$
\begin{equation}\label{sketch:supercour}
j := (iu,\nabla u). 
\end{equation} 
Note that by definition $\mu = \curl (j)$. The idea is now that the kinetic (first) term in (\ref{sketch:eomega0}) is essentially due to the presence of this supercurrent. The energetic contribution of the current inside vortex balls is taken into account in (\ref{sketch:Ekin1}). The next step is to estimate the energetic contribution of the part of the current that is located outside vortex balls (this is the long-range energetic cost of the vortices we were alluding to before). To this end we define a modified superfluid current 
\begin{equation}\label{sketch:jmod}
\jm := \begin{cases} j \mbox{ in } \A \setminus \cup_i B_i \\
	0 \mbox{ in } \cup_i B_i.
\end{cases}
\end{equation}
and the associated vorticity
\begin{equation}\label{sketch:mumod}
\mum := \curl (\jm).
\end{equation}
It turns out that (in a sense that we keep vague in this sketch)
\begin{equation}\label{sketch:mum}
\mum \sim \mu.
\end{equation}
Then, combining (\ref{sketch:JE}) and (\ref{sketch:mum}) we can rewrite (\ref{sketch:borneinf2}) in the form
\begin{equation}\label{sketch:borneinf3}
\int_{\mathcal \A} g^2|\nabla u|^2-2 g^2\vec B\cdot (iu,\nabla u) \geq \int_{\A} H \mum+ \int_{\A \setminus \cup_i B_i} g^2 |\nabla u| ^2.
\end{equation}
Let us now estimate the last term in the above expression. By definition of the vortex balls, outside the set $\cup_i B_i$ we have $|u|\sim 1$, thus
\[
\int_{\A \setminus \cup_i B_i} g^2 |\nabla u| ^2 \sim \int_{\A\setminus \cup_i B_i } g^2 |j | ^2 = \int_{\A} g^2 |\jm | ^2.
\] 
Recalling the equation (\ref{defihnu}) satisfied by $h_{\mum}$ we have 
\begin{equation}\label{sketch:equaj}
\curl \left( \jm + \frac{1}{g ^2} \nabla ^{\perp} h_{\mum} \right) = 0\mbox{ in } \A. 
\end{equation}
If we were working in a simply connected domain this would imply that there exists a function $f$ such that
\begin{equation}\label{sketch:jegal}
\jm = - \frac{1}{g ^2} \nabla ^{\perp} h_{\mum}+ \nabla f .
\end{equation}
The argument has to be modified because we work on an annulus, but since this does not produce significant modifications of the energy, we assume that (\ref{sketch:jegal}) holds. Then, using $h_{\mum} = 0$ on $\dd \A$, we obtain
\begin{equation}\label{sketch:Ekin2}  
\int_{\A \setminus \cup_i B_i} g^2 |\nabla u| ^2 \sim \int_{\A} g^2 |\jm | ^2 \geq \int_{\A } \frac{1}{g^2} |\nabla h_{\mum}| ^2.
\end{equation}
Gathering (\ref{sketch:borneinf3}) and (\ref{sketch:Ekin2}) we obtain, up to the (many !) remainder terms we have neglected
\begin{equation}\label{sketch:borneinf4}
\int_{\mathcal \A} g^2|\nabla u|^2-2 g^2\vec B\cdot (iu,\nabla u) \geq \int_{\A} \frac{1}{g^2} \left| \nabla h_{\mum} \right| ^2 +  \int_{\A} H \mum. 
\end{equation} 
The right-hand side of the above equation is equal to the renormalized energy of $\mum$ defined in (\ref{energyrenorm}) if this measure has its support on $\Cet$. Intuitively this is justified because the vortices of $u$ will want to be close to $\Cet$ where the cost function is minimum. Mathematically, showing that the right-hand side of (\ref{sketch:borneinf3}) is bounded below by the infimum in (\ref{energyrenorm}) uses essentially two ingredients. One is the rigorous version of the Jacobian Estimate making precise the property (\ref{sketch:JE}), the other is the information that the potential associated (via (\ref{defihnu})) to the measure minimizing (\ref{defiIstar}) is constant on the support of the measure. This is a classical fact in potential theory. In our context this means that $h_{\DirC} = h_*$ is constant on $\Cet$. We refer to Section \ref{sousec:preuve borne inf} for details on the way we use this information to complete the lower bound, and accept for the present sketch that $\mum$ has support on $\Cet$ in the limit $\ep \to 0$ so that (\ref{sketch:borneinf4}) yields the renormalized energy (\ref{energyrenorm}) of $\mum$ as a lower bound to $\E[u]$. It is then intuitively clear, if we believe not to have lost too much information on the way, that $\mum$ should `almost' minimize the renormalized energy. Thus it should be close to $-H(R_*) ( 2 I_* ) ^{-1} \DirC$ and by (\ref{sketch:mum}) so should $\mu$.\\
What remains to be proved is that `we have not lost too much information on the way', i.e. we want an upper bound to the energy matching our lower bound. We thus construct a test function displaying the optimal number of vortices evenly distributed on the circle $\Cet$. To evaluate the energy of such a test function we adapt the well-established technique used e.g in \cite{AAB,ASS,AB1,AB2,ABM,SS}. It involves in particular a Green representation of the electrostatic energy (\ref{defiInu}). In our case the computation is significantly complicated by the fact that the matter density $g^2$ vanishes at the inner boundary of $\A$ and more importantly by the particular geometry of the annulus (fixed radius and shrinking width). We have to rely on the periodicity (in the angular variable) of our test function to obtain the estimates allowing to conclude that the appropriate upper bound holds true. \\

The paper is organized as follows. We first construct our trial function and prove our upper bound in Section \ref{sec:upper bound}. Section \ref{sec:lower bound} is then devoted to the proof of the lower bound. We will collect many estimates on the way that will allow us to conclude the proof of Theorems \ref{theo:vorticity} and \ref{theo:vorticity exp} in Section \ref{sec:vorticity}. Appendix A is concerned with the analysis of the cost function and contains mainly technical computations whereas Appendix B gathers several useful results from \cite{CRY} that we recall for the convenience of the reader.

\section{Energy Upper Bound}\label{sec:upper bound}

In this section we provide the upper bound part of the energy asymptotics of Theorem \ref{theo:energy} by constructing a trial function for $\gpf$ which displays the expected vortex structure.\\
The result we are aiming at is the following :  

\begin{pro}[\textbf{Upper bound to the energy}]\label{pro:upperbound}\mbox{}\\
Let $\NN = \NN (\ep)$ be a number satisfying
\begin{equation}\label{sup vort tot}
\NN = - \frac{H(R_*)}{4 \pi I_*} (1+o(1))  
\end{equation}
in the limit $\ep \to 0$, where $I_*$ is defined in (\ref{defiIstar}) and $H$ is the cost function (\ref{fonctioncout}). We have, for $\ep$ small enough
\begin{multline}\label{bornesupenergie}
 \gpe \leq \hgpe_{\A,\om} +4 \pi ^2 (\NN) ^2  (1+o(1)) I_*  + 2\pi \NN H(R_*) + \OO \left( \frac{\NN  \log |\log \ep| }{\ep |\log \ep|}\right) 
\\=  \hgpe_{\A,\om}  - \frac{H(R_*) ^2}{4 I_*} (1+o(1)).  
\end{multline}
\end{pro}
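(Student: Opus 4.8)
The plan is to construct an explicit trial function of the form $\trial = g\, e^{i([\Om]-\om)\theta}\, v$ where $g$ carries the giant-vortex profile and phase already optimized in \cite{CRY}, and $v$ is a modulation designed to produce $\NN$ vortices of degree $+1$ evenly spaced on the circle $\Cet$ of radius $R_*$. Thanks to the energy-decoupling identity \eqref{sketch:decouple}, $\gpf[\trial] = \hgpe_{\A,\om} + \E[v] + \OO(\ep^\infty)$ (the exponential remainder coming from cutting off $g$ smoothly into $\B\setminus\A$ and checking the mass constraint \eqref{masse} is met up to exponentially small corrections, which is then absorbed by a normalization constant). So everything reduces to exhibiting $v$ with $\E[v] \le 4\pi^2 \NN^2 I_* + 2\pi \NN H(R_*) + \OO(\NN \log|\log\ep|/(\ep|\log\ep|))$, and then optimizing over $\NN$: since the bracket is a quadratic $4\pi^2 I_* \NN^2 + 2\pi H(R_*)\NN$ minimized at $\NN = -H(R_*)/(4\pi I_*)$ with minimal value $-H(R_*)^2/(4I_*)$, the second line of \eqref{bornesupenergie} follows, the error term being $o(H(R_*)^2/I_*) = o(\Om_1^2\ep^{-2})$ precisely because of assumption \eqref{Omega1ter} (which forces $\log|\log\ep|/|\log\ep| \ll \Om_1$).

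The construction of $v$ is the standard dilute-vortex ansatz, adapted to the shrinking annulus. I would place $\NN$ points $a_1,\dots,a_\NN$ on $\Cet$ at equal angular spacing, and near each $a_k$ set $v$ to be a regularized degree-one vortex $v(x) \approx (x-a_k)/|x-a_k|$ on the core scale $\ep$ (with modulus growing linearly from $0$ at $a_k$ to $1$ at distance $\sim\ep$), and away from all cores define the phase of $v$ to be the (multivalued) harmonic conjugate determined by $\curl(iv,\nabla v) \approx 2\pi\sum_k\delta_{a_k}$ against the weight $g^{-2}$, i.e. essentially $\nabla^\perp(h_{\nu_\NN}/\text{(something)})$ with $\nu_\NN = 2\pi\sum_k \delta_{a_k}$, so that $|v|\equiv 1$ outside cores. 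Plugging into \eqref{sketch:eomega0}: the $g^4\ep^{-2}(1-|v|^2)^2$ term contributes $\OO(\NN)$ (one core's worth $\OO(1)$ times $\NN$ cores); the kinetic term $\int g^2|\nabla v|^2$ splits into the core self-energy $\sum_k \pi g^2(a_k)|\log\ep|(1+o(1)) \approx \pi\NN g^2(R_*)|\log\ep|$ plus the interaction energy $\int_\A \frac1{g^2}|\nabla h_{\nu_\NN}|^2 = 4\pi^2\NN^2 I(\DirC_\NN)$ where $\DirC_\NN = (1/\NN)\sum_k\delta_{a_k}$; and the rotation term $-2\int g^2\vec B\cdot(iv,\nabla v)$ equals $\int_\A F\,\nu_\NN - (\text{boundary term}) \approx 2\pi\NN F(R_*)$ by \eqref{sketch:Erot}. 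Combining the core self-energy with $2\pi\NN F(R_*)$ gives exactly $2\pi\NN(\tfrac12 g^2(R_*)|\log\ep| + F(R_*)) = 2\pi\NN H(R_*)$ by the definition \eqref{fonctioncout}, and the interaction term $4\pi^2\NN^2 I(\DirC_\NN) \to 4\pi^2\NN^2 I_*$ since the uniform measure $\DirC_\NN$ converges to $\DirC$ and $I$ is continuous enough along this sequence (one needs $I(\DirC_\NN) = I_*(1+o(1))$, with the discreteness correction controlled).

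The main obstacle, and what requires real work rather than bookkeeping, is the estimate of the interaction energy $4\pi^2\NN^2 I(\DirC_\NN)$ and more precisely controlling the difference between the discrete sum of Dirac masses and the smooth arclength measure $\DirC$ on the very thin annulus. This is a Green-function computation for the operator $-\nabla(g^{-2}\nabla\cdot)$ on $\A$: the Green kernel has a $\log$ singularity on the diagonal which, when summed against $\NN\sim\Om_1\ep^{-1}$ nearly-coincident points at angular spacing $\sim\NN^{-1}$, would naively produce an extra $\NN\log\NN$ term that must be shown to be lower-order (here the periodicity of the configuration in $\theta$ and the fact that consecutive vortices are at distance $\gg\ep$ is used to get an $\OO(\NN\log(\NN\ep))$ or similar bound, absorbed into the error). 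The shrinking width $\sim\ep|\log\ep|$ of $\A$ makes the Green function anisotropic and $\ep$-dependent, so the standard two-dimensional renormalized-energy estimates of \cite{ABM,SS} do not apply off the shelf; one exploits the exact periodicity of the trial configuration in the angular variable to reduce the kernel estimate to a one-period computation and to separate the radial and angular scales cleanly. Verifying the mass constraint and the smooth matching of $g$ across $\partial\A$ are routine given \cite[Propositions 2.2, 3.1]{CRY}, and the boundary term in \eqref{sketch:Erot} is handled exactly as in \cite{CRY}; so the crux is the potential-theoretic estimate establishing $I(\DirC_\NN) = I_*(1+o(1))$ with a quantitatively controlled remainder on the thin annulus.
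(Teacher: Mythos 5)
Your overall architecture coincides with the paper's: decouple via $\gpf[\Psi]=\hgpe_{\A,\om}+\E[v]$, place $\NN$ degree-one vortices evenly on $\Cet$, generate the phase from the weighted potential of the vorticity, obtain a quadratic in $\NN$ and optimize it, and handle the discrete-to-continuum Green estimate on the thin annulus by exploiting angular periodicity (the paper implements this by splitting $\A$ into identical cells, solving local Neumann problems and proving $\barh_f=\sum_i\barh_f^i$, so that each rescaled cell is a domain of fixed size). However, there is a genuine quantitative gap in your core ansatz. You take the vortex core radius to be $\ep$ and assert that $\ep^{-2}\int g^4(1-|v|^2)^2$ contributes $\OO(1)$ per core. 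That is the standard Ginzburg--Landau bookkeeping for a density of order one, but here $g^2(R_*)\sim(\ep|\log\ep|)^{-1}$, so the healing length is $\ep/g\sim\ep^{3/2}|\log\ep|^{1/2}\ll\ep$, and a core of radius $\ep$ costs $\sim g^4\sim\ep^{-2}|\log\ep|^{-2}$ in the quartic term. Summed over $\NN\sim\Om_1\ep^{-1}$ vortices this is of order $\Om_1\ep^{-3}|\log\ep|^{-2}$, which dominates the main terms of order $\Om_1^2\ep^{-2}$ you are trying to capture (since $\Om_1\ep|\log\ep|^2\to0$), so the construction as written does not yield \eqref{bornesupenergie}. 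Relatedly, with cores of radius $\ep$ the diagonal (self-energy) part of the Green representation produces roughly $2\pi g^2(R_*)\log\left(\ep|\log\ep|/\ep\right)\sim g^2(R_*)\log|\log\ep|$ per vortex, not the $\pi g^2(R_*)|\log\ep|$ you quote; the coefficient $\tfrac12 g^2|\log\ep|$ in the cost function $H$ emerges precisely because the correct core scale $t$ satisfies $\log\left(\ep|\log\ep|/t\right)=\tfrac12|\log\ep|\,(1+o(1))$.

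The repair is exactly the choice \eqref{choixt}, $t=\ep^{3/2}|\log\ep|^{1/2}$, obtained by balancing the kinetic term $2\pi g^2\log\left(\ep|\log\ep|/t\right)$ against the quartic cost $g^4t^2\ep^{-2}$: then the quartic and modulus-gradient contributions are $\OO\left(\NN(\ep|\log\ep|)^{-1}\right)$, which is the error allowed in the statement, and the diagonal Green term gives $\pi\NN g^2(R_*)|\log\ep|$, which combines with $2\pi\NN F(R_*)$ into $2\pi\NN H(R_*)$. Two further technical points you gloss over but which the actual proof needs: the weight in the phase-generating potential is not $g^{-2}$ but the modified density $\rhob$ (equal to $\tfm$ in the bulk), because the Meyers/Stampacchia estimates for the cell Green functions require $|\nabla\rhob|\le C(\ep|\log\ep|)^{-2}$, a bound not available for $g^2$; and single-valuedness of $e^{i\phi}$ requires correcting $\barh_f$ by a multiple of the function $\Gb$ of \eqref{defiGamma} so that the circulation around the inner hole is an integer multiple of $2\pi$. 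These are repairable along the lines you indicate, but the core-scale step fails as stated.
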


\subsection{The trial function}

We start by dividing $\A$ into $N\propto \frac{1}{\ep |\log \ep|}$ identical cells. The cell $\A_i$ is defined as 
\begin{equation}\label{cellulei}
 \A_i = \A \cap \left\lbrace \theta_i \leq \theta \leq \theta_{i+1} \right\rbrace
\end{equation}
with 
\[
\theta_i = (i-1)\frac{2\pi}{N},\quad i=1\ldots N.
\] 
and the convention that $\theta_{N+1} = \theta_1$. We have that $\theta_{i+1} - \theta_i$ is independent from $i$ (that is the cells are identical) and proportional to $\ep |\log \ep|$ (which is the thickness of the annulus $\A$). We construct a structure of vortices periodic in the angular variable, of period $\theta_{i+1} - \theta_i = \frac{2\pi}{N}$.\\
\begin{figure}
	\begin{center}
 	\includegraphics[width=250pt,height=250pt]{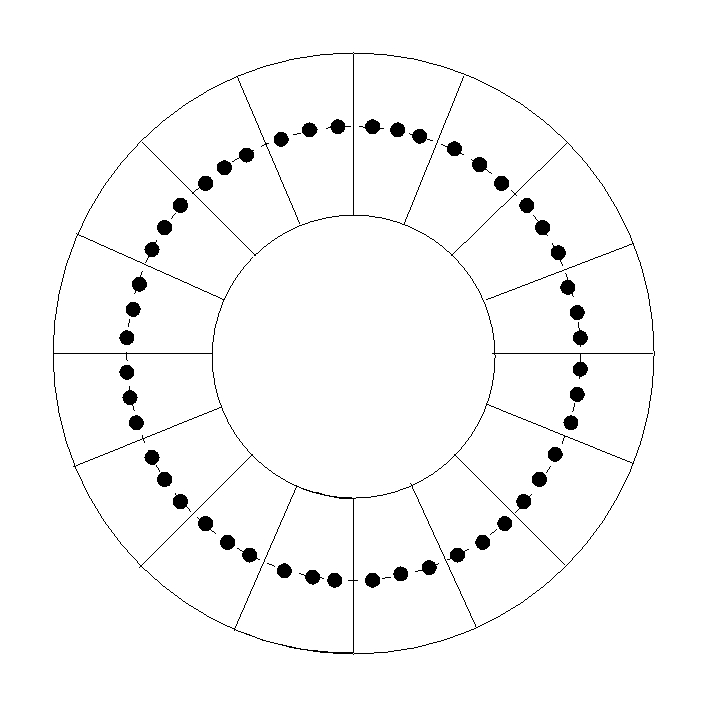}
	\caption{Vortex configuration in the trial function.}
	\end{center}
\end{figure}
We introduce a parameter $t$ which will be the size of the vortex cores and $M$ points $p_{i,1}, \ldots, p_{i,M}$ in each cell, which will be the locations of the vortices. We require that for any $i,j$
\begin{equation}\label{corecell}
 B(p_{i,j},t) \subset \A_i
\end{equation}
and that the collection of points $(p_{i,j})_{i=1\ldots N,\: j= 1 \ldots M}$ is evenly distributed on the circle of radius $R_*$ (which we denote $\CC_{R_*}$). The distance between two adjacent vortices is thus $2R_* \sin( \pi / MN )\sim 2\pi R_* (MN) ^{-1}$. The number of vortices will be fixed ($I_*$ is of the order of a constant, see \eqref{bornesIstar}, $H(R_*)$ is negative and proportional to $\Om_1 \ep ^{-1}$, see \eqref{borne cout TF 1} and \eqref{H-HTF}) so as to satisfy, in the limit $\ep \rightarrow 0$,
\begin{equation}\label{Nbvortexsup}
\NN := MN = - \frac{H(R_*)}{4 \pi I_*} (1+o(1)) \propto \frac{\Om_1}{\ep}. 
\end{equation}
The natural vorticity measure associated with our ring of (degree one) vortices would be a sum of Dirac masses $2\pi \sum \delta_{p_{i,j}}$, which would lead to an infinite energy. We thus introduce a regularized vorticity measure
\begin{equation}\label{vorticitetest}
f:= \sum _{i,j}  \frac{2}{t^2}  \one _{B(p_{i,j} ,t)}
\end{equation}
where $\one _{B(p_{i,j} ,t)}$ is the characteristic function of the ball $B(p_{i,j} ,t)$. The measure $f$ is normalized to have total mass $2\pi MN$ (i.e. $2\pi$ times the number of vortices). We require in our construction that this function be invariant by some reflections of the angular variable. More precisely, for any $i$, we require
\begin{eqnarray}\label{vortexsym1}
f(r,\theta) &=& f \left(r, 2 \theta_i  - \theta \right) \\
f(r,\theta) &=& f \left(r, \theta_i + \theta_{i+1} - \theta \right) \label{vortexsym2} . 
\end{eqnarray}
Clearly this conditions imply the periodicity of $f$ :
\begin{equation}\label{vorticitetestcell}
 f_{i} (r,\theta) := f_{| \A_i} (r,\theta) = f_{| \A_j} (r,\theta+\theta_j - \theta_i ).
\end{equation}
Note that the freedom that we have in the choice of the number of vortices (in the $o(1)$ term of formula \eqref{NBvortexsup}) is used to ensure that it is possible to construct a vortex configuration having the symmetries we want. A typical distribution of balls guaranteeing these symmetries is one that is even (in the angular variable) in each cell with the same distances of the two extremal balls to the radial boundaries of the cell. \\

\bigskip

Our trial function is given by
\begin{equation}\label{fonctiontest} 
\Psi = \begin{cases}
        g e ^{i\left( [\Om] - \om \right)\theta} v \mbox{ in } \A \\
	0 \mbox{ in } \B \setminus \A
       \end{cases}
\end{equation}
with 
\begin{equation}\label{fonctiontestreduite}
 v = c \xi e^{i\phi}.
\end{equation}
The real function $\xi$ is a cut-off function ensuring that 
\begin{itemize}
\item $v$ vanishes in small discs (of radius $t$) around the vortex locations, where the phase $\phi$ will have a singularity
\item $v$ vanishes continuously at the inner boundary of $\A$, so that $\Psi$ is indeed in the energy space. 
\end{itemize}
The constant $c$ is chosen so that $\int_{\B} |\Psi| ^2 = 1$. As for the phase $\phi$, we define it in Lemma \ref{lem:phase} in such a way that $v$ has degree one around each vortex in the collection.\\

Let us introduce functions $\xi_{i,j}$ , $i=1 \ldots N$, $j=1 \ldots M$ satisfying
\begin{equation}\label{cutoffi}
 \xi_{i,j} :=  \begin{cases}
           0\mbox{ in } B(p_{i,j},t)\\
	 1 \mbox{ in }  \A \setminus B(p_{i,j},2t)
         \end{cases} 
\end{equation}
and 
\[
 0 \leq \xi_{i,j} \leq 1.
\]
Obviously one has
\begin{equation}
\nabla  \xi_{i,j}  =  0 \mbox{ in } B(p_{i,j},t) \cup B(p_{i,j},2 t)^c \label{gradcutoffi} 
\end{equation}
and we can impose
\begin{equation} 
 \left| \nabla \xi_{i,j} \right| \leq \frac{C}{t} \mbox{ in } B(p_{i,j},2t) \setminus B(p_{i,j},t) \label{gradcutoffi2} .
\end{equation}
Next we define, for some radius $\tilde{R}$ larger than $R_<$,
\begin{equation}\label{cutoffBL}
\xiBL (\vec{r}) = \begin{cases}
                      1 \mbox{ if } r \geq \tilde{R} \\
		\frac{r-R_<}{\tilde{R}-R_<} \mbox{ if } r \leq \tilde{R}.		
                     \end{cases}
\end{equation}
The subscript `` BL'' stands for boundary layer and is justified because our choice of $\tilde{R}$ will satisfy
\begin{equation}\label{Rtilde}
\tilde{R}  = R_<  +  \ep ^{n} < R_h - \frac{1}{2}\ep^{8/7}
\end{equation}
for some large enough power $n$ (we will use later the fact that one can choose $n$ as large as one pleases but for the present $n=2$ is sufficient) and $\ep$ small enough. The inequality above is a consequence of (\ref{the inner radius}) and allows one to use the exponential smallness of $g^2$ (proved in \cite{CRY} and recalled in Proposition \ref{cry pro:g exponential smallness}) in the region $\rt \leq r\leq \rtt$.\\
The final cut-off is given as 
\begin{equation}\label{cutoff}
\xi = \xiBL \prod_{i,j} \xi_{i,j} .
\end{equation}
The choice of $\xiBL$ vanishing continuously at $r=R_<$ will ensure that $\Psi \in H^1 (\B)$. We now choose 
\begin{equation}\label{choixt}
 t = \ep ^{3/2} |\log \ep| ^{1/2}.
\end{equation}
This is the optimal choice and allows to estimate the $L^2$ norm of $\Psi$ : it is easily seen that for any $i$ and any $j$
\[
 \int_{B(p_{i,j},t)} g^2 = \OO (\ep ^2),
\]
because $g^2 \leq C (\ep |\log \ep|) ^{-1}$. Moreover, using the exponential smallness result \eqref{cry eq:g exp small} for $g^2$ one can see that\footnote{saying that some quantity is $\OO(\ep ^{\infty})$ means that it goes to zero faster than any power of $\ep$ when $\ep\to 0$}
\[
 \int _{\rt \leq r \leq \tilde{R}} g^2  = \OO (\ep ^{\infty}).
\]
With these two estimates in hand, using the normalization of $g^2$ we conclude that
\begin{equation}\label{defautmasse}
\int_{\A} g^2 \xi ^2 \geq 1 - C M N \ep ^2.
\end{equation}
On the other hand 
\[
\int_{\A} g^2 \xi ^2 \leq \int_{\A} g^2  = 1
\]
and thus one can normalize $\Psi$ by taking a constant $c$ satisfying
\begin{equation}\label{choixc}
 c ^2 = 1+\OO (M N \ep ^2).
\end{equation}

\bigskip

We now turn to the definition of the phase of $v$. The following modified density $\rho$, defined for $r\in \A$, will be used in the definition:
\begin{equation}\label{defirhot}
 \rho (r) = \displaystyle \begin{cases}
                       \tfm (r)\mbox{ if } r\geq \rb \\
			g ^2 (r) \mbox{ if } r < \rb
                      \end{cases}
\end{equation}
where
\[
 \rt < \rb = R_h + \ep ^{5/6}
\]
with the actual choice specified in Lemma \ref{lem:rhob} below. This function is constructed to satisfy two properties that we shall need in the sequel. First its gradient has to be bounded by $C(\ep |\log \ep |)^{-2}$ for $r\geq \rb$, which is insured by the explicit form of $\tfm$. This is the main reason why we do not use simply $g^2$: the bound that is available on the gradient of this function \cite[Proposition 2.7]{CRY} is not sufficient for our purpose. Second $\rhob$ has to stay close to $g^2$ in $L^{\infty}$ norm, which is insured by the following

\begin{lem}[\textbf{Properties of $\rhob$}]\label{lem:rhob}\mbox{}\\
Let us define 
\begin{equation}\label{choixRb-}
\rb := R_h + \ep ^{5/6}.
\end{equation}
Then for any $r\in \A$
\begin{equation}\label{estimrhob}
\left| g^2 (r) - \rhob (r)\right|  \leq C \ep ^{3/4} |\log \ep |^{2} \rhob (r).
\end{equation}
Moreover $\rho$ is bounded below by a positive constant in $\A$ (the constant depends on $\ep$).
\end{lem}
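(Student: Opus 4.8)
The plan is to verify \eqref{estimrhob} separately in the two regimes of the definition \eqref{defirhot} of $\rhob$; the first one turns out to be vacuous. For $\rt\le r<\rb$ one has $\rhob(r)=g^2(r)$ by construction, so $|g^2(r)-\rhob(r)|=0$ and \eqref{estimrhob} holds trivially (recall $\rhob\ge0$). Hence the entire content of \eqref{estimrhob} lies in the outer region $\rb\le r\le1$, where $\rhob=\tfm$ and \eqref{estimrhob} is equivalent to the \emph{relative} pointwise comparison $\bigl|g^2(r)-\tfm(r)\bigr|\le C\ep^{3/4}|\log\ep|^{2}\tfm(r)$ for $\rb\le r\le1$.

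To obtain this relative bound I would start from the pointwise Thomas--Fermi estimates for the giant-vortex density $g^2$ established in \cite{CRY} and recalled in Appendix~B, which control $|g^2-\tfm|$ on $\A$, and divide by $\tfm$. Since $\tfm$ is increasing on $[R_h,1]$, for $r\ge\rb$ one has
\[
\tfm(r)\ \ge\ \tfm(\rb)\ =\ \tfrac12\,\ep^{2}\Omega^{2}\bigl(\rb^{2}-R_h^{2}\bigr)\ \ge\ c\,\ep^{2}\Omega^{2}R_h\,\ep^{5/6},
\]
and, inserting the scalings $\ep^{2}\Omega^{2}\sim\ep^{-2}|\log\ep|^{-2}$ and $R_h\to1$, the quotient of the Appendix~B bound by $\tfm(\rb)$ is $O(\ep^{3/4}|\log\ep|^{2})$; the exponent $5/6$ in \eqref{choixRb-} is tuned precisely so that this quotient comes out with the stated power of $\ep$. (If the Appendix~B statement is itself a relative estimate valid only on a sub-annulus $\{R_h+\ep^{a}\le r\le1\}$, it suffices to check that $\rb$ lies inside it.)

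For the lower bound, on $[\rb,1]$ we have $\rhob=\tfm\ge\tfm(\rb)>0$. On $[\rt,\rb]$ we have $\rhob=g^2$, and since $g=g_{\A,\om}$ is the nonnegative normalized minimizer of $\hgpf_{\A,\om}$ with no boundary constraint, it is the ground state of a one-dimensional Schr\"odinger-type operator on $\A$ with natural (Neumann) boundary conditions; by ODE uniqueness together with these boundary conditions, $g$ cannot vanish anywhere on the closed annulus $\overline\A$, so $g^2$ is continuous and strictly positive on the compact interval $[\rt,\rb]$ and thus bounded there below by a positive constant. That constant does depend on $\ep$, since near $r=\rt$ one only has the exponential smallness \eqref{cry eq:g exp small}. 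Taking the smaller of the two lower bounds yields $\rhob\ge c(\ep)>0$ on all of $\A$.

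The one genuinely delicate step is the relative comparison in the outer region: the TF approximation error for $g^2$ must be measured not against the typical size $\sim(\ep|\log\ep|)^{-1}$ of the density but against the much smaller value $\tfm(\rb)$ reached near the edge of $\mathrm{supp}\,\tfm$, which is exactly why $\rb$ is placed at the specific distance $\ep^{5/6}$ from $R_h$. Once the right pointwise TF estimate from \cite{CRY} is in hand, the rest is bookkeeping.
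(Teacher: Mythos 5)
Your proposal is correct and follows essentially the same route as the paper: the region $r<\rb$ is trivial, the region $r\ge\rb$ is handled by the pointwise comparison of $g^2$ with $\tfm$ from \cite{CRY} recalled in Appendix B, and the positive lower bound comes from the fact that $g$, being the ground state of a one-dimensional Schr\"odinger-type operator with Neumann conditions, cannot vanish (your ODE-uniqueness argument is an acceptable substitute for the paper's citation of Lieb--Loss or Harnack), together with $\rhob=\tfm\ge\tfm(\rb)>0$ on the outer part. The only imprecise point is the bookkeeping for $r\ge\rb$: the relevant estimate \eqref{cry eq:pointwise bounds} of Proposition \ref{cry pro:point GP dens} is already a \emph{relative} bound, $\left| g^2(r)-\tfm(r)\right|\le C\ep^{2}|\log\ep|^{2}\left(r^2-\rtf^2\right)^{-3/2}\tfm(r)$, valid for $r\ge\rtf+\OO(\ep^{3/2}|\log\ep|^2)$ (so $\rb=\rtf+\ep^{5/6}$ meets that constraint), and the stated power is obtained simply by evaluating the weight there: for $r\ge\rb$ one has $\left(r^2-\rtf^2\right)^{-3/2}\le C\ep^{-5/4}$, whence $\left| g^2(r)-\tfm(r)\right|\le C\ep^{3/4}|\log\ep|^{2}\tfm(r)$, which is exactly the paper's one-line argument; the route you describe (dividing an unspecified absolute error bound by $\tfm(\rb)$ via monotonicity of $\tfm$) does not by itself produce \eqref{estimrhob}, but your parenthetical hedge anticipates the relative form, and with it the fix is the single evaluation just given, so there is no genuine gap.
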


\begin{proof}
The estimate is of course trivial when $r\leq \rb$. When $r \geq \rb$ it is a consequence of Proposition 2.6 in \cite{CRY}, recalled in  Proposition \ref{cry pro:point GP dens}, and of the explicit formula for $\tfm$.
The fact that $\rhob$ is bounded below is a consequence of the corresponding result for $g^2$. Indeed, $g$ satisfies the equation
\beq
			\label{hgpm var}
			- \Delta g + \frac{([\Omega] - \omega)^2}{r^2} g  - 2 \Omega ([\Omega] - \omega) g + 2 \eps^{-2} g ^3 = \hchem g ,
		\eeq
with Neumann boundary conditions on $\dd \A$, where $\hchem$ is the Lagrange multiplier associated with the mass constraint. It is well known that such a function, ground-state of a one-dimensional Schr\"{o}dinger operator, cannot vanish except at the origin (see e.g. \cite[Theorem 11.8]{LL}). A possible proof is through the Harnack inequality \cite{Mo}.
\end{proof}

Note that $\rhob$ has a jump discontinuity. We could have constructed a regular function instead, but since the discontinuity has no consequence in the sequel, we stick to the simple definition (\ref{defirhot}). 

We now define $\barh_f$ as the unique solution of  
\begin{equation}\label{defihbarf}
	\begin{cases}
		-\nabla \left( \frac{1}{\rhob} \nabla \barh_{f}\right) = f \mbox{ in } \A \\
		\barh_{f} = 0 \mbox{ on } \partial \A.
	\end{cases} 
\end{equation}
Note that the elliptic operators appearing in (\ref{defihnu}) and (\ref{defihbarf}) are similar up to the replacement of the weight $g^{-2}$ by $\rhob$. Thanks to Lemma \ref{lem:rhob} these two operators are close in some sense.  We denote
\begin{equation}\label{correctionphase1}
 \kappa = \int_{\dd \B } \frac{1}{\rhob} \frac{\dd \barh_{f}}{\dd n} - 2\pi \left[ \frac{1}{2\pi} \int_{\dd \B } \frac{1}{\rhob} \frac{\dd \barh_{f}}{\dd n} \right]
\end{equation}
where $\frac{\dd}{\dd n}$ denotes the deritaive in the outward normal direction. Also we introduce $\Gb$ as the solution of 
\begin{equation}\label{defiGamma}
	\begin{cases}
		-\nabla \left( \frac{1}{\rhob} \nabla \Gb \right) = 0 \mbox{ in } \A \\
		\Gb  = 1 \mbox{ on } \partial \B \\
		\Gb = 0 \mbox{ on } \dd B_{\rt} .
	\end{cases} 
\end{equation}
Note that $\Gb$ is radial and has the explicit expression
\begin{equation}\label{formuleGamma}
\Gb (\vec{r}) = \frac{\int ^ r _{\rt} \rhob (s) s^{-1} ds}{\int_{\rt} ^1 \rhob (s) s^{-1} ds}.
\end{equation}
The denominator in the above equation is a $\OO (1)$ thanks to (\ref{estimrhob}) and the normalization of $g^2$.\\
We prove the following lemma, which defines the phase of our trial function. 

\begin{lem}[\textbf{Phase of the trial function}]\label{lem:phase}\mbox{}\\
The formula
\begin{equation}\label{defiphase} 
 \nabla \phi = \frac{1}{\rhob} \nabla ^{\perp} \bar{\barh}_{f} 
\end{equation}
where 
\begin{equation}\label{defihbarbar}
\bar{\barh}_f =  \barh_{f} - \frac{\kappa}{\int_{\A} \frac{1}{\rhob} \left| \nabla \Gb \right|^2} \Gb
\end{equation}
defines a phase, i.e. $e^{i\phi}$ is well-defined, in $\A \setminus \cup_{i,j} B(p_{i,j},t)$. Moreover 
\begin{equation}\label{contributionphase}
\int_{\A} \rhob \xi ^2 \left| \nabla \phi \right|^2 \leq \int_{\A} \frac{1}{\rhob} \left| \nabla \bar{\barh}_f \right|^2 \leq \int_{\A} \frac{1}{\rhob} \left| \nabla \barh_f \right|^2 + \OO (1).
\end{equation}

\end{lem}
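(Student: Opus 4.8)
The plan is to prove the two assertions of the lemma separately. For the first one, recall that a vector field $W$ on a planar domain equals $\nabla\phi$ for a single‑valued $\phi$ (so that $e^{i\phi}$ is well defined) if and only if $W$ is curl‑free and all its circulations $\oint W\cdot\dd_\tau$ lie in $2\pi\Z$. Curl‑freeness I would check first: by \eqref{defihbarf}, \eqref{defiGamma} and \eqref{defihbarbar} the function $\bar{\barh}_f$ again solves $-\nabla\cdot(\rhob^{-1}\nabla\bar{\barh}_f)=f$, and a direct computation gives $\curl\big(\rhob^{-1}\nabla^\perp\bar{\barh}_f\big)=\pm\nabla\cdot\big(\rhob^{-1}\nabla\bar{\barh}_f\big)$, i.e. (up to an overall sign) $\mp f$, which vanishes identically on $\A\setminus\cup_{i,j}B(p_{i,j},t)$ since $f=2t^{-2}\one_{B(p_{i,j},t)}$ is supported in the vortex discs. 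The jump of $\rhob$ at $r=\rb$ is harmless here: $\bar{\barh}_f\in C^1$ across it and the divergence equation holds in the weak sense, so $\rhob^{-1}\nabla^\perp\bar{\barh}_f$ is weakly closed across $\{r=\rb\}$.

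\textbf{Circulations.} A basis of cycles of $\A\setminus\cup_{i,j}B(p_{i,j},t)$ consists of one small loop around each disc $B(p_{i,j},t)$ together with one loop around the central hole $B_{\rt}$; since a loop close to $\dd\B$ is homologous to the latter plus the sum of all the vortex loops, it suffices to treat the circulations around the $\dd B(p_{i,j},t)$ and around $\dd\B$. Around $\dd B(p_{i,j},t)$, Stokes' formula and $\curl\nabla\phi=\mp f$ give a circulation equal to $\mp\int_{B(p_{i,j},t)}f=\mp 2\pi$, using that $f$ carries mass $2\pi$ on each disc (this also pins the degree of $v$ around each vortex to $+1$ with the orientation conventions of the construction). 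Around $\dd\B$: since $\barh_f=0$ and $\Gb=1$ there, $\bar{\barh}_f$ is constant on $\dd\B$, its tangential derivative vanishes, and the circulation reduces to $\oint_{\dd\B}\rhob^{-1}\frac{\dd\bar{\barh}_f}{\dd n}$. Integrating the equation \eqref{defiGamma} for $\Gb$ over $\A$ and using $\Gb=1$ on $\dd\B$, $\Gb=0$ on $\dd B_{\rt}$, gives the identity $\oint_{\dd\B}\rhob^{-1}\frac{\dd\Gb}{\dd n}=\int_{\A}\rhob^{-1}|\nabla\Gb|^2$, so by \eqref{defihbarbar} and the definition \eqref{correctionphase1} of $\kappa$,
\[
\oint_{\dd\B}\rhob^{-1}\frac{\dd\bar{\barh}_f}{\dd n}=\oint_{\dd\B}\rhob^{-1}\frac{\dd\barh_f}{\dd n}-\kappa=2\pi\Big[\tfrac1{2\pi}\oint_{\dd\B}\rhob^{-1}\tfrac{\dd\barh_f}{\dd n}\Big]\in 2\pi\Z .
\]
This completes the proof that $e^{i\phi}$ is well defined.

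\textbf{The energy inequalities.} The first inequality in \eqref{contributionphase} follows from $0\le\xi\le 1$ together with $\rhob|\nabla\phi|^2=\rhob^{-1}|\nabla^\perp\bar{\barh}_f|^2=\rhob^{-1}|\nabla\bar{\barh}_f|^2$. For the second I would write $\bar{\barh}_f=\barh_f-\lambda\Gb$ with $\lambda=\kappa\big(\int_{\A}\rhob^{-1}|\nabla\Gb|^2\big)^{-1}$ and expand the square; the cross term $\int_{\A}\rhob^{-1}\nabla\barh_f\cdot\nabla\Gb$ vanishes \emph{exactly} after one integration by parts, because $\barh_f=0$ on $\dd\A$ and $\nabla\cdot(\rhob^{-1}\nabla\Gb)=0$ in $\A$. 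This leaves
\[
\int_{\A}\rhob^{-1}|\nabla\bar{\barh}_f|^2=\int_{\A}\rhob^{-1}|\nabla\barh_f|^2+\kappa^2\Big(\int_{\A}\rhob^{-1}|\nabla\Gb|^2\Big)^{-1},
\]
and the correction term is $\OO(1)$: $\kappa\in[0,2\pi)$ by its very definition as $2\pi$ times a fractional part, while the explicit formula \eqref{formuleGamma} yields $\int_{\A}\rhob^{-1}|\nabla\Gb|^2=2\pi\big(\int_{\rt}^1\rhob(s)s^{-1}\,ds\big)^{-1}$, which is bounded below by a positive constant since $\int_{\rt}^1\rhob(s)s^{-1}\,ds=\OO(1)$ by \eqref{estimrhob} and the normalization of $g$.

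\textbf{Main obstacle.} The only genuinely delicate point is the homological bookkeeping behind the circulations: one must fix orientations so that the quantization in $2\pi\Z$ holds around \emph{every} generator of $H_1\big(\A\setminus\cup_{i,j}B(p_{i,j},t)\big)$ simultaneously, and check that the corrective term $-\lambda\Gb$ removes precisely the non‑integer part of the circulation around $\dd\B$ while leaving the $\pm2\pi$ circulations around the vortex discs untouched — which it does, $\Gb$ being smooth near those discs. Everything else (weak closedness across the discontinuity of $\rhob$, and the $\OO(1)$ bounds) is routine given the estimates recalled from \cite{CRY}.
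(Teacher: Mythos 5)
Your proposal is correct and follows essentially the same route as the paper's proof: irrotationality from the defining equations for $\barh_f$ and $\Gb$, quantization of the circulations via the mass $2\pi$ of $f$ on each disc and, on $\dd\B$, via the definition of $\kappa$ combined with the identity $\int_{\dd\B}\rhob^{-1}\frac{\dd\Gb}{\dd n}=\int_{\A}\rhob^{-1}\left|\nabla\Gb\right|^2$, and the energy bound from the vanishing cross term together with $0\leq\kappa\leq 2\pi$ and the lower bound \eqref{capGamma}. Two cosmetic remarks: that flux identity is obtained by testing the equation \eqref{defiGamma} against $\Gb$ itself (Green's identity, as in the paper, inserting $\Gb=1$ on $\dd\B$), not by merely integrating the equation over $\A$; and across the jump $\{r=\rb\}$ the solution is in general not $C^1$ (only the co-normal derivative $\rhob^{-1}\frac{\dd\bar{\barh}_f}{\dd n}$ is continuous), though, as you note, the weak formulation of the divergence equation is all that is needed for weak closedness there.
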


\begin{proof}
By definition of $\barh_f$ and $\Gb$, the right-hand side of (\ref{defiphase}) is irrotational in $\A \setminus \cup_{i,j} B(p_{i,j},t)$, thus $\phi$ is well-defined locally in this set. To see that $e^{i\phi}$ is well defined, we must check that for any closed curve $\sigma$ included in $\A \setminus \cup_{i,j} B(p_{i,j},t)$
\begin{equation}\label{quantification}
 \int_{\sigma} \nabla \phi \cdot \tau \in 2\pi \Z.
\end{equation}
Clearly, it is sufficient to consider two cases :
\begin{enumerate}
\item $\sigma$ winds around at most one ball $B(p_{i,j},t)$
\item $\sigma$ is a contour enclosing all the balls
\end{enumerate}
and one deduces the general case from these two.\\
In case 1, (\ref{quantification}) is a simple consequence of the quantification of the mass of $f$. One shows easily that (\ref{quantification}) is satisfied using (\ref{defihbarf}), (\ref{defiGamma}) and integrations by parts. In case 2, equations (\ref{defihbarf}) and (\ref{defiGamma}) give
\[
\int_{\sigma} \nabla \phi \cdot \tau =  \int_{\dd \B } \left( \frac{1}{\rhob} \frac{\dd \barh_{f}}{\dd n} - \frac{\kappa}{\int_{\A} \frac{1}{\rhob} \left| \nabla \Gb \right|^2} \frac{1}{\rhob}  \frac{ \dd \Gb}{\dd n} \right). 
\]
But, using again (\ref{defiGamma})
\[
 \int_{\dd \B } \frac{1}{\rhob} \frac{\dd \Gb }{\dd n} = \int_{\dd \B } \frac{1}{\rhob} \Gb \frac{\dd \Gb }{\dd n}  = \int_{\A} \frac{1}{\rhob} \left| \nabla \Gb \right|^2
\]
and thus it is clear from (\ref{defiphase}) that 
\[
 \int_{\dd \B} \left( \frac{1}{\rhob} \frac{\dd \barh_{f}}{\dd n} - \frac{\kappa}{\int_{\A} \frac{1}{\rhob} \left| \nabla \Gb \right|^2} \frac{1}{\rhob}  \frac{ \dd \Gb}{\dd n} \right) \in 2 \pi \Z .
\]
To prove (\ref{contributionphase}) one remarks that
\[
 \int_{\A} \rhob \xi ^2 \left| \nabla \phi \right|^2 \leq \int_{\A} \frac{1}{\rhob} \left| \nabla \bar{\barh}_f \right|^2 = \int_{\A} \frac{1}{\rhob} \left| \nabla  \barh_{f} - \frac{\kappa}{\int_{\A} \frac{1}{\rhob} \left| \nabla \Gb \right|^2} \nabla  \Gb \right|^2. 
\]
Then, using the equation for $\Gb$ and the boundary condition on $\barh_f$
\[
 \int_{\A} \frac{1}{\rhob} \nabla  \barh_{f} \cdot \nabla \Gb = 0.
\]
Finally one uses that $0 \leq \kappa \leq 2\pi $ and 
\begin{equation}\label{capGamma}
 \int_{\A} \frac{1}{\rhob} \left| \nabla \Gb \right|^2 \geq C
\end{equation}
which follows easily from (\ref{formuleGamma}).

\end{proof}

We can now start the computation of the energy of our trial function :

\begin{proof}[Proof of Proposition \ref{pro:upperbound}]

A classical technique of energy decoupling (see e.g. the proof of Proposition 3.1 in \cite{CRY}) uses the variational equation satisfied by $g$ to show that 
\begin{equation}\label{decoupletest}
\gpf [\Psi] = \hgpe_{\A,\om} + \E [v]
\end{equation}
where
\beq\label{eomega0} 
\mathcal E[v] :=\int_{\mathcal \A} g^2|\nabla v|^2- 2 g^2\vec B\cdot (iv,\nabla v)+\frac {g^4}{\eps^2}(1-|v|^2)^2 
\eeq
and
\beq
	\label{vec B}
	\vec B(r) : =B(r) \vec e_\theta = \lf(  \Omega r - \lf( [\Omega] - \omega \ri) r^{-1} \ri)  \vec e_\theta.
\eeq 
We have also used the notation 
\[
(iv,\nabla v) : =\half i(  v\nabla \bar v- \bar v\nabla  v).
\]
We now have to estimate the energy $\E [v]$. We first note that it can be decomposed into the contribution of the density profile $\xi$ and the contribution of the phase, i.e. the energy generated by the vortices :
\begin{equation}\label{splitEreduite}
\E [v] = c^2 \int_{\A} \left( g^2 |\nabla \xi| ^2 +  \frac{g^4}{\ep ^2} \left( 1 - c ^2 \xi ^2 \right) ^2 \right) +  \int_{\A} \left( c^2 g^2 \xi ^2 |\nabla \phi |^2 - 2 g^2 \vec{B} \cdot (iv,\nabla v) \right).
\end{equation}
The contribution of the profile (first term in the above equation) is readily estimated by using the definition of $\xi$. We separate the boundary layer $\{ r \leq \rtt \}$ where $\xi = \xiBL$ from the bulk where $\xi = \prod_{i,j} \xi_{i,j} $ :
\begin{multline*}
 \int_{\A} \left( g^2 |\nabla \xi| ^2 +  \frac{g^4}{\ep ^2} \left( 1 - c ^2 \xi ^2 \right) ^2 \right) = \int_{\A \cap \{ r \leq \rtt \} } \left( g^2 |\nabla \xi| ^2  +  \frac{g^4}{\ep ^2} \left( 1 - c ^2 \xi ^2 \right) ^2 \right) \\ +\int_{\A \cap \{ r \geq \rtt \} }  \left( g^2 |\nabla \xi| ^2 +  \frac{g^4}{\ep ^2} \left( 1 - c ^2 \xi ^2 \right) ^2 \right).
\end{multline*}
In the boundary layer we use \eqref{cry eq:g exp small} and the definition of $\xiBL$ (\ref{cutoffBL}) to obtain
\[
 \int_{\A \cap \{ r \leq \rtt \} } \left( g^2 |\nabla \xi| ^2  +  \frac{g^4}{\ep ^2} \left( 1 - c ^2 \xi ^2 \right) ^2 \right) = \OO ( \ep^{\infty} ).
\]
For the bulk term we observe that, since $g ^2 \leq C (\ep |\log \ep|) ^{-1}$, (\ref{gradcutoffi}) and (\ref{gradcutoffi2}) imply
\[
 \int_{\A \cap \{ r \geq \rtt \} }  g^2 |\nabla \xi| ^2 \leq C \frac{MN}{\ep |\log \ep|}.
\]
We then write  
\begin{multline}\label{Eprofile}
 \int_{\A \cap \{ r \geq \rtt \} } \frac{g^4}{\ep ^2} \left( 1 - c ^2 \xi ^2 \right) ^2  = \int_{\A \cap \{ r \geq \rtt \} } \frac{g^4}{\ep ^2} \left( 1 - \xi ^2 \right) ^2  
\\ + \left( c^2 -1 \right) \left( \frac{c^2 +1}{\ep ^2 } \int_{\A \cap \{ r \geq \rtt \} } g^4 \xi ^4 - \frac{1}{\ep ^2 }\int_{\A \cap \{ r \geq \rtt \} } g^4  \xi ^2 \right)
\end{multline}
We remark that $\xi = 1$ outside of $\cup_{i,j} B(p_{i,j},2t)$, thus 
\[
 \int_{\A \cap \{ r \geq \rtt \} } \frac{g^4}{\ep ^2} \left( 1 - \xi ^2 \right) ^2 \leq C \frac{MN}{\ep |\log \ep|}
\]
using (\ref{choixt}) and $g^{2} \leq C (\ep |\log \ep| )^{-1}$. For the second term in the right-hand side of (\ref{Eprofile}) it is sufficient to recall that $g^{2} \leq C (\ep |\log \ep| )^{-1}$, $\xi \leq 1$, $|\A| \propto \ep |\log \ep|$ and use (\ref{choixc}) to obtain
\[
 \left( c^2 -1 \right) \left( \frac{c^2 +1}{\ep ^2 } \int_{\A \cap \{ r \geq \rtt \} } g^4 \xi ^4 - \frac{1}{\ep ^2 }\int_{\A \cap \{ r \geq \rtt \} } g^4  \xi ^2 \right) \leq C \frac{MN}{\ep |\log \ep|}
\]
We can thus conclude that (using again (\ref{choixc}))
\begin{equation}\label{Evortex}
\E [v] \leq \left(1+\OO(MN \ep ^2)\right) \int_{\A} \left( g^2 \xi ^2 |\nabla \phi |^2 - 2 g^2 \vec{B} \cdot (iv,\nabla v) \right) + C \frac{MN}{\ep |\log \ep|}.
\end{equation}
We thus need to compute
\begin{equation}\label{Evortexkin}
 \int_{\A} g^2 \xi ^2 |\nabla \phi |^2
\end{equation}
which is the term taking into account the kinetic energy of the vortices, and 
\begin{equation}\label{Evortexrot}
 - 2 \int_{\A} g^2 \vec{B} \cdot (iv,\nabla v) 
\end{equation}
which represents the interaction of the vortices with the rotation field. These two terms will be estimated in the  next two subsections respectively. To conclude the proof it is sufficient to gather (\ref{decoupletest}), (\ref{Evortex}), (\ref{kin:rhob}) and the results of Proposition \ref{pro:kinetic} and \ref{pro:rotation} below.

\end{proof}

\subsection{Kinetic energy of the vortices}

We first relate (\ref{Evortexkin}) to the energy of $\barh_f $: using (\ref{estimrhob}) and (\ref{contributionphase}) we have
\begin{multline}\label{kin:rhob}
\int_{\A} g^2 \xi ^2 |\nabla \phi |^2 = \left( 1 + \OO (\ep ^{3/4} |\log \ep |^{2})\right)\int_{\A} \rhob \xi ^2 |\nabla \phi |^2 
\\ \leq \left( 1 + \OO (\ep ^{3/4} |\log \ep |^{2}) \right) \int_{\A} \frac{1}{\rhob} \left| \nabla \barh_f \right|^2 + \OO (1).
\end{multline}
The required estimate is then given in the following proposition

\begin{pro}[\textbf{Kinetic energy generated by the vortices}]\label{pro:kinetic} \mbox{} \\
Let $\NN$ be as in (\ref{Nbvortexsup}) and $I_*$ defined by (\ref{defiIstar}). For $\ep$ small enough there holds
\begin{equation}\label{Evortexkinfinal}
\int_{\A} \frac{1}{\rhob} \left| \nabla \barh_f  \right|^2 \leq 4 \pi ^2 (\NN) ^2  (1+o(1)) I_*  + \pi \NN g^2 (R_*) \left| \log \ep \right| + \OO \left( \frac{\NN \log |\log \ep| }{\ep |\log \ep|}\right) .
\end{equation}
\end{pro}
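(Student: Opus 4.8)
The plan is to estimate this electrostatic energy by first subtracting the potential created by the ``smeared‑out'' vortex ring and then analysing the remaining fluctuation potential cell by cell, using the periodicity of the trial configuration. Let $m:=2\pi\NN\DirC$, i.e.\ $2\pi\NN$ times the uniform probability measure on $\Cet$, and let $\barh_m$ solve \eqref{defihbarf} with $f$ replaced by $m$. Writing $\barh_f=\barh_m+\barh_{f-m}$ with $\barh_{f-m}:=\barh_f-\barh_m$, and integrating by parts using $\barh_m=\barh_{f-m}=0$ on $\dd\A$,
\begin{equation}\label{plan:split}
\int_\A\frac1\rhob|\nabla\barh_f|^2=\int_\A\frac1\rhob|\nabla\barh_m|^2+2\int_\A\barh_m\,(f-m)+\int_\A\frac1\rhob|\nabla\barh_{f-m}|^2,
\end{equation}
and I would bound the three terms in turn.

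The first term is the ``interaction energy''. Since $m$ is rotation invariant, $\barh_m$ is radial; comparing the weight $\rhob^{-1}$ with $g^{-2}$ via Lemma \ref{lem:rhob}, which gives $\rhob^{-1}=g^{-2}(1+\OO(\ep^{3/4}|\log\ep|^2))$ and hence, by a standard variational comparison of the two elliptic problems, $\int_\A\rhob^{-1}|\nabla\barh_m|^2=(1+o(1))\int_\A g^{-2}|\nabla h_m|^2$ with $h_m=2\pi\NN h_{\DirC}$ the $g^{-2}$‑weighted potential of $m$ (notation of \eqref{defihnu}), and then using $I(\DirC)=I_*$ from \eqref{defiIstar} together with the scaling $\int g^{-2}|\nabla h_m|^2=(2\pi\NN)^2 I(\DirC)$, I obtain $\int_\A\rhob^{-1}|\nabla\barh_m|^2\le(1+o(1))\,4\pi^2\NN^2 I_*$. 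For the cross term I would use that $\barh_m$ is radial and that the $p_{i,j}$ lie exactly on $\Cet$, so $\barh_m$ is constant on $\Cet$: the leading part of $\int\barh_m f$ then equals $\int\barh_m m=2\pi\NN\barh_m(R_*)$, and the remainder is bounded by the oscillation of $\barh_m$ over the cores $B(p_{i,j},t)$. Since $|\nabla\barh_m|\le C\NN\rhob(R_*)/R_*$ near $\Cet$ and $t$ is given by \eqref{choixt}, this gives $\big|\int_\A\barh_m(f-m)\big|\le C\NN^2 t(\ep|\log\ep|)^{-1}=o(\NN^2)$, negligible against the first term.

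The third term is the self‑energy of the vortices and is the heart of the estimate. Because $f$ is periodic of period $2\pi/N$ in $\theta$ with the reflection symmetries \eqref{vortexsym1}--\eqref{vortexsym2}, while $m$ and the boundary data on $\dd\A$ are radial, uniqueness forces $\barh_{f-m}$ to be periodic of period $2\pi/N$ and symmetric across every line $\theta=\theta_i$; hence $\dd_\theta\barh_{f-m}=0$ on the radial sides of each cell and $\int_\A\rhob^{-1}|\nabla\barh_{f-m}|^2=N\int_{\A_1}\rhob^{-1}|\nabla\barh_{f-m}|^2$, where on $\A_1$ the source is $(f-m)|_{\A_1}$ — namely $M$ smeared degree‑one vortices equally spaced at distance $\delta\sim1/\NN$ along the arc of $\Cet\cap\A_1$, exactly neutralised by the uniform background $-m|_{\A_1}$ — with Neumann conditions on the radial sides and Dirichlet conditions on the two circular arcs of $\dd\A\cap\dd\A_1$. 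Two points then drive the bound: (i) because the source is neutral on $\A_1$ and the vortices are equally spaced, the fluctuation potential is screened within distance $\sim\delta$ of $\Cet$, and since $\delta\ll\ep|\log\ep|\sim\dist(\Cet,\dd\A)$ and $\rhob$ varies by $o(\rhob(R_*))$ over such a band, the Dirichlet arcs may be discarded at leading order and the weight frozen to $\rhob(R_*)$; (ii) the energy of $M$ equally‑spaced, core‑$t$, degree‑one vortices neutralised by a uniform background (a strip problem) is, by the classical renormalised‑energy computation in the spirit of \cite{BBH,ABM,SS}, equal to $2\pi\rhob(R_*)\log(\delta/t)$ per vortex up to $\OO(\rhob(R_*))$. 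Multiplying by $M$ vortices per cell and $N$ cells, using \eqref{choixt}, $\delta\sim1/\NN\sim\ep/\Om_1$ and $|\log\Om_1|=\OO(\log|\log\ep|)$ — which is exactly what assumption \eqref{Omega1ter} provides — and $\rhob(R_*)=g^2(R_*)(1+o(1))\le C(\ep|\log\ep|)^{-1}$ (Lemma \ref{lem:rhob}), I get $\int_\A\rhob^{-1}|\nabla\barh_{f-m}|^2\le\pi\NN g^2(R_*)|\log\ep|+\OO\big(\NN\log|\log\ep|/(\ep|\log\ep|)\big)$.

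Inserting these three estimates into \eqref{plan:split} yields \eqref{Evortexkinfinal}. I expect the genuine difficulty to lie in point (i): proving that the fluctuation potential is screened on the inter‑vortex scale and that the shrinking arcs of $\dd\A$ may be discarded, i.e.\ exactly controlling the effect of the shrinking width of $\A$, which is the new feature absent from \cite{ABM}. It is the periodicity and the reflection symmetry of the trial vortices that make the reduction to a single cell, and then to a weight‑frozen strip problem, legitimate; everything else — the variational weight comparisons via Lemma \ref{lem:rhob} and the classical renormalised‑energy computation — is routine once that reduction is in place.
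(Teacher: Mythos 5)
Your decomposition $f=m+(f-m)$ with $m=2\pi\NN\DirC$ is a genuinely different organization from the paper's proof: the paper never subtracts a background, but instead uses the symmetry reduction to a single cell (Lemma \ref{lem:hfhfi}), introduces the cell Green functions \eqref{greenlocal}, proves the expansion \eqref{estimGi} by rescaling to a fixed domain and invoking Meyers \cite{Me} and Stampacchia \cite{St}, and then splits the double integral into diagonal terms (giving $\pi\NN g^2(R_*)|\log\ep|$), near off-diagonal terms (killed by the $\beta|\log\beta|$ bound \eqref{kin:termeII2}), and far off-diagonal terms evaluated as a Riemann sum \eqref{kin:termeIII}, the weight comparison $\rhob\leftrightarrow g^2$ being done at the very end as in \eqref{IstarIstarbar}. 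Your bookkeeping is consistent with the target: the identity \eqref{plan:split} is correct, your treatment of the smooth term reproduces the paper's Step 2, the cross-term estimate via the radiality of $\barh_m$ and the oscillation over scale $t$ of \eqref{choixt} is sound, and the arithmetic $\log(\delta/t)=\frac12|\log\ep|+\OO(\log|\log\ep|)$ (using \eqref{Omega1ter} to control $|\log\Om_1|$) does land on the stated error $\OO\left(\NN\log|\log\ep|/(\ep|\log\ep|)\right)$.

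The genuine gap is the one you yourself flag and then do not close: the bound on the fluctuation energy $\int_{\A}\rhob^{-1}|\nabla\barh_{f-m}|^2$. The claims that the fluctuation potential is screened at the inter-vortex scale $\delta$, that the Dirichlet arcs of $\dd\A$ (at distance $\ep|\log\ep|\gg\delta$) can be discarded, that the weight can be frozen at $\rhob(R_*)$, and that the resulting neutralized strip problem has energy $2\pi\rhob(R_*)\log(\delta/t)+\OO(\rhob(R_*))$ per vortex, are precisely the content of the proposition, and they are not covered by the ``classical'' renormalized-energy computations of \cite{BBH,ABM,SS}: here the weight is inhomogeneous, the cell shrinks like $\ep|\log\ep|$, the number $M\propto\Om_1|\log\ep|$ of vortices per cell diverges while the rescaled spacing $1/(\Om_1|\log\ep|)$ tends to zero, and the Dirichlet problem on the annulus carries circulation subtleties (if you try to prove screening by exhibiting a low-energy current with $\curl j=f-m$, the dual variational principle only bounds the Dirichlet energy from below unless the harmonic part/circulation is matched, exactly the issue the paper handles with $\Gb$ and $\kappa$). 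Note that $\barh_{f-m}$ solves a fixed elliptic problem, so an upper bound on its energy cannot be obtained by a trial-function argument; you need either a quantitative decay estimate for the true solution (maximum-principle or Green-function based) or a Green-function expansion. The most direct way to fill the hole is the paper's own mechanism: decompose into cells by symmetry, prove the logarithmic expansion of the cell Green function uniformly in $\ep$ via \cite{Me,St} after rescaling, and then the ``screening'' you postulate is never needed, because the near off-diagonal interactions are simply bounded by $C\beta|\log\beta|\NN^2$ and the far ones by a Riemann sum. Without some such quantitative input, point (i) of your sketch remains an assertion, and it is exactly the new difficulty of the shrinking, inhomogeneous annulus that the proposition is about.
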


Note that the third term in the right-hand side of the above equation is always much smaller than the second one because $g^2 (R_*) \propto (\ep |\log \ep|) ^{-1}$. It is also a remainder compared to the first one because of the specific choice (\ref{Nbvortexsup}) and the assumption \eqref{Omega1ter}.\\
The proof requires several constructions. The term we want to compute can be estimated using a Green function defined as follows
\begin{equation}\label{greenglobal}
\begin{cases}
-\nabla_x \left( \frac{1}{\rhob (x)} \nabla_x G(x,y) \right) = \delta_y (x) \mbox{ for } x \in \A\\
G(x,y) = 0  \mbox{ for } x\in \dd \A.
\end{cases}
\end{equation}
The existence and symmetry ($G(x,y) = G(y,x)$) of such a function is classical, because at fixed $\ep$, $\rhob$ is bounded above and below in $\A$ (see Lemma \ref{lem:rhob}). Then, using the Green representation of $\barh_f$
\begin{equation}\label{kin:energiegreen}
\int_{\A} \frac{1}{\rhob} \left| \nabla \barh_f \right|^2 = \int_{x\in \A} \int_{y\in \A} G(x,y) f(x) dx f(y) dy. 
\end{equation}
Before going to the technical implementation, let us explain the ideas behind the computation. We have
\begin{equation}\label{bornesup:formel1}
\int_{x\in \A} \int_{y\in \A} G(x,y) f(x) dx f(y) dy = \sum_{i,j,k,l} \iint_{B (p_{i,j},t)\times B (p_{k,l},t)} G(x,y) \frac{4 dx dy}{t ^4}.
\end{equation}
As is well-known, the Green function has a logarithmic singularity at $x=y$. More precisely, for $x$ close to $y$ we expect the behavior (see Lemma \ref{lem:propGi} where this is proved for slightly different Green functions) 
\[
G(x,y) \sim \frac{\rhob (y)}{2 \pi} \log \left( \frac{\ep |\log \ep|}{ |x-y|} \right) \sim \frac{g^2 (y)}{2 \pi} \log \left( \frac{\ep |\log \ep|}{ |x-y|} \right).
\] 
The second estimate is a consequence of Lemma \ref{lem:rhob}. We replace $G$ by the above expression in the diagonal terms (i.e. $i=j$ and $k=l$) of the sum (\ref{bornesup:formel1}). A direct computation yields the second term in (\ref{Evortexkinfinal}). For the off-diagonal terms we simply note that $G$ is regular when $x$ is not too close to $y$, in particular when $x\in B (p_{i,j},t)$ and $y\in B (p_{k,l},t)$ with $i\neq k$ or $j\neq l$. Then the off-diagonal sum can be seen as a Riemann sum and yields to leading order 
\begin{multline*}
\sum_{(i,j)\neq (k,l)} \iint_{B (p_{i,j},t)\times B (p_{k,l},t)} G(x,y) \frac{4 dx dy}{t ^4} \sim \sum_{(i,j)\neq (k,l)} G(p_{i,j},p_{k,l}) \\ 
\sim (4 \pi \NN ) ^2 \iint_{\A \times \A} G(x,y) \DirC(dx) \DirC (dy) =  (4 \pi \NN ) ^2 \int_{\A} \frac{1}{\rho} |\nabla \barh_{\DirC}| ^2.
\end{multline*}
There remains to note that because $\rho \sim g^2$ (Cf Lemma \ref{lem:rhob}) 
\[
\int_{\A} \frac{1}{\rho} |\nabla \barh_{\DirC}| ^2 \sim \int_{\A} \frac{1}{g^2} |\nabla h_{\DirC}| ^2 \sim I_*.
\]
For this heuristic presentation we have deviated from the actual proof procedure that we present below.\\
We note that several authors (see e.g. \cite{AAB,ASS,AB1,AB2,ABM,SS}) have already successfully used the representation (\ref{kin:energiegreen}) for the computation of similar quantities. However, in our case, the particular geometry of $\A$ (fixed radius but shrinking width) makes it difficult to obtain the properties of $G$ required in the computation. We thus prefer to take advantage of the symmetry of the vorticity measure $f$ to obtain another expression of (\ref{kin:energiegreen}). \\
Let us introduce for any $i$ the function $\barh_f ^i $ defined on the cell $\A_i$ as follows
\begin{equation}\label{defihbarfi}
 \begin{cases}
		-\nabla \left( \frac{1}{\rhob} \nabla \barh_{f} ^i \right) = f_i \mbox{ in } \A_i \\
		\barh_{f} ^i = 0 \mbox{ on } \partial \A_i \cap \partial \A \\
		\frac{1}{\rhob} \frac{\dd \barh_f ^i }{\dd n} = 0 \mbox{ on } \dd \A_i \setminus \partial \A.
 \end{cases}
\end{equation}
See (\ref{vorticitetestcell}) for the definition of $ f_i$. Note that we impose Dirichlet conditions only on the azimuthal part of the boundary of $\A_i$ (that is, the part that coincides with the boundary of $\A$), while the behavior on the radial parts of the boundary is left free, which leads to the Neumann condition in (\ref{defihbarfi}). Because of the symmetry of $f$, the following holds

\begin{lem}[\textbf{Alternate definition of $\barh_f$}]\label{lem:hfhfi}\mbox{}\\
We have 
\begin{equation}
\barh_f = \sum_i  \barh_f ^i \mbox{ on } \A
\end{equation}
i.e. for any $i\in \{ 1,\ldots N \}$
\[
 \barh_f = \barh_f ^i \mbox{ on } \A_i.
\]
As a consequence 
\begin{equation}\label{kin:divise}
\int_{\A} \frac{1}{\rhob} \left| \nabla \barh_f \right|^2 = \sum_i \int_{\A_i} \frac{1}{\rhob} \left| \nabla \barh_f ^i \right|^2.
\end{equation}
\end{lem}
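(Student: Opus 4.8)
The plan is to deduce the cell decomposition from uniqueness for the elliptic problems involved, exploiting the reflection symmetries (\ref{vortexsym1})--(\ref{vortexsym2}) imposed on $f$. Concretely, I would show that $\barh_f$ restricted to each cell $\A_i$ solves \emph{exactly} the mixed Dirichlet--Neumann problem (\ref{defihbarfi}), so that by uniqueness $\barh_f|_{\A_i}=\barh_f^i$; the identity (\ref{kin:divise}) is then immediate since the cells $\A_i$ tile $\A$ with pairwise disjoint interiors.

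The first step is to check that $\barh_f$ inherits the symmetries of $f$. For fixed $i$, let $R_i$ be the reflection $(r,\theta)\mapsto(r,2\theta_i-\theta)$; it maps the annulus $\A$ onto itself, preserves $\partial\A$, and since $\rhob$ is radial (Lemma \ref{lem:rhob}) it leaves the weighted Dirichlet form $\int_\A\rhob^{-1}|\nabla\cdot|^2$ and the mass normalization invariant. Using the hypothesis $f\circ R_i=f$ from (\ref{vortexsym1}) one sees that $\barh_f\circ R_i$ solves the same boundary value problem (\ref{defihbarf}) as $\barh_f$; since that problem has a unique solution (Lax--Milgram applies because, at fixed $\ep$, $\rhob$ is bounded above and below on $\A$), it follows that $\barh_f\circ R_i=\barh_f$. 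Likewise, (\ref{vortexsym2}) gives symmetry of $\barh_f$ under the reflection about the mid-line of each cell, and composing two such reflections recovers the periodicity $\barh_f(r,\theta)=\barh_f(r,\theta+2\pi/N)$ consistent with (\ref{vorticitetestcell}).

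The second step is to turn the symmetry $\barh_f\circ R_i=\barh_f$ into a boundary condition on $\A_i$. Across the radial segment $\{\theta=\theta_i\}$ the relation says that $\barh_f$ is even in $\theta-\theta_i$; testing the weak formulation of (\ref{defihbarf}) against functions supported near that segment and using that both the equation and $\barh_f$ are $R_i$-invariant, one obtains that the conormal flux $\rhob^{-1}\partial_n\barh_f$ through $\{\theta=\theta_i\}$ vanishes (the contributions from the two sides are mirror images of one another, hence equal and opposite). The same holds on $\{\theta=\theta_{i+1}\}$. Combining this with $-\nabla(\rhob^{-1}\nabla\barh_f)=f=f_i$ in $\A_i$ and $\barh_f=0$ on $\partial\A_i\cap\partial\A$ --- both read off directly from (\ref{defihbarf}) and the definition of $f_i$ --- shows that $\barh_f|_{\A_i}$ satisfies precisely (\ref{defihbarfi}). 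The mixed problem (\ref{defihbarfi}) is uniquely solvable because its Dirichlet part $\partial\A_i\cap\partial\A$ has positive one-dimensional measure, so the weighted Dirichlet form is coercive on the admissible subspace of $H^1(\A_i)$; hence $\barh_f=\barh_f^i$ on $\A_i$, and summing $\int_{\A_i}\rhob^{-1}|\nabla\barh_f|^2=\int_{\A_i}\rhob^{-1}|\nabla\barh_f^i|^2$ over $i$ gives (\ref{kin:divise}).

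I expect the only genuinely delicate point to be the passage from the global reflection symmetry of $\barh_f$ to the homogeneous Neumann condition on the radial parts of $\partial\A_i$: since $\rhob$ has a jump discontinuity and $f$ is merely $L^\infty$, this should be argued in the variational sense sketched above rather than pointwise, but it introduces no new idea. Everything else is a routine invocation of Lax--Milgram and of the tiling $\A=\bigcup_i\A_i$.
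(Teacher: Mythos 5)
Your proof is correct, but it runs the uniqueness-plus-symmetry mechanism in the direction opposite to the paper's. The paper goes local-to-global: starting from the cell solutions $\barh_f^i$ of (\ref{defihbarfi}), it uses (\ref{vortexsym2}) and (\ref{vorticitetestcell}) together with the radiality of $\rhob$ to show that each $\barh_f^i$ is even about the midline of its cell and that the cell solutions are rotations of one another, so that the glued function $\sum_i \barh_f^i$ is continuous across the radial interfaces, belongs to $H^1(\A)$, and solves (\ref{defihbarf}) in the distributional sense (no interface contribution appears because the conormal flux of each piece vanishes there \emph{by construction}); uniqueness of the global problem then concludes. You go global-to-local: uniqueness of (\ref{defihbarf}) plus $f\circ R_i=f$ and radial $\rhob$ gives $\barh_f\circ R_i=\barh_f$, from which you extract the homogeneous conormal condition on the radial segments and conclude by uniqueness (coercivity) of the mixed problem (\ref{defihbarfi}). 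The trade-off is exactly the point you flag: your route must justify the Neumann condition weakly. The clean way is to observe that $V=\rhob^{-1}\nabla\barh_f$ is an $L^2$ vector field with $\mathrm{div}\, V=-f\in L^\infty$, so its normal trace on each segment $\{\theta=\theta_i\}$ is well defined in $H^{-1/2}$ and coincides from the two sides, while the reflection equivariance of $V$ makes the two one-sided traces opposite, hence zero; alternatively one can test (\ref{defihbarf}) directly with the dihedral symmetrization of a cell test function, which is continuous across the interfaces and hence admissible. The paper's route avoids interior traces altogether, at the price of checking the $H^1$ gluing. A small bonus of your direction is that it only uses the reflections (\ref{vortexsym1}) about the lines $\theta=\theta_i$, $\theta=\theta_{i+1}$; the midline symmetry and periodicity are only needed later, when the cell energies are identified with one another in the proof of Proposition \ref{pro:kinetic}.
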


\begin{proof}
Using (\ref{vortexsym2}) and the fact that $\rhob$ is radial, we obtain that $\barh_f ^i $ is symmetric with respect to a reflection of the angular variable, that is
\begin{equation}\label{symmetryhbarf1}
\barh_f ^i (r,\theta) = \barh_f ^i (r,\theta_i + \theta_{i+1} -\theta).
\end{equation}
On the other hand, using (\ref{vorticitetestcell}) we have also
\begin{equation}\label{symmetryhbarf2}
 \barh_f ^i (r,\theta) = \barh_f ^j (r,\theta + \theta_j - \theta_i).
\end{equation}
Using the last two properties, it is clear that $\sum_i  \barh_f ^i$ is well-defined, continuous on $\A$ and satisfies 
\[
 \int_{\A} \left| \nabla \left( \sum_i \barh_f ^i \right)\right| ^2 < +\infty.
\]
A simple computation shows that
\[
 -\nabla \left( \frac{1}{\rhob} \nabla \left( \sum_i \barh_f ^i \right) \right) = f 
\]
on $\A$, in the sense of distributions. The conclusion follows by uniqueness of the solution to the elliptic problem (\ref{defihbarf}).
\end{proof}

The point of using this expression of $\barh_f$ is that we are now reduced to the computation of the terms in the right-hand side of (\ref{kin:divise}). We introduce new Green functions, defined on each cell
\begin{equation}\label{greenlocal}
\begin{cases}
-\nabla_x \left( \frac{1}{\rhob (x)} \nabla_x G_i(x,y) \right) = \delta_y (x) \mbox{ for } x \in \A_i\\
G_i(x,y) = 0  \mbox{ for } x\in \dd \A_i \cap \dd \A \\
\frac{1}{\rhob} \frac{\dd G_i (x,y) }{\dd n} = 0 \mbox{ for } x\in \dd \A_i \setminus \partial \A.
\end{cases}
\end{equation}
Then, using the Green representation of $\barh_f ^i$
\begin{equation}\label{kin:enegiegreeni}
\int_{\A_i} \frac{1}{\rhob} \left| \nabla \barh_f ^i \right|^2 = \int_{x\in \A_i} \int_{y\in \A_i} G_i(x,y) f(x) dx f(y) dy. 
\end{equation}
Now we are reduced to a computation on each cell. What makes the computation easier compared to what it would have been using (\ref{kin:energiegreen}) is that each cell can be mapped by a dilation of the variables to a fixed domain. The difficulty of the shrinking width of $\A$ is thus avoided.\\ Note also that instead of computing in (\ref{kin:energiegreen}) the interaction through the Green potential $G$ of each pair of vortices in our collection, we can now simply compute the interaction through $G_i$ of each pair of vortices in $\A_i$, and sum the contributions of each cell. Thus each vortex effectively interacts only with $M \propto \Om_1 |\log \ep|$ vortices through a modified Green function instead of $NM \propto \Om_1 \ep ^{-1}$ vortices through the original Green function. \\
We now prove useful properties of the Green functions defined in (\ref{greenlocal}). It is mostly here that our choice of $\rhob$ will prove useful, in particular the fact that the gradient of this function is properly bounded above close to $\Cet$. Since $f$ has its support close to the circle of radius $R_*$ we are mainly concerned with properties of $G_i$ in that region for the computation of (\ref{kin:enegiegreeni}).  

\begin{lem}[\textbf{Properties of the Green functions}]\label{lem:propGi}\mbox{}\\
Let $G_i$ be the Green function defined in (\ref{greenlocal}). Let $C_G$ be some constant, small enough and independent of $\ep$. There holds
\begin{enumerate}
\item for any $y\in \A_i$, $G(.\: ,y) \in W^{1,p} (\A_i)$ for any $1 \leq p < 2$  
\item $G_i(x,y) \geq 0$ for any $(x,y) \in \A_i \times \A_i \setminus \left \{ x=y \right \}$
\item $G_i$ is symmetric, i.e. $G_i(x,y) = G_i (y,x)$
\item Let $y\in \A_i$ be such that $||y|-R_*| \leq C_G \ep |\log \ep|$. For any compact set $K\subset \subset \A_i$ whose diameter satisfies
\begin{equation}\label{asumK}
 \diam (K) \leq C_G \ep |\log \ep|,
\end{equation}
and any $y\in K$, there exists a constant $C_K$ such that, for any $x\in K$
\begin{equation}\label{estimGi}
 \left| G_i(x,y) + \frac{\rhob (y)}{2\pi} \log \left( \frac{|x-y|}{\ep |\log \ep|} \right)\right| \leq C_K \frac{1}{\ep |\log \ep|}.
\end{equation}
\item for any $y\in \A_i$, $G_i(x,y)$ is continuous, uniformly in $\ep$, on 
\[
 \A_i \cap \left \{ ||x|-R_*| \leq C_G \ep |\log \ep|\right \}\cap \left\{ |x-y| \geq C_G \ep |\log \ep| \right\}.
\] 


\end{enumerate}

\end{lem}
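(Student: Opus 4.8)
The plan is to establish the five properties of the cell Green functions $G_i$ by combining standard elliptic theory for the weighted operator $-\nabla(\rho^{-1}\nabla\cdot)$ with a rescaling argument that exploits the fact that each cell $\A_i$, although of shrinking width $\propto\ep|\log\ep|$, becomes a fixed domain after dilation. Since $\rho$ is bounded above and below on $\A$ (Lemma \ref{lem:rhob}) and radial, the operator is uniformly elliptic at fixed $\ep$, so existence, the $W^{1,p}$-regularity for $p<2$ (Item 1), the symmetry $G_i(x,y)=G_i(y,x)$ (Item 3) and the pointwise positivity $G_i\geq0$ (Item 2, via the weak maximum principle for the weighted operator together with the fact that the fundamental solution is positive near the pole) are essentially classical; I would dispatch them quickly, citing the same references used elsewhere in the paper and noting that the mixed Dirichlet/Neumann boundary condition does not affect these arguments.

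The substantive content is Items 4 and 5, which control the local behavior of $G_i$ near the circle $\Cet$ of radius $R_*$ where the vorticity $f$ is supported. First I would rescale: set $x = R_* \vec e + \ep|\log\ep|\, X$ near a point of $\Cet$, so that a neighborhood of size $C_G\ep|\log\ep|$ becomes a neighborhood of size $C_G$ in the $X$ variable, and the cell $\A_i$ maps to a fixed reference domain. Under this rescaling the weight $\rho(r)$, being Lipschitz with gradient bounded by $C(\ep|\log\ep|)^{-2}$ for $r\geq\rb$ (this is precisely why $\rho$ was built from $\tfm$ rather than $g^2$), converges uniformly to the constant $\rho(y)$ on balls of fixed radius; hence the rescaled operator converges to the constant-coefficient Laplacian $-\rho(y)^{-1}\Delta_X$. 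The rescaled Green function then differs from the model Green function $\tfrac{\rho(y)}{2\pi}\log\tfrac{1}{|X-Y|}$ by a function that is harmonic (to leading order) in a fixed ball and whose gradient is controlled by the local $L^2$ (or $L^p$, $p<2$) bound on $G_i$ coming from Item 1, so by interior elliptic estimates it is bounded; undoing the rescaling turns $|X-Y|$ back into $|x-y|/(\ep|\log\ep|)$ and the $O(1)$ bound in the $X$ variable into the $O((\ep|\log\ep|)^{-1})$ bound asserted in \eqref{estimGi}. Item 5 (uniform-in-$\ep$ continuity of $G_i(\cdot,y)$ away from the diagonal on the annular neighborhood of $\Cet$) follows from the same rescaling plus interior Schauder or De Giorgi–Nash–Moser estimates applied to $G_i$, which is a weak solution of the homogeneous equation away from $y$, the point being that after rescaling the equicontinuity becomes a statement about solutions on a fixed domain with coefficients in a fixed compact set.

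The main obstacle is making the perturbation argument in Items 4–5 quantitatively correct on the correct length scale: one must be careful that the error between $\rho$ and its frozen value $\rho(y)$, though small, is multiplied by derivatives of $G_i$ that blow up logarithmically at the diagonal, and that the Neumann part of the boundary of $\A_i$ may come within $O(\ep|\log\ep|)$ of the support of $f$. The first issue is handled by isolating the singularity: write $G_i = \tfrac{\rho(y)}{2\pi}\log\tfrac{\ep|\log\ep|}{|x-y|}\,\chi + R_i$ with $\chi$ a cutoff, note that the explicit term solves the frozen equation up to a right-hand side that is integrable and $O((\ep|\log\ep|)^{-1})$ after accounting for $\nabla\rho$, and then estimate $R_i$ by the weighted energy / $W^{1,p}$ bound and Sobolev embedding on the rescaled domain. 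The second issue is why the hypotheses of Item 4 restrict $y$ and $K$ to lie within $C_G\ep|\log\ep|$ of $\Cet$ with $C_G$ small: this keeps $K$ in the interior of $\A_i$ (away from the radial Neumann boundary by the choice of cell width and the evenly-spaced vortex configuration, cf. \eqref{corecell}), so that only interior estimates are needed and the Neumann boundary plays no role in the local analysis. Once these points are in place, the constants $C_K$ depend only on $\diam(K)/(\ep|\log\ep|)$ and on the uniform ellipticity bounds, hence are $\ep$-independent as claimed.
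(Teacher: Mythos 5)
Your overall strategy is the paper's: Items 1--3 are dispatched as classical facts for the (at fixed $\ep$) uniformly elliptic operator, and Items 4--5 are obtained by dilating by $(\ep|\log\ep|)^{-1}$ to a domain of fixed size, freezing the weight at $\rhob(y)$, subtracting the logarithmic model solution, and estimating the remainder by interior estimates, using exactly the two facts $\rhob$ was designed for: $|\nabla \rhob|\leq C(\ep|\log\ep|)^{-2}$ (because $\rhob=\tfm$ near $\Cet$) and $\rhob\geq C(\ep|\log\ep|)^{-1}$ there. This is the same route as the paper, which follows the scheme of Lemma 3.1 of \cite{ASS}.

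Two points need repair before the argument closes. First, the key quantitative step is glossed: in two dimensions a $W^{1,p}$ bound with $p<2$ (which is all that Item 1, or a Stampacchia-type estimate for the Green function, provides) does not give $L^{\infty}$ control by Sobolev embedding, so ``the weighted energy / $W^{1,p}$ bound and Sobolev embedding'' cannot by itself yield \eqref{estimGi}. One must exploit the equation satisfied by the remainder $w=\Gt+\frac{\rhot(\yt)}{2\pi}\log|\xt-\yt|$, whose right-hand side is bounded in $L^{q}$, $q<2$, uniformly in $\ep$ (this is where the upper and lower bounds on $\rhob$ and the gradient bound enter), and then gain integrability above the exponent $2$: the paper does this with Meyers' theorem \cite{Me} (giving $\nabla w\in L^{p}$ for some $p>2$, the sub-critical norm of $\nabla w$ being controlled through Stampacchia's $W^{1,p'}$ bound on the Green function \cite{St}) and only then applies Sobolev embedding; a De Giorgi--Nash--Moser local boundedness theorem with right-hand side in $L^{q}$, $q>1$, would be an acceptable substitute, but some such device is indispensable. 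Second, your reading of the hypothesis $\bigl||y|-R_*\bigr|\leq C_G\ep|\log\ep|$ is off: interiority is already guaranteed by $K\subset\subset\A_i$ together with the small diameter of $K$, and the real role of the smallness of $C_G$ is to ensure that on $K$ one has $\rhob=\tfm$ (hence the gradient bound you invoke) and $\rhob\geq C(\ep|\log\ep|)^{-1}$, so that the rescaled operator is uniformly elliptic and the frozen value $\rhob(y)$ is of order $(\ep|\log\ep|)^{-1}$, which is precisely what produces the scale $C_K(\ep|\log\ep|)^{-1}$ in \eqref{estimGi}. With these corrections your proposal coincides with the published proof, including the treatment of Item 5 via uniform-in-$\ep$ continuity estimates for solutions of the homogeneous equation away from the pole.
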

 
Before giving the proof of this result we stress that we are going to use some results of \cite{St} (see also \cite{LSW}), where the main emphasis is on homogeneous Dirichlet boundary conditions. However, a careful inspection of the proofs shows that the results apply as well to our case where the boundary conditions are mixed (Neumann/Dirichlet), as long as we are only concerned with interior estimates, as we are.

\begin{proof}
The first three properties are classical, because the operator we consider is uniformly elliptic in $\A_i$ (note however that the ellipticity is not uniform with respect to $\ep$).\\
To prove items 4 and 5, we introduce a dilation of the variables
\begin{equation}\label{chgtvariable}
x= \ep |\log \ep| \xt, \quad y = \ep |\log \ep| \yt 
\end{equation}
mapping $\A_i$ to $(\ep |\log \ep|)^{-1} \A_i$, which is a domain of fixed size. We define
\begin{eqnarray}
\rhot (\xt) &:=& \ep |\log \ep| \rhob (x) \\
\Gt (\xt,\yt) &:=& \ep |\log \ep| G_i (x,y) \\
w (\xt,\yt) &:=& \Gt (\xt,\yt) + \frac{\rhot(\yt)}{2\pi} \log |\xt - \yt|.
\end{eqnarray}
We have from (\ref{greenlocal})
\begin{equation}\label{greenlocalrescale}
-\nabla_{\xt} \left( \frac{1}{\rhot (\xt)} \nabla_{\xt} \Gt (\xt,\yt) \right) = \delta_{\yt} (\xt) 
\end{equation}
and
\begin{equation}\label{equationgreenw}
-\nabla_{\xt} \left( \frac{1}{\rhot (\xt)} \nabla_{\xt} w (\xt,\yt) \right) = \frac{\rhot (\yt)\nabla_{\xt} \rhot (\xt)}{2\pi \rhot ^2 (\xt)} \nabla_{\xt} \left(\log |\xt - \yt |\right) := F(\xt,\yt)
\end{equation}
on $(\ep |\log \ep|)^{-1} \A_i$ in the sense of distributions. We now proceed as in \cite[Lemma 3.1]{ASS} :\\
Let $y\in \A_i$ be such that $||y|-R_*| \leq C_G \ep |\log \ep|$ for a small enough constant $C_G$ and $K$ be a compact set included in $\A_i$ satisfying (\ref{asumK}) and $y\in K$. We have, for any $x\in K$
\begin{equation}\label{KRstar}
||x|-R_*|\leq 2 C_G \ep |\log \ep|
\end{equation}
for a small enough constant $C_G$. Let $\Kt$ be the image of $K$ under the dilation of variables (\ref{chgtvariable}). Using Theorem 2 of \cite{Me}, there exists $p>2$, $p'<2$ and a constant $C_K$ such that (here we fix $\yt$ and consider the functions as depending only on $\xt$)
\begin{equation}\label{appliMe}
\left \Vert \nabla w \right \Vert _{L^p (\Kt)} \leq C_{\tilde{K}} \left( \left \Vert \nabla w \right \Vert_{L^{p'}(K')} + \left \Vert F \right \Vert_{W^{-1,p} (K')} \right)
\end{equation}
where $K'$ is some set such that $\Kt \subset K'$. Using a Sobolev embedding, we can take some $1<q<2$ such that
\begin{equation}\label{appliMe2}
\left \Vert F \right \Vert_{W^{-1,p} (K')} \leq C \left \Vert F \right \Vert_{L^{q} (K')}.  
\end{equation}
But, using the definition of $\rhob$ and the explicit expression of $\tfm$ we see that as long as $x$ satisfies (\ref{KRstar}) one has $\rhob (x) \geq C (\ep |\log \ep|)^{-1}$ and thus 
\[
\frac{\rhot (\yt)}{\rhot ^2 (\xt)} \leq C 
\]
because $\rhob (y) \leq C (\ep |\log \ep|)^{-1}$ for any $y \in \A$.
Also, if $C_G$ is small enough , $\rhob (x) = \tfm (x)$ for any $x \in K$, and thus
\[
 \left| \nabla \rhob (x)\right| \leq  \frac{C}{\ep ^2 |\log \ep| ^2}
\]
on $K$, which implies that on the domain we are considering
\[
\frac{\rhot (\yt)\nabla_{\xt} \rhot (\xt)}{2\pi \rhot ^2 (\xt)} \leq C. 
\]
Hence, for any $q<2$
\begin{equation}\label{borneF Green}
\left \Vert F \right \Vert_{L^{q} (K')} \leq C. 
\end{equation}
On the other hand, using Theorems 9.1 of \cite{St} (note that the value of the constant in the right-hand side of (9.6) in \cite{St} is given by Theorem 4.2 of the same paper) to estimate the $W^{1,p'}$ norm of $\Gt$ we have, for any $p'<2$
\begin{equation}\label{bornew}
\left \Vert \nabla w \right \Vert_{L^{p'}(K')} \leq \left \Vert w \right \Vert_{W^{1,p'}(K')} \leq C \left|(\ep |\log \ep|)^{-1} \A_i \right| ^{1/2 - 1/p'} \int_{(\ep |\log \ep|)^{-1} \A_i} \delta_{\yt} \leq C
\end{equation}
because $(\ep |\log \ep|)^{-1} \A_i$ has a fixed size. Plugging (\ref{appliMe2}), (\ref{borneF Green}) and (\ref{bornew}) in (\ref{appliMe}) and using a Sobolev embedding (recall that $p>2$) we obtain
\begin{equation}\label{appliMe3}
\left \Vert w \right \Vert_{L^{\infty} (K')} \leq C \left \Vert w \right \Vert_{W ^{1,p}(K')} \leq C. 
\end{equation}
There only remains to change variables to conclude that (\ref{estimGi}) holds.\\
To prove Item 5 it is sufficient to fix $\yt \in (\ep |\log \ep|)^{-1} \A_i$ and show that $\Gt(\xt,\yt)$ is continuous on 
\[
     (\ep |\log \ep|)^{-1} \A_i \cap \left \{ |\xt|-(\ep |\log \ep|)^{-1}R_*| \leq C_G \right \}\cap \left\{ |\xt-\yt| \geq C_G\right\}.
\]
This follows from Theorem 7.1 of \cite{St}. Note that the constant in the right-hand side of (7.5) in that paper does not depend on the domain, so that the continuity is indeed uniform with respect to $\ep$.
\end{proof}

We are now able to present the proof of Proposition \ref{pro:kinetic}. With the above constructions in hand there is now no real difficulty to adapt a method that has already been used several times in the literature \cite{AAB,ASS,AB1,AB2,ABM,SS}. We will thus be a little sketchy in places.

\begin{proof}[Proof of Proposition \ref{pro:kinetic}]

\emph{Step 1.} 
Clearly, from (\ref{symmetryhbarf2}) we have for any $i,j\in \{ 1, \ldots , N\}$
\[
\int_{\A_i} \frac{1}{\rhob} \left| \nabla \barh_f ^i \right|^2 = \int_{\A_j} \frac{1}{\rhob} \left| \nabla \barh_f ^j \right|^2 .
\]
It is thus sufficient to compute one of the terms in the sum (\ref{kin:divise}) and the result will follow.

We use the Green representation of $\barh _f ^1$ for the computation :
\begin{eqnarray}\label{kin:calcul1}
\int_{\A_1} \frac{1}{\rhob} \left| \nabla \barh_f ^1 \right|^2 &=&  \iint_{(x,y)\in \A_1 \times \A_1} G_1 (x,y) f(x) dx f(y) dy  \nonumber \\
&=& \sum_{i} \iint_{(x,y)\in B(p_{1,i},t) \times B(p_{1,i},t)   } G_1 (x,y)  \frac{4}{t ^4} dx dy \nonumber \\
&+& \sum_{j\neq i, \: |p_{1,i}-p_{1,j}|< \beta \ep |\log \ep|} \iint_{(x,y)\in B(p_{1,i},t)\times B(p_{1,j},t) } G_1 (x,y)  \frac{4}{t ^4} dx dy  \nonumber \\
&+& \iint_{(x,y)\in \A_1 \times \A_1 \cap \left \{ |x-y| \geq \beta \ep |\log \ep|\right\} } G_1 (x,y) f(x) dx f(y) dy 
\end{eqnarray}
where $\beta$ is a small parameter (see below). On $B(p_{1,i},t)\times B(p_{1,i},t)$, according to (\ref{estimGi}) there is a constant (independent of $i$) so that (recall that $t=\ep ^{3/2} |\log \ep |^{1/2}$ and $|p_{1,i} | = R_*$)
\[
 \left| G_i(x,y) + \frac{\rhob (y)}{2\pi} \log \left( \frac{|x-y|}{\ep |\log \ep|} \right)\right| \leq  \frac{C_K}{\ep |\log \ep|}.
\]
Also, on $B(p_{1,i},t)$, $\rhob \equiv \tfm$ which gradient is bounded by $C(\ep |\log \ep|)^{-2}$, thus on $B(p_{1,i},t)\times B(p_{1,i},t)$
\[
 \left| G_i(x,y) + \frac{\rhob (R_*)}{2\pi} \log \left( \frac{|x-y|}{\ep |\log \ep|} \right)\right| \leq  \frac{C_K }{\ep |\log \ep|}
\]
which allows to compute
\begin{equation}\label{kin:termeI1}
\sum_{i} \iint_{(x,y)\in B(p_{1,i},t) \times B(p_{1,i},t)   } G_1 (x,y)  \frac{4}{t ^4} dx dy =  2\pi M \rhob (R_*) \log \left(\frac{\ep |\log \ep|}{t}  \right) + \OO \left( \frac{M}{\ep |\log \ep|}\right).
\end{equation}
Using (\ref{choixt}) and (\ref{estimrhob}) we thus obtain
\begin{equation}\label{kin:termeI2}
\sum_{i} \iint_{(x,y)\in B(p_{1,i},t) \times B(p_{1,i},t)   } G_1 (x,y)  \frac{4}{t ^4} dx dy = \pi M g^2 (R_*) \left| \log \ep \right|   + \OO \left( \frac{M \log |\log \ep| }{\ep |\log \ep|}\right).
\end{equation}
We recall that
\[
|p_{1,i}- p_{1,i+1}| = 2 R_* \sin( \pi / MN) \sim \frac{2\pi R_*}{MN} \gg t. 
\]
Combining this fact with the estimate (\ref{estimGi}), the upper bound $\rhob \leq C (\ep |\log \ep|)^{-1}$  and a Riemann sum yields
\begin{multline}\label{kin:termeII1}
 \sum_{j\neq i, \: |p_{1,i}-p_{1,j}|< \beta \ep |\log \ep|} \iint_{(x,y)\in B(p_{1,i},t)\times B(p_{1,j},t) } G_1 (x,y)  \frac{4}{t ^4} dx dy 
\\ \leq \frac{C}{\ep |\log \ep|} \sum_i \sum_{j\neq i, \: |p_{1,i}-p_{1,j}|< \beta \ep |\log \ep|} \left(\left| \log  \frac{\left| p_{1,i} - p_{1,j} \right|}{\ep |\log \ep|} \right| +1 \right)
\\ \leq C \frac{MN}{\ep |\log \ep|} \sum_{i=1} ^M \int_{x \in \Cet ^i } \left( \left |\log \frac{|x-p_i|}{\ep |\log \ep|} \right| +1 \right)dx 
 \end{multline}
where we denote
\[
\Cet ^i : = \left \{ x \in \Cet |\: 2\pi R_* (MN) ^{-1}  < |x-p_i| < \beta \ep |\log \ep|\right \}. 
\]
We conclude from (\ref{kin:termeII1}), using $MN \propto \Om_1 \ep ^{-1}$ that
\begin{equation}\label{kin:termeII2}
\sum_{j\neq i, \: |p_{1,i}-p_{1,j}|< \beta \ep |\log \ep|} \iint_{(x,y)\in B(p_{1,i},t)\times B(p_{1,j},t) } G_1 (x,y)  \frac{4}{t ^4} dx dy \leq C \beta |\log \beta |M ^2 N.
\end{equation}
The term on the fourth line of (\ref{kin:calcul1}) is also estimated using a Riemann sum. Using the fact that $G_1$ is continuous uniformly in $\ep$ on $\A_i \times \A_i \setminus \left \{ x=y \right \}$ we obtain
\begin{multline}\label{kin:termeIII} 
\iint_{(x,y)\in \A_1 \times \A_1 \cap \left \{ |x-y| \geq \beta \ep |\log \ep|\right\} } G_1 (x,y) f(x) dx f(y) dy \\
= 4 \pi ^2 \left(MN \right) ^2 \iint_{(x,y)\in \A_1 \times \A_1 \cap \left \{ |x-y| \geq \beta \ep |\log \ep|\right\} } G_1 (x,y) \DirC (x) \DirC  (y) (1+o(1))
\\ \leq 4 \pi ^2 \left(MN \right) ^2 \iint_{(x,y)\in \A_1 \times \A_1 } G_1 (x,y) \DirC (x) \DirC (y) (1+o(1)).
\end{multline}
Let us denote $\barh_* ^i$ the function defined in $\A_i$ satisfying
\begin{equation}\label{defihbarstari}
 \begin{cases}
		-\nabla \left( \frac{1}{\rhob} \nabla \barh_{*} ^i \right) = \DirC \mbox{ in } \A_i \\
		\barh_{*} ^i = 0 \mbox{ on } \partial \A_i \cap \partial \A \\
		\frac{1}{\rhob} \frac{\dd \barh_{*} ^i }{\dd n} = 0 \mbox{ on } \dd \A_i \setminus \partial \A
 \end{cases}
\end{equation}
and $\barh_*$ satisfying in $\A$
\begin{equation}\label{defihbarstar}
	\begin{cases}
		-\nabla \left( \frac{1}{\rhob} \nabla \barh_{*}\right) = \DirC \mbox{ in } \A \\
		\barh_{*} = 0 \mbox{ on } \partial \A.
	\end{cases} 
\end{equation}
Recall that $\DirC$ is the normalized arclength measure on the circle $\Cet$ of radius $R_*$. Up to now we have proved, letting $\beta \rightarrow 0$ in (\ref{kin:termeII2})
\[
\int_{\A} \frac{1}{\rhob} \left| \nabla \barh_f  \right|^2 \leq 4 \pi ^2(MN) ^2 (1+o(1)) \sum_i \int_{\A_i} \frac{1}{\rhob} \left| \nabla \barh_* ^i  \right|^2  + \pi MN g^2 (R_*) \left| \log \ep \right| +\OO \left( \frac{MN \log |\log \ep| }{\ep |\log \ep|}\right)
\]
and arguing as in the proof of Lemma \ref{lem:hfhfi} we deduce
\begin{equation}\label{Evortexkinfinalpresque}
\int_{\A} \frac{1}{\rhob} \left| \nabla \barh_f  \right|^2 \leq  4 \pi ^2 (MN) ^2  (1+o(1)) \int_{\A} \frac{1}{\rhob} \left| \nabla \barh_*  \right|^2  + \pi MN g^2 (R_*) \left| \log \ep \right|+\OO \left( \frac{MN \log |\log \ep| }{\ep |\log \ep|}\right) .
\end{equation}
Compared to (\ref{Evortexkinfinal}) there only remains to estimate the difference
\[
 \int_{\A} \frac{1}{\rhob} \left| \nabla \barh_*  \right|^2 - \int_{\A} \frac{1}{g^2} \left| \nabla h_*  \right|^2 =  \int_{\A} \frac{1}{\rhob} \left| \nabla \barh_*  \right|^2 - I_*
\]
where we have denoted $h_* = h_{\DirC}$ for short (see (\ref{defihnu}) for the definition of $h_{\DirC}$).\\

\emph{Step 2.} To estimate the above difference we note that 
\[
-\nabla \left( \frac{1}{\rhob} \left(\nabla \barh_{*} - \nabla h_{*} \right)\right) = \nabla\left( \left(  \frac{1}{\rhob}-\frac{1}{g^2} \right) \nabla h_*\right)  
\]
in $\A$. Multiplying by $\barh_* - h_*$, integrating over $\A$ and using Cauchy-Schwarz, we obtain
\[
 \int_{\A} \frac{1}{\rhob} \left|\nabla \barh_{*} - \nabla h_{*} \right| ^2 \leq \left( \int_{\A} \frac{1}{\rhob} \left|\nabla \barh_{*} - \nabla h_{*} \right| ^2 \right) ^{1/2} \left( \int_{\A} \rhob \left(  \frac{1}{\rhob}-\frac{1}{g^2} \right) ^2  |\nabla h_*| ^2 \right) ^{1/2}.
\]
Using the definition of $\rhob$ and Lemma \ref{lem:rhob} it is straightforward to deduce that
\[
 \left( \int_{\A} \frac{1}{\rhob} \left|\nabla \barh_{*} - \nabla h_{*} \right| ^2 \right) ^{1/2} \leq \sup_{\A} \left(\rhob g ^2 \left(  \frac{1}{\rhob}-\frac{1}{g^2} \right) ^2 \right)^{1/2} \left( \int_{\A} \frac{1}{g^2} |\nabla h_*| ^2 \right) ^{1/2} \ll \left( \int_{\A} \frac{1}{g^2} |\nabla h_*| ^2 \right) ^{1/2}.
\]
On the other hand, by Lemma \ref{lem:rhob} again
\[
 \int_{\A} \frac{1}{g^2} |\nabla h_*| ^2 = \int_{\A} \frac{1}{\rhob}|\nabla h_*| ^2 (1+o(1)).
\]
We conclude that
\begin{equation}\label{IstarIstarbar}
 \int_{\A} \frac{1}{\rhob} \left| \nabla \barh_*  \right|^2 = I_* (1+o(1)) 
\end{equation}
and thus that (\ref{Evortexkinfinal}) follows from (\ref{Evortexkinfinalpresque}).

\end{proof}

\subsection{Interaction with the rotation field}

In this subsection we estimate the interaction of the vortices with the rotation potential (\ref{Evortexrot}):

\begin{pro}[\textbf{Interaction of the vortices with the rotation potential}]\label{pro:rotation}\mbox{}\\
Let $v$ be defined in (\ref{fonctiontestreduite}) and $MN$ be as in (\ref{Nbvortexsup}). There holds 
\begin{equation}\label{Evortexrotfinal}
- \int_{\A} 2 g^2 \vec{B} \cdot (iv,\nabla v) = 2\pi \NN F(R_*) + \OO \left( \frac{\NN}{\ep ^{1/2}|\log \ep|^{1/2}}\right)+\OO \left( \frac{\NN^{1/2}}{\ep ^{11/8}}\right)
\end{equation}
where $F$ is the potential function defined in (\ref{F}).

\end{pro}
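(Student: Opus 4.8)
The plan is to rewrite the interaction term as a bulk integral against the smoothed vortex density $f$ (equivalently, against the vorticity $\mu=\curl(iv,\nabla v)$ of the trial function) and then to evaluate that integral using that $f$ is supported on the vortex cores, all of which are centred on $\Cet$. Recall from \eqref{F} that $\nabla^\perp F$ is, up to a fixed orientation sign, a potential for the rotation field $2g^2\vec B$, and from \eqref{fonctiontestreduite} and \eqref{defiphase} that $(iv,\nabla v)=c^2\xi^2\nabla\phi=\tfrac{c^2\xi^2}{\rhob}\nabla^\perp\bar{\barh}_f$ on the whole of $\A$ (the cut--off $\xi^2$ kills the singular part of $\nabla\phi$ inside the cores). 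Since $\nabla^\perp a\cdot\nabla^\perp b=\nabla a\cdot\nabla b$, these two facts give the exact identity
\[
-\int_\A 2g^2\vec B\cdot(iv,\nabla v)=\int_\A \frac{c^2\xi^2}{\rhob}\,\nabla F\cdot\nabla\bar{\barh}_f ,
\]
which, after one integration by parts, is the relation $\int_\A F\mu-\int_{\dd\B}F(iv,\dd_\tau v)$ of \eqref{sketch:Erot}.

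First I would replace $c^2\xi^2$ by $1$. The difference is $\OO(\NN\ep^2)$ on the bulk by \eqref{choixc}, is supported (modulo an $\OO(\ep^\infty)$ contribution from the boundary layer $\{r\le\rtt\}$, where $g^2$ is exponentially small) on $\bigcup_{i,j}B(p_{i,j},2t)$, and is bounded there by $1$. I would control the resulting error by Cauchy--Schwarz, using the pointwise bounds $|\nabla F|=2g^2|B|\le C(\ep^2|\log\ep|)^{-1}$ and $\rhob^{-1}\le C\ep|\log\ep|$ near $\Cet$, the area estimate $\bigl|\bigcup_{i,j}B(p_{i,j},2t)\bigr|\le C\NN t^2$ with $t$ as in \eqref{choixt}, and the global bound $\int_\A\rhob^{-1}|\nabla\bar{\barh}_f|^2=\OO(\NN/\ep)$ provided by Proposition \ref{pro:kinetic} together with \eqref{contributionphase}. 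Carrying out these estimates yields an error of size $\OO\!\bigl(\NN^{1/2}\ep^{-11/8}\bigr)$.

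Next I would compute $\int_\A\rhob^{-1}\nabla F\cdot\nabla\bar{\barh}_f$ by integration by parts. Since $\Gamma$ is $\rhob^{-1}$--harmonic (see \eqref{defiGamma}), $\bar{\barh}_f$ and $\barh_f$ have the same image under $-\nabla\cdot(\rhob^{-1}\nabla\,\cdot\,)$, namely $f$ (see \eqref{defihbarf} and \eqref{defihbarbar}), so
\[
\int_\A\frac{1}{\rhob}\,\nabla F\cdot\nabla\bar{\barh}_f=\int_\A Ff+\int_{\dd\A}\frac{F}{\rhob}\,\dd_n\bar{\barh}_f .
\]
The inner boundary term vanishes because $F(\rt)=0$; the outer one equals $\pm2\pi\NN F(1)$, because $F$ is constant on $\dd\B$ and $\int_{\dd\B}\rhob^{-1}\dd_n\bar{\barh}_f=\pm\int_{\dd\B}\dd_\tau\phi$ is $\pm2\pi$ times the winding of $e^{i\phi}$ around $\dd\B$, i.e. $\pm2\pi\NN$. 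By the compatibility estimate \eqref{compatibility} one has $F(1)=\OO(1)$, so this term is $\OO(\NN)$ and is absorbed in the errors already present. Finally, since $f=\sum_{i,j}\tfrac{2}{t^2}\one_{B(p_{i,j},t)}$ has total mass $2\pi\NN$, $F$ is radial and $|p_{i,j}|=R_*$,
\[
\int_\A Ff=\sum_{i,j}\frac{2}{t^2}\int_{B(p_{i,j},t)}F
=2\pi\NN\,F(R_*)+\OO\!\Bigl(\NN\,t\sup_{||x|-R_*|\le t}|\nabla F|\Bigr)
=2\pi\NN\,F(R_*)+\OO\!\Bigl(\frac{\NN}{\ep^{1/2}|\log\ep|^{1/2}}\Bigr),
\]
using \eqref{choixt} and $\sup_{\text{near }\Cet}|\nabla F|=2g^2(R_*)|B(R_*)|=\OO\bigl((\ep^2|\log\ep|)^{-1}\bigr)$. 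Collecting the three contributions (and fixing the orientation so that the rotation field, positive near $\dd\B$, favours positive--degree vortices, making the leading term $+2\pi\NN F(R_*)$) gives \eqref{Evortexrotfinal} with $\NN=MN$.

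The hard part is the second step: controlling the quantity discarded when $\xi^2$ (and $c^2$) are set equal to $1$. This is the self--interaction/long--range energy of $\bar{\barh}_f$ concentrated near the cores, and it is exactly here that one needs the careful choice \eqref{choixt} of the core radius $t$, the use of the smoothed density $\rhob$ (bounded below and with controlled gradient near $\Cet$) instead of $g^2$, and the energy bound of Proposition \ref{pro:kinetic}. A secondary but genuine nuisance is keeping track of the several $\nabla^\perp$/orientation conventions so that the leading term has the correct sign and the boundary contribution on $\dd\B$ is provably of lower order.
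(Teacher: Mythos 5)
Your argument is correct in substance, but it follows a genuinely different route from the paper's. The paper first replaces $g^2$ by $\rhob$ in the rotation field (this is where its error $\OO(\NN^{1/2}\ep^{-11/8})$ originates), then decomposes the associated potential $\Fb=\Fbin+\Fbout$, applies Stokes' formula with the Dirichlet part $\Fbin$ (so no boundary term appears), estimates the $\Fbout$-contribution separately via its orthogonality to $\barh_f$, and finally localizes $\curl(iv,\nabla v)$ on the balls $B(p_{i,j},2t)$, freezing $\Fbin$ at the centers and using $\deg\{v,\dd B(p_{i,j},2t)\}=1$ together with $|\Fbin-F|\leq C\ep^{-1/4}|\log\ep|^2$. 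You instead exploit the exact identity $(iv,\nabla v)=\tfrac{c^2\xi^2}{\rhob}\nabla^{\perp}\bar{\barh}_f$, so that after removing $c^2\xi^2$ (your error, which with your ingredients comes out as $\OO(\NN\ep^{-1/2})$, indeed absorbed by $\OO(\NN^{1/2}\ep^{-11/8})$ since $\NN\ll\ep^{-1}$) a single integration by parts lands directly on $\int_{\A}Ff$ plus a boundary term on $\dd\B$; the explicit support of $f$ on balls centered on $\Cet$ then gives $2\pi\NN F(R_*)$ with the error $\OO(\NN t\,\sup|\nabla F|)=\OO(\NN\ep^{-1/2}|\log\ep|^{-1/2})$, exactly as in the paper's estimate \eqref{pouet3}. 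What your route buys is economy: no $\Fb$, $\Fbin$, $\Fbout$, no degree computation, and no separate boundary-layer splitting for the curl (the layer only enters through the exponential smallness of $\rhob=g^2$ there). What the paper's route buys is that all boundary issues are eliminated structurally by the choice of $\Fbin$, and the gradient bound needed is the one on $\Fbin$, which is why $\rhob$ was introduced; your version still needs $|\nabla F|=2g^2|B|\leq C\ep^{-2}|\log\ep|^{-1}$ near $\Cet$, which is available directly, so nothing is lost.

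One small inaccuracy: the outer boundary term is $F(1)\int_{\dd\B}\rhob^{-1}\dd_n\bar{\barh}_f$, and the flux $\int_{\dd\B}\rhob^{-1}\dd_n\bar{\barh}_f$ is \emph{not} $\pm 2\pi\NN$ in general — the flux of $\barh_f$ splits between the two components of $\dd\A$, so the winding of $e^{i\phi}$ around $\dd\B$ is some integer of magnitude at most of order $\NN$, not necessarily $\NN$ itself. This does not harm your proof: since $f\geq 0$ and $\barh_f=0$ on $\dd\A$, the maximum principle gives $\dd_n\barh_f\leq 0$ on $\dd\A$ with total flux $-2\pi\NN$, and $\kappa=\OO(1)$, so the flux through $\dd\B$ alone is $\OO(\NN)$, which combined with $F(1)=\OO(1)$ is all you use. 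The orientation/sign bookkeeping you flag is likewise harmless once a single convention for $\nabla^{\perp}$ (the one making $\nabla^{\perp}F=2g^2\vec B$ and the vortices of $v$ of degree $+1$) is fixed throughout.
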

 
Note that $F(R_*)$ is negative and proportional to $\ep ^{-1}$ so that the second and third term above are really remainders.\\ 
As in the preceding subsection, the proof below will require many technical estimates. We thus find useful to give the core idea before going into the rigorous proof : Integrating by parts
\[
- \int_{\A} 2 g^2 \vec{B} \cdot (iv,\nabla v) = \int_{\A} F \curl(iv,\nabla v) - \int_{\dd \B} F (iv,\dd _{\tau} v).
\] 
Now, $F$ (see (\ref{F})) is defined in such a way that $F(1) = \OO (1)$. This has been proved in \cite{CRY} to be a consequence of (\ref{compatibility}). Since on the other hand $F(R_*) \propto \ep ^{-1}$, it is natural to neglect the boundary term above in a first approach. Then  
\[
- \int_{\A} 2 g^2 \vec{B} \cdot (iv,\nabla v) \sim \int_{\A} F \curl(iv,\nabla v) = \sum_{i,j} \int_{B(p_{i,j}, 2 t)} F \curl(iv,\nabla v)
\] 
because $|v|\sim 1$ outside $\cup_{i,j} B(p_{i,j},2 t)$, which implies $\curl (iv,\nabla v) = 0$ there. Note that we neglect the thin boundary layer where $|v|$ is small for the sake of clarity. Then we use that $t$ is very small to obtain 
\[
\int_{B(p_{i,j},2 t)} F \curl(iv,\nabla v) \sim F(p_{i,j}) \int_{B(p_{i,j},2 t)} \curl(iv,\nabla v).
\]
Finally, using the fact that $F$ is radial and by definition of the phase of $v$
\[
 F(p_{i,j}) \int_{B(p_{i,j},2 t)} \curl(iv,\nabla v) \sim 2\pi F (R_*) \deg \{v, \dd B(p_{i,j},2t) \} = 2\pi F (R_*).
\]
Collecting the above heuristics justifies (\ref{Evortexrotfinal}), recalling that there are $\NN$ balls in the collection.\\

That being said, let us go into details. We need to introduce a new potential function. Indeed, since the phase of our trial function is defined using the modified density $\rhob$ it is useful for the proof to define also a potential function accordingly. We denote
\begin{equation}\label{Fbar}
\Fb(r):= 2 \int_{\rt} ^r \: \rhob   (s) \left(\Om  s- \left([\Om] - \om \right)\frac{1}{s}\right) ds = 2 \int_{\rt} ^r \: \rhob   (s) \vec{B}(s)\cdot \vec{e}_{\theta} \: ds.
\end{equation}

The function $\Fb$ satisfies
\begin{equation}\label{propFb}
\nabla ^{\perp} \Fb  = 2 \rhob \vec{B} , \quad \Fb (\rt) = 0
\end{equation}
and
\[
 -\nabla \left( \frac{1}{\rhob} \nabla \Fb \right) = -2 \nabla \left( B \vec{e}_r \right) \mbox{ in } \A .
\]
We can thus decompose $\Fb$ as follows :
\begin{equation}\label{decomposeFb}
\Fb = \Fbin + \Fbout 
\end{equation}
where $\Fbin$ and $\Fbout$ are the solutions to the following elliptic problems :
\begin{equation}\label{defiFbin}
\begin{cases}
 -\nabla \left( \frac{1}{\rhob} \nabla \Fbin \right) = -2 \nabla \left( B \vec{e}_r \right) \mbox{ in } \A \\
\Fbin = 0 \mbox{ on } \dd \A
\end{cases}
\end{equation}
and
\begin{equation}\label{defiFbout}
\begin{cases}
 -\nabla \left( \frac{1}{\rhob} \nabla \Fbout \right) = 0 \mbox{ in } \A \\
\Fbout = \Fb (1) \mbox{ on } \dd \B \\
\Fbout = 0 \mbox{ on } \dd B_{\rt}. 
\end{cases}
\end{equation}
Both $\Fbin$ and $\Fbout$ are radial. The maximum principle implies
\begin{equation}\label{estimFbout}
 \left| \Fbout \right| \leq |\Fb (1)|
\end{equation}
and more precisely we have
\begin{equation}\label{Fbout}
\Fbout =  \Fb (1) \: \Gb
\end{equation}
where $\Gb$ is defined in (\ref{defiGamma}). We sum up some useful properties of $\Fb$ in the following Lemma:

\begin{lem}[\textbf{Properties of $\Fb$ and $\Fb_{in}$}] \mbox{} \\
Let $\Fb$, $\Fbin$ and $\Fbout$ be defined as above and $F$ in (\ref{F}). The following properties hold
\begin{eqnarray}\label{F-Fb}
\left| F(r) - \Fb (r) \right| &\leq& C \frac{|\log \ep |^2}{\ep^{1/4}} \\
\label{Fbfrontier}
|\Fb (1)| &\leq& C \frac{|\log \ep |^2}{\ep^{1/4}} \\
\label{Fbbulk}
\left| F(r) - \Fbin (r) \right| &\leq& C \frac{|\log \ep |^2}{\ep^{1/4}}\\
\label{gradientFbin}
\left| \nabla \Fbin \right| &\leq & \frac{C }{\ep ^{2}|\log \ep |}. 
\end{eqnarray}
\end{lem}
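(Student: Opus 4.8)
The plan is to obtain the four bounds in cascade, the later ones relying on the earlier ones together with Lemma~\ref{lem:rhob} and the estimate $F(1) = \OO(1)$, which is immediate from \eqref{compatibility} (as observed in \cite{CRY}). The only input of an independent nature is the pointwise bound
\[
|B(r)| = \lf| \Om r - ([\Om] - \om) r^{-1} \ri| \leq \frac{C}{\ep} \quad \mbox{ on } \A,
\]
which follows from \eqref{regime}, \eqref{est omega} and the fact that $|1 - r^2| \leq C \ep |\log\ep|$ on $\A$ (see \eqref{anneau} and \eqref{the inner radius}). Combined with $\rhob \leq C(\ep|\log\ep|)^{-1}$ and $\int_{\rt}^1 \rhob(s)\, ds \leq C$ -- a consequence of \eqref{estimrhob} and $\int_{\A} g^2 = 1$ -- this controls every integral appearing below.

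For \eqref{F-Fb} I would subtract the defining integrals of $F$ and $\Fb$, writing
\[
F(r) - \Fb(r) = 2 \int_{\rt}^r \lf( g^2(s) - \rhob(s) \ri) \vec{B}(s) \cdot \vec{e}_{\theta}\, ds,
\]
and use the bound $|g^2 - \rhob| \leq C \ep^{3/4} |\log\ep|^2\, \rhob$ from \eqref{estimrhob} together with the estimates on $|B|$ and on $\int \rhob$ to get $|F - \Fb| \leq C \ep^{-1} \cdot \ep^{3/4} |\log\ep|^2 \leq C \ep^{-1/4} |\log\ep|^2$. Evaluating this at $r = 1$ and inserting $F(1) = \OO(1)$ then gives \eqref{Fbfrontier}. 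For \eqref{Fbbulk} I would use the decomposition $\Fb = \Fbin + \Fbout$, so that $F - \Fbin = (F - \Fb) + \Fbout$, bound the first term by \eqref{F-Fb}, and bound the second by the maximum-principle estimate \eqref{estimFbout} combined with \eqref{Fbfrontier}.

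Finally, for \eqref{gradientFbin} I would write, using \eqref{Fbout}, $\nabla \Fbin = \nabla \Fb - \Fb(1)\, \nabla \Gb$. Since $\Fb$ is radial, \eqref{propFb} gives $|\nabla \Fb| = 2 \rhob |B| \leq C \ep^{-2} |\log\ep|^{-1}$; and differentiating the explicit formula \eqref{formuleGamma} gives $|\nabla \Gb| = \rhob(r) r^{-1} \lf( \int_{\rt}^1 \rhob(s) s^{-1}\, ds \ri)^{-1} \leq C (\ep|\log\ep|)^{-1}$, the denominator being bounded below (again from \eqref{estimrhob} and $\int_{\A} g^2 = 1$, using $s^{-1} \geq s$ on $\A$). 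Hence $|\Fb(1)\, \nabla \Gb| \leq C \ep^{-5/4} |\log\ep|$ is a lower-order remainder, and \eqref{gradientFbin} follows.

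There is no conceptual obstacle here; the only points requiring a little care are the uniform bound $|B| = \OO(\ep^{-1})$ all the way up to the inner boundary $\dd B_{\rt}$ of $\A$, and the lower bound on the denominator of $\Gb$ in \eqref{formuleGamma}, both of which are routine consequences of the explicit expressions and the normalization of $g$.
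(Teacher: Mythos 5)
Your proposal is correct and follows essentially the same route as the paper: \eqref{F-Fb} from \eqref{estimrhob} and the bound $|B|\leq C\ep^{-1}$, \eqref{Fbfrontier} from $F(1)=\OO(1)$, \eqref{Fbbulk} from the decomposition \eqref{decomposeFb} with the maximum-principle bound \eqref{estimFbout}, and \eqref{gradientFbin} by bounding $\nabla\Fb$ and $\nabla\Fbout=\Fb(1)\nabla\Gb$ separately. You merely spell out details (the bound on $|B|$, the lower bound on the denominator of $\Gb$) that the paper leaves implicit.
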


\begin{proof}
It is easy using the definitions of $F$ and $\Fb$ to deduce (\ref{F-Fb}) from (\ref{estimrhob}). As $F(1) = \OO (1)$ (see \cite[Lemma 4.1]{CRY}), (\ref{Fbfrontier}) follows. A combination of (\ref{F-Fb}), (\ref{estimFbout}) and (\ref{Fbfrontier}) yields (\ref{Fbbulk}). Using the definitions of $\Fb$ and $\Fbout$, Lemma \ref{lem:rhob}
and (\ref{Fbfrontier}) , we obtain
\[
 \left| \nabla \Fb \right| \leq  \frac{C }{\ep ^{2}|\log \ep |}
\]
and
\[
 \left| \nabla \Fbout \right| \leq  \frac{C |\log \ep |}{\ep ^{5/4}}.
\]
The estimate (\ref{gradientFbin}) then follows from (\ref{decomposeFb}).
\end{proof}

We now present the

\begin{proof}[Proof of Proposition \ref{pro:rotation}]
\emph{Step 1.} We first replace $g^2$ by $\rhob$ in (\ref{Evortexrot}): Using $|\vec{B}| \leq C \ep ^{-1}$ 
\begin{eqnarray}
\left| \int_{\A} \left( g^2 - \rhob \right)\vec{B} \cdot (iv,\nabla v) \right| &\leq& \frac{C}{\ep} \int_{\A} \left| g^2 - \rhob \right| \xi ^2 |\nabla \phi| \nonumber \\
&\leq & \frac{C}{\ep} \left( \int_{\A} \rhob \xi ^2 |\nabla \phi| ^2 \right) ^{1/2} \left( \int_{\A}  \frac{\left( g^2 - \rhob \right) ^2}{\rhob}\right)^{1/2}. 
\label{rot:g2rhob}
\end{eqnarray}
From (\ref{kin:rhob}), (\ref{Evortexkinfinal}) and (\ref{Nbvortexsup}) we have 
\[
 \int_{\A} \rhob \xi ^2 |\nabla \phi| ^2 \leq C (MN) ^2 + C \frac{MN}{\ep} \leq C \frac{MN}{\ep} .
\]
We also note that from the definition of $\rhob$
\[
 \int_{\A}  \frac{\left( g^2 - \rhob \right) ^2}{\rhob} = \int_{\A \cap \{ r\geq \rb \} }  \frac{\left( g^2 - \rhob \right) ^2}{\rhob} \leq C \ep ^{1/4} \int_{\A \cap \{ r\geq \rb \} }  \left( g^2 - \rhob \right) ^2 \leq C \ep ^{1/4}
\]
where we have used Propositions \ref{cry pro:g prelim} and \ref{cry pro:point GP dens} for the estimate of $\int_{\A \cap \{ r\geq \rb _ - \} }  \left( g^2 - \rhob \right) ^2$ and the lower bound to $\rhob$ respectively. We conclude
\begin{equation}\label{rot:g2rhob2}
\int_{\A} g^2 \vec{B} \cdot (iv,\nabla v) = \int_{\A} \rhob \vec{B} \cdot (iv,\nabla v) + \OO \left( \frac{(MN)^{1/2}}{\ep ^{11/8}}\right). 
\end{equation}
We now compute, using (\ref{propFb}), (\ref{decomposeFb}) and Stokes' formula,
\begin{equation}\label{rot:decompose}
- \int_{\A} 2 \rhob \vec{B} \cdot (iv,\nabla v) = - \int_{\A} \nabla^{\perp} (\Fb_{in}+\Fb_{out}) (iv, \nabla v)=\int_{\A} \Fb_{in} \curl (iv,\nabla v) - \int_{\A} \nabla^{\perp} (\Fb_{out}) (iv, \nabla v).
\end{equation}
The second term is estimated as follows :
\begin{equation}\label{rot:termout}
\int_{\A} \nabla^{\perp} (\Fb_{out}) (iv, \nabla v) = \int _{\A} \nabla^{\perp} (\Fb_{out}) \xi ^2 \nabla \phi = \int_{\A} \Fb(1) \frac{\xi ^2}{\rhob} \nabla \Gb  \cdot  \nabla \bar{\barh}_f .
\end{equation}
We then note that
\begin{equation}\label{rot:termout1}
\left| \Fb (1) \int _{\A}  \frac{\left(\xi ^2 -1 \right)}{\rhob} \nabla \Gb \cdot \nabla\bar{\barh}_f \right| \leq |\Fb (1)| \sup_{\A}  \frac{| \nabla \Gb  |}{\rhob ^{1/2}}  \left( \int_{\A} \left( \xi^2 - 1 \right) ^2\right) ^{1/2} \left( \int_{\A } \frac{1}{\rhob}  |\nabla \bar{\barh}_f |^2 \right) ^{1/2} .
\end{equation}
But by definition (see (\ref{formuleGamma}))
\[
 \sup_{\A}  \frac{| \nabla \Gb  |}{\rhob ^{1/2}} \leq C \sup_{\A}  \rhob ^{1/2} \leq C (\ep |\log \ep|)^{-1/2}.
\]
Also, $\xi \leq 1 $ with equality everywhere except in a boundary layer $\rt \leq r \leq \rtt$ and in $\cup_{i,j} B(p_{i,j},t)$. The former domain has an area of order $\ep^{n}$ with $n$ large (we can choose it as large as needed) and the latter an area of order $MN \ep ^{3} |\log \ep|$. Thus
\[
 \int_{\A} \left( \xi^2 - 1 \right) ^2 \leq C MN \ep ^{3} |\log \ep|.
\]
Using also (\ref{Fbfrontier}), Lemma \ref{lem:phase} and Proposition \ref{pro:kinetic} we obtain from (\ref{rot:termout1}) the estimate
\begin{equation}\label{rot:termout2}
 \left| \Fb (1) \int _{\A}  \frac{\left(\xi ^2 -1 \right)}{\rhob} \nabla \Gb \cdot \nabla\bar{\barh}_f \right| \leq C (MN)^{3/2} \ep ^{3/4} |\log \ep|^2.
\end{equation}
On the other hand 
\[
\int_{\A} \Fb(1) \frac{1}{\rhob} \nabla \Gb  \cdot \nabla \bar{\barh}_f = \Fb(1) \left( \int_{\A}  \frac{1}{\rhob} \nabla \Gb  \cdot \nabla \barh_f - \frac{\kappa}{\int_{\A} \frac{1}{\rhob} \left| \nabla \Gb \right|^2} \int_{\A} \frac{1}{\rhob} \left| \nabla \Gb \right|^2 \right) = \OO \left( \frac{|\log \ep |^2}{\ep^{1/4}} \right)
\]
because $\kappa = \OO(1)$, $\Fb (1)=\OO \left( \frac{|\log \ep |^2}{\ep^{1/4}} \right)$ and
\[
 \int_{\A}  \frac{1}{\rhob} \nabla \Gb  \cdot \nabla \barh_f = 0
\]
which follows from the definition of $\Gb$ and $\barh_f = 0$ on $\dd \A$.\\
At this stage we have, gathering equations (\ref{rot:g2rhob2}) to (\ref{rot:termout2}) and using (\ref{Nbvortexsup}) to estimate the remainders
\begin{equation}\label{rot:presque}
- \int_{\A} 2 g^2 \vec{B} \cdot (iv,\nabla v) = \int_{\A} \Fb_{in} \curl (iv,\nabla v) + \OO \left(\frac{(MN)^{1/2}}{\ep ^{11/8}}\right).
\end{equation}

\emph{Step 2.} To compute the remaining term we separate the contribution of the boundary layer from that of the bulk:
\begin{equation}\label{rot:presque2}
 \int_{\A} \Fb_{in} \curl (iv,\nabla v) = \int_{\A \cap \{ r\leq \rtt \}} \Fb_{in} \curl (iv,\nabla v) +\int_{\A \cap \{ r\geq \rtt \}} \Fb_{in} \curl (iv,\nabla v).
\end{equation}
To estimate the boundary layer contribution we remark that for any $r\leq \rtt$
\begin{equation}\label{Fbsmallboundary}
|\Fb (r)| \leq \frac{C}{\ep} (r-\rt) g^2 (r)  
\end{equation}
because $|\vec{B}| \leq C \ep ^{-1}$, $\rhob (r) = g^2 (r)$ for any $r\leq \rtt$ and $g^2$ is an increasing function.
Also 
\[
|\Fb_{out} (r)|  \leq C \frac{|\log \ep |^2}{\ep^{1/4}} (r-\rt) g^2 (r)
\]
for any $r\leq \rtt$, using the definition of $\Gb$ and the same arguments as above. We conclude that
\begin{equation}\label{Fbinsmallboundary}
|\Fb_{in} (r)|  \leq  \frac{C}{\ep} (r-\rt) g^2 (r) \leq C \ep ^{n-1} g ^2 (r)
\end{equation}
if $r\leq \rtt$ (recall that $\rtt = R_< + \ep ^n$ and that we are free to choose $n$ as large as we want). We now compute: 
\begin{multline} \label{pouet-1}
 \left| \int_{\A \cap \{ r\leq \rtt \}} \Fb_{in} \curl (iv,\nabla v) \right| \leq C \int_{\A \cap \{ r\leq \rtt \}} | \Fb_{in}| |\nabla v| ^2 \\ 
\leq C \int_{\A \cap \{ r\leq \rtt \}} | \Fb_{in}| |\nabla \xi|^2 + C \int_{\A \cap \{ r\leq \rtt \}} | \Fb_{in}| |\xi| ^2 |\nabla \phi| ^2 
\end{multline}
For the first term we use (\ref{Fbinsmallboundary}) combined with the exponential smallness of $g^2$ in the boundary layer (Equation \eqref{cry eq:g exp small}). For the second one we use (\ref{Fbinsmallboundary}), \eqref{kin:rhob} and \eqref{Evortexkinfinal} and conclude 
\begin{equation}\label{pouet1}
\left| \int_{\A \cap \{ r\leq \rtt \}} \Fb_{in} \curl (iv,\nabla v) \right| \leq C\ep^{n'}
\end{equation}
where it suffices to take $n$ arbitrarily large in \eqref{Rtilde} to obtain that $n'$ is arbitrarily large.\\
For the bulk term we use that in $\A \cap \{ r\geq \rtt \} \setminus \cup_{i,j} B(p_{i,j},2t)$, $|v|\equiv c$ which is a constant, thus $\curl(iv,\nabla v)=0$ there. This yields
\begin{equation}\label{pouet2}
\int_{\A \cap \{ r\geq \rtt \}} \Fb_{in} \curl (iv,\nabla v) = \sum_{i,j} \int _{B(p_{i,j},2t)} \Fb_{in} \curl (iv,\nabla v). 
\end{equation}
But, because of (\ref{choixt}) and (\ref{gradientFbin})  
\begin{eqnarray}
 \left|\sum_{i,j} \int _{B(p_{i,j},2t)} \left(\Fb_{in} - \Fb_{in} (p_{i,j}) \right) \curl (iv,\nabla v) \right| &\leq&  \frac{C}{\ep^{1/2} |\log \ep| ^{1/2}} \sum_{i,j} \int _{B(p_{i,j},2t)} |\nabla v| ^2 \nonumber \\
&\leq& C \ep ^{1/2} |\log \ep| ^{1/2} \int_{\A} g^2 |\nabla v| ^2 \nonumber \\
&\leq& C \frac{MN |\log \ep|^{1/2}}{\ep^{1/2}}.\label{pouet3}
\end{eqnarray}
Indeed, on $B(p_{i,j},2t)$ $g^2 \geq C (\ep |\log \ep|)^{-1}$ and $\int_{\A} g^2 |\nabla v| ^2$ has been shown in the preceding subsections to be a $\OO \left( \frac{MN}{\ep}\right)$. Finally
\begin{multline}\label{pouet4}
\Fb_{in} (p_{i,j}) \int_{B(p_{i,j},2t)} \curl (iv,\nabla v)  = 2\pi \Fb_{in} (p_{i,j}) \deg \{v, \dd B(p_{i,j},2t) \} = 2\pi \Fb_{in} (p_{i,j})  \\ = 2\pi F(R_*) + \OO\left( \frac{|\log \ep |^2}{\ep ^{1/4}}\right)
\end{multline}
by definition of the phase of $v$ and (\ref{Fbbulk}). The conclusion follows by gathering (\ref{rot:presque}), (\ref{rot:presque2}) and equations (\ref{pouet1}) to (\ref{pouet4}).
\end{proof}

\section{Energy Lower Bound}\label{sec:lower bound}

In this Section we provide the lower bound announced in Theorem \ref{theo:energy}.
The key to a lower bound matching the upper bound of Proposition \ref{pro:upperbound} is the identification, in the energy of $\gpm$, of terms representing the kinetic energy of the vortices and their interaction with the rotation potential. This terms should then be bounded from below to show that our construction in the preceding Section is optimal. Ultimately we obtain the following

\begin{pro}[\textbf{Lower bound to the energy}]\label{pro:lowerbound}\mbox{}\\
Recall the decomposition
\begin{equation}\label{rappeldecomp}
\gpm = u g e^{i\left( [\Om] - \om \right)\theta}
\end{equation}
valid in $\A$ and the definition of the reduced energy (\ref{eomega0}).
Let $I_*$ be defined in (\ref{defiIstar}) and $H$ be the cost function (\ref{fonctioncout}). For $\ep$ small enough there holds :
\begin{equation}\label{borneinf}
\gpe \geq \hgpe_{\A,\om} + \E [u] -\OO(\ep^{\infty}) \geq \hgpe_{\A,\om} - \frac{H(R_*) ^2}{4 I_*} (1+o(1)). 
\end{equation}

\end{pro}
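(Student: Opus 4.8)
The plan is to establish the two inequalities in \eqref{borneinf} separately. The first is the \emph{a priori} energy decoupling: since $g$ solves \eqref{hgpm var} and $\gpm$ is exponentially small in $\B\setminus\A$ (\cite[Propositions 2.2 and 3.1]{CRY}), one inserts $\gpm = u\, g\, e^{i([\Om]-\om)\theta}$ on $\A$ into \eqref{fonctionelleGP}, integrates by parts the cross terms exactly as in the computation leading to \eqref{decoupletest}, and discards the nonnegative contribution of $\gpm$ in the central hole; this yields $\gpe \geq \hgpe_{\A,\om} + \E[u] - \OO(\ep^{\infty})$. It therefore remains to prove $\E[u] \geq -H(R_*)^2/(4I_*)(1+o(1))$, i.e.\ to reproduce Proposition~\ref{pro:upperbound} from below.

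To this end I would run the Jerrard--Sandier growth-and-merging construction (\cite[Chapter~4]{SS}) for the weighted energy $\int g^2|\nabla u|^2$ on the ``good'' part of $\A$ — that is, on $\ab$ minus the inner layer where $g^2$ is too small and minus the small-area set where the shrinking width obstructs the method — producing disjoint balls $B_i=B(a_i,r_i)$ with $|u|>1-o(1)$ on $\dd B_i$ and $\int_{B_i} g^2 |\nabla u|^2 \geq \pi |d_i|\, g^2(a_i)\, |\log\ep|\,(1-o(1))$, $d_i = \deg(u,\dd B_i)$, up to a relative error $\log|\log\ep|/|\log\ep|$ per ball. The Jacobian estimate (\cite[Chapter~6]{SS}) then gives $\mu \approx 2\pi\sum_i d_i\delta_{a_i}$ in the dual norm of \eqref{normeg}, and since $\nabla^\perp F = 2g^2\vec B$ with the boundary term $\int_{\dd\B}F(iu,\dd_\tau u)$ negligible (here $F(1)=\OO(1)$, a consequence of \eqref{compatibility}, enters as in \cite[Lemma~4.1]{CRY}), integration by parts turns the rotation term into $\int_\A F\,\mu$. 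Because $F\leq0$ in the bulk, only nonnegative effective degrees need be retained (negative-degree vortices strictly increase the energy), so $\tfrac12|d_i|g^2(a_i)|\log\ep| + d_iF(a_i) = d_i H(a_i) \geq d_i H(R_*)$ by the near-minimality of $R_*$ for $H$ (Appendix~A), and dropping the nonnegative $\ep^{-2}g^4(1-|u|^2)^2$ term one reaches $\E[u] \geq \int_\A H\,\mum + \int_{\A\setminus\cup_i B_i} g^2|\nabla u|^2 - (\text{remainders})$, as in \eqref{sketch:borneinf3}.

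The new ingredient relative to \cite{CRY} is a lower bound on the leftover kinetic term. Introduce the truncated current $\jm$ (equal to $(iu,\nabla u)$ off the balls, $0$ inside) and $\mum=\curl\jm$; one shows $\mum\approx\mu$ in the same weak norm, and the Hodge-type splitting $\jm = -g^{-2}\nabla^\perp h_{\mum} + \nabla f$ (valid up to a harmless correction from the non-simple-connectedness of $\A$), together with $h_{\mum}=0$ on $\dd\A$ and $|u|\sim1$ off the balls, gives $\int_{\A\setminus\cup_i B_i} g^2|\nabla u|^2 \geq (1-o(1))\int_\A g^{-2}|\nabla h_{\mum}|^2$. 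Hence $\E[u] \geq \int_\A g^{-2}|\nabla h_{\mum}|^2 + \int_\A H\,\mum - (\text{remainders})$, whose right-hand side is, up to the spread of $\mum$, the renormalized functional in \eqref{energyrenorm}.

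Finally I would bound this functional below by $-H(R_*)^2/(4I_*)$, using two facts: (i) $H(r)-H(R_*)\gtrsim(r-R_*)^2$ near $R_*$ with a one-sided lower bound elsewhere (Appendix~A), so mass of $\mum$ away from $\Cet$ is penalized and $\int_\A H\,\mum$ may be replaced by $H(R_*)\int_\A\mum$ at a controlled cost; and (ii) the electrostatic lower bound $\int_\A g^{-2}|\nabla h_{\mum}|^2 \geq \bigl(\int_\A\mum\bigr)^2 I_*$, which follows from the definition \eqref{defiIstar} of $I_*$ combined with the potential-theoretic fact, proved in Proposition~\ref{pro:electrostatic}, that $h_*=h_{\DirC}$ is constant on $\Cet$ — this constancy being precisely what allows comparison of a measure concentrated near $\Cet$ with the optimal one. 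Writing $m=\int_\A\mum\geq0$, the scalar expression $m^2 I_* + mH(R_*)$ is minimized at $m=-H(R_*)/(2I_*)$, consistent with the vortex count $\NN\propto\Om_1\ep^{-1}$, with minimum $-H(R_*)^2/(4I_*)$. It then remains to check that every accumulated remainder — the per-ball error $\sim g^2(a_i)\log|\log\ep|$ summed over $\OO(\Om_1\ep^{-1})$ balls, the Jacobian and $\mum\approx\mu$ errors, the annulus correction, the boundary term — is $\ll \Om_1^2\ep^{-2}\asymp H(R_*)^2/I_*$, which is exactly where hypothesis \eqref{Omega1ter} is used. The main obstacle is the vortex-ball and Jacobian analysis in the degenerate shrinking-width annulus — keeping every error below $\Om_1^2\ep^{-2}$ while handling the ``bad'' region — together with the confinement of $\mum$ to $\Cet$ in step (i)--(ii); the closing scalar optimization is then immediate.
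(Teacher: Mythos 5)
Your overall architecture coincides with the paper's: the first inequality is indeed just the decoupling and exponential smallness from \cite{CRY}, and the second is obtained via vortex balls, the Jacobian estimate, the truncated current $\jm$ with the Hodge-type splitting $\jm=-g^{-2}\nabla^{\perp}h_{\mum}+\nabla f$, the explicit radial potential $h_*$ of $\DirC$, and a final scalar optimization in the total degree. However, there is a genuine gap at the decisive new step, your claims (i)--(ii). The inequality $\int_{\A}g^{-2}|\nabla h_{\mum}|^{2}\geq\bigl(\int_{\A}\mum\bigr)^{2}I_*$ does \emph{not} follow from the definition \eqref{defiIstar}: $I_*$ is an infimum only over measures supported on $\Cet$, and a unit measure concentrated near the grounded boundary $\partial\A$ has electrostatic energy far below $I_*$. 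The constancy of $h_*$ on $\Cet$ kills the cross term only for measures carried by $\Cet$; for a general $\mum$ one has, expanding around $m\DirC$ with $m=\int\mum$, a cross term $2m\int(h_*-h_*(R_*))\,d\mum$ whose negative-degree part is harmless (right sign) but whose positive-degree mass far from $\Cet$ costs an amount of order $m$ per unit degree, i.e.\ of order $\Om_1\ep^{-1}$ per unit degree --- comparable to the main term. Likewise, replacing $\int H\,d\mum$ by $H(R_*)\int\mum$ requires knowing that $\mum$ has essentially no negative part and little mass away from $\Cet$. Neither fact is available a priori, and the penalization $H(r)-H(R_*)\gtrsim\Om_1^{1/2}\ep^{-1}$ alone does not absorb the electrostatic loss unless one already controls $m$.

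The paper closes precisely this gap by a bootstrap that your outline never invokes: the matching \emph{upper} bound of Proposition \ref{pro:upperbound}, combined with the first inequality in \eqref{borneinf}, gives the a priori estimate $\E[u]\leq-C\Om_1^{2}\ep^{-2}$ (Equation \eqref{supE}); feeding this into the intermediate lower bound \eqref{lowbound4} together with the location-dependent cost estimates of Lemma \ref{lem:couts} (in particular the gap \eqref{coutJmoinsplus}) yields the degree controls \eqref{borne inf vorticite}, \eqref{vorticite moins}, \eqref{vorticite plus}, hence \eqref{propD}. These are exactly what legitimize (a) keeping only $H(R_*)$ times the degree of the $J_*$ vortices in the short-range term and (b) bounding the cross term in the expansion of $\Iche(\mut)$ around $D\DirC$ in the proof of Proposition \ref{pro:kineticinf}, so that the electrostatic energy is bounded below by $D^{2}I_*(1+\OO(\Om_1^{1/4}))$ with the \emph{same} coefficient $D$ as in the $H(R_*)$ term. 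Without this use of the trial-function upper bound (or an equivalent argument controlling the total vorticity and its location), the confinement of $\mum$ to $\Cet$ that your steps (i)--(ii) presuppose is unproved, and it is not a mere remainder check: it is the mechanism that makes the renormalized energy emerge as a lower bound.
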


The proof of this result will occupy the rest of the section. The main new ingredient with respect to \cite{CRY} is a lower bound to the kinetic energy located outside the vortex balls, that is, relatively far from the vortex cores. In a first subsection we recall constructions from \cite{CRY} and deduce some basic bounds that will be our starting point for the evaluation of the kinetic energy outside vortex balls. We refer to that paper for detailed explanations and comments. In a second subsection we show that the kinetic energy outside vortex balls can be bounded below using a variational problem related to electrostatics. We also present the analysis of the problem (\ref{defiIstar}) as well as some important properties of the potential $h_*$ associated to the minimizing measure $\DirC$. We conclude the proof in a final subsection.

\subsection{Preliminary Constructions}

The first inequality in (\ref{borneinf}) has actually been proved in \cite[Proposition 3.1]{CRY} (note that this result stays available with our assumptions on $\Omega$). The method is to combine the variational equation satisfied by $g$ to obtain an energy decoupling, and the exponential smallness of $\gpm$ in the complement of $\A$. Our goal is now to bound $\E [u]$ from below to obtain the second inequality.

\bigskip

To begin with, we define the energy
\begin{equation}\label{energieF}
 \F[v] : = \int_{\A} g ^{2} \left| \nabla v \right|^2+ \frac{g ^4 }{\ep ^2} \left(1-|v|^2 \right)^2 
\end{equation}
which plays a crucial role here. Indeed, a control on this energy allows to construct the vortex balls that are our main tool in this Section. The  bounds that we use as starting point are the following (see Proposition 3.1 and Lemma 4.2 of \cite{CRY}, again, this result stays valid under assumption \eqref{Omega1}, \eqref{Omega1bis} and \eqref{Omega1ter})

\begin{lem}[\textbf{First energy bounds}]\label{lem:initialbound}\mbox{}	\\
Let $u$ be defined in (\ref{u}). We have 
\begin{eqnarray}\label{Fgfirstbound}
\F [u] &\leq& C \ep ^{-2}
\\ \OO (\ep ^{\infty}) \geq \E [u] & \geq & - C\ep ^{-2} \label{Egfirstbound}.
\end{eqnarray}
\end{lem}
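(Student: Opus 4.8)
The plan is to obtain both bounds by combining the energy decoupling already available from \cite{CRY} with an elementary completion of the square in the reduced functional $\E$ defined in \eqref{eomega0}.

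\emph{First I would derive the upper bound $\E[u]\leq\OO(\ep^\infty)$.} Testing $\gpf$ against $g\,e^{i([\Om]-\om)\theta}$, cut off so as to vanish on $\dd B_{\rt}$ and extended by $0$ on $\B\setminus\A$, produces an admissible trial state; since $\tfm$ vanishes on $[\rt,\rtf]$ the function $g$ is exponentially small there (Proposition \ref{cry pro:g exponential smallness}), so the cut-off and the ensuing renormalization to unit mass change the energy only by $\OO(\ep^\infty)$ and
\[
\gpe \;\leq\; \hgpf_{\A,\om}[g] + \OO(\ep^\infty) \;=\; \hgpe_{\A,\om} + \OO(\ep^\infty).
\]
Comparing with the decoupling inequality $\gpe \geq \hgpe_{\A,\om} + \E[u] - \OO(\ep^\infty)$ --- the first inequality of \eqref{borneinf}, established in \cite[Proposition 3.1]{CRY} and valid under our assumptions --- gives $\E[u]\leq\OO(\ep^\infty)$, which is the right-hand inequality of \eqref{Egfirstbound}.

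\emph{Next I would complete the square with respect to the vector potential $\vec B$.} Using $(iv,\nabla v)=\tfrac12 i(v\nabla\bar v-\bar v\nabla v)$ one checks the pointwise identity $g^2\lf|(\nabla-i\vec B)v\ri|^2 = g^2|\nabla v|^2 - 2g^2\vec B\cdot(iv,\nabla v) + g^2|\vec B|^2|v|^2$, hence from \eqref{eomega0}
\[
\E[v] \;=\; \int_{\A} g^2 \lf|(\nabla - i\vec B)v\ri|^2 \;+\; \int_{\A}\frac{g^4}{\ep^2}\lf(1-|v|^2\ri)^2 \;-\; \int_{\A} g^2|v|^2|\vec B|^2 .
\]
The only quantitative input is the bound $|\vec B(r)|\leq C\ep^{-1}$ on $\A$: since $\om = \tfrac{2}{3\sqrt\pi\ep}(1+\OO(|\log\ep|^{-1}))$ by \eqref{est omega}, one has $|B(1)| = |\Om-[\Om]+\om| = \OO(\ep^{-1})$, while $B'(r)=\Om+([\Om]-\om)r^{-2}=\OO(\Om)$ and $\A$ has width $\OO(\ep|\log\ep|)$, so $B$ varies by $\OO(\Om\ep|\log\ep|)=\OO(\ep^{-1})$ across $\A$. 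Since $g^2|u|^2=|\gpm|^2$ and $\int_{\B}|\gpm|^2=1$ by \eqref{masse}, this yields $\int_{\A} g^2|u|^2|\vec B|^2 \leq C\ep^{-2}$.

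\emph{Finally I would read off both bounds.} Dropping the two nonnegative terms in the identity applied to $v=u$ gives $\E[u]\geq -\int_{\A}g^2|u|^2|\vec B|^2 \geq -C\ep^{-2}$, the left-hand inequality of \eqref{Egfirstbound}. Conversely, inserting $\E[u]\leq\OO(\ep^\infty)$ into the same identity gives $\int_{\A}g^2|(\nabla-i\vec B)u|^2 + \int_{\A}\frac{g^4}{\ep^2}(1-|u|^2)^2 \leq \int_{\A}g^2|u|^2|\vec B|^2 + \OO(\ep^\infty) \leq C\ep^{-2}$, and then $\int_{\A}g^2|\nabla u|^2 \leq 2\int_{\A}g^2|(\nabla-i\vec B)u|^2 + 2\int_{\A}g^2|u|^2|\vec B|^2 \leq C\ep^{-2}$, so that $\F[u]\leq C\ep^{-2}$, which is \eqref{Fgfirstbound}. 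The main obstacle is not this assembly but the two imported facts it rests on --- the energy decoupling with exponentially small error (itself relying on the exponential decay of $\gpm$ and $g$ outside the bulk annulus) and the estimate \eqref{est omega} controlling $\vec B$ --- together with the minor bookkeeping that the cut-off giant-vortex trial state can be renormalized to unit mass at a cost of only $\OO(\ep^\infty)$.
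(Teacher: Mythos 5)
Your proposal is correct, and it is essentially the argument the paper imports wholesale from \cite{CRY} (Proposition 3.1 and Lemma 4.2 there): the upper bound $\E[u]\leq\OO(\ep^{\infty})$ comes from the giant-vortex trial state combined with the decoupling inequality, and the two remaining bounds come from controlling the rotation term by $\int_{\A}g^2|u|^2|\vec B|^2\leq C\ep^{-2}$, which is exactly your completion of the square (equivalently Cauchy--Schwarz on $2g^2\vec B\cdot(iu,\nabla u)$). The only caveat worth recording is that the exponential smallness of $g$ used for the cut-off holds near the inner radius $\rt=\rtf-\ep^{8/7}$ (by \eqref{cry eq:g improved exp small}, since $\ep^{8/7}\gg\ep^{7/6}$) rather than on all of $[\rt,\rtf]$, which is all your construction actually needs.
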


Our analysis uses the same covering of $\A$ by (almost rectangular) cells as in the upper bound Section, but we distinguish different types of cells : 
\begin{defi}[\textbf{Good and Bad Cells}]\mbox{}\label{defi:GC}\\
We cover $\A$ with (almost rectangular) cells of side length $C \ep |\log \ep|$, using a corresponding division of the angular variable as in Section 2.1. We note $N\propto \frac{1}{\ep |\log \ep|}$ the total number of cells and label the cells $\A_n, n\in \left\lbrace 1,...,N \right\rbrace$. Let $0\leq \alpha< \frac{1}{2}$ be a parameter to be chosen below.
\begin{itemize}
\item We say that $\A_n$ is an $\alpha$-good cell if 
\begin{equation}\label{defgc}
\int_{\A _n} g ^{2} \left| \nabla u \right|^2+ \frac{g ^4 }{\ep ^2} \left(1-|u|^2 \right)^2 \leq \frac{|\log \ep|}{\ep} \ep^{-\alpha}
\end{equation}
We will denote $N_{\alpha} ^G$ the number of $\alpha$-good cells and $GS_{\alpha}$ the (good) set they cover.
\item We say that $\A_n$ is an $\alpha$-bad cell if
\begin{equation}\label{defbc}
\int_{\A _n} g ^{2} \left| \nabla u \right|^2+ \frac{g ^4 }{\ep ^2} \left(1-|u|^2 \right)^2 > \frac{|\log \ep|}{\ep} \ep^{-\alpha}
\end{equation}
We will denote $N_{\alpha} ^B$ the number of $\alpha$-bad cells and $BS_{\alpha}$ the (bad) set they cover.
\end{itemize}
\end{defi}

Note that the annulus $\A$ has a width $\ell \propto \ep|\log \ep|$ (which implies that $N\propto (\ep |\log \ep| )^{-1}$) so that we are dividing it into cells where there is much more energy than what would be expected from the localization of the bound (\ref{Fgfirstbound}) (namely $\frac{C \ell}{\ep ^2} \propto \frac{|\log \ep|}{\ep}$) and regions of reasonably small energy. A first consequence of this is, using (\ref{Fgfirstbound}) and neglecting the good cells 
	\begin{equation}\label{numberbad}
		N^B_{\al} \leq \frac{\ep}{|\log \ep|} \ep^{\alpha} \frac{C}{\ep ^2} \leq \frac{C}{\ep |\log \ep|}\ep^{\al} \ll N, 	 
	\end{equation}
i.e. there are (relatively) very few $\alpha$-bad cells.\\

The construction of vortex balls is feasible only in the regions of sufficient density. We thus introduce a reduced annulus
\begin{equation}\label{defi annt}
 \at := \left\lbrace \rv  \: : \: \rd \leq r \leq 1 \right\rbrace
\end{equation}
with
\begin{equation}\label{defi R large}
\rd := \rtf + \ep |\log \ep|^{-1}. 
\end{equation}
An important point is that from \eqref{cry eq:pointwise bounds} and the definition \eqref{minimiseurTF} of $\tfm$ we have the lower bound
\begin{equation}\label{g low bound}
 g^2(r) \geq \frac{C}{\ep |\log \ep|^{3}} \mbox{ on } \at.
\end{equation}

We can now recall the vortex balls construction (see \cite[Proposition 4.2]{CRY} for the proof) : 

\begin{pro}[\textbf{Vortex ball construction in the good set}]\label{pro:vortexballs}
 	\mbox{}\\
	Let $0\leq \al< \frac{1}{2}$. There is a certain $\ep_0$ so that, for $\ep \leq \ep_0$ there exists a finite collection $ \{ B_i \}_{i \in I} := \left\lbrace B (\avi, r_i)\right\rbrace_{i\in I}$ of disjoint balls with centers $ \avi $ and radii $ r_i $ such that
	\begin{enumerate}
		\item $\left\lbrace \rv \in GS_{\al} \cap \at \: : \: \left| |u| - 1  \right| > |\log \ep| ^{-1}  \right\rbrace \subset \cup_{i=1} ^I B_i$,
		\item for any $\al $-good cell $\A_n$, $\sum_{i, B_i \cap \A_n \neq \varnothing } r_i = \ep |\log \ep |^{-5} $.		   
	\end{enumerate}
	Setting $d_i:= \dg \{ u, \partial B_i \} $, if $ B_i \subset \at \cap GS_{\al} $, and $d_i=0$ otherwise, we have the lower bounds
		\begin{equation}\label{lowboundballs}
		 	\int_{B_i}  \diff \rv \: g ^2 \left|\nabla u\right|^2 \geq 2\pi \left(\frac{1}{2} -\al \right) |d_i|   g^2 (a_i) \left| \log \ep \right| \left(1-C \frac{\log \left| \log \ep \right|}{\left|\log \ep\right|}\right).
		\end{equation}
\end{pro}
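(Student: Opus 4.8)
The plan is to recognise the statement as a weighted, cell-localised version of the classical vortex-ball construction of Jerrard and Sandier (growth and merging, see \cite[Chapter~4]{SS}): it is precisely \cite[Proposition~4.2]{CRY}, whose proof uses only the a priori bounds of Lemma~\ref{lem:initialbound} and the lower density bound \eqref{g low bound}, both of which survive unchanged under \eqref{Omega1}--\eqref{Omega1ter}; so it carries over verbatim and I only indicate the structure.

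I would work cell by cell. Fix an $\alpha$-good cell $\A_n\subset\at$. Since $g$ is radial and slowly varying on $\at$ (a crude polynomial bound on $\nabla g^{2}$, as in \cite[Proposition~2.7]{CRY}, is more than enough here because the final balls have radius $\le\ep|\log\ep|^{-5}$), one gets $g^{2}(x)=g^{2}(a_i)(1+o(1))$ uniformly on any ball $B_i\subset\A_n$. Rescaling the cell to unit size by $x\mapsto(x-c_n)/(\ep|\log\ep|)$ turns $\F$ restricted to $\A_n$ into a standard Ginzburg--Landau functional with coupling parameter $\delta\sim\ep^{1/2}|\log\ep|^{-1/2}$ (exactly the order of the core scale $t$ of Section~\ref{sec:upper bound}), the bound \eqref{defgc} becoming an energy ceiling of order $\ep^{-\alpha}|\log\ep|^{2}$ in the rescaled variables. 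On this $O(1)$-sized cell I would first cover $\{\,||u|-1|>|\log\ep|^{-1}\,\}$ by finitely many disjoint balls of small total radius --- controlling the measure of this set by the potential term $\int\frac{g^{4}}{\ep^{2}}(1-|u|^{2})^{2}$ --- and then run the growth process: dilate the balls continuously, merge them whenever they touch so that they stay disjoint, and stop once the sum of the radii in the cell reaches $\ep|\log\ep|^{-5}$. At each stage the standard lower bound in terms of $\sum_k d_k^{2}$ times the logarithmic growth of the radii holds; undoing the dilation and keeping $g^{2}$ constant up to $1+o(1)$ on each ball would give \eqref{lowboundballs} for the final balls lying entirely in $\at\cap GS_\alpha$ (the others being assigned $d_i=0$, consistently). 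The constant $\frac{1}{2}-\alpha$ is the net effect of running the growth under the per-good-cell ceiling \eqref{defgc} and of stopping at total radius $\ep|\log\ep|^{-5}$, and the error $\OO(\log|\log\ep|/|\log\ep|)$ comes from the $|\log\ep|^{-6}$ and $\log|\log\ep|$ contributions to the logarithm --- this bookkeeping is exactly that of \cite[Proposition~4.2]{CRY}.

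I would then globalise: repeating the construction in every $\alpha$-good cell and, if necessary, merging balls that straddle a common cell boundary, yields the collection $\{B_i\}_{i\in I}$. Property (1) holds because every point of $GS_\alpha\cap\at$ with $||u|-1|>|\log\ep|^{-1}$ has been covered, and property (2) because the total radius inside each good cell was kept $\le\ep|\log\ep|^{-5}$ throughout the growth.

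The step I expect to be the main obstacle --- and the reason the reduction to $\at$ and to $\alpha$-good cells is built into the statement --- is the geometry: the cells are thin (width $\sim\ep|\log\ep|\to0$) and $g^{2}$ degenerates at $\partial\B_{\rtf}$. The rescaling is what makes the Jerrard--Sandier machinery applicable despite the shrinking width, each single cell becoming $O(1)$-sized with $g^{2}$ essentially constant on it; but it yields a non-degenerate effective Ginzburg--Landau problem only where the density is not too small, which is why no comparably clean statement is available near the inner boundary $\partial\B_{\rt}$.
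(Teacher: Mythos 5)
Your proposal matches the paper exactly: the paper does not reprove this proposition but simply recalls it from \cite[Proposition 4.2]{CRY}, observing (as you do) that its ingredients — the a priori bounds of Lemma \ref{lem:initialbound} and the density lower bound \eqref{g low bound} — are unaffected by the new assumptions \eqref{Omega1}--\eqref{Omega1ter} on $\Om$, so the proof carries over unchanged. Your sketch of the cell-by-cell rescaling and Jerrard--Sandier growth-and-merging argument is consistent with the structure of that cited proof.
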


The second main tool that we need to introduce is the so-called Jacobian Estimate. For convenience we recall the result of \cite[Proposition 4.3]{CRY}:

\begin{pro}[\textbf{Jacobian estimate}]\label{pro:jacest}
 	\mbox{}\\
	Let $0\leq \al<\frac{1}{2}$ and $\phi$ be any piecewise-$C^1$ test function with compact support 
	\[
	 {\rm supp}(\phi) \subset \at \cap GS_{\al}.  
	\]
	Let $\left\lbrace B_i \right\rbrace_{i\in I} : = \lf\{ B(\avi,r_i) \ri\}_{i \in I} $ be a collection of disjoint balls as in Proposition \ref{pro:vortexballs}. Setting $d_i:= \dg \{ u, \partial B_i \} $, if $ \B_i \subset \at \cap GS_{\al} $, and $d_i=0$ otherwise, one has
	\begin{equation}\label{JE}
		\bigg|\sum_{i\in I}  2 \pi d_i \phi (\avi)- \int_{GS_{\al}\cap \at} \diff \rv \: \phi \:  \curl (iu,\nabla u) \bigg| \leq  C \left\Vert \nabla \phi \right\Vert_{L^{\infty}(GS_{\al})}\ep ^{2} |\log \ep|^{-2} \Fg[u] .  
	\end{equation}	
	
\end{pro}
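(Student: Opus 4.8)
The bound \eqref{JE} is a weighted, localized version of the Jerrard--Soner Jacobian estimate \cite[Chapter~6]{SS}; the plan is to run that scheme while keeping careful track of the weight $g^{2}$ and of the fact that $\at$ has shrinking width. I would start from the duality reformulation: since $\phi$ is compactly supported in $\at\cap GS_{\al}$, an integration by parts gives $\int_{\at\cap GS_{\al}}\phi\,\curl(iu,\nabla u)=-\int\nabla^{\perp}\phi\cdot(iu,\nabla u)$, so everything reduces to estimating the superfluid current $j:=(iu,\nabla u)$ against $\nabla^{\perp}\phi$ and comparing with $2\pi\sum_{i}d_{i}\phi(\avi)$. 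Then split $\at\cap GS_{\al}=\big(\cup_{i}B_{i}\big)\cup\Omega'$ with $\Omega':=(\at\cap GS_{\al})\setminus\cup_{i}B_{i}$; by item~1 of Proposition \ref{pro:vortexballs}, $\big||u|-1\big|\le|\log\ep|^{-1}$ on $\Omega'$, so $|u|$ is bounded away from $0$ there.

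On $\Omega'$ the vector field $V:=j/|u|^{2}$ is well defined and curl-free (locally it equals $\nabla\varphi$ for a local phase of $u$), hence $\curl j=\nabla(|u|^{2})\times V$ there, where $\nabla f\times W:=\partial_{1}f\,W_{2}-\partial_{2}f\,W_{1}$. Integrating by parts once more against $\phi$ moves the derivative onto $\phi$: $\int_{\Omega'}\phi\,\curl j=-\int_{\Omega'}(|u|^{2}-1)\,(\nabla\phi\times V)+(\text{terms on }\cup_{i}\partial B_{i})$, and since $\big||u|^{2}-1\big|\le C|\log\ep|^{-1}$ and $|V|\le C|\nabla u|$ on $\Omega'$ the bulk term is $\le C|\log\ep|^{-1}\|\nabla\phi\|_{L^{\infty}}\int_{\Omega'}|\nabla u|$. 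On each ball one uses $\int_{B_{i}}\curl j=\int_{\partial B_{i}}j\cdot\tau=2\pi d_{i}+O\big(|\log\ep|^{-1}\int_{\partial B_{i}}|\nabla u|\big)$ and replaces $\phi$ by $\phi(\avi)$ at a cost $O\big(r_{i}\|\nabla\phi\|_{L^{\infty}}|d_{i}|\big)$; the trace integrals $\int_{\partial B_{i}}|\nabla u|$ are removed in the standard way by working instead with a truncation $\rho(|u|)\,u/|u|$ of $u$ that vanishes near the cores (as in \cite[Chapter~6]{SS}), which replaces them by controlled energy contributions on annular collars of the $B_{i}$.

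It then remains to sum the errors. The total radius $\sum_{i}r_{i}$ is controlled cell by cell by item~2 of Proposition \ref{pro:vortexballs}, $\sum_{i}|d_{i}|$ by \eqref{lowboundballs} together with $\Fg[u]\le C\ep^{-2}$ from \eqref{Fgfirstbound}, and the Dirichlet integrals $\int|\nabla u|$, $\int|\nabla u|^{2}$ are converted into the weighted functional through the pointwise lower bound \eqref{g low bound}, namely $g^{2}\ge C\ep^{-1}|\log\ep|^{-3}$ on $\at$: this gives $\int_{\at}|\nabla u|^{2}\le C\ep|\log\ep|^{3}\,\Fg[u]$ and, with Cauchy--Schwarz and $|\at|\le C\ep|\log\ep|$, $\int_{\at}|\nabla u|\le C\ep|\log\ep|^{2}\,\Fg[u]^{1/2}$. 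Assembling the powers of $\ep$, the $|\log\ep|^{-1}$ gained from $\big||u|-1\big|$ being small, and the good-cell bounds (by \eqref{defgc} each good cell carries at most $\ep^{-1-\al}|\log\ep|$ of energy, with its balls of total radius $\ep|\log\ep|^{-5}$), one reaches the announced error $C\|\nabla\phi\|_{L^{\infty}(GS_{\al})}\,\ep^{2}|\log\ep|^{-2}\,\Fg[u]$.

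The hard part is precisely this final bookkeeping: a blunt application of Jerrard--Soner is insufficient because $\at$ has width of order $\ep|\log\ep|$ and $g^{2}$ degenerates towards the inner boundary, so the naive errors would swamp the right-hand side. What makes it work is (i) localizing the Dirichlet energy via the good-cell partition, so that \eqref{defgc} is available on $GS_{\al}$ and the vortex balls are very small, and (ii) systematically trading the bare Dirichlet integral for $\Fg[u]$ via \eqref{g low bound} rather than bounding $g^{2}$ from above. A minor additional complication is that $GS_{\al}\cap\at$ sits inside an annulus with the cores removed and is not simply connected, so the local phase $\varphi$ need not be single-valued; because $\phi$ has compact support this only forces the use of a partition of unity subordinate to simply connected patches and produces no extra error.
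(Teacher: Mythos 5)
A preliminary remark: the paper itself gives no proof of this proposition; it is quoted from \cite[Proposition 4.3]{CRY}, whose proof is an adaptation of the Jerrard--Soner Jacobian estimate in the spirit of \cite[Chapter 6]{SS}. Your overall plan (duality against $\nabla^{\perp}\phi$, splitting into the balls and their complement, per-cell control of $\sum_i r_i$, conversion of the bare Dirichlet integral into $\Fg[u]$ through \eqref{g low bound}) is therefore the right one, and the ball contributions do produce the announced error: a cost of order $r_i\Vert\nabla\phi\Vert_{L^\infty}\times(\text{energy in }B_i)$, summed with $\sum_i r_i=\ep|\log\ep|^{-5}$ per cell and $\sup_{\at}g^{-2}\leq C\ep|\log\ep|^{3}$, gives precisely $\ep^{2}|\log\ep|^{-2}\Fg[u]$ (note in passing that this replacement cost must be measured by the energy in $B_i$, as in the truncation argument you invoke, not by $|d_i|$ alone).

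The genuine gap is in your treatment of the region $\Omega':=(\at\cap GS_{\al})\setminus\cup_i B_i$. You bound the bulk term by $C|\log\ep|^{-1}\Vert\nabla\phi\Vert_{L^\infty}\int_{\Omega'}|\nabla u|$ and then use Cauchy--Schwarz with $|\at|\leq C\ep|\log\ep|$, arriving at $C\Vert\nabla\phi\Vert_{L^\infty}\,\ep|\log\ep|\,\Fg[u]^{1/2}$. This never fits under the stated right-hand side: since $\Fg[u]\leq C\ep^{-2}$ by \eqref{Fgfirstbound}, one has
\begin{equation*}
\frac{\ep|\log\ep|\,\Fg[u]^{1/2}}{\ep^{2}|\log\ep|^{-2}\,\Fg[u]}=\frac{|\log\ep|^{3}}{\ep\,\Fg[u]^{1/2}}\geq c\,|\log\ep|^{3}\longrightarrow\infty ,
\end{equation*}
so the smallness $\bigl||u|-1\bigr|\leq|\log\ep|^{-1}$ is not the right mechanism; your final ``assembling of the powers'' cannot be carried out as described. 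The Jerrard--Soner proof instead pairs $(1-|u|^2)$ with $|\nabla u|$ through the potential part of $\Fg$:
\begin{equation*}
\int_{\Omega'}\bigl||u|^{2}-1\bigr|\,|\nabla u|\leq\Bigl(\int_{\at}\frac{g^{4}}{\ep^{2}}(1-|u|^{2})^{2}\Bigr)^{1/2}\Bigl(\int_{\at}\frac{\ep^{2}}{g^{4}}|\nabla u|^{2}\Bigr)^{1/2}\leq C\,\ep\,\bigl(\sup_{\at}g^{-2}\bigr)^{3/2}\Fg[u]\leq C\,\ep^{5/2}|\log\ep|^{9/2}\,\Fg[u],
\end{equation*}
which is indeed $\ll\ep^{2}|\log\ep|^{-2}\Fg[u]$. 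With this substitution (the gain of a factor $\ep$ coming from the potential term, not from $|\log\ep|^{-1}$), the rest of your scheme closes and reproduces the estimate of \cite{CRY}; as written, it does not.
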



We refine our classification of cells :
	\begin{defi}[\textbf{Pleasant and unpleasant cells}]\label{defi:PC}\mbox{}	\\
		Recall the covering of the annulus $\ann$ by cells $\A_n,\: n\in \left\lbrace 1,..,N \right\rbrace$. We say that $\A_n$ is 
		\begin{itemize}
		 \item an $\al$-pleasant cell if $\A_n$ and its two neighbors are good cells. We denote $PS_{\al}$ the union of all $\al$-pleasant cells and $N_{\alpha} ^{\mathrm{P}}$ their number,
		\item an $\al$-unpleasant cell if either $\A_n$ is a bad cell, or $\A_n$ is a good cell but its two neighbors are bad cells. We denote $UPS_{\al}$ the union of all $\al$-unpleasant cells and $N_{\al}^{\mathrm{UP}}$ their number,
		\item an $\al$-average cell if $\A_n$ is a good cell but exactly one of its neighbors is not. We denote $AS_{\al}$ the union of all $\al$-average cells and $N_{\al}^{\mathrm{A}}$ their number.	
		\end{itemize}
	\end{defi}
	
Remark that one obviously has, recalling \eqref{numberbad},
\begin{equation}\label{numberunpleasant}
 N_{\al} ^{\mathrm{UP}} \leq 2 N_{\al} ^{\mathrm{B}} \ll N
\end{equation}
and
\begin{equation}\label{numberaverage}
N_{\al} ^{\mathrm{A}} \leq 2 N_{\al} ^{\mathrm{B}} \ll N.
\end{equation}
The average cells will play the role of transition layers between the pleasant set, where we will use the tools described above, and the unpleasant set, where we have little information and therefore have to rely on more basic estimates.\\
We now introduce an azimuthal partition of unity that will allow us to avoid some boundary terms when integrating by parts in the sequel.  
Let us label $UPS_{\al}^l, l\in\left\lbrace1,\ldots,L\right\rbrace$, and $PS_{\al}^m,m\in\left\lbrace1,\ldots,M\right\rbrace$, the connected components of the $\al$-unpleasant set and  $\al$-pleasant set respectively. We construct azimuthal positive functions, bounded independently of $\ep$, denoted by $\chi_l ^{\mathrm{U}}$ and $\chi_m ^{\mathrm{P}}$ (the labels U and P stand for ``pleasant set" and ``unpleasant set") so that
\begin{eqnarray}\label{partition}
	\chi_l ^{\mathrm{U}} &:=& 1 \mbox{ on } UPS_{\al}^l, \nonumber \\
	\chi_l ^{\mathrm{U}} &:=& 0 \mbox{ on } PS_{\al}^m, \mbox{ } \forall m\in\left\lbrace1,\ldots,M\right\rbrace, \mbox{ and on } UPS_{\al}^{l'}, \mbox{ } \forall  l' \neq l, \nonumber\\
	\chi_m ^{\mathrm{P}} &:=& 1 \mbox{ on } PS_{\al}^m, \nonumber\\
	\chi_m ^{\mathrm{P}} &:=& 0 \mbox{ on } UPS_{\al}^l, \mbox{ } \forall  l \in\left\lbrace1,\ldots,L\right\rbrace, \mbox{ and on } PS_{\al}^{m'}, \mbox{ } \forall m' \neq m, \nonumber \\
	\sum_m \chi_m ^{\mathrm{P}} + \sum_l \chi_l ^{\mathrm{U}} &=& 1 \mbox{ on } \ann. 
\end{eqnarray}
It is important to note that each function so defined varies from $0$ to $1$ in an average cell. A crucial consequence of this is that we can take functions satisfying
\beq\label{gradientchi}
 	\lf|\nabla \chi_l ^{\mathrm{U}} \ri| \leq \frac{C}{\ep |\log \ep |},	\hspace{1,5cm}	\lf|\nabla \chi_m ^{\mathrm{P}}\ri| \leq \frac{C}{\ep |\log \ep |},
\eeq
because the side length of a cell is $\propto \ep |\log \ep|$.
\newline
We will use the short-hand notation
\begin{eqnarray}\label{chi in}
 \chi_{\mathrm{in}} &: =& \sum_{m=1} ^M \chi_m ^{\mathrm{P}}, \\
\chi_{\mathrm{out}} &: =& \sum_{l=1} ^L \chi_l ^{\mathrm{U}} \label{chi out}.
\end{eqnarray}
The subscripts `in' and `out' refer to `in the pleasant set' and `out of the pleasant set' respectively.

\bigskip

We want to use the Jacobian estimate of Proposition \ref{pro:jacest} with $\phi = \chi_{\mathrm{in}} F$, which does not vanish on $\dd \B$ and has its support included in $\A$ which is larger than $\at$. We will need one more construction to make this possible :
We introduce two radii $R_{\mathrm{cut}}^+$ and $R_{\mathrm{cut}}^-$ as
\begin{eqnarray}\label{Rcutplus}
 R_{\mathrm{cut}}^+ &: =& 1 -\ep |\log \ep|^{-1}, \\
R_{\mathrm{cut}}^- &: =& \rd + \ep |\log \ep|^{-1}. \label{Rcutminus}
\end{eqnarray}
Let $\xi_{\mathrm{in}}(r)$ and $\xi_{\mathrm{out}}(r)$ be two positive radial functions satisfying
\begin{eqnarray}\label{radial partition}
\xi_{\mathrm{in}} (r) &: =& 1 \mbox{ for } R_{\mathrm{cut}}^- \leq r \leq R_{\mathrm{cut}}^+, \nonumber \\
\xi_{\mathrm{in}} (r) &: =& 0 \mbox{ for }  \rt \leq r \leq \rd  \mbox{ and for } r=1, \nonumber \\
\xi_{\mathrm{out}} (r) &: =& 1 \mbox{ for } \rt \leq r \leq \rd, \nonumber \\
\xi_{\mathrm{out}} (r) &: =& 0 \mbox{ for } R_{\mathrm{cut}}^- \leq r \leq R_{\mathrm{cut}}^+, \nonumber \\
\xi_{\mathrm{in}} + \xi_{\mathrm{out}} &=& 1 \mbox{ on } \ann.
\end{eqnarray}
Moreover, because of \eqref{Rcutplus} and \eqref{Rcutminus}, we can impose
\beq\label{gradientxi}
 	\lf|\nabla \xi_{\mathrm{in}} \ri| \leq \frac{C|\log \ep|}{\ep},	\hspace{1,5cm}	\lf|\nabla \xi_{\mathrm{out}} \ri| \leq \frac{C|\log \ep|}{\ep}.
\eeq
The subscripts `in' and `out' refer to `inside $\at$' and `outside of $\at$' respectively. 

\bigskip

In the sequel $\lf\{ B_i \ri\}_{i \in I} : = \left\lbrace B(\avi, r_i)\right\rbrace_{i\in I}$ is a collection of disjoint balls as in Proposition \ref{pro:vortexballs}. For the sake of simplicity we label $B_j$, $j\in J \subset I $, the balls such that $B_j \subset \at \cap GS_{\al}$. 

\bigskip

An important step towards a lower bound to $\E [u]$ is to remark that, integrating by parts,
\begin{equation}\label{ipp}
-\int_{\A} 2 g^2 \vec{B}\cdot (iu, \nabla u) = \int_{\A} F \curl (iu,\nabla u) - F(1) \int_{\dd \B} (iu,\dd _{\tau} u) 
\end{equation}
and to note that the boundary term above can be neglected. In \cite{CRY} (see equation (4.100) therein) we have proved
\[
	 \left| \int_{\partial \B} \: F(1) (iu , \partial_{\tau}u) \right| \leq C \left( |\log \ep|^{1/2}|\E [u]|^{1/2}+\frac{|\log \ep|^{1/4}}{\ep ^{1/4}} \F[u] ^{1/2}\right)
\]
and thus we deduce from Lemma \ref{lem:initialbound} that
\begin{equation}\label{ridboundary}
\left| \int_{\partial \B} \: F(1) (iu , \partial_{\tau}u) \right| \leq C \frac{|\log \ep|^{1/4}}{\ep ^{5/4}}.
\end{equation}
Note that this term is much smaller (in absolute value) than the lower bound we are aiming at.\\
Our first lower bound is the intermediate result (4.86) in \cite{CRY}, that we reorganize to obtain : 
\begin{multline}\label{lowbound1}
	\int_{\ann}  g ^{2} \left| \nabla u \right|^2 + F \: \curl (iu,\nabla u) \geq \left(1-2 \gamma \right)\int_{\A \setminus \cup_{j\in J} B_j } \xiin g^2 |\nabla u | ^2  + \sum_j \frac{1}{2} g ^2 (a_j) \chiout (a_j) |d_j| |\log \ep | (1-o(1)) 
	\\  + 2\pi \sum_{j\in J} \chiin (a_j) \left[ |d_j| \left(1-2 \gamma \right)\left(\frac{1}{2} -\al \right)  g^2 (a_j) \left| \log \ep \right| \left(1-C \frac{\log \left| \log \ep \right|}{\left|\log \ep\right|}\right) + d_j \xi_{\mathrm{in}} (a_j) F(a_j)   \right] 
	\\ + \gamma \int_{\ann} \diff \rv \: \xi_{\mathrm{out}} g^2 |\nabla u |^2 -  \int_{\ann } \diff \rv \: \xi_{\mathrm{out}}  |F| |\nabla u|^2  - C \int_{\dd \B} \diff \sigma \: |F(1)| \lf| (iu,\partial_{\tau} u) \ri| 
	\\ + \left(\gamma -\delta \right) \int_{\ann} \diff \rv \: g^2 | \nabla u | ^2 - \frac{C}{\delta \ep ^2} \int_{UPS_{\al}\cup AS_{\al}} \diff \rv \: g^2 |u|^2  - C |\log \ep|^{-1} \Fg[u].
	\end{multline}
We emphasize that we have kept the kinetic energy contained outside the vortex balls (first term in the right-hand side, on the first line) that was neglected in \cite{CRY}. The parameters in (\ref{lowbound1}) are chosen as follows : 
\begin{equation}\label{parameters}
	 	\gamma = 2\delta = \frac{\log |\log \ep|}{|\log \ep|}, \hspace{1,5cm} \al = \alt \frac{\log |\log \ep|}{|\log \ep|},
\end{equation}
where $\alt$ is a large enough constant (see below).\\
Using the estimate on the number of bad cells (\ref{numberbad}) and the upper bound on $g^2 |u| ^2 = | \gpm | ^2$ of Proposition \ref{cry pro:GP exp small} we have
\[
 \frac{1}{\delta \ep ^2} \int_{UPS_{\al}\cup AS_{\al}} g^2 |u|^2  \leq C \frac{|\log \ep|\ep ^{\al}}{\ep ^2 \log |\log \ep| }.
\]
Also, using \cite[Equation (4.22)]{CRY},
\[
\gamma \int_{\ann} g^2 |\nabla u |^2 -  \int_{\ann }  \xi_{\mathrm{out}}  |F| |\nabla u| ^2 \geq C\frac{\log |\log \ep|}{|\log \ep|} \int_{\A} g^2 |\nabla u| ^2. 
\]
Plugging these estimates and (\ref{ridboundary}) in (\ref{lowbound1}), using Lemma \ref{lem:initialbound} to estimate the last term, and taking $\alt \geq 2$ we obtain
\begin{multline}\label{lowbound2}
	\int_{\A} g ^{2} \left| \nabla u \right|^2 + F \curl (iu,\nabla u) \geq \left(1-C \frac{\log |\log \ep |}{|\log \ep |}\right)\int_{\A \setminus \cup_{j\in J}  B_j } \xiin g^2 |\nabla u | ^2 + C \sum_j g ^2 (a_j) \chiout (a_j) |d_j| |\log \ep | 
	\\  + \sum_{j\in J} 2 \pi  \chiin (a_j)  \left[ |d_j| \frac{1}{2}  g^2 (a_j) \left| \log \ep \right| \left(1-C \frac{\log \left| \log \ep \right|}{\left|\log \ep\right|}\right) + d_j \xiin (a_j) F(a_j)   \right] 
  - C \frac{|\log \ep|^{1-\alt}}{\ep ^2 \log |\log \ep| }.
\end{multline}

The critical speed $2 (3\pi \ep ^2 |\log \ep|)^{-1}$ is (roughly speaking) defined as the first speed at which the terms on the second line of (\ref{lowbound2}) all become positive. This corresponds to the speed at which the vortices cease to be energetically favorable. In \cite{CRY} we were above the critical speed, so the preceding lower bound was enough for our purpose, because the terms on the second line were positive. Here we are in the opposite situation where the vortices can become energetically favorable if they are suitably located in the annulus. Thus, a lower bound to $\E [u]$ requires an upper bound to the number of vortices (more precisely, to the sum of their degrees). We will provide this upper bound in the sequel.\\

As is standard in such problems, we need to distinguish between different types of vortex balls. First we need to distinguish the vortices lying close to the inner boundary of $\A$ from those in the bulk. The vortices close to the inner boundary of $\A$ have positive energy and can thus be neglected in the lower bound. However their energetic cost is not large enough to show that there are few vortices of this type. Also, vortices in the bulk can be energetically favorable only if their degrees are positive. On the other hand, in the range of $\Om$ that we consider even a vortex of positive degree can lower the energy only if it is close to the circle $\Cet$. \\
Thus we first divide $J$ into two subsets
\begin{eqnarray}
\Jin &=& \left\lbrace j\in J \: , \: |a_j| \geq \Rb = R_h + \ep |\log \ep| \Om_1 ^{1/2}  \right\rbrace \label{Jin}\\
\Jout &=& J\setminus \Jin \label{Jout}.
\end{eqnarray}
We next divide $\Jin$ into three subsets : 
\begin{eqnarray}
J_{-} &=& \left\lbrace j\in \Jin,\quad d_j < 0 \right\rbrace \label{Jmoins} \\
J_{+} &=& \left\lbrace j\in \Jin, \quad d_j\geq 0 \mbox{ and } \left| |a_j|- R_* \right| >  \ep |\log \ep| \Om_1 ^{1/4} \right\rbrace \label{Jplus} \\
J_* &=& \Jin \setminus J_{-}\setminus J_{+} \label{Jetoile}.
\end{eqnarray}
With these definition we can state our lower bound to the kinetic energy contained `far' from the vortex cores :

\begin{pro}[\textbf{Lower bound to the kinetic energy outside vortex balls}]\label{pro:kineticinf}\mbox{}\\
Recall the definition of $I_*$ (\ref{defiIstar}). Let $J_*$ be as above. There holds
\begin{equation}\label{infkineticnew}
\int_{\A \setminus \cup_{j\in J} B_j  } \xiin g^2 |\nabla u | ^2  \geq \left( 2\pi \sum_{j\in J_*} \chiin(a_j) d_j \right) ^2 I_* (1+\OO (\Om_1 ^{1/4})) -C \frac{\sum_{j\in J_*} \chiin(a_j) d_j}{\ep |\log \ep |^3}.
\end{equation}
\end{pro}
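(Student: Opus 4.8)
The plan is to reduce the lower bound on $\int_{\A\setminus\cup_j B_j}\xiin g^2|\nabla u|^2$ to the variational problem defining $I_*$ by an argument parallel to the one sketched in Section \ref{sousec:Sketch}, but carried out carefully on the pleasant set. First I would introduce the modified superfluid current $\jm$ that equals $(iu,\nabla u)$ outside the vortex balls and is set to $0$ inside them, and work with the associated vorticity $\mum=\curl\jm$. On $\at\cap GS_\al$ the Jacobian estimate (Proposition \ref{pro:jacest}) together with the vortex ball construction (Proposition \ref{pro:vortexballs}) lets me identify $\mum$, tested against $C^1$ functions supported in the pleasant set, with $2\pi\sum_{j} d_j\delta_{a_j}$ up to an error controlled by $\ep^2|\log\ep|^{-2}\F[u]$, which is $\OO(|\log\ep|^{-2})$ by Lemma \ref{lem:initialbound}. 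The point of restricting to $J_*$ in the statement is that the contributions of $J_-$, $J_+$ and $\Jout$ are handled separately (they are either non-negative or small by the degree/cost bounds to be established later), so here one only needs the interaction energy of the part of $\mum$ concentrated near $\Cet$.

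Next I would set up the elliptic comparison. Since $\jm$ has $\curl\jm=\mum$ and $h_{\mum}$ solves $-\nabla(g^{-2}\nabla h_{\mum})=\mum$ with zero boundary data, the field $\jm+g^{-2}\nabla^\perp h_{\mum}$ is curl-free on $\A$; modulo the non-simply-connected correction (which, as in the sketch, only shifts things by a harmonic piece that is controlled using $h_{\mum}=0$ on $\dd\A$ and the periodicity of the configuration), one writes $\jm=-g^{-2}\nabla^\perp h_{\mum}+\nabla f$ and gets
\[
\int_{\A}g^2|\jm|^2\ \geq\ \int_{\A}\frac1{g^2}|\nabla h_{\mum}|^2
\]
by expanding the square and integrating the cross term by parts against $h_{\mum}=0$ on the boundary. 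On the pleasant set $\xiin\equiv1$, and $|u|\sim1$ outside the balls gives $\int_{\A\setminus\cup B_j}\xiin g^2|\nabla u|^2\geq(1-o(1))\int_{\A}g^2|\jm|^2$, so it remains to bound $\int_{\A}g^{-2}|\nabla h_{\mum}|^2$ from below by $(2\pi\sum_{J_*}\chiin(a_j)d_j)^2 I_*$. For this I would use that $I_*=I(\DirC)$ is the minimum of $I(\nu)$ over positive probability measures supported on $\Cet$ (Proposition \ref{pro:electrostatic}), combined with the key fact from potential theory that $h_*:=h_{\DirC}$ is \emph{constant} on $\Cet$: writing $\mum\approx 2\pi\sum_{J_*}\chiin(a_j)d_j\,\nu$ with $\nu$ a probability measure essentially supported within $\ep|\log\ep|\Om_1^{1/4}$ of $\Cet$ (by definition of $J_*$), one has $\int g^{-2}|\nabla h_{\mum}|^2=(2\pi\sum d_j)^2 I(\nu)\geq(2\pi\sum d_j)^2 I_*$, with the $\OO(\Om_1^{1/4})$ error coming from the fact that $\nu$ is only concentrated \emph{near} $\Cet$ rather than exactly on it, estimated via the regularity/constancy of $h_*$ and Lipschitz control of the Green kernel transverse to the circle.

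The main obstacle I expect is making the passage $\int g^2|\jm|^2\geq\int g^{-2}|\nabla h_{\mum}|^2$ and the subsequent lower bound $\geq(\cdots)^2 I_*$ rigorous in the present geometry, where $\A$ is an annulus of fixed radius but width $\propto\ep|\log\ep|$ and where $g^2$ degenerates near the inner boundary. Three technical points need care: (i) the harmonic-conjugate argument must be replaced by a genuine Hodge-type decomposition on the annulus, and the resulting extra constant (the circulation of $\jm$ around the hole) must be shown to cost nothing, or rather to be absorbed — here the localization of the test functions to the pleasant connected components via the partition of unity $\chiin,\chiout$ and the cutoffs $\xiin,\xiout$ is what kills the boundary terms; (ii) $h_{\mum}$ is defined from a measure that is only controlled after regularization, so one works throughout with a smoothed $\mum$ and tracks the difference, the relevant errors being $\OO(\ep^2|\log\ep|^{-2}\F[u])$ from the Jacobian estimate; (iii) the replacement of $\nu$ (supported near $\Cet$) by $\DirC$ must not lose more than $\OO(\Om_1^{1/4})$ relative to $I_*$, which is where the definition of $J_*$ as the set of balls within $\ep|\log\ep|\Om_1^{1/4}$ of $\Cet$ is used, together with the constancy of $h_*$ on $\Cet$ to control $I(\nu)-I_*$ to first order. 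The last additive error $-C(\sum_{J_*}\chiin(a_j)d_j)(\ep|\log\ep|^3)^{-1}$ comes from the lower bound $g^2\geq C(\ep|\log\ep|^3)^{-1}$ on $\at$ (equation \eqref{g low bound}) entering the vortex-ball and Jacobian error terms, and from dropping the (non-negative) kinetic energy inside the balls in a slightly lossy way.
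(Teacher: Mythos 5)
Your overall route is the paper's: truncate the supercurrent, pass to the weighted electrostatic energy of the associated vorticity via a Hodge-type decomposition in which the circulation around the hole is compensated (the paper's $\Gamma$ defined on $\ab$), use the Jacobian estimate to replace the vorticity by point masses, and compare with $\DirC$ using the constancy of its potential on $\Cet$. But the central comparison as you wrote it would fail: the inequality $I(\nu)\geq I_*$ for a probability measure $\nu$ merely \emph{concentrated near} $\Cet$ is false, since $I_*$ is an infimum over measures supported exactly on the circle and enlarging the admissible support can only lower the minimal energy, so no one-line comparison is available (your appended $\OO(\Om_1^{1/4})$ correction is the substance of the proof, not a perturbative afterthought). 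The correct mechanism, and the paper's, is the expansion $\mut=D\DirC+\muc$ with $D=2\pi\sum_{j\in\Jin}\chiin(a_j)d_j$: one keeps $\Iche(D\DirC)=D^2\Iche_*$, drops the nonnegative term $\Iche(\muc)$, and estimates the cross term $D\int\hche_*\,\muc$ by applying the Jacobian estimate to the test function $\chiin\hche_*$, then using $\hche_*(R_*)=\max\hche_*$ and $|\hche_*(a_j)-\hche_*(R_*)|\leq C\Om_1^{1/4}$ for $j\in J_*$; this is where both the relative $\OO(\Om_1^{1/4})$ and the additive $(\ep|\log\ep|^3)^{-1}$ errors actually originate.

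Relatedly, the contributions of $J_-$, $J_+$ and $\Jout$ cannot simply be "handled separately": they enter the same cross term of the quadratic form with no a priori sign (negative-degree vortices can cancel the current of the positive ones), and they are controlled precisely by the maximality of $\hche_*$ at $R_*$ (which makes the $J_-$ terms favorable) and by the degree bounds \eqref{vorticite moins}--\eqref{vorticite plus}; note these are established \emph{before} this proposition, from \eqref{lowbound4} together with the energy upper bound, not afterwards. Two smaller slips: the modified current must be set to zero not only inside the balls but also outside $\ab\cap GS_{\al}$, because $|u|\simeq 1$ is only known on $\at\cap GS_{\al}$ away from the balls — this is also why the paper works with the problem \eqref{defiIstarcheck} on $\ab$ and then transfers to $I_*$ through \eqref{I-Iche} — and it is $\chiin$, not the radial cutoff $\xiin$, that equals $1$ on the pleasant set.
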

   
The energy evaluated in Proposition \ref{pro:kineticinf} is associated with the superfluid currents that the vortices induce in the condensate. As is well known the current generated by vortices of negative degrees could compensate that generated by vortices of positive degrees. In the sequel we prove that there are relatively few vortices of negative degrees (see (\ref{vorticite moins}) below). We will thus deduce that this phenomenon does not affect the leading order of the energy in the situation we consider. \\ 
Note also that the minimization problem appearing in our lower bound involves vorticity measures with support on $\Cet$. Such a restriction of the set of admissible measures will be proved to be favorable because most vortices have to be located close to the circle $\Cet$ where they are energetically favorable (see (\ref{vorticite plus}) below).\\ 
  
We begin the proof of Proposition \ref{pro:kineticinf} with the following lemma that gives lower bounds to the energetic cost of the vortices, depending on their degrees and locations in the annulus. The proof is postponed to Appendix A.

\begin{lem}[\textbf{Energetic cost of the different types of vortices}]\label{lem:couts}\mbox{}\\
For any $j\in \Jout$
\begin{equation}\label{coutJout}
|d_j| \frac{1}{2}  g^2 (a_j) \left| \log \ep \right| \left(1-C \frac{\log \left| \log \ep \right|}{\left|\log \ep\right|}\right) + d_j \xiin (a_j) F(a_j) \geq 0 . 
\end{equation}
For any $j\in \Jin$
\begin{equation}\label{coutJetoile}
\frac{1}{2} |d_j| g^2 (a_j) \left| \log \ep \right|  + d_j \xiin (a_j) F(a_j) \geq  |d_j| H(R_*) (1+ \OO(\Om_1)) \geq -C|d_j| \frac{\Om_1}{\ep}.
\end{equation}
Moreover, if $j\in J_{-}\cup J_{+}$
\begin{equation}\label{coutJmoinsplus}
\frac{1}{2}  |d_j| g^2 (a_j) \left| \log \ep \right| + d_j \xiin (a_j) F(a_j) \geq   |d_j| \frac{C\Om_1^{1/2}}{\ep}.
\end{equation}
\end{lem}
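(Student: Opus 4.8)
\emph{Proof idea.} The plan is to reduce the three inequalities to a fine study of the radial cost function $H$ of \eqref{fonctioncout} through its explicit Thomas--Fermi counterpart
\[
\gaintf(r):=\tfrac12\,\tfm(r)\,|\log\ep|+\costtf(r),\qquad \costtf(r):=2\int_{\rt}^r\tfm(s)\lf(\Om s-([\Om]-\om)s^{-1}\ri)ds,
\]
combined with elementary sign and size information on $F$ and $g^2$. Two preliminary observations organise the case analysis. First, since $0\le\xiin\le1$ by \eqref{radial partition}: if $d_jF(a_j)\ge0$ the term $d_j\xiin(a_j)F(a_j)$ is nonnegative and the left-hand side is $\ge\tfrac12|d_j|g^2(a_j)|\log\ep|$; if $d_jF(a_j)<0$ then $d_j\xiin(a_j)F(a_j)=d_jF(a_j)-d_j(1-\xiin(a_j))F(a_j)\ge d_jF(a_j)$, so when $d_j>0$ the left-hand side is $\ge|d_j|H(a_j)$, while when $d_j<0$ it is $\ge\tfrac12|d_j|g^2(a_j)|\log\ep|-|d_j|\,|F(a_j)|$. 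Second, by the pointwise density bounds of \cite{CRY} (Proposition \ref{cry pro:point GP dens}) one has $g^2=\tfm(1+o(1))$ on $\at$ with $\tfm$ explicit and increasing, and $g^2$ exponentially small on $[\rt,R_h]$; moreover $F(\rt)=0$, $F<0$ on the bulk of $\A$ up to a unique outer zero $r_1$ which lies beyond the zero $s_0>R_*$ of $r\mapsto\Om r-([\Om]-\om)r^{-1}$, and $0\le F\le F(1)=\OO(1)$ on $[r_1,1]$ — the last point being \cite[Lemma 4.1]{CRY}, a consequence of \eqref{compatibility}.

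For $j\in\Jout$ (so $\rd\le|a_j|<\Rb$) I would estimate, using $|\vec B|\le C\ep^{-1}$, the exponential smallness of $g^2$ on $[\rt,R_h]$, the monotonicity of $\tfm$, and $\Rb-R_h=\ep|\log\ep|\Om_1^{1/2}$,
\[
|F(a_j)|\le\frac{C}{\ep}\int_{\rt}^{a_j}g^2\le C|\log\ep|\,\Om_1^{1/2}\,g^2(a_j)+\OO(\ep^\infty),
\]
and since $\Om_1^{1/2}=o(1)$ this is $\le\tfrac12 g^2(a_j)|\log\ep|\bigl(1-C\tfrac{\log|\log\ep|}{|\log\ep|}\bigr)$ for $\ep$ small; with the first observation this gives \eqref{coutJout}. (One also uses $\ep^{8/7}\ll\ep|\log\ep|\Om_1^{1/2}$, which follows from \eqref{Omega1ter}.)

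The substance of \eqref{coutJetoile} and \eqref{coutJmoinsplus} lies in the analysis of $\gaintf$, and this is the main obstacle. The plan there is: (i) show $|H-\gaintf|=o(\Om_1^{1/2}/\ep)$, in fact $\OO(\Om_1\,|H(R_*)|)$, on $\at$, from the pointwise bounds above and Lemma \ref{lem:rhob} together with \eqref{Omega1ter}; (ii) compute $\gaintf$ explicitly from \eqref{minimiseurTF} (a polynomial-type function of $r^2-R_h^2$), locate its minimiser, check it agrees with $R_*$ of \eqref{defiRstar} up to $\OO(\Om_1\,\ep|\log\ep|)$, and check $\gaintf(R_*)=H(R_*)(1+o(1))\asymp-\Om_1/\ep$ — the factor $\Om_1$ being the gap between $\Om$ and the critical value $2(3\pi\ep^2|\log\ep|)^{-1}$ at which $\gaintf$ has a double zero; (iii) establish the quadratic lower bound $\gaintf(r)\ge\gaintf(R_*)+c\,(\ep|\log\ep|)^{-2}(r-R_*)^2$ for $|r-R_*|$ up to a fixed fraction of the width of $\A$. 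Granting (i)--(iii): for $j\in\Jin$ only the subcase $d_j>0$, $F(a_j)<0$ is nontrivial, and there the left-hand side is $\ge|d_j|H(a_j)\ge|d_j|(1+\OO(\Om_1))H(R_*)\ge-C|d_j|\Om_1/\ep$, which is \eqref{coutJetoile}. For \eqref{coutJmoinsplus} with $j\in J_-$ one instead uses (ii) and $\Rb^2-R_h^2\asymp\ep|\log\ep|\Om_1^{1/2}$ (valid since $\Om_1\gg\ep^{2/3}$) to get $g^2(a_j)\ge\tfm(\Rb)(1-o(1))\ge c\,\Om_1^{1/2}(\ep|\log\ep|)^{-1}$, whence $\tfrac12|d_j|g^2(a_j)|\log\ep|\ge c|d_j|\Om_1^{1/2}/\ep$, the $F$-contribution being $\ge0$ or (when $|a_j|$ is near $1$, where $g^2(a_j)|\log\ep|\asymp\ep^{-1}$) only $\OO(|d_j|)$; for $j\in J_+$ (so $d_j\ge0$ and $||a_j|-R_*|>\ep|\log\ep|\Om_1^{1/4}$), either $F(a_j)\ge0$, which forces $|a_j|\ge r_1>s_0$ and hence $g^2(a_j)|\log\ep|\asymp\ep^{-1}\gg\Om_1^{1/2}/\ep$, or $F(a_j)<0$ and then (iii) with the separation condition gives $H(a_j)\ge\gaintf(R_*)+c\,\Om_1^{1/2}/\ep-o(\Om_1^{1/2}/\ep)\ge c'\Om_1^{1/2}/\ep$ because $\Om_1^{1/2}\gg\Om_1$. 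The delicate point is step (ii)--(iii): extracting the $\Om_1$- and $\Om_1^{1/2}$-scales of the value, location and curvature of the explicit TF cost function precisely enough, and checking that every approximation error ($g^2$ versus $\tfm$, $F$ versus $\costtf$, and the cutoff $\xiin$) is $o$ of the scale at stake, which is where \eqref{Omega1ter} enters.
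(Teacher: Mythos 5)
Your strategy is essentially the one the paper follows: for $j\in\Jout$ you dominate the $F$-term by half the kinetic term via a bound of the type $|F(a_j)|\le C\ep^{-1}|a_j-\rt|\,g^2(a_j)$ (the paper simply cites \cite[Lemma 4.1]{CRY}, i.e.\ \eqref{borneF}, instead of rederiving it), and for $j\in\Jin$ you compare $H$ with the explicit TF cost $\gaintf$ and read the three regimes off the value, location and curvature of its minimum; the paper does exactly this through the rescaling $z=\ep\Om(r^2-\rtf^2)$, which turns $\gaintf$ into $\ep^{-1}$ times the cubic $k(z)=2z(z-1/\sqrt{\pi})^2-3\Om_1 z$ (Propositions \ref{pro:TFvortexenergy} and \ref{pro:compare grho}), combined with the comparison \eqref{H-HTF}.

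Two quantitative points in your plan need repair. First, the quadratic bound you state in (iii) is off by a factor $\ep^{-1}$: since $\gaintf\sim\ep^{-1}k(z)$ and $z-z_2\sim (r-R_*)/(\ep|\log\ep|)$, the correct coefficient is of order $\ep^{-3}|\log\ep|^{-2}$, not $(\ep|\log\ep|)^{-2}$. As written, at the separation distance $|r-R_*|=\ep|\log\ep|\Om_1^{1/4}$ your bound yields only a gain $c\,\Om_1^{1/2}$, which is swamped by $|\gaintf(R_*)|\asymp\Om_1/\ep$, so \eqref{coutJmoinsplus} for $j\in J_+$ would not follow from (iii) as stated; your later line ``$+c\,\Om_1^{1/2}/\ep$'' silently uses the correct scaling, so this is a slip rather than a wrong idea, but (iii) must be corrected, and away from a fixed neighbourhood of $z_2$ it must be complemented by the sign/monotonicity analysis of the cubic, which is what \eqref{borne cout TF 2} encodes. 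Second, the strengthening ``$|H-\gaintf|=\OO(\Om_1|H(R_*)|)$'' in (i) is not available in the whole range permitted by \eqref{Omega1ter}: the provable comparison is \eqref{H-HTF}, $|H-\gaintf|\le C(\ep|\log\ep|)^{-1}$, and requiring this to be $\OO(\Om_1^2/\ep)$ would force $\Om_1\gtrsim|\log\ep|^{-1/2}$. This is harmless, since $(\ep|\log\ep|)^{-1}=o(\Om_1/\ep)$ suffices for \eqref{coutJetoile} and $(\ep|\log\ep|)^{-1}=o(\Om_1^{1/2}/\ep)$ for \eqref{coutJmoinsplus}. A minor nit: $g^2$ is not exponentially small on all of $[\rt,R_h]$ (only for $r\le\rtf-\OO(\ep^{7/6})$), so the ``$+\OO(\ep^\infty)$'' in your $\Jout$ estimate should be replaced by a crude $\OO(\ep^{-5/6})$ coming from $g^2\le C(\ep|\log\ep|)^{-1}$ on the remaining thin layer, which is still negligible against $g^2(a_j)|\log\ep|\gtrsim\ep^{-1}|\log\ep|^{-2}$ on $\at$.
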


Recalling that
\[
 \frac{\log |\log \ep|}{|\log \ep|} \ll \Om_1 \ll 1 
\]
we can use Lemma \ref{lem:couts} to simplify further (\ref{lowbound2})
\begin{multline}\label{lowbound3}
	\int_{\A} g ^{2} \left| \nabla u \right|^2 + F \curl (iu,\nabla u) \geq \left(1-\frac{\log |\log \ep |}{|\log \ep |}\right)\int_{\A \setminus \cup_{j\in J} B_j } \xiin g^2 |\nabla u | ^2 +  C \sum_j g ^2 (a_j) \chiout (a_j) |d_j| |\log \ep |
	\\  + 2\pi H(R_*) \left( 1 + o(1)\right)\sum_{j\in J_{*}}  \chiin(a_j) d_j + \frac{C\Om_1^{1/2}}{\ep}\sum_{j\in J_{+}\cup J_-} \chiin(a_j) | d_j |
  - C \frac{|\log \ep|\ep ^{\al}}{\ep ^2 \log |\log \ep| }.
\end{multline}
Note that by definition of $\xiin$, $\xiin (a_j) = 1$ for any $j\in J_{*}$. Adding $\int_{\ann} \frac{g^4}{\ep^2}\left(1-|u|^2 \right)^2 -F(1) \int_{\partial \B} (iu , \partial_{\tau}u) $ to both sides of (\ref{lowbound3}) and using (\ref{ridboundary}) we obtain
\begin{multline}\label{lowbound4}
\E [u] \geq  \left(1-\frac{\log |\log \ep |}{|\log \ep |}\right)\int_{\A \setminus \cup_{j\in J} B_j } \xiin g^2 |\nabla u | ^2 + C \sum_j g^2 (a_j)\chiout (a_j) |d_j| |\log \ep| \\ + 2\pi H(R_*) \left( 1 + o(1) \right)\sum_{j\in J_{*}}  \chiin(a_j) d_j 
 + \frac{C\Om_1^{1/2}}{\ep}\sum_{j\in J_{+}\cup J_- } \chiin(a_j) |d_j|
  - C \frac{|\log \ep|\ep ^{\al}}{\ep ^2 \log |\log \ep| }.
\end{multline}
On the other hand, combining the upper bound to the GP energy of Proposition \ref{pro:upperbound} and the first inequality in (\ref{borneinf}) we have
\begin{equation}\label{supE}
\E [u] \leq -C \frac{\Om_1 ^2}{\ep ^2}.
\end{equation}
We deduce from the above
\begin{equation}\label{borne inf vorticite}
\sum_{j\in J_{*}}  \chiin(a_j)|d_j|  \geq C \frac{\ep}{\Om_1}\left( \frac{\Om_1 ^2 }{\ep ^2 } - \frac{|\log \ep| ^{1-\alt}}{\ep ^2 \log |\log \ep|}  \right) \geq C \frac{\Om_1 }{\ep  }.
\end{equation}
The second inequality holds true if we choose $\alt \geq 3$, which we now do. From (\ref{lowbound4}) and (\ref{supE}) we also deduce (note that the last term in (\ref{lowbound4}) is a remainder because of (\ref{borne inf vorticite}))
\begin{eqnarray}
 \sum_{j\in J_{-}} \chiin(a_j) |d_j|  &\leq& C\Om_1 ^{1/2} \sum_{j\in J_{*}}  \chiin(a_j) |d_j| \label{vorticite moins}
\\  \sum_{j\in J_{+}} \chiin(a_j) |d_j| &\leq& C\Om_1 ^{1/2} \sum_{j\in J_{*}}  \chiin(a_j) |d_j| \label{vorticite plus}
\end{eqnarray}
One can interpret (\ref{borne inf vorticite}), (\ref{vorticite moins}) and (\ref{vorticite plus}) as follows (recall that $\Om_1 \ll 1 $): There are at least $C \frac{\Om_1 }{\ep}$ essential vortices in $\A$ (meaning vortices with nonzero degree). Most of them are of positive degree and close to the circle $\Cet$.\\ 

\subsection{The electrostatic problem}\label{sousec:electrostatic}

A difficulty in the proof below is to define a vorticity with support on $\Cet$ starting from $u$. A possible track, following \cite{ABM}, would be to use the Jacobian estimate and a first rough upper bound to the number of vortices to obtain some compactness for the vorticity measure of $u$. Using (\ref{vorticite plus}) one would then show that the limit measure has its support on the circle of interest and obtain the lower bound by a lower semi-continuity argument. Such a strategy is difficult to adapt to our setting because the geometry of our domain strongly depends on $\ep$. To obtain non trivial limits one should rescale the annulus to work on fixed domains. Also the weights appearing in the energy would complicate the argument (recall that $g^2$ is very small close to $\dd B_{R_<}$).\\
We follow another route that does not require any compactness argument. In this subsection we prove that the left-hand side of (\ref{infkineticnew}) can be bounded below using the electrostatic energy of a suitable modification of the vorticity of $u$. The method that we use to conclude the proof of Proposition \ref{pro:kineticinf} in Subsection \ref{sousec:preuve borne inf} requires precise informations on the potential $h_*=h_{\DirC}$ generated by $\DirC$ according to (\ref{defihnu}). We provide this information in Proposition \ref{pro:electrostatic} below.\\


\bigskip

Let us describe the electrostatic energy that will serve as intermediate lower bound. For technical reasons it is necessary to reduce the domain on which we work to $\ab$ where the density is large enough.\\
For any Radon measure $\nu$ supported in $\ac$ we define $\hche_{\nu}$ as the unique solution to the elliptic problem
\begin{equation}\label{defihcheck}
	\begin{cases}
		-\nabla \left( \frac{1}{g^2} \nabla \hche _{\nu}\right) = \nu \mbox{ in } \ac \\
		\hche _{\nu} = 0 \mbox{ on } \partial \ac.
	\end{cases} 
\end{equation}
and introduce
\begin{equation}\label{defiIcheck}
 \Iche (\nu) := \int_{\ac} \frac{1}{g^2} \left| \nabla \hche _{\nu} \right| ^2.
\end{equation}
We will later be interested in the minimization problem
\begin{equation}\label{defiIstarcheck}
 \Iche_* := \inf_{\nu \in \D_* ,\int \nu = 1} \Iche(\nu) = \Iche (\DirC )
\end{equation}
where the infimum is taken over the set $\D_*$ of positive Radon measures with support on the circle of radius $R_*$. The fact that $\DirC$ is the unique solution of (\ref{defiIstarcheck}) will be proven below (Proposition \ref{pro:electrostatic}).\\

The definition of the measure whose energy will be used in the lower bound goes as follows : Let $\jt$ be the modified superfluid current
\begin{equation}\label{jtilde}
\jt = \begin{cases}
							 (iu,\nabla u) \mbox{ in } \ac \cap GS_{\al} \setminus \cup_{j \in J} B_j \\
							0 \mbox{ otherwise }. 
						\end{cases}
\end{equation} 
The measure $\mut$ is the vorticity associated to $\jt$
\begin{equation}\label{mutilde}
\mut := \curl ( \jt ).
\end{equation}
The following lemma is a key ingredient in our analysis :

\begin{lem}[\textbf{Lower bound via an electrostatic energy}]\label{lem:kineticelectro}\mbox{}\\
With the above definitions, there holds for $\ep$ small enough :
\begin{equation}\label{kineticelectro}
\int _{\A \setminus \cup_{j\in J} B_j} \xiin g^2 |\nabla u| ^2 \geq (1-C |\log \ep|^{-1}) \Iche (\mut).
\end{equation}

\end{lem}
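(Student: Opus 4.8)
The strategy is to decompose the kinetic energy on $\A \setminus \cup_{j\in J} B_j$ according to the azimuthal partition of unity, keep only the contribution where $\chiin$ is active, and express the leftover current in terms of the potential $\hche_{\mut}$ solving \eqref{defihcheck}. First I would observe that on $\ac \cap GS_{\al} \setminus \cup_j B_j$ we have $|u| = 1 + \OO(|\log\ep|^{-1})$ (by item 1 of Proposition \ref{pro:vortexballs} and the definition of good cells), so that $g^2 |\nabla u|^2 \geq (1 - C|\log\ep|^{-1}) g^2 |\jt|^2$ pointwise there, while $\jt = 0$ elsewhere; hence
\[
\int_{\A \setminus \cup_{j\in J} B_j} \xiin g^2 |\nabla u|^2 \geq (1 - C|\log\ep|^{-1}) \int_{\ac} \xiin g^2 |\jt|^2 \geq (1-C|\log\ep|^{-1}) \int_{\ac} g^2 |\jt|^2 - (\text{error}),
\]
the point being that $\xiin$ differs from $1$ only in thin radial layers near $\dd \ac$, where $\jt$ can be controlled by the first energy bound $\F[u] \leq C\ep^{-2}$ of Lemma \ref{lem:initialbound}, producing a negligible remainder. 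The key reduction is therefore to bound $\int_{\ac} g^2 |\jt|^2$ from below by $\Iche(\mut)$.

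Next I would exploit the defining equation \eqref{defihcheck}: since $\curl(\jt) = \mut = -\curl\bigl(\tfrac{1}{g^2}\nabla^\perp \hche_{\mut}\bigr)$ in $\ac$, the vector field $\jt + \tfrac{1}{g^2}\nabla^\perp \hche_{\mut}$ is curl-free on $\ac$. Strictly speaking $\ac$ is an annulus, hence not simply connected, so one does not get a gradient directly; but as in the sketch in Section \ref{sousec:Sketch} one adds a suitable multiple of the harmonic $1$-form $\tfrac{1}{g^2}\nabla^\perp \Gb$ (the circulation being quantized because the winding of $u$ around the hole is an integer) to reduce to an exact form, writing $\jt = -\tfrac{1}{g^2}\nabla^\perp \hche_{\mut} + \lambda \tfrac{1}{g^2}\nabla^\perp \Gb + \nabla f$ for some function $f$ and constant $\lambda$. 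Then I expand
\[
\int_{\ac} g^2 |\jt|^2 = \int_{\ac} \frac{1}{g^2}|\nabla \hche_{\mut}|^2 + \lambda^2 \int_{\ac} \frac{1}{g^2}|\nabla\Gb|^2 + \int_{\ac} g^2 |\nabla f|^2 + (\text{cross terms}),
\]
and the cross terms vanish after integration by parts: $\int_{\ac} \tfrac{1}{g^2}\nabla^\perp \hche_{\mut}\cdot\nabla^\perp\Gb = \int_{\ac}\tfrac{1}{g^2}\nabla\hche_{\mut}\cdot\nabla\Gb = 0$ since $\Gb$ is harmonic for the operator and $\hche_{\mut} = 0$ on $\dd\ac$; the terms involving $\nabla f$ vanish similarly because $\hche_{\mut}$ and $\Gb$ solve their respective boundary value problems and $f$ can be taken single-valued after the correction. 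The three remaining terms are all nonnegative, so dropping the $\lambda^2$ and $\nabla f$ pieces yields $\int_{\ac} g^2|\jt|^2 \geq \int_{\ac}\tfrac{1}{g^2}|\nabla\hche_{\mut}|^2 = \Iche(\mut)$, which is exactly what is needed.

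The main obstacle I anticipate is the topological/boundary bookkeeping on the annulus: making rigorous the Hodge-type decomposition of $\jt$ on the non-simply-connected domain $\ac$, checking that the extra circulation around the central hole is indeed quantized (so that the correction by $\Gb$ is legitimate and does not destroy the curl-free property), and verifying that the boundary terms arising from the integrations by parts genuinely cancel given that $\jt$ need not vanish on $\dd\ac$. A secondary technical point is the treatment of the layers near $\dd\ac$ where $\xiin \neq 1$ and where $|u|$ is not close to $1$ (the inner boundary layer with small density, and regions intersecting bad cells); there one must use the crude bound $\F[u]\le C\ep^{-2}$ together with the smallness of the area of these layers to absorb everything into the $\OO(|\log\ep|^{-1})$ factor and a remainder dominated by the $-C\ep^{-1}|\log\ep|^{-3}$-type terms already present in Proposition \ref{pro:kineticinf}. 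Once these points are handled, the conclusion \eqref{kineticelectro} follows by collecting the estimates.
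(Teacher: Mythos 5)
Your proposal follows essentially the same route as the paper's proof: restrict to $\ac\cap GS_{\al}\setminus\cup_{j\in J}B_j$, where $|u|=1+\OO(|\log\ep|^{-1})$ and $\jt=(iu,\nabla u)$, pass to $\int_{\ac}g^2|\jt|^2$, and then decompose $\jt$ as $-g^{-2}\nabla^{\perp}\hche_{\mut}$ plus a multiple of the weighted harmonic form of the annulus plus a gradient, the cross terms vanishing by integration by parts so that the nonnegative pieces can be dropped, giving $\Iche(\mut)$. Two details are handled differently (and more simply) in the paper: first, the coefficient of the harmonic correction is not fixed by any quantization --- the relevant circulation $\int_{\dd\B}\bigl(\jt\cdot\tau+\frac{1}{g^2}\frac{\dd\hche_{\mut}}{\dd n}\bigr)$ is not an integer multiple of $2\pi$ in general (and need not be): one simply chooses the real constant that cancels it, as in the definition \eqref{defifmut} of $f_{\mut}$, built on the function $\Gamma$ of \eqref{Gammasansbarre}, i.e.\ the $g^{-2}$-weighted analogue on $\ac$ of the $\Gb$ you invoke rather than the $\rhob$-weighted one of Section 2. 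Second, no additive error term is introduced in the paper: since $\jt$ vanishes by definition outside $\ac\cap GS_{\al}\setminus\cup_{j}B_j$ and the paper uses $\xiin=1$ together with $|u|$ close to $1$ directly on that set, the absorption of a boundary-layer error via $\F[u]\le C\ep^{-2}$ that you propose (which, as stated, could not be shown small, since no localized energy bound is available in those thin collars) never arises, and \eqref{kineticelectro} indeed holds with no remainder.
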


\begin{proof}[Proof of Lemma \ref{lem:kineticelectro}]

We need to define $\Gamma$ as the solution to
\begin{equation}\label{Gammasansbarre}
\begin{cases}
 -\nabla \left( \frac{1}{g^2} \nabla \Gamma \right) = 0 \mbox{ in } \ac \\
\Gamma = 0 \mbox{ on } \dd B_{\Rb} \\
\Gamma = 1 \mbox{ on } \dd \B.
\end{cases}
\end{equation}
Explicitly :
\begin{equation}\label{formuleGammasansbarre}
\Gamma (\: \vec{r} \:)=  \frac{\int_{\Rb} ^r g ^2 (s) s^{-1} ds}{\int_{\Rb} ^{1} g ^2 (s) s^{-1} ds}.
\end{equation}
Note that this function is not the same as that defined in (\ref{defiGamma}), which is not confusing because the latter will not be used again in the sequel.
We recall that in $\ac \cap GS_{\al} \setminus \cup_{j\in J} B_j$, $\xiin = 1$ and $|u|$ is close to $1$ according to Item 1 in Proposition \ref{pro:vortexballs}. Thus  
\begin{eqnarray}
\int _{\A \setminus \cup_{j\in J} B_j} \xiin g^2 |\nabla u| ^2 &\geq& \int _{\ab \cap GS_{\al} \setminus \cup_{j\in J} B_j} g^2 |\nabla u| ^2 \nonumber \\
&\geq& \left( 1 - C |\log \ep| ^{-1}\right) \int _{\ac \cap GS_{\al} \setminus \cup_{j\in J} B_j} g^2 |u|^2 |\nabla u | ^2 \nonumber \\
&\geq & \left( 1 - C |\log \ep| ^{-1}\right)\int _{\ac \cap GS_{\al} \setminus \cup_{j\in J} B_j} g^2 \left| (iu,\nabla u) \right| ^2 \nonumber \\
&=& \left( 1 - C |\log \ep| ^{-1}\right)\int _{\ac} g^2 |  \jt | ^2 \label{kineticelectro1}.  
\end{eqnarray}
We set 
\begin{equation}\label{defifmut}
f _{\mut}:= \hche_{\mut} - \frac{1}{\int_{\ac} \frac{1}{g^2}|\nabla \Gamma | ^2} \int_{\dd \B } \left( \jt \cdot \tau +\frac{1}{g ^ 2}\frac{\dd \hche_{\mut}}{\dd n}\right) \Gamma.
\end{equation}
By definition 
\begin{eqnarray}
\curl \left(\jt + \frac{1}{g^2} \nabla ^{\perp } f _{\mut}\right) &=& 0 \mbox{ in } H^{-1}(\ac) \nonumber \\
\int_{\dd \B} \left( \jt + \frac{1}{g ^2} \nabla^{\perp} f_{\mut}\right)\cdot \tau &=& 0.
\label{jthmut}
\end{eqnarray}
Hence there exists $f\in H^{1} (\ac)$ such that 
\[
\jt = - \frac{1}{g ^2} \nabla^{\perp} f_{\mut} + \nabla f.
\]
Using the fact that $f_{\mut}$ is constant on the boundary of $\ac$, we have $\int_{\ac} \nabla^{\perp} f_{\mut} \cdot \nabla f = 0$ and thus
\[
\int_{\ac} g ^2 |\jt |^2 \geq \int_{\ac} \frac{1}{g ^2 } |\nabla f_{\mut}| ^2.
\] 
Next we note that
\[
\int_{\ac} \frac{1}{g ^2 } \nabla \hche_{\mut} \cdot \nabla \Gamma = 0 
\]
because $\hche_{\mut} = 0$ on $\dd \ac = \dd B_{\Rb} \cup \dd \B$. We thus have
\[
\int_{\ac} g ^2 |\jt |^2 \geq \int_{\ac} \frac{1}{g ^2 } |\nabla f_{\mut}| ^2 \geq \int_{\ac} \frac{1}{g ^2 } |\nabla \hche_{\mut}| ^2.
\]
Combining with (\ref{kineticelectro1}) we conclude that (\ref{kineticelectro}) holds.

\end{proof}

Our next task in this subsection is to give some details on the minimization problems (\ref{defiIstar}) and (\ref{defiIstarcheck}). In particular we prove that the minimizing measure is in both cases the normalized arclength measure on $\Cet$ and we compute the associated potential explicitly. We also show that considering the problem on the reduced annulus $\ac$ does not change significantly the energy.\\
The following proposition contains probably only facts known from potential theory. Indeed, apart from the weight $g^{-2}$, the minimization problems we are considering  fall in the general context of \cite{SaTo} (see e.g. Theorem II.5.12 therein and the discussion in \cite[Example 5.1]{ABM}). We nevertheless provide a short proof for the sake of completeness. 

\begin{pro}[\textbf{The electrostatic problems}]\label{pro:electrostatic}\mbox{}
\begin{enumerate}
\item The minimization problems (\ref{defiIstar}) and (\ref{defiIstarcheck}) both admit the normalized arclength measure on $\Cet$ (denoted $\DirC$) for unique solution. Moreover there exists two positive constants $C_1$ and $C_2$ independent of $\ep$ such that
\begin{equation}\label{bornesIstar}
C_1 \leq I_* \leq C_2.
\end{equation}
\item We denote $h_* = h_{\DirC}$ and $\hche_* = \hche_{\DirC}$ the fields associated to $\DirC$ by equations (\ref{defihnu}) and (\ref{defihcheck}). Both are radial piecewise $C^1$ functions. Explicitly we have
\begin{equation}\label{formuleh*}
h_{*} (r) = h_{*} (R_*) \begin{cases} \displaystyle
															\frac{\int_{\rt} ^r g^2 (s) s^{-1}ds}{\int_{\rt} ^{R_*} g^2 (s) s^{-1}ds} \mbox{ for } \rt \leq r \leq R_* \\
															\displaystyle\frac{\int_{r} ^1 g^2 (s) s^{-1}ds}{\int_{R_*} ^1 g^2 (s) s^{-1}ds} \mbox{ for } R_* \leq r \leq 1	
														\end{cases} 
\end{equation}
and
\begin{equation}														
\hche_{*} (r) = \hche_{*} (R_*) \begin{cases}
														\displaystyle	\frac{\int_{\Rb} ^r g^2 (s) s^{-1}ds}{\int_{\Rb} ^{R_*} g^2 (s) s^{-1}ds} \mbox{ for } \Rb \leq r \leq R_*\\
														\displaystyle	\frac{\int_{r} ^{1} g^2 (s) s^{-1}ds}{\int_{R_*} ^{1} g^2 (s) s^{-1}ds} \mbox{ for } R_* \leq r \leq 1
															\end{cases}
															\label{formulehche*}															
\end{equation}
where 
\begin{eqnarray}
h_{*} (R_*) &=& I_* = \frac{\left( \int_{\rt} ^{R_*} g^2 (s) s^{-1} ds\right) \left( \int_{R_*} ^1 g^2 (s) s^{-1} ds  \right)}{ 2\pi  \int_{\rt} ^1 g^2 (s) s^{-1} ds } \label{h*R*} \\
\hche_{*} (R_*) &=& \Iche_* = \frac{ \left( \int_{\Rb} ^{R_*} g^2 (s) s^{-1} ds \right) \left( \int_{R_*} ^{1} g^2 (s) s^{-1} ds \right) }{2\pi \int_{\Rb} ^{1} g^2 (s) s^{-1} ds} \label{hche*R*}.
\end{eqnarray}
\item The following estimate holds : 
\begin{equation}\label{I-Iche}
\left| I_* - \Iche_* \right| \leq C \Om_1 ^{1/2}.
\end{equation}
\end{enumerate}

\end{pro}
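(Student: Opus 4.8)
The plan is to obtain item (2) by a direct ODE computation, then deduce items (1) and (3) from it.

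\emph{Item (2): the explicit potentials.} Since $\DirC$ is rotationally invariant and $g^2$ is radial, the solution $h_*$ of \eqref{defihnu} (resp.\ $\hche_*$ of \eqref{defihcheck}) is radial. Away from $r=R_*$, a radial solution of $-\nabla(g^{-2}\nabla h)=0$ satisfies $\frac{d}{dr}\bigl(r\,g^{-2}h'\bigr)=0$, hence $h'(r)=c\,g^2(r)/r$ on each of $(\rt,R_*)$ and $(R_*,1)$; imposing $h_*(\rt)=h_*(1)=0$ forces $h_*$ to be proportional to $\int_{\rt}^r g^2(s)s^{-1}ds$ on the inner interval and to $\int_r^1 g^2(s)s^{-1}ds$ on the outer one. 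Continuity at $R_*$ together with the jump relation obtained by integrating $-\nabla(g^{-2}\nabla h_*)=\DirC$ over a thin annular shell around $\Cet$ (whose right-hand side has total mass $1$) determines the two constants and yields \eqref{formuleh*}--\eqref{h*R*}; the identical computation on $\ac$, with $\rt$ replaced by $\Rb$, gives \eqref{formulehche*}--\eqref{hche*R*}. In particular both functions are radial and piecewise $C^1$ (smooth on each subinterval, with a gradient jump at $R_*$), and $h_*\equiv h_*(R_*)$ on $\Cet$.

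\emph{Item (1): optimality, uniqueness and the bounds.} For a Radon measure $\nu$ of finite energy supported in $\A$, integration by parts and the Dirichlet condition give $I(\nu)=\int_\A h_\nu\,d\nu$, and more generally $\langle\nu,\sigma\rangle:=\int h_\nu\,d\sigma=\int h_\sigma\,d\nu$ is a positive definite bilinear form on mean-zero signed measures: if $I(\lambda)=0$ then $\nabla h_\lambda\equiv 0$, so $h_\lambda\equiv 0$ by the boundary condition, hence $\lambda=-\nabla(g^{-2}\nabla h_\lambda)=0$. Now let $\sigma\in\D_*$ with $\int\sigma=1$; if $I(\sigma)=\infty$ there is nothing to prove, otherwise expand $0\le I(\sigma-\DirC)=I(\sigma)-2\langle\sigma,\DirC\rangle+I(\DirC)$. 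Since $h_*=h_{\DirC}$ is constant $=I(\DirC)$ on $\Cet=\mathrm{supp}\,\sigma$, both $\langle\sigma,\DirC\rangle=\int h_*\,d\sigma$ and $I(\DirC)=\int h_*\,d\DirC$ equal $h_*(R_*)$, so $I(\sigma)\ge h_*(R_*)=I(\DirC)$, with equality iff $\sigma=\DirC$ by positive definiteness. This proves that $\DirC$ is the unique minimizer in \eqref{defiIstar}, and the same argument on $\ac$ settles \eqref{defiIstarcheck}. For \eqref{bornesIstar} we read off \eqref{h*R*}: the denominator $\int_{\rt}^1 g^2(s)s^{-1}ds$ is bounded above and below thanks to the normalization $\int_\A g^2=1$ and $\rt\to 1$; for the numerator, each factor is $\OO(1)$ for the same reason, and is bounded \emph{below} by a positive constant because the explicit location \eqref{defiRstar} of $R_*$ places it, in the variable $r^2$, essentially at the midpoint of the TF support $[R_h^2,1]$, so that each of $(\rt,R_*)$ and $(R_*,1)$ contains a subinterval of width $\gtrsim\ep|\log\ep|$ on which the $L^\infty$ comparison of $g^2$ with $\tfm$ from \cite{CRY} and the explicit form \eqref{minimiseurTF}, \eqref{anneau} of $\tfm$ give $g^2\gtrsim(\ep|\log\ep|)^{-1}$.

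\emph{Item (3): comparison.} Subtracting \eqref{hche*R*} from \eqref{h*R*}, the only change is the lower endpoint $\rt\mapsto\Rb=R_h+\ep|\log\ep|\Om_1^{1/2}$, so the difference is controlled by $\int_{\rt}^{\Rb}g^2(s)s^{-1}ds$ together with the $\OO(1)$ bounds just obtained. On $[\rt,R_h]$ the density $g^2$ is exponentially small by the decay estimates of \cite{CRY}, while on $[R_h,\Rb]$ one has $g^2\le C\tfm+o\bigl((\ep|\log\ep|)^{-1}\bigr)$ with $\tfm(r)\le C\,\Om_1^{1/2}(\ep|\log\ep|)^{-1}$ by \eqref{minimiseurTF} and $\Rb-R_h=\ep|\log\ep|\Om_1^{1/2}$, so $\int_{\rt}^{\Rb}g^2 s^{-1}ds=\OO(\Om_1)\le\OO(\Om_1^{1/2})$, and \eqref{I-Iche} follows. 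The main obstacle is the lower bound $I_*\ge C_1>0$: it is the only point where genuine quantitative information about the ground state $g$ (its $L^\infty$ proximity to $\tfm$ and non-degeneracy away from the boundaries) and about the precise radius $R_*$ of Appendix A enters; all the rest is either the ODE computation of item (2) or the classical convexity argument from potential theory.
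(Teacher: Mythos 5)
Your proof is correct, but it runs in the opposite direction to the paper's and uses different mechanisms at two points. The paper first treats the abstract minimization: it invokes the existence of a minimizer over measures supported on a compact set (citing \cite{SaTo}), derives from the first variation that the associated potential is constant on the support and equal to the Lagrange multiplier, deduces that the potential, hence the minimizing measure, is radial, so that the minimizer over $\D_*$ must be $\DirC$; only afterwards does it verify the explicit formulas \eqref{formuleh*}--\eqref{hche*R*} by uniqueness of the solution of \eqref{defihnu}, and it obtains \eqref{bornesIstar} from a duality formula expressing $I_*$ as a supremum over test functions $\phi\in C^1_c(\A)$ of $\left|\int_{\Cet}\phi\right|^2\big/\int_{\A}g^{-2}|\nabla\phi|^2$ (up to the factor $(4\pi^2R_*^2)^{-1}$), the lower bound coming from an explicit trial function. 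You instead compute $h_{\DirC}$ first, via the radial ODE and the unit flux-jump condition across $\Cet$, observe that it is constant on $\Cet$, and then obtain minimality and uniqueness from the expansion $0\le I(\sigma-\DirC)=I(\sigma)-I(\DirC)$ together with positive definiteness of the quadratic form; the bounds \eqref{bornesIstar} you read off directly from \eqref{h*R*}, the lower bound using that $R_*$ lies (in the variable $r^2$) in the middle of the TF support, where $g^2\gtrsim(\ep|\log\ep|)^{-1}$ by the pointwise comparison with $\tfm$. Your route is more self-contained, since it needs neither the existence theory for the weighted electrostatic problem nor the first-variation/duality steps, at the price of resting entirely on the verified explicit solution; the paper's route establishes the more structural fact that any minimizer has constant potential on its support, which is what it refers to as classical potential theory. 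For Item 3 your argument is essentially the paper's; note only that your refined bound $g^2\le C\tfm+o\bigl((\ep|\log\ep|)^{-1}\bigr)$ on $[R_h,\Rb]$ is not directly given by the pointwise estimate of \cite{CRY} in the thin layer $R_h\le r\le R_h+\OO(\ep^{3/2}|\log\ep|^2)$, but this is harmless: the crude bound $g^2\le C(\ep|\log\ep|)^{-1}$, which is what the paper uses, already yields $\int_{\rt}^{\Rb}g^2(s)s^{-1}ds\le C\Om_1^{1/2}$ and hence \eqref{I-Iche}.
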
 

\begin{proof}

The results for the problem (\ref{defiIstarcheck}) are exactly similar to those for (\ref{defiIstar}), we thus only prove the later.\\
Let us start with the general problem of minimizing $I(\nu)$ over the set $\D_E$ of Radon measures supported on $E$, a compact subset of $\A$ :
\begin{equation}\label{defiIE}
I_E = \inf_{\nu \in \D_E, \int \nu = 1} I(\nu). 
\end{equation}
 The existence of a minimizer $\nu_E$ to such a problem is classical (see e.g. \cite{SaTo}). We denote $h_E$ the function associated to $\nu_E$ by formula (\ref{defihnu}). Next, computing the first variation of $I$, we observe that there holds, for any $\nu \in \D_E$ 
\begin{equation}\label{ELelectro}
\int_{E} \nu h_E = \lambda_E \int_E \nu
\end{equation}
where $\lambda_E$ is a Lagrange multiplier. We deduce that $h_E$ is constant on $E$. Using the above equation and (\ref{defihnu}) with $\nu = \nu_E$ we see that
\beq \label{lambdaE}
h_{|E} = \lambda_E = \int_{\A} \frac{1}{g^2} |\nabla h_E| ^2 = I_E.
\eeq
We turn to the proof of Item 1. Let $h_*$ be a solution to the minimization problem. Using (\ref{lambdaE}) and the definition of $h_*$ we have
\[
\begin{cases}
-\nabla ( \frac{1}{g^2} \nabla h_* ) = 0 \mbox{ in } \A \cap B_{R_*} \\
h_* = 0 \mbox{ on } \dd B_{\rt} \\
h_* = I_* \mbox{ on } \dd B_{R_*}.
\end{cases}
\]
This implies that $h_*$ is radial on $\A \cap B_{R_*}$. A similar argument yields that $h_*$ must be radial also on $\A \setminus B_{R_*}$ and thus radial in $\A$. Then the associated minimizing measure $\nu_{\Cet}$ is radial also. We conclude that $\DirC$ is the unique solution to the problem (\ref{defiIstar}). The bounds (\ref{bornesIstar}) are proved by noting that, using (\ref{defihnu})
\[
I_* = \int_{\A} \frac{1}{g^2} |\nabla h_*| ^2 =  \sup_{\phi \in C_c ^1 (\A)} \frac{\left| \int _{\A} \frac{1}{g ^2 }\nabla h_* \cdot \nabla \phi \right| ^2}{\int_{\A} \frac{1}{g^2} |\nabla \phi | ^2} = \frac{1}{4\pi^2 R_* ^2}  \sup_{\phi \in C_c ^1 (\A)} \frac{\left| \int _{\Cet} \phi \right| ^2}{\int_{\A} \frac{1}{g^2} |\nabla \phi | ^2}.
\] 
The upper bound follows because $g^2 \leq C (\ep |\log \ep|)^{-1}$ on $\A$ whose thickness is of order $\ep |\log \ep|$. A small computation shows that the above supremum is bounded. The lower bound is proved via a trial function for the maximization problem. For example one can use the test function used in Remark \ref{rem:com vortic}, Item 2.\\
Proving that the function defined in the right-hand side of (\ref{formuleh*}) and (\ref{h*R*}) is a solution to (\ref{defihnu}) with $\nu = \DirC$ is a straightforward computation. By uniqueness we deduce that it must be equal to $h_*$.\\
Finally (\ref{I-Iche}) follows from \eqref{bornesIstar} and the formulas (\ref{h*R*}) and (\ref{hche*R*}) once one has noted that
\[
\int _{\rt} ^{\Rb} g^2(s) s^{-1}ds \leq C \frac{|\Rb - \rt|}{\ep |\log \ep|} \leq C \Om_1 ^{1/2}.
\]
\end{proof}

\subsection{Completion of the proofs of Propositions \ref{pro:lowerbound} and \ref{pro:kineticinf}}\label{sousec:preuve borne inf}

In what follows we denote
\begin{equation}\label{degtot}
D := 2\pi \sum_{j\in \Jin } \chiin (a_j) d_j 
\end{equation}
and 
\begin{equation}\label{mustar}
\musta := D \DirC.
\end{equation}
We also decompose $\mut$ as 
\begin{equation}\label{decompmut}
\mut = \musta + \muc = D \DirC + \muc.
\end{equation}
It is useful to recall that from (\ref{borne inf vorticite}) (\ref{vorticite moins}) and (\ref{vorticite plus}) we have
\begin{equation}\label{propD}
D = 2 \pi \left( 1- C \Om_1 ^{1/2} \right)\sum_{j\in J_*} \chiin(a_j) d_j \geq C\frac{\Om_1}{\ep}.
\end{equation}

We are now ready to finish the

\begin{proof}[Proof of Proposition \ref{pro:kineticinf}]

Starting from Lemma \ref{lem:kineticelectro} there remains to evaluate $\Iche (\mut)$. The two main ingredients will be the Jacobian Estimate and the properties of $\hche_*$ proved in Proposition \ref{pro:electrostatic}.\\
We have
\begin{equation}\label{calculImut}
\Iche (\mut) = \int_{\ac} \frac{1}{g^2} |\nabla \hche_{\mut}| ^2 = \int_{\ac} \frac{1}{g^2} |\nabla \hche_{\musta}| ^2 + 2 \int_{\ac} \frac{1}{g^2} \nabla \hche_{\muc} \cdot \nabla \hche_{\musta}  + \int_{\ac} \frac{1}{g^2} |\nabla \hche_{\muc}| ^2.  
\end{equation}
Now,
\begin{equation}\label{calculImut1}
\int_{\ac} \frac{1}{g^2} |\nabla \hche_{\musta}| ^2  = D^2 \Iche (\DirC) = D^2 I_* (1 +\OO (\Om_1 ^{1/2}))
\end{equation}
by definition and use of (\ref{I-Iche}). To obtain an appropriate lower bound it is thus sufficient to compute the second term in the right-hand side of (\ref{calculImut}). Using (\ref{defihcheck}) and Proposition \ref{pro:electrostatic} we have
\begin{equation}\label{astuce}
\int_{\ac} \frac{1}{g^2} \nabla \hche_{\muc} \cdot \nabla \hche_{\musta} = \int_{\ac} \hche_{\musta} \muc = D \left( \int_{\ac} \hche_{*} \mut - D \int_{\ac}\hche_{*}  \DirC \right) = D \left( \int_{\ac} \hche_{*} \mut - D \hche_* (R_*) \right).
\end{equation}
We now compute from the definition (\ref{jtilde}) of $\jt$
\begin{eqnarray}\label{hstarmut}
\int_{\ac} \hche_{*} \mut  &= &-\int_{\ac} \jt . \nabla ^{\perp} \hche_* = -\int_{GS_{\al}} (iu,\nabla u) \cdot \nabla ^{\perp} \left(\chiin \hche_* \right)-\int_{GS_{\al}} (iu,\nabla u) \cdot \nabla ^{\perp} \left( \chiout \hche_* \right) \nonumber
\\ &+& \sum_{j\in J} \int_{B_j \cap \ab } (iu,\nabla u) \cdot \nabla ^{\perp} \hche_* \nonumber
\\ &=& \int_{GS_{\al}} \mu \chiin \hche_* -\int_{GS_{\al}} (iu,\nabla u) \cdot \nabla ^{\perp} \left(\chiout \hche_* \right) + \sum_{j\in J} \int_{B_j \cap \ac} (iu,\nabla u) \cdot \nabla ^{\perp} \hche_* .
\end{eqnarray}
Let us first show how to estimate the second term. The integral is actually located on 
\[
\ac \cap \left( AS_{\al} \cup \left( UPS_{\al} \cap GS_{\al}\right)\right)
\]
 (see the definitions of $\chiout$ and $\hche_*$). Recalling that $g^2 \geq C\ep ^{-1} |\log \ep| ^{-3}$ and $|\nabla \chiout | \leq C \left(\ep |\log \ep|\right) ^{-1}$ there we have
\begin{eqnarray}\label{contribution out}
\left| \int_{GS_{\al}} (iu,\nabla u) \cdot \nabla ^{\perp} \left(\chiout \hche_* \right) \right| &\leq& C\left( \frac{\Vert \hche_* \Vert_{L^{\infty}}}{\ep |\log \ep|}+ \Vert \nabla \hche_* \Vert_{L^{\infty}} \right) \int_{\ac \cap \left( AS_{\al} \cup \left( UPS_{\al} \cap GS_{\al}\right)\right)} |u||\nabla u|\nonumber
 \\ &\leq&    C |\log \ep | ^2  \int_{ AS_{\al} \cup \left( UPS_{\al} \cap GS_{\al}\right)} g^2 |u||\nabla u| \nonumber 
\\ &\leq& C |\log \ep| ^2   \eta \int_{AS_{\al} \cup \left( UPS_{\al} \cap GS_{\al}\right)} g ^2 |u| ^2 \nonumber
\\&+& \frac{C |\log \ep | ^2}{\eta} \int_{AS_{\al}\cup \left( UPS_{\al} \cap GS_{\al}\right)} g ^2 |\nabla u| ^2 
\end{eqnarray}
where $\eta$ is a parameter that we fix below and we have used the explicit formula (\ref{formulehche*}) for the bounds on $\hche_*$ and $|\nabla \hche_*|$. 
We recall the basic estimate
\[
 \int_{AS_{ \al}\cup \left( UPS_{\al} \cap GS_{\al}\right)} g^2 |\nabla u| ^2 \leq \frac{C}{\ep ^2}
\]
coming from (\ref{Fgfirstbound}). Also, using $g ^2 |u| ^2 \leq C (\ep |\log \ep|) ^{-1}$
\[
 \int_{AS_{ \al} \cup \left( UPS_{\al} \cap GS_{\al} \right)} g^2 |u| ^2 \leq C\ep ^{\al},
\]
as a consequence of (\ref{numberaverage}) and \eqref{numberunpleasant}. Choosing 
\[
 \eta = \ep ^{-1-\al / 2}
\]
and using (\ref{parameters}) we obtain 
\begin{equation}\label{contribution out 2}
 \left| \int_{GS_{\al}} (iu,\nabla u) \cdot \nabla ^{\perp} \left(\chiout \hche_* \right) \right|  \leq C \frac{|\log \ep|^2}{\ep |\log \ep|^{\alt / 2}}.
\end{equation}
For the third term in (\ref{hstarmut}) we use exactly the same kind of argument. We now use the smallness of the set covered by the vortex balls instead of the smallness of $AS_{\al}$. Indeed, using the division of $\A$ into $N \propto (\ep |\log \ep|)^{-1}$ cells and Item 2 of Proposition \ref{pro:vortexballs}
\begin{equation}\label{aire boules}
\left | \cup_j B_j \right | \leq C \sum_{n=1} ^N \left | \cup_{j\in J} B_j \cap \A_n \right | \leq  C \sum_{n=1} ^N \sum_{j, B_j \cap \A_n \neq \varnothing} r_j ^2  \leq C \ep |\log \ep|^{-11}.
\end{equation}
It follows that (recall that $g^2 |u|^2 \leq C (\ep |\log \ep|)^{-1}$)
\[
\int_{\cup_j B_j} g^2 |u|^2 \leq C |\log \ep|^{-12}.
\]
Using this fact and arguing as in (\ref{contribution out}) above we obtain
\begin{equation}\label{contribution boules}
 \left| \sum_{j\in J} \int_{B_j} (iu,\nabla u) \cdot \nabla ^{\perp} \hche_* \right|  \leq C \frac{1}{\ep |\log \ep|^{4}}.
\end{equation}
The first term in (\ref{hstarmut}) is estimated using the Jacobian Estimate : The function $\chiin \hche_*$ satisfies the assumptions of Proposition \ref{pro:jacest}. In particular, its support is included in $\ac \cap GS_{\al}\subset \at \cap GS_{\al} $. We thus have, using again (\ref{formulehche*}) to bound the gradient of $\hche_*$ and (\ref{Fgfirstbound}),
\begin{equation}\label{contribution in}
\int_{GS_{\al}} \mu \chiin \hche_*  = 2\pi\sum_{j\in \Jin} \chiin(a_j) d_j \hche_* (a_j) + \OO\left( \frac{1}{\ep |\log \ep|^3}\right). 
\end{equation}
Provided $\alt$ is large enough (which we are free to decide) we thus finally have, inserting (\ref{contribution out 2}), (\ref{contribution boules}) and (\ref{contribution in}) in (\ref{hstarmut})
\begin{equation}\label{hstarmut2} 
\int_{\ac} \hche_{*} \mut = 2\pi \sum_{j\in \Jin} \chiin(a_j) d_j \hche_* (a_j) + \OO\left( \frac{1}{\ep |\log \ep|^3}\right).
\end{equation}
Then, going back to (\ref{astuce}) 
\begin{equation}\label{astucefinie}
\int_{\ac} \frac{1}{g^2} \nabla \hche_{\muc} \cdot \nabla \hche_{\musta}= D \left( 2\pi \sum_{j\in \Jin} \chiin(a_j) d_j \left( \hche_* (a_j) - \hche_* (R_*)\right)\right)+ \OO \left( \frac{1}{\ep |\log \ep|^3}\right)
\end{equation} 
On the other hand by definition of $J_*$ and the explicit formula for $\hche_*$
\[
\left| \hche_* (a_j) - \hche_* (R_*)\right| \leq C \Om_1 ^{1/4} \mbox{ for any } j\in J_*.
\]
Recalling that $\hche_*(R_*) = \max _{\ac} \hche_*$ we have for any $j\in J_-$
\[
 d_j \left( \hche_* (a_j) - \hche_* (R_*)\right) \geq 0.
\]
Also, for any $j\in J_+$,
\[
d_j \left( \hche_* (a_j) - \hche_* (R_*)\right) \geq -d_j \hche_* (R_*) \geq -C d_j.
\]
It follows that 
\begin{eqnarray}\label{astucefinie2}
\int_{\ac} \frac{1}{g^2} \nabla \hche_{\muc} \cdot \nabla \hche_{\musta} &\geq& -C D \left( \Om_1^{1/4} \sum_{j\in J_*} \chiin(a_j) d_j + \sum_{j\in J_+} \chiin(a_j) d_j \right) + \OO (\ep ^{-1} |\log \ep| ^{-3}) \nonumber \\
&\geq& - C D \left( D \Om_1 ^{1/4} +  \frac{1}{\ep |\log \ep|^3}\right) + \OO (\ep ^{-1} |\log \ep| ^{-3})
\end{eqnarray}
where we have used (\ref{propD}). Going back to (\ref{calculImut}) and (\ref{calculImut1}), combining with (\ref{kineticelectro}) and (\ref{propD}), the result is proved.

\end{proof}

With the result of Proposition \ref{pro:kineticinf} in hand it is an easy task to complete the 

\begin{proof}[Proof of Proposition \ref{pro:lowerbound}]

Collecting (\ref{infkineticnew}) and (\ref{lowbound4}) we have
\begin{eqnarray}\label{rassemble inf}
\E [u] &\geq& \left( 2 \pi \sum_{j\in J_*} \chiin (a_j) d_j \right)^2 I_* \left( 1 - C \Om_1 ^{1/4} \right) \nonumber \\
&+& 2\pi H(R_*) \left( 1 + o(1) \right)\sum_{j\in J_{*}}  \chiin(a_j) d_j \nonumber \\
&-& C \frac{\sum_{j\in J_*} \chiin(a_j) d_j}{\ep |\log \ep |^3} - C \frac{|\log \ep|\ep ^{\al}}{\ep ^2 \log |\log \ep| } \nonumber \\
&\geq & \left( 2 \pi \sum_{j\in J_*} \chiin (a_j) d_j \right)^2 I_* \left( 1 - C \Om_1 ^{1/4} \right) \nonumber \\
&+& 2\pi H(R_*) \left( 1 +o(1)\right)\sum_{j\in J_{*}}  \chiin(a_j) d_j  - C \frac{\log |\log \ep|}{\ep ^2 |\log \ep|^{1+\alt} }.
\end{eqnarray}
We have used (\ref{Omega1ter}) and (\ref{parameters}) for the second inequality. There only remains to minimize the sum of the first two terms above with respect to $2\pi \sum_{j\in J_*} \chiin(a_j)d_j$ and use (\ref{Omega1bis}) and (\ref{Omega1ter}) to obtain
\begin{equation}\label{enfin inf}
\E[u] \geq -\frac{H(R_*)^2}{4 I_*} (1+o(1)) - C \frac{\log |\log \ep|}{\ep ^2 |\log \ep|^{1+\alt} } \geq -\frac{H(R_*)^2}{4 I_*} (1+o(1)).
\end{equation}
The second inequality holds true because we choose $\alt$ large enough and $H(R_*) \propto -\Om_1  \ep ^{-1}$. This concludes the proof of the second inequality in (\ref{borneinf}).\\
Note for later use that in \eqref{rassemble inf} we have dropped the term
\[
C \sum_j g^2 (a_j) \chiout (a_j) |d_j||\log \ep| .
\]
Keeping this term and combining the lower bound we have just proved with the upper bound to the energy we obtain the estimate
\begin{equation}\label{dernier estim}
C \sum_j g^2 (a_j) \chiout (a_j) |d_j||\log \ep| \ll \frac{\Om_1 ^2}{\ep ^2}.
\end{equation}
\end{proof} 

\section{Asymptotics for the vorticity}\label{sec:vorticity}

In this Section we provide the proofs of Theorems \ref{theo:vorticity} and \ref{theo:vorticity exp}. Actually, most of the ingredients are scattered in the preceding subsections. In particular it is rather straightforward from the proofs in Subsection 3.3 to deduce

\begin{pro}[\textbf{Asymptotics for the modified vorticity}]\label{pro:asym mut}\mbox{}\\
Let $\mut$ be defined as in (\ref{mutilde}). We have for any  test function $\phi \in C_c ^1 (\ac)$ 
\begin{equation}\label{asym mut}
\left| \int_{\ac} \mut \phi + \frac{H(R_*)}{2 I_*} \int_{\ac} \DirC \phi \right| \ll \frac{\Om_1}{\ep} \left( \int_{\ac} \frac{1}{g^2} |\nabla \phi|^2 \right) ^{1/2},
\end{equation}
uniformly in $\phi$.
\end{pro}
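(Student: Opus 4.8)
The plan is to unpack the material already assembled in Subsection~\ref{sousec:preuve borne inf}; no new estimate is needed. Starting from the decomposition $\mut = D\DirC + \muc$ of \eqref{decompmut}, one has for any $\phi\in C^1_c(\ac)$
\[
\int_{\ac}\mut\,\phi + \frac{H(R_*)}{2I_*}\int_{\ac}\DirC\,\phi
= \Big(D+\frac{H(R_*)}{2I_*}\Big)\int_{\ac}\DirC\,\phi + \int_{\ac}\muc\,\phi .
\]
For the first summand, integrating \eqref{defihcheck} with $\nu=\DirC$ by parts and using Cauchy--Schwarz together with \eqref{I-Iche} and \eqref{bornesIstar} gives $\big|\int_{\ac}\DirC\,\phi\big|=\big|\int_{\ac}g^{-2}\nabla\hche_*\cdot\nabla\phi\big|\le \Iche_*^{1/2}\big(\int_{\ac}g^{-2}|\nabla\phi|^2\big)^{1/2}\le C\big(\int_{\ac}g^{-2}|\nabla\phi|^2\big)^{1/2}$; for the second, since $\hche_{\muc}=\hche_{\mut}-D\hche_*\in H^1_0(\ac)$ and $\phi$ is compactly supported, integration by parts and Cauchy--Schwarz give $\big|\int_{\ac}\muc\,\phi\big|\le \Iche(\muc)^{1/2}\big(\int_{\ac}g^{-2}|\nabla\phi|^2\big)^{1/2}$. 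Thus the statement reduces to the two quantitative bounds $\big|D+H(R_*)(2I_*)^{-1}\big|\ll\Om_1\ep^{-1}$ and $\Iche(\muc)\ll\Om_1^2\ep^{-2}$.

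For the first bound I would revisit the chain of inequalities in the proof of Proposition~\ref{pro:lowerbound}. The right-hand side of \eqref{rassemble inf}, after dropping the non-negative $\chiout$-term, is a quadratic in $D_*:=2\pi\sum_{j\in J_*}\chiin(a_j)d_j$ with leading coefficient $I_*\big(1+\OO(\Om_1^{1/4})\big)>0$, whose minimum equals $-H(R_*)^2(4I_*)^{-1}\big(1+\OO(\Om_1^{1/4})\big)$ and is attained at $D_*=-H(R_*)(2I_*)^{-1}\big(1+\OO(\Om_1^{1/4})\big)$. Writing this quadratic as its minimum plus $I_*(1+\OO(\Om_1^{1/4}))$ times the square of the deviation from the minimiser and comparing with the matching upper bound of Proposition~\ref{pro:upperbound}, every quantity sits at the order $H(R_*)^2I_*^{-1}=\Theta(\Om_1^2\ep^{-2})$ except the $\OO(\Om_1^{1/4})$ relative corrections and the remainder $C\log|\log\ep|\,(\ep^2|\log\ep|^{1+\alt})^{-1}$, all of which are $\ll\Om_1^2\ep^{-2}$ by \eqref{Omega1ter} and the choice of $\alt$ large. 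Hence $\big(D_*+H(R_*)(2I_*)^{-1}\big)^2\ll\Om_1^2\ep^{-2}$, so in particular $D_*=\Theta(\Om_1\ep^{-1})$, and then \eqref{propD}, i.e. $D=(1-\OO(\Om_1^{1/2}))D_*$, upgrades this to $\big|D+H(R_*)(2I_*)^{-1}\big|\ll\Om_1\ep^{-1}$.

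For the second bound I would use the identity already present inside the proof of Proposition~\ref{pro:kineticinf}, obtained from \eqref{calculImut},
\[
\Iche(\muc)=\Iche(\mut)-\Iche(\musta)-2\int_{\ac}\frac1{g^2}\nabla\hche_{\musta}\cdot\nabla\hche_{\muc}.
\]
The cross term is handled exactly as in \eqref{astucefinie}--\eqref{astucefinie2}: using $|\hche_*(a_j)-\hche_*(R_*)|\le C\Om_1^{1/4}$ for $j\in J_*$, the boundedness of $\hche_*$, the estimates \eqref{vorticite moins}--\eqref{vorticite plus} on the negative and off-circle degrees, and $D=\Theta(\Om_1\ep^{-1})$, one gets $\big|\int_{\ac}g^{-2}\nabla\hche_{\musta}\cdot\nabla\hche_{\muc}\big|\le CD\,\Om_1^{5/4}\ep^{-1}+\OO\big((\ep|\log\ep|^3)^{-1}\big)\ll\Om_1^2\ep^{-2}$. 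Next, $\Iche(\musta)=D^2\Iche(\DirC)=D^2 I_*\big(1+\OO(\Om_1^{1/2})\big)=D^2I_*+o(\Om_1^2\ep^{-2})$ by \eqref{calculImut1} and \eqref{I-Iche}; and Lemma~\ref{lem:kineticelectro} combined with \eqref{lowbound4}, the upper bound on $\E[u]$ and the near-optimality of $D_*$ just established yields $\Iche(\mut)\le(1+C|\log\ep|^{-1})\int_{\A\setminus\cup_jB_j}\xiin g^2|\nabla u|^2\le D^2I_*+o(\Om_1^2\ep^{-2})$. Since $\Iche(\muc)\ge0$, these three facts force $\Iche(\muc)\ll\Om_1^2\ep^{-2}$, completing the argument. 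I expect the only delicate point — as everywhere in Section~\ref{sec:lower bound} — to be the bookkeeping: one must check that all the $\OO(\Om_1^{1/4})$, $\OO(\Om_1^{1/2})$ and logarithmic errors really are negligible against the leading orders $\Om_1\ep^{-1}$ and $\Om_1^2\ep^{-2}$, and this is precisely where hypothesis \eqref{Omega1ter} and the freedom to enlarge $\alt$ enter.
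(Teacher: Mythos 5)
Your argument is correct and is essentially the paper's own proof: the paper likewise pins down $D=-H(R_*)(2I_*)^{-1}(1+o(1))$ by comparing \eqref{rassemble inf} with the matching upper bound, and obtains $\Iche(\muc)=\int_{\ac}g^{-2}|\nabla\hche_{\muc}|^2\ll\Om_1^2\ep^{-2}$ by reinstating the dropped third term of \eqref{calculImut} in the lower bound (your rearrangement of that identity is algebraically the same pinching), before concluding by the same Cauchy--Schwarz duality through \eqref{defihcheck}. Your write-up merely makes explicit the quadratic-completion step behind \eqref{calculD1} and the bound $|\int_{\ac}\DirC\,\phi|\le\Iche_*^{1/2}(\int_{\ac}g^{-2}|\nabla\phi|^2)^{1/2}$, which the paper leaves implicit.
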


\begin{proof}
We first remark that, combining \eqref{Omega1ter}, (\ref{bornesupenergie}), (\ref{borneinf}) and (\ref{rassemble inf})
\begin{multline}\label{calculD}
- \frac{H(R_*)^2}{4 I_*} (1+o(1)) \geq \left( 2 \pi \sum_{j\in J_*} \chiin (a_j) d_j \right)^2 I_* \left( 1 +o(1) \right) \\
+ 2\pi H(R_*) \left( 1 +o(1) \right)\sum_{j\in J_{*}}  \chiin(a_j) d_j  - o\left(\frac{\Om_1 ^2 }{\ep ^ 2 }\right) .
\end{multline}
We deduce that 
\begin{equation}\label{calculD1}
2 \pi \sum_{j\in J_*} \chiin (a_j) d_j = -\frac{H(R_*)}{2I_*} (1+o(1))
\end{equation}
and thus, going back to (\ref{propD})
\begin{equation}\label{calculD2}
D = -\frac{H(R_*)}{2I_*} (1+o(1)).
\end{equation}
Next we note that, up to now, we have neglected one useful term in the proof of the lower bound, namely the third term in (\ref{calculImut}). Keeping this term and using (\ref{calculD1}), Equation (\ref{rassemble inf}) reads
\begin{equation}\label{re borne inf}
\E [u] \geq - \frac{H(R_*) ^2}{4 I_*} (1+o(1)) + \int_{\ac} \frac{1}{g^2} |\nabla \hche_{\muc} | ^2 (1+o(1)) - o\left(\frac{\Om_1 ^2}{\ep ^ 2 }\right) .
\end{equation}
Thus (recall that $\muc = \mut - D \DirC$), using (\ref{bornesupenergie}) and the first inequality in (\ref{borneinf})
\[
\int_{\ac} \frac{1}{g^2} |\nabla \hche_{\muc} | ^2 \ll \frac{\Om_1 ^2}{\ep ^2}
\]
which implies the result via (\ref{defihcheck}) and (\ref{calculD2}).
\end{proof}

Recall that $\mut$ is defined by neglecting the superfluid current in a region where no convenient lower bound to $|u|$ is available and considering the vorticity associated to the remaining current. One can see this procedure as a regularization of $\mu$, because we actually neglect the regions where we expect the phase of $u$ to be singular (therefore $|u|$ to go to zero). Proposition \ref{pro:asym mut} states that, modulo this regularization, one can estimate the vorticity in the dual norm of 
\[
\left(\int_{\ab} g ^{-2} |\nabla \phi| ^2 \right)^{1/2}.
\]
This is the natural norm associated with the minimization problem (\ref{defiIstar}) defining the contribution of the vortices to the energy. Note that such a control is not exactly stronger than a control in $(C_c ^1 (\ab)) ^*$ norm because of the strong inhomogeneity of the weight $g^{-2}$ (see Remark \ref{rem:com vortic}). It would be stronger for example if $g^2$ was a constant because of the embedding (in two dimensions) of $H^{-1}(\ab)$ in $(C_c ^1 (\ab)) ^*$.\\
The necessity to use the $L^{\infty}$ norm of the gradient of test functions to bound the left-hand side of (\ref{resultvorticity}) appears when approximating $\mu$ by $\mut$ in the proof below (that is when justifying (\ref{sketch:mum}) rigorously). The main task is now to estimate the effect of the currents that we have neglected when defining $\mut$.

\begin{proof}[Proof of Theorem \ref{theo:vorticity}]

Let $\phi \in C_c ^1 (\ab)$.
We compute
\begin{eqnarray}\label{calcul vortic}
\int_{\A} \mu \phi &=& - \int_{\A} (iu,\nabla u) \nabla ^{\perp} \left( \chiin \phi  + \chiout \phi \right) \nonumber \\
&=& - \int_{\ac \setminus \cup_{j\in J} B_j} (iu,\nabla u) \nabla ^{\perp} \left( \chiin \phi \right) + \int_{\ac  \cap \cup_{j\in J} B_j} (iu,\nabla u) \nabla ^{\perp} \left( \chiin \phi \right) \nonumber \\
&-& \int_{\A} (iu,\nabla u) \nabla ^{\perp} \left( \chiout \phi \right). 
\end{eqnarray}
The first term is estimated using Proposition \ref{pro:asym mut} (note that $\chiin \phi$ has its support included in $\ac$). By definition 
\begin{multline}\label{terme principal}
- \int_{\ac \setminus BS_{\al } \setminus \cup_{j\in J} B_j} (iu,\nabla u) \nabla ^{\perp} \left( \chiin \phi \right) = \int_{\ac} \mut \chiin  \phi = - \frac{H(R_*)}{2 I_*} \int_{\ab} \DirC \chiin \phi + o\left( \frac{\Om_1}{\ep} \right) \left( \int_{\ab} \frac{1}{g^2} |\nabla \phi |^2\right) ^{1/2}  \\
=  - \frac{H(R_*)}{2 I_*} \int_{\ab} \DirC \phi + o\left( \frac{\Om_1}{\ep} \right) \left( \int_{\ab} \frac{1}{g^2} |\nabla \phi |^2\right) ^{1/2} + \OO \left( N ^ B _{\al} \ep |\log \ep | \Vert \phi \Vert_{L ^{\infty} (\ab)} \frac{\Om_1}{\ep} \right) \\
= - \frac{H(R_*)}{2 I_*} \int_{\ac} \phi + o\left( \frac{\Om_1}{\ep} \right) \left( \int_{\ab} \frac{1}{g^2} |\nabla \phi |^2\right) ^{1/2} + o \left( \Om_1 |\log \ep| \Vert \nabla \phi \Vert_{L ^{\infty} (\ab)}\right).
\end{multline}
We have used the fact that $\chiin = 1$ on $\Cet \cap PS_{\al}$. To pass to the second line it is thus sufficient to note that the length of $\Cet \setminus PS_{\al} = \Cet \cap AS_{\al}$ is of order $N ^ B _{\al} \ep |\log \ep |$. Also, we have
\[
\Vert  \phi \Vert_{L ^{\infty} (\A)} \leq C \ep |\log \ep| \Vert \nabla \phi \Vert_{L ^{\infty} (\A)} 
\]
because $|\A| \propto \ep |\log \ep|$ and $\phi$ vanishes on $\dd \A$. Recalling (\ref{numberbad}) we obtain the third line of (\ref{terme principal}).\\
We now show that the other terms in (\ref{calcul vortic}) are remainders, arguing as when dealing with (\ref{hstarmut}) in the proof of Proposition \ref{pro:kineticinf}. \\
The second term in the right-hand side is estimated exactly as the sum of the second and the third term in the right-hand side of (\ref{hstarmut}), using the small area of the region covered by the vortex balls. The result is 
\[
\left| \int_{\ac  \cap \cup _{j\in J} B_j} (iu,\nabla u) \nabla ^{\perp} \left( \chiin \phi \right) \right| \leq C |\log \ep| ^{-3} \Vert \nabla \phi \Vert_{L^{\infty} (\ab)}.
\]
Finally, the last term in (\ref{calcul vortic}) is estimated exactly as in (\ref{contribution out}), (\ref{contribution out 2}). We obtain
\begin{equation}\label{merde}
\left| \int_{\A} (iu,\nabla u) \nabla ^{\perp} \left( \chiout \phi \right) \right| \leq C \frac{|\log \ep|^3}{|\log \ep| ^{\alt/2}} \Vert \nabla \phi \Vert_{L^{\infty}}
\end{equation}
which concludes the proof, taking $\alt$ large enough.

\end{proof}


We conclude this section by the 

\begin{proof}[Proof of Theorem \ref{theo:vorticity exp}]
The vortex balls entering in the definition of $\mu_e$ are those defined in Proposition \ref{pro:vortexballs} from which we discard the balls that are not included in $\ab$, that is the balls labeled by $j\in J_{\mathrm{in}}$ (see Equation \eqref{Jin}):
\[
\mu_e = \sum_{j\in \Jin} 2\pi d_j \delta_{a_j}.
\]
For the statement of the Theorem we have renamed $\Jin = K$.\\
Let us first note that, using Theorem \ref{theo:vorticity}, \eqref{asympt mue} is a consequence of \eqref{mue-mui}, we thus only prove the latter.\\
From the Jacobian estimate \eqref{JE} and Lemma \ref{lem:initialbound} we have, for any $\phi \in C^1_c (\ab \cap GS_{\al})$ 
\[
\bigg|\sum_{j\in \Jin}  2 \pi d_j \phi (\avi)- \int_{GS_{\al}\cap \at} \diff \rv \: \phi \:  \curl (iu,\nabla u) \bigg| \leq  C \left\Vert \nabla \phi \right\Vert_{L^{\infty}(GS_{\al})} |\log \ep|^{-2}.  
\]
where $C$ does not depend on $\phi$. To conclude the proof we only have to extend such a statement to all test functions $\phi \in C^1_c (\ab \cap GS_{\al})$. Let us pick such a function and write 
\[
\int_{\ab} \left(\mu - \mu_e \right) \phi = \int_{\ab} \left(\mu - \mu_e \right) \chiin \phi + \int_{\ab}\mu \chiout \phi - \sum_{j\in \Jin} d_j \chiout(a_j) \phi(a_j).
\]
For the first term we can use the Jacobian estimate because $\chiin \phi$ has support in $GS_{\al}$. This yields (using also Lemma \ref{lem:initialbound})
\[
\left| \int_{\ab} \left(\mu - \mu_e \right) \chiin \phi \right| \leq C |\log \ep| ^{-2} \left\Vert \nabla \phi \right\Vert_{L ^{\infty}}.
\]
The second term is 
\[
\int_{\ab}\mu \chiout \phi = - \int_{\ab} (iu,\nabla u) \nabla^{\perp} (\chiout \phi)
\]
and has already been estimated, see \eqref{merde}. For the third term we recall that on $\ab$
\[
g^2 \sim \tfm \geq \frac{C \Om ^{1/2}}{\ep |\log \ep|}, 
\]
thus, using $\left| \phi(a_j) \right| \leq C \ep |\log \ep| \left\Vert \nabla \phi \right\Vert_{L ^{\infty}}$ and \eqref{dernier estim}
\[
\left| \sum_{j\in \Jin} d_j \chiout(a_j) \phi(a_j) \right| \leq C \ep ^2 |\log \ep| ^2 \Om_1 ^{-1/2} \sum_j g^2 (a_j) \chiout (a_j) |d_j|\left\Vert \nabla \phi \right\Vert_{L ^{\infty}} \ll \Om_1 ^{3/2}|\log \ep|\left\Vert \nabla \phi \right\Vert_{L ^{\infty}}.
\]
This concludes the proof.
\end{proof}

\section*{Appendix A : The Cost Function and the Vortex Energy}\label{ap:costfunc}
\addcontentsline{toc}{section}{Appendix A : The Cost Function and the Vortex Energy}

\renewcommand{\theequation}{A.\arabic{equation}}
\renewcommand{\thesection}{A}
\setcounter{equation}{0}
\setcounter{pro}{0}

In this appendix we study the cost function (\ref{fonctioncout}) and provide the proof of Lemma \ref{lem:couts}.\\
It is convenient to define 
\begin{equation}\label{omega TF}
\om ^{\mathrm{TF}}:= \frac{2}{3 \sqrt{\pi}\ep}
\end{equation}
and study the related function 
\beq
	\label{gainTF}
	\gaintf(r) : =\half |\log\eps| \tfm(r) + \costtf(r),
\eeq
where
\beq
	\label{exp TF potential}
	\costtf(r) : = 2 \int_{\rtf}^r \diff s \: \vec{B}_{\optphtf}(r) \cdot \vec{e}_{\theta} \tfm(r) = \eps^2 \Omega^2 \int_{\rtf}^r \diff s \: \lf[ \Omega s - \lf( [\Omega] - \optphtf \ri) s^{-1} \ri] (s^2 - \rtf^2).
\eeq
In order to investigate the behavior of the infimum of $ \gaintf $ inside the bulk, it is convenient to rescale the quantities and set
\beq \label{rescale}
	z : = \eps \Omega (r^2 - \rtf^2),
\eeq
so that $ z $ varies on a scale of order one, i.e., more precisely $ z \in [0, 2/\sqrt{\pi} ] $. With such a choice the gain function can be easily estimated:
\begin{multline} \label{calcul FTF}
 	\costtf(r) = \frac{\eps^2 \Omega^2}{2} \int_{0}^{r^2-\rtf^2} \diff t \: t  \lf[ \Omega (t + \rtf^2) - [\Omega] + \optphtf \ri] \lf(t + \rtf^2 \ri)^{-1} =	\\
	\frac{\eps^2 \Omega^2}{2} \int_{0}^{r^2-\rtf^2} \diff t \:  t \lf( \Omega t - \frac{4}{3\sqrt{\pi}\eps} + \OO(1) \ri) \lf(1 - \frac{2}{\sqrt{\pi} \eps \Omega} + t \ri)^{-1}= 	\\
	\frac{1}{2\eps} \int_{0}^{z} \diff s \:  s  \lf( s - \frac{4}{3\sqrt{\pi}} + \OO(\eps) \ri) \lf(1 - \frac{2}{\sqrt{\pi} \eps \Omega} + \frac{s}{\eps\Omega} \ri)^{-1} = 	\\
	\frac{1}{2\eps} \int_{0}^{z} \diff s \: s \lf( s - \frac{4}{3\sqrt{\pi}} \ri) + \OO(|\log\eps|) = \frac{z^2}{6 \eps} \lf( z - \frac{2}{\sqrt{\pi}} \ri) + \OO(|\log\eps|),
\end{multline}
where we have used the approximation $ [1 - \OO((\eps\Omega)^{-1})]^{-1} = 1 + \OO((\eps\Omega)^{-1}) $. 
\newline
Applying the same rescaling to the energy cost function, we thus obtain
\beq
	\label{Hrescaling}
	\gaintf(r) : = \frac{\rgaintf(z)}{12\eps},
\eeq
where (recall that $\Om = \left( 2 \left(3\pi\right)^{-1} - \Om_1 \right) \ep ^{-2} |\log \ep| ^{-1}$)
\beq
	\label{Hrescaled}
	\rgaintf(z) = z\left( \frac{2}{\pi} - 3 \Om_1 - 2 z \lf( \frac{2}{\sqrt{\pi}} - z \ri) \right)- \OO(\eps|\log\eps|).
\eeq
Let us denote, for $z\in \left[0, \frac{2}{\sqrt{\pi}} \right]$ 
\begin{equation}\label{cout renorm}
k(z):= z\left( \frac{2}{\pi} -3 \Om_1- 2 z \lf( \frac{2}{\sqrt{\pi}} - z \ri) \right).
\end{equation}
It is straightforward to see that this function takes two local maxima and two local minima in $\left[0, \frac{2}{\sqrt{\pi}} \right]$. The maxima are at $\frac{2}{\sqrt{\pi}} $ and at
\[
z_{1} = \frac{2}{3\sqrt{\pi}} - \sqrt{\frac{1}{9\pi} + \frac{\Om_1}{2}}.
\]
The minima are at $z=0$ with $k(0)=0$ and at 
\[
z_2 = \frac{2}{3\sqrt{\pi}} + \sqrt{\frac{1}{9\pi} + \frac{\Om_1}{2}}.
\]
Computing this local minimum we obtain that $k(z_2) < 0$ if and only if $\Om_1 > 0$, thus $k(z_2)$ is the absolute minimum in this regime. More precisely, for $|\Om_1| \ll 1$ we have 
\begin{equation}\label{H(Rstar) rescal}
k(z_2) = -\frac{3}{\sqrt{\pi}} \Om_1 + \OO (\Om_1 ^2) 
\end{equation}
and thus, defining $R_* >0$ by
\begin{equation}\label{AdefiRstar}
R_* ^2 : = R_h ^2 + \left(\eps \Omega \right)^{-1} z_2
\end{equation}
we obtain, for $ (\ep |\log \ep|) ^{1/2} \ll |\Om_1| \ll 1$
\begin{equation}
\gaintf( R_* ) = -\frac{\Om_1}{4\sqrt{\pi}\ep}  + \OO \left( \frac{\Om_1 ^2}{\ep} \right).
\end{equation}
Also 
\[
k''(z_2) = \frac{1}{3\sqrt{\pi}} + \OO(\Om_1) 
\]
and 
\[
k'(0) = \frac{2}{\pi} - 3 \Om_1.
\]
Recalling that $k(z)$ increases from $0$ to $z_1$ and from $z_2$ to $1$ and decreases from $z_1$ to $z_2$, we have for any $z$ and an appropriate choice of constants $k_1,k_2,k_3$ such that $|z-z_2|>  k_1\Om_1 ^{1/4}$ and $|z|> k_2 \Om_1 ^{1/2}$
\[
k (z) > k_3 \Om_1 ^{1/2}.
\]
Collecting the preceding facts we have proved

\begin{pro}[\textbf{TF vortex energy}]\label{pro:TFvortexenergy}\mbox{}	\\
Let $\Om$ be of the form
\begin{equation}\label{Omega Again}
\Om = \frac{2}{3\pi \ep ^2 |\log \ep|} - \frac{\Om_1}{\ep ^2 |\log \ep|}
\end{equation}
with $ (\ep |\log \ep|)^{1/2}\ll \Om_1 \ll 1 $. Let $\vec{r} \in \tfa$. We have
\begin{eqnarray}\label{borne cout TF 1}
\gaintf(r) &\geq& -\frac{\Om_1}{4\sqrt{\pi}\ep}  + \OO \left( \frac{\Om_1 ^2}{\ep} \right) \\
\gaintf(r) &\geq& C \frac{\Om_1 ^{1/2}}{\ep} \mbox{ if } r \geq R_h + C \ep |\log \ep| \Om_1 ^{1/2} \mbox{ and } |r-R_*|\geq C \ep |\log \ep| \Om_1 ^{1/4} \label{borne cout TF 2}
\end{eqnarray}
\end{pro}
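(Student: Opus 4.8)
\textbf{Proof strategy for Proposition \ref{pro:TFvortexenergy}.} The plan is to reduce everything to an elementary analysis of the cubic polynomial $k$ from \eqref{cout renorm} and then translate the conclusions through the change of variables $z = \eps\Omega(r^2 - \rtf^2)$. First I would carry out the rescaling already outlined in the text: using the explicit forms \eqref{minimiseurTF} of $\tfm$ and \eqref{exp TF potential} of $\costtf$, the value $\om^{\mathrm{TF}}$ from \eqref{omega TF}, the identity $\rtf^2 = 1 - 2(\sqrt{\pi}\,\eps\Omega)^{-1}$ from \eqref{anneau}, and the expansion $[1 - \OO((\eps\Omega)^{-1})]^{-1} = 1 + \OO((\eps\Omega)^{-1})$ (legitimate since $(\eps\Omega)^{-1} \sim \eps|\log\eps|$ is small), one arrives at \eqref{calcul FTF} and hence at \eqref{Hrescaling}--\eqref{Hrescaled}, i.e. $\gaintf(r) = (12\eps)^{-1}\big(k(z) - \OO(\eps|\log\eps|)\big)$. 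The only care needed at this stage is to verify that every subleading term in the integrand of $\costtf$ contributes at most $\OO(|\log\eps|)$ to $\costtf$ (equivalently $\OO(\eps|\log\eps|)$ to $\rgaintf$); this remainder is precisely what assumption \eqref{Omega1ter}, $\Om_1 \gg (\eps|\log\eps|)^{1/2}$, will later let me discard.

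Next I would analyse $k$ on $[0,2/\sqrt{\pi}]$. Since $k'(z) = 6z^2 - 8\pi^{-1/2}z + 2\pi^{-1} - 3\Om_1$ has the two roots $z_{1,2} = \tfrac{2}{3\sqrt{\pi}} \mp \big(\tfrac{1}{9\pi} + \tfrac{\Om_1}{2}\big)^{1/2}$, and $k(0)=0$, $k(2/\sqrt{\pi}) = 2\pi^{-1/2}(2\pi^{-1} - 3\Om_1) > 0$, the function increases on $[0,z_1]$, decreases on $[z_1,z_2]$ and increases on $[z_2,2/\sqrt{\pi}]$; in particular $z_2$ is the global minimiser. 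Evaluating and Taylor-expanding in $\Om_1$ (or using the envelope identity $\tfrac{d}{d\Om_1}k(z_2) = -3z_2$) gives $k(z_2) = -\tfrac{3}{\sqrt{\pi}}\Om_1 + \OO(\Om_1^2)$, as in \eqref{H(Rstar) rescal}; dividing by $12\eps$ and dropping the $\OO(|\log\eps|)$ remainder via \eqref{Omega1ter} yields \eqref{borne cout TF 1} (and the stated value of $\gaintf(R_*)$, with $R_*$ as in \eqref{AdefiRstar}). For \eqref{borne cout TF 2} I would use two quantitative facts: $k'(0) = 2\pi^{-1} - 3\Om_1$ is bounded below by a positive constant, and $k''(z_2)$ is bounded below by a positive constant. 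From the first, $k(z) \geq c\,z$ on a fixed neighbourhood of $0$, so $k(z) \gtrsim \Om_1^{1/2}$ once $z \geq k_2\Om_1^{1/2}$. From the second, $k(z) \geq k(z_2) + \tfrac14 k''(z_2)(z-z_2)^2$ on a fixed neighbourhood of $z_2$ (the cubic remainder being dominated there), and since $|k(z_2)| \sim \Om_1 \ll \Om_1^{1/2}$ this is $\gtrsim \Om_1^{1/2}$ as soon as $|z - z_2| \geq k_1\Om_1^{1/4}$. On the complement of these two neighbourhoods inside $[0,2/\sqrt{\pi}]$, $k$ is bounded below by a positive constant by monotonicity. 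Hence $k(z) \gtrsim \Om_1^{1/2}$ on $\{z \geq k_2\Om_1^{1/2}\} \cap \{|z - z_2| \geq k_1\Om_1^{1/4}\}$.

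Finally I would translate these statements back into conditions on $r$. Since $R_h = \rtf$, $R_*^2 = \rtf^2 + (\eps\Omega)^{-1}z_2$ by \eqref{AdefiRstar}, and on $\tfa$ one has $r, R_*, R_h = 1 + \OO(\eps|\log\eps|)$ while $\eps\Omega = \tfrac{2}{3\pi}(\eps|\log\eps|)^{-1}(1+o(1))$, it follows that $z = 2\eps\Omega\,(r - R_h)(1+o(1))$ and $z - z_2 = 2\eps\Omega\,(r - R_*)(1+o(1))$; hence $z \geq k_2\Om_1^{1/2}$ is equivalent to $r \geq R_h + C\eps|\log\eps|\Om_1^{1/2}$, and $|z - z_2| \geq k_1\Om_1^{1/4}$ to $|r - R_*| \geq C\eps|\log\eps|\Om_1^{1/4}$. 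Dividing $k(z) \gtrsim \Om_1^{1/2}$ by $12\eps$ and absorbing the $\OO(|\log\eps|)$ error (valid because $\Om_1^{1/2}/\eps \gg |\log\eps|$ under \eqref{Omega1ter}) then gives \eqref{borne cout TF 2}. I expect the main obstacle to be one of bookkeeping rather than of ideas: ensuring that both the $\OO(\eps|\log\eps|)$ remainder in $\rgaintf$ and the $o(1)$ corrections in the change of variables $z \leftrightarrow r$ are genuinely negligible against $\Om_1^{1/2}$ on the relevant range of $r$ — which is exactly the purpose of the technical window $(\eps|\log\eps|)^{1/2} \ll \Om_1 \ll 1$.
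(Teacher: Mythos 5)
Your proposal is correct and follows essentially the paper's own route: the paper proves this proposition precisely through the rescaling \eqref{rescale}--\eqref{Hrescaled}, the elementary study of the cubic $k$ in \eqref{cout renorm} (the critical points $z_1,z_2$, the value \eqref{H(Rstar) rescal}, and the lower bounds on $k'(0)$ and $k''(z_2)$), followed by the same translation of the conditions on $z$ back into conditions on $r$, with the hypothesis $(\ep|\log\ep|)^{1/2}\ll\Om_1$ used exactly as you use it to absorb the $\OO(|\log\ep|)$ remainder. The only slip is cosmetic: the condition $\Om_1\gg(\ep|\log\ep|)^{1/2}$ you cite is the hypothesis stated in the proposition itself, not \eqref{Omega1ter} (which reads $\Om_1\gg\log|\log\ep|/|\log\ep|$); either one suffices here, so nothing mathematical is affected.
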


We now compare the cost TF function $\gaintf$ to the original function appearing in our analysis, $H$. The following result is a part of the proof of Proposition A.2 in \cite{CRY} and is sufficient for our purpose.

\begin{pro}[\textbf{Comparison of the cost functions}] \label{pro:compare grho}\mbox{}	\\
Let $\om$ and $g = g_{\A,\om}$ be defined as in Proposition \ref{pro:optimalphase}. Let $\Om$ be as above and $H$ be the cost function defined in (\ref{fonctioncout}). For any $\vec{r}\in \at$ (see definition (\ref{defi annt})) there holds 		
\begin{eqnarray}\label{g-rhoTF}
\left| g^2(r) - \tfm(r) \right| &\leq& \frac{C|\log \ep| ^{5/2}}{\ep ^{1/2} } \\
\left| F(r) - \costtf(r) \right| &\leq& \frac{C}{\ep |\log \ep| } \label{F-FTF}\\
\end{eqnarray}
and thus
\begin{equation}\label{H-HTF}
\left| H(r) - \gaintf(r) \right| \leq \frac{C}{\ep |\log \ep|}.
\end{equation}		
\end{pro}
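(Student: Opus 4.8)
The plan is to regard the pointwise comparison \eqref{g-rhoTF} of the giant-vortex density $g^2$ with the Thomas--Fermi profile $\tfm$ as the substantial input, and to deduce \eqref{F-FTF} and \eqref{H-HTF} from it by elementary one-dimensional estimates. The bound \eqref{g-rhoTF} is the genuine content of Proposition A.2 of \cite{CRY}: one compares $g$, solution of the variational equation \eqref{hgpm var}, to the profile obtained by discarding the kinetic term $-\Delta g$ and inverting the remaining algebraic relation, the delicate point being the behaviour near $\dd B_{\rtf}$, where $\tfm$ vanishes and only the degenerate lower bound \eqref{g low bound} is available. I shall take \eqref{g-rhoTF} for granted, together with the standing facts $g^2\le C(\ep|\log\ep|)^{-1}$ and $\tfm\le C(\ep|\log\ep|)^{-1}$ on $\A$, $|\vec B|\le C\ep^{-1}$ on $\A$, $|\A|=\OO(\ep|\log\ep|)$, and the exponential smallness of $g^2$ near the inner boundary (see \eqref{cry eq:g exp small}).

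To prove \eqref{F-FTF}, observe first that $\tfm$ vanishes on $[\rt,\rtf]$, so $\costtf$ may be written with the same lower limit $\rt$ as $F$ in \eqref{F}; writing $B(s)=\Om s-([\Om]-\om)s^{-1}$ and noting $B(s)-\big(\Om s-([\Om]-\optphtf)s^{-1}\big)=(\om-\optphtf)s^{-1}$, one obtains
\[
F(r)-\costtf(r)=2\int_{\rt}^r\big(g^2(s)-\tfm(s)\big)B(s)\,\diff s+2\int_{\rt}^r\tfm(s)\,(\om-\optphtf)s^{-1}\,\diff s .
\]
The second term is the dominant one: by \eqref{est omega} one has $|\om-\optphtf|=\OO(\ep^{-1}|\log\ep|^{-1})$, and with $\tfm\le C(\ep|\log\ep|)^{-1}$ and the integrand supported on $\tfa$, of width $\OO(\ep|\log\ep|)$, this term is $\OO(\ep^{-1}|\log\ep|^{-1})$ --- exactly the asserted order. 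For the first term I would split $[\rt,r]$ into $[\rt,\rtf]$, $[\rtf,\rd]$ and $[\rd,r]$. On $[\rd,r]\subset\at$, \eqref{g-rhoTF} gives $|g^2-\tfm|\le C|\log\ep|^{5/2}\ep^{-1/2}$, so this piece is $\OO(|\log\ep|^{7/2}\ep^{-1/2})=o(\ep^{-1}|\log\ep|^{-1})$. On $[\rt,\rtf]$, where $\tfm\equiv0$, exponential smallness disposes of the part away from $\rtf$, and the crude bound $g^2\le C(\ep|\log\ep|)^{-1}$ on the complementary strip, of width $\OO(\ep^{8/7})$, contributes $\OO(\ep^{\infty})+\OO(\ep^{-6/7}|\log\ep|^{-1})$. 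On $[\rtf,\rd]$, of width $\ep|\log\ep|^{-1}$, one uses that $g^2$ is nondecreasing near the inner boundary (cf.\ \cite{CRY}), hence $g^2\le g^2(\rd)\le C\ep^{-1}|\log\ep|^{-3}$ by \eqref{g-rhoTF} and the explicit value $\tfm(\rd)=\OO(\ep^{-1}|\log\ep|^{-3})$; together with $\tfm\le C\ep^{-1}|\log\ep|^{-3}$ there, this piece is $\OO(\ep^{-1}|\log\ep|^{-4})$. All three are $o(\ep^{-1}|\log\ep|^{-1})$, which yields \eqref{F-FTF}.

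Finally \eqref{H-HTF} is immediate from \eqref{fonctioncout} and \eqref{gainTF}:
\[
H(r)-\gaintf(r)=\frac{1}{2}|\log\ep|\big(g^2(r)-\tfm(r)\big)+\big(F(r)-\costtf(r)\big),
\]
where by \eqref{g-rhoTF} the first term is $\OO(|\log\ep|^{7/2}\ep^{-1/2})=o(\ep^{-1}|\log\ep|^{-1})$ --- since $|\log\ep|^{9/2}\ep^{1/2}\to0$ --- and by \eqref{F-FTF} the second is $\OO(\ep^{-1}|\log\ep|^{-1})$. The whole difficulty therefore sits in \eqref{g-rhoTF}; the rest is routine, the only care being needed at the two narrow regions $[\rt,\rtf]$ and $[\rtf,\rd]$, where \eqref{g-rhoTF} is not directly available and one falls back on the exponential decay and the monotonicity of $g^2$.
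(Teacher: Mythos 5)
Your proposal is correct, but it is worth noting that it does more than the paper does: the paper supplies no argument at all for this proposition, importing all three estimates directly from the proof of Proposition A.2 in \cite{CRY}, whereas you import only the pointwise density comparison \eqref{g-rhoTF} and then derive \eqref{F-FTF} and \eqref{H-HTF} yourself. Your derivation is sound: extending $\costtf$ down to $\rt$ (legitimate since $\tfm\equiv 0$ on $[\rt,\rtf]$), splitting $F-\costtf$ into $2\int(g^2-\tfm)B$ plus $2\int\tfm\,(\om-\optphtf)s^{-1}$, and observing via \eqref{est omega} that $|\om-\optphtf|=\OO(\ep^{-1}|\log\ep|^{-1})$ correctly identifies the second term as the one that saturates the stated bound $C(\ep|\log\ep|)^{-1}$, while the first term is $o(\ep^{-1}|\log\ep|^{-1})$ after your three-way splitting of the radial interval; the final step to \eqref{H-HTF} is then immediate since $\tfrac12|\log\ep|\cdot|\log\ep|^{5/2}\ep^{-1/2}\ll \ep^{-1}|\log\ep|^{-1}$. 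Two of your precautions are in fact superfluous (though harmless): on $[\rtf,\rd]$, whose width is only $\ep|\log\ep|^{-1}$, the crude bounds $g^2,\tfm\leq C(\ep|\log\ep|)^{-1}$ together with $|B|\leq C\ep^{-1}$ already give $\OO(\ep^{-1}|\log\ep|^{-2})$, so neither the monotonicity of $g^2$ nor the value $\tfm(\rd)$ is needed; similarly on $[\rt,\rtf]$ the crude bound alone gives $\OO(\ep^{-6/7}|\log\ep|^{-1})$ without invoking exponential smallness. The only non-elementary input remains \eqref{g-rhoTF}, which, like the paper, you take from \cite{CRY}; given that the paper itself treats the whole proposition as a recalled result, this is an acceptable and indeed more transparent presentation.
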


We are now equipped to present the 

\begin{proof}[Proof of Lemma \ref{lem:couts}]
The proof of (\ref{coutJout}) is based on a simple estimate from \cite[Lemma 4.1]{CRY} that we recall
\begin{equation}\label{borneF}
|F(r)| \leq C \min\left( \frac{|r-R_{<}|}{\ep} g^2 (r), 1 + \frac{C|r-1|}{\ep ^2 |\log \ep |} \right).
\end{equation}
Note that this estimate stays valid under assumptions \eqref{Omega1} to \eqref{Omega1ter}. We then write 
\[
|d_j| \frac{1}{2}  g^2 (a_j) \left| \log \ep \right| \left(1-C \frac{\log \left| \log \ep \right|}{\left|\log \ep\right|}\right) + d_j \xiin (a_j) F(a_j) \geq C|d_j|   \left( g^2 (a_j) \left| \log \ep \right| \left(1-C \frac{\log \left| \log \ep \right|}{\left|\log \ep\right|}\right) - |F(a_j)| \right).
\]
On the other hand, we have from  the definition (\ref{Jout})
\[
||a_j|-R_<| \leq C \ep |\log \ep| \Om_1 ^{1/2} \ll \ep |\log \ep|
\]
for any $j\in \Jout$. The result (\ref{coutJout}) follows using (\ref{borneF}).\\
We turn to the energetic cost of the vortices in $\Jin$. First we deduce from (\ref{F-FTF}) that
\[
F(a_j) = \costtf (a_j) + \OO\left( \frac{1}{\ep |\log \ep|}\right) 
\]
for any $j\in \Jin$. It is then straightforward (recall that $\Om_1 \gg \log |\log \ep| |\log \ep|^{-1}$) from the computation (\ref{calcul FTF}) to obtain
\[
F(a_j) < - \frac{C}{\ep |\log \ep| ^{1/2}} < 0
\]
for any $j\in \Jin$ such that $|a_j| \leq 1 - \ep |\log \ep| ^{1/2} $. On the other hand, if $|a_j| \geq 1 - \ep |\log \ep| ^{1/2} $, $g^2 (a_j) \geq C(\ep |\log \ep|)^{-1}$ and thus, using (\ref{borneF})
\[
|F(a_j)| \ll g^2 (a_j).
\]
We deduce that for any $j\in J_-$
\[
\frac{1}{2}  g^2 (a_j) \left| \log \ep \right| |d_j| + d_j \xiin (a_j) F(a_j) \geq  C  |d_j| g^2 (a_j) |\log \ep| \geq C|d_j| \frac{\Om_1 ^{1/2}}{\ep}
\]
using (\ref{g-rhoTF}). On the other hand, if $j\in J_+$ 
\[
\frac{1}{2}  g^2 (a_j) \left| \log \ep \right| |d_j| + d_j \xiin (a_j) F(a_j)  = |d_j| \left( \frac{1}{2}  g^2 (a_j) \left| \log \ep \right|  +  \xiin (a_j) F(a_j) \right)
\] 
and thus (\ref{coutJmoinsplus}) follows from Propositions \ref{pro:TFvortexenergy} and \ref{pro:compare grho} in the case where $j\in J_+$. We also use $\xiin \leq 1$. \\
There remains to show that (\ref{coutJetoile}) holds when $j \in J_*$, which is a consequence of (\ref{borne cout TF 1}) and (\ref{H-HTF}) once one has recalled that $\xiin (a_j) = 1$ for any $j\in J_*$.
\end{proof}

\section*{Appendix B : Useful results from \cite{CRY}}\label{ap:not}
\addcontentsline{toc}{section}{Appendix B : Useful results from \cite{CRY}}

\renewcommand{\theequation}{B.\arabic{equation}}
\renewcommand{\thesection}{B}
\setcounter{equation}{0}
\setcounter{pro}{0}

\setcounter{equation}{0}  

In this appendix we gather for the convenience of the reader several results of \cite{CRY} that are essential ingredients of the proof of our main results.

We begin with estimates of the GP minimizer. The following is Proposition 2.2 of \cite{CRY}. 

	\begin{pro}[\textbf{Exponential smallness of $ \gpm $ inside the hole}]
		\label{cry pro:GP exp small}
		\mbox{}	\\
		As $ \eps \to 0 $ and for any $ \rv \in \B $,
		\beq
			\label{cry eq:exp small}
			\lf| \gpm (\rv) \ri|^2 \leq C \eps^{-1} |\log\eps|^{-1} \: \exp \lf\{ - \frac{1-r^2}{1 - \rtf^2} \ri\}.
		\eeq 
		Moreover there exists a strictly positive constant $ c $ such that for any $ \OO(\eps^{7/6}) \leq r \leq \rtf - \OO(\eps^{7/6}) $,
		\beq
			\label{cry eq:improved exp small}
			\lf| \gpm(\rv) \ri|^2 \leq C  \eps^{-1} |\log\eps|^{-1}  \: \exp \lf\{ - \frac{c}{\eps^{1/6}} \ri\}.
		\eeq
	\end{pro}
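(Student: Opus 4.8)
\emph{Proof strategy.} I would prove this as an Agmon/maximum--principle estimate resting on three inputs. (i)~An a priori sup bound $\|\gpm\|_{L^\infty(\B)}^2 \le C(\eps|\log\eps|)^{-1}$, which follows from the Euler--Lagrange equation \eqref{equationGP}, elliptic regularity and the energy bound $\gpe \le \tfe(1+o(1))$, as in \cite{CDY1,CY}. (ii)~A precise estimate of the chemical potential: testing \eqref{equationGP} with $\overline{\gpm}$ gives $\chem = \gpe + \eps^{-2}\|\gpm\|_{L^4(\B)}^4$, and comparing with the Thomas--Fermi quantities (using $|\gpm|^2 \to \tfm$) should yield $\chem = \tfchem + o(\Om^2(1-\rtf^2)) = -\Om^2\rtf^2 + o(\Om/\eps)$. (iii)~The diamagnetic (Kato) inequality for \eqref{equationGP}: setting $V(r) := -\chem - \Om^2 r^2$, one has, distributionally on $\B$,
\[
-\Delta|\gpm| + V\,|\gpm| \le -2\eps^{-2}|\gpm|^3 \le 0 .
\]
By (ii), $V(r) = \Om^2(\rtf^2 - r^2)(1+o(1))$, so $V \ge 0$ on the ``hole'' $\{r \lesssim \rtf\}$ and $-\Delta + V$ obeys the maximum principle there.

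\emph{First estimate.} The heart of the proof is a supersolution matching the Gaussian profile. I would fix $r_0 := \rtf - c_0(1-\rtf^2)$ with $c_0$ a small constant, so that $V(r) \ge \tfrac12\Om^2(\rtf^2 - r^2) \ge c\,\Om^2(1-\rtf^2)$ for $r \le r_0$, and take the radial function $\psi(r) := A\,e^{\lambda r^2}$ with $\lambda := \tfrac{1}{2(1-\rtf^2)} = \tfrac{\sqrt\pi\,\eps\Om}{4}$ and $A := C(\eps|\log\eps|)^{-1/2}e^{-\lambda r_0^2}$. Since $-\Delta\psi = -(4\lambda + 4\lambda^2 r^2)\psi$, one has $-\Delta\psi + V\psi = (V - 4\lambda - 4\lambda^2 r^2)\psi$, and because $\Om^2(1-\rtf^2) \sim \Om/\eps$ dominates both $4\lambda \sim \eps\Om$ and $4\lambda^2 r^2 \lesssim \eps^2\Om^2$, the bracket is nonnegative for $r \le r_0$, so $\psi$ is a supersolution of $-\Delta + V$ on $\{r \le r_0\}$. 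On $\{r = r_0\}$ input (i) gives $|\gpm| \le C(\eps|\log\eps|)^{-1/2} = \psi(r_0)$, so the maximum principle would give $|\gpm(\rv)| \le \psi(r)$, i.e.
\[
|\gpm(\rv)|^2 \le C(\eps|\log\eps|)^{-1}\,e^{-2\lambda(r_0^2 - r^2)}, \qquad r \le r_0 .
\]
As $2\lambda(r_0^2 - r^2) \ge \tfrac{1-r^2}{1-\rtf^2} - C$, this gives \eqref{cry eq:exp small} for $r \le r_0$; for $r_0 \le r$ the claimed bound is, up to a constant, no stronger than (i) and holds trivially. The origin requires no separate treatment since $V \ge 0$ there and the comparison runs on the full disc $\{r \le r_0\}$.

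\emph{Improved estimate.} For $\rv$ with $\eps^{7/6} \le r \le \rtf - \eps^{7/6}$ I would rerun the comparison on $D' := \{r \le \rtf - \tfrac12\eps^{7/6}\}$, where now $V(r) \ge \tfrac12\Om^2(\rtf^2 - r^2) \ge V_0 := c\,\Om^2\eps^{7/6}$ uniformly. Comparing $|\gpm|$ with a radial supersolution of the constant--coefficient operator $-\Delta + V_0$ --- a rescaled modified Bessel function $I_0(\sqrt{V_0}\,r) \sim e^{\sqrt{V_0}\,r}$ --- normalised to dominate on $\partial D'$ the bound from \eqref{cry eq:exp small} (which there is $\le C(\eps|\log\eps|)^{-1}$), the maximum principle should give, for $r \le \rtf - \eps^{7/6}$,
\[
|\gpm(\rv)|^2 \le C(\eps|\log\eps|)^{-1}\,\exp\{-c\,\sqrt{V_0}\,(\rtf - \tfrac12\eps^{7/6} - r)\} .
\]
At $r = \rtf - \eps^{7/6}$ the exponent is of order $\sqrt{V_0}\,\eps^{7/6} \sim \Om\,\eps^{7/12}\,\eps^{7/6} = (\Om\eps^2)\,\eps^{-1/4} \sim \eps^{-1/4}|\log\eps|^{-1}$, and larger for smaller $r$; since $\eps^{-1/4}|\log\eps|^{-1} \gg \eps^{-1/6}$ this would prove \eqref{cry eq:improved exp small}, the clean exponent $\eps^{-1/6}$ being a safe weakening of the rate actually obtained.

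\emph{Main obstacle.} The delicate part is the \emph{sharp} constant in the exponent of \eqref{cry eq:exp small}: ensuring the rate $\lambda = \tfrac{1}{2(1-\rtf^2)}$ survives right up to $r = r_0 \approx \rtf$ requires controlling $\chem$ with error $\ll \Om^2(1-\rtf^2) \sim \Om/\eps$ (else $V - 4\lambda - 4\lambda^2 r^2 \ge 0$ may fail near $r_0$) and choosing $\delta := \rtf - r_0$ at precisely the scale $1-\rtf^2$ --- a larger $\delta$ breaks the matching of $\psi$ with the sup bound on $\{r = r_0\}$, a smaller $\delta$ breaks the supersolution inequality near $r_0$. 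The rest, in particular the distributional use of Kato's inequality at the zeros of $\gpm$ (notably the high--order zero forced at the origin by the phase $e^{i([\Om]-\om)\theta}$), is routine.
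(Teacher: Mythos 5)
This proposition is not proved in the present paper at all: it is recalled verbatim from \cite{CRY} (Proposition 2.2 there), and your argument — a priori sup bound $\|\gpm\|_\infty^2\leq C(\eps|\log\eps|)^{-1}$, control of the chemical potential to precision $o(\Om\,\eps^{-1})$, then a maximum-principle comparison with an explicit supersolution of Gaussian type in $r^2$ on the hole, followed by a second comparison deeper inside to get the improved rate — is essentially the route of the original proof, so the proposal is correct in approach and the two inputs you flag are indeed available in \cite{CRY,CDY1}. The only cosmetic simplification is to work directly with $U=|\gpm|^2$, for which the identity $\tfrac12\Delta U=|(\nabla-i\Om x^{\perp})\gpm|^2+\Re\bigl(\overline{\gpm}\,(\nabla-i\Om x^{\perp})^2\gpm\bigr)$ and the variational equation give $-\Delta U+2(-\chem-\Om^2r^2)U\leq -4\eps^{-2}U^2\leq 0$ without invoking Kato's inequality, so no discussion of the zeros of $\gpm$ is needed.
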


We also employ several useful properties of $g$, starting with Proposition 2.4 of \cite{CRY}.

\begin{pro}[\textbf{Preliminary estimates for $ g$}]\label{cry pro:g prelim}
		\mbox{}	\\
		As $ \eps \to 0 $ and for any $ \omega \in \Z $ such that $ |\omega| \leq \OO(\eps^{-1}) $,
		\beq
			\label{cry eq:g estimates}
			\lf\| g^2 - \tfm \ri\|_{L^2(\A)} = \OO(1),	\hspace{1,5cm}	\lf\| g \ri\|_{L^{\infty}(\A)}^2 \leq \lf\| \tfm \ri\|_{L^{\infty}(\A)} \lf(1 + \OO(\sqrt{\eps|\log\eps|} \ri).
		\eeq
	\end{pro}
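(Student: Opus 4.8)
The plan is to prove both estimates by comparing the giant-vortex minimizer $g$ with a cut-off version of $\sqrt{\tfm}$, using only energy estimates. The key preliminary observation is the identity $\tfrac{([\Om]-\om)^2}{r^2}-2\Om([\Om]-\om)=|\vec B(r)|^2-\Om^2 r^2$, with $\vec B$ as in (\ref{vec B 1}), which rewrites the functional (\ref{energieVG}) at $a=\om$ as
\[
\hgpf_{\A,\om}[\phi]=\int_{\A}|\nabla\phi|^2+\frac{1}{\ep^2}|\phi|^4-\Om^2 r^2|\phi|^2+|\vec B(r)|^2|\phi|^2,
\]
i.e.\ a Thomas--Fermi functional with potential $-\Om^2 r^2$ plus a kinetic term and the extra term $\int_\A|\vec B|^2|\phi|^2$. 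Using (\ref{est omega}) I would first check that $|\vec B(r)|\le C\ep^{-1}$ on all of $\A$ (the zero of $\vec B$ sits at a radius $1-\OO(\ep|\log\ep|)$, which lies in $\A$, and $|\vec B'|\sim\Om$ there), so that this extra term costs at most $C\ep^{-2}\int_\A|\phi|^2$.

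Next I would establish the energy bound $0\le\hgpe_{\A,\om}-\tfe\le C\ep^{-2}$. For the lower bound, dropping the nonnegative terms $\int_\A|\nabla g|^2$ and $\int_\A|\vec B|^2 g^2$ gives $\hgpe_{\A,\om}\ge\tff[g^2]\ge\tfe$, the latter because $g^2$ extended by $0$ is admissible for $\tff$. For the upper bound I would use the trial state $\phi=c\,\cutrad(r)\sqrt{\tfm(r)}$, where $\cutrad\equiv1$ except on a thin layer $[\rtf,\rtf+\delta]$, $\delta$ a small power of $\ep$, on which $\cutrad$ is linear; this cut-off is needed because $\partial_r\sqrt{\tfm}$ fails to be square integrable near $r=\rtf$, and $c=1+o(1)$ normalizes $\phi$ in $L^2(\A)$. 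For a suitable $\delta$ the kinetic energy of $\phi$ is $\OO(\ep^{-2})$, while off the interpolation layer $\cutrad^2\tfm=\tfm$, so $\int_\A(\ep^{-2}\phi^4-\Om^2 r^2\phi^2)=\tfe+\OO(\ep^{-2})$ and $\int_\A|\vec B|^2\phi^2\le C\ep^{-2}$, giving $\hgpe_{\A,\om}\le\tfe+\OO(\ep^{-2})$.

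With this in hand the $L^2$ estimate follows from the coercivity of the TF functional: for $\rho\ge0$ with $\int_\B\rho=1$, the variational relations for $\tfm$ (namely $2\ep^{-2}\tfm-\Om^2 r^2=\tfchem$ on $\mathrm{supp}\,\tfm$ and $-\Om^2 r^2\ge\tfchem$ off it) give, after using $\int_\B(\rho-\tfm)=0$, the inequality $\ep^{-2}\|\rho-\tfm\|_{L^2(\B)}^2\le\tff[\rho]-\tfe$. Taking $\rho=g^2$ (extended by $0$; since $\mathrm{supp}\,\tfm\subset\A$, the $L^2$ norms over $\B$ and over $\A$ coincide) and combining with $\tff[g^2]\le\hgpe_{\A,\om}$ and the previous step yields $\|g^2-\tfm\|_{L^2(\A)}^2\le C$, which is the first claimed bound.

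For the $L^\infty$ bound I would use that $g$ is radial and solves (\ref{hgpm var}) with Neumann conditions: at a maximum point $r_0$ of $g$ one has $-\Delta g(r_0)\ge0$, hence $2\ep^{-2}g^2(r_0)\le\hchem-(|\vec B(r_0)|^2-\Om^2 r_0^2)\le\hchem+\Om^2 r_0^2\le\hchem+\Om^2$. Testing (\ref{hgpm var}) against $g$ gives $\hchem=\hgpe_{\A,\om}+\ep^{-2}\|g\|_{L^4(\A)}^4$, to be compared with $\tfchem=\tfe+\ep^{-2}\|\tfm\|_{L^2}^2$: the difference of the first terms is $\OO(\ep^{-2})$ by the energy bound, and $\ep^{-2}(\|g\|_{L^4}^4-\|\tfm\|_{L^2}^2)=\ep^{-2}\int_\A(g^2-\tfm)(g^2+\tfm)$ is controlled by Cauchy--Schwarz using the $L^2$ estimate and $\|\tfm\|_{L^2}=\OO(\ep^{-1/2}|\log\ep|^{-1/2})$, so $\hchem\le\tfchem+\OO(\ep^{-5/2}|\log\ep|^{-1/2})$. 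Since a direct computation gives $\tfchem+\Om^2=\tfrac{2\Om}{\sqrt\pi\ep}(1+o(1))$ and $\|\tfm\|_{L^\infty(\A)}=\tfm(1)=\tfrac{\ep\Om}{\sqrt\pi}$, one gets $g^2(r_0)\le\|\tfm\|_{L^\infty(\A)}(1+\OO(\sqrt{\ep|\log\ep|}))$. The hard part is precisely this last step: $\hchem$ must be pinned down with absolute error $o(\Om\ep^{-1}\sqrt{\ep|\log\ep|})$, far finer than the crude energy bound alone provides, so the quantitative $L^2$ closeness of $g^2$ to $\tfm$ has to be fed back in (and $\tfchem$, $\|\tfm\|_{L^2}$, $\|\tfm\|_{L^\infty}$ computed explicitly). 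The only other points requiring care are the edge singularity of $\sqrt{\tfm}$ in the trial function and the fact that $|\vec B|$ is only $\OO(\ep^{-1})$, not smaller, on $\A$ --- which is exactly why the error terms above are $\OO(\ep^{-2})$ rather than better.
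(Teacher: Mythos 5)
Your argument is correct, and there is nothing in this paper to compare it against: Proposition \ref{cry pro:g prelim} is not proved here but simply recalled from [CRY] (it is Proposition 2.4 there), and your route is essentially the one used in that reference — an energy comparison $0\le \hgpe_{\A,\om}-\tfe\le C\ep^{-2}$ obtained from a cut-off $\sqrt{\tfm}$ trial state, combined with the quadratic coercivity of the TF functional for the $L^2$ bound, and the variational equation evaluated at a maximum of $g$ together with a chemical-potential estimate (fed by the $L^2$ bound) for the $L^\infty$ bound. The only step deserving an explicit word is the normalization error in the trial state: the naive bound $c^2-1=\OO(\ep^2\Om^2\delta^2)$ multiplied by $\Om^2\int r^2\tfm$ exceeds $\ep^{-2}$ unless $\delta\lesssim\ep^2|\log\ep|^2$ (or unless one uses the exact identity $\tff[\rho]-\tfe=\ep^{-2}\|\rho-\tfm\|_2^2$ for normalized $\rho$ supported in $\supp$), but your ``suitable $\delta$'' covers this and the remaining estimates, including $\tfchem+\Om^2=2\Om(\sqrt{\pi}\ep)^{-1}$ and the resulting relative error $\OO(\sqrt{\ep|\log\ep|})$, are right.
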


We next state the exponential decay of $g$  \cite[Proposition 2.5]{CRY}.

\begin{pro}[\textbf{Exponential smallness of $ g $ inside the hole}]
		\label{cry pro:g exponential smallness}
		\mbox{}	\\
		As $ \eps \to 0 $ and for any $ \rv \in \A $
		\beq
			\label{cry eq:g exp small}
			 g^2(r) \leq C \eps^{-1} |\log\eps|^{-1} \exp \lf\{ - \frac{1-r^2}{1 - \rtf^2} \ri\}.
		\eeq 
		Moreover there exists a strictly positive constant $ c $ such that for any $ r \leq \rtf - \OO(\eps^{7/6}) $,
		\beq
			\label{cry eq:g improved exp small}
			g^2(r) \leq C \eps^{-1} |\log\eps|^{-1} \exp \lf\{ - \frac{c}{\eps^{1/6}} \ri\}.
		\eeq
	\end{pro}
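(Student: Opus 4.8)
This bound is quoted from \cite{CRY} and not reproved here; the following is how I would establish it. The statement is an Agmon-type estimate for the ground state of the effective one-dimensional Schr\"odinger operator in the variational equation \eqref{hgpm var}, and the natural route is a maximum-principle comparison. Writing \eqref{hgpm var} as $-\Delta g + U(r)\,g = -2\eps^{-2}g^3 \le 0$ with effective linear potential $U(r) := ([\Om]-\om)^2 r^{-2} - 2\Om([\Om]-\om) - \hchem = B(r)^2 - \Om^2 r^2 - \hchem$, where $B(r) = \Om r - ([\Om]-\om)r^{-1}$, one sees that the nonnegative function $g$ is a subsolution of the linear operator $-\Delta + U$. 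The plan is to exhibit an explicit supersolution that reproduces the claimed Gaussian profile.

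The first step is a lower bound on $U$ inside the hole $\{\rt \le r \le \rtf\}$. Using $\om = \frac{2}{3\sqrt\pi\eps}(1+\OO(|\log\eps|^{-1}))$ from Proposition \ref{pro:optimalphase}, $[\Om]-\Om = \OO(1)$, and a suitable upper bound $\hchem \le \tfchem + o(\eps^{-2}) = -\Om^2\rtf^2 + o(\eps^{-2})$ coming from the leading-order energy asymptotics of $\hgpe_{\A,\om}$ (a Thomas--Fermi comparison) — this is where the real work sits — one computes $B(\rtf)^2 = \frac{16}{9\pi}\eps^{-2}(1+o(1))$ and hence $U(r) = B(r)^2 + \Om^2(\rtf^2 - r^2) + (\tfchem - \hchem) \ge c\,\eps^{-2}$ on all of $\{\rt \le r \le \rtf\}$, with $U$ in fact growing to $\sim \eps^{-20/7}|\log\eps|^{-2}$ at $r=\rt$. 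In particular $U(r) \gg (1-\rtf^2)^{-2} \sim (\eps|\log\eps|)^{-2}$ throughout the hole. Outside the hole the asserted bound is trivial, since $\exp\{-(1-r^2)/(1-\rtf^2)\} \ge e^{-1}$ there while $\|g\|_{L^\infty(\A)}^2 \le \|\tfm\|_{L^\infty} \le C\eps^{-1}|\log\eps|^{-1}$ by Proposition \ref{cry pro:g prelim}.

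Next I would take as comparison function $\bar g(r) := C\,\eps^{-1/2}|\log\eps|^{-1/2}\exp\{-\frac{1-r^2}{2(1-\rtf^2)}\}$, for which a direct computation gives $\Delta\bar g/\bar g = r^2(1-\rtf^2)^{-2} + 2(1-\rtf^2)^{-1}$; by the previous step $U(r) \ge \Delta\bar g/\bar g$ on $\{\rt \le r \le \rtf\}$, so $\bar g$ is a supersolution of $-\Delta + U$ there. Since $\bar g(\rtf) = C\eps^{-1/2}|\log\eps|^{-1/2} \ge g(\rtf)$ for $C$ large (again Proposition \ref{cry pro:g prelim}), the maximum principle for $-\Delta + U$ with $U \ge 0$ gives $g \le \bar g$ on $\{\rt \le r \le \rtf\}$, which squares to \eqref{cry eq:g exp small}. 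The one delicate point is the inner endpoint $r = \rt$, where $g$ obeys a Neumann condition that $\bar g$ does not; this is handled in the standard way, by perturbing $\bar g$ with a small piece making the normal-derivative comparison favorable, or by importing the argument already used for $\gpm$ on the full disc in Proposition \ref{cry pro:GP exp small}, where no inner boundary is present.

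Finally, for the improved bound \eqref{cry eq:g improved exp small} I would iterate: with the first bound in hand the potential estimate can be run more precisely to give $U(r) \gtrsim \Om^2(\rtf^2 - r^2)$ throughout the hole, so a WKB/Agmon comparison function $\exp\{-\int_r^{\rtf}\sqrt{U(s)}\,ds\}$, times the $L^\infty$ size of $g$, is admissible; for $r \le \rtf - \OO(\eps^{7/6})$ one has $\int_r^{\rtf}\sqrt{U} \gtrsim \eps^{-1/4}|\log\eps|^{-1} \gg \eps^{-1/6}$, yielding the claimed $\exp\{-c\eps^{-1/6}\}$ decay. The main obstacle throughout is obtaining a bound on $\hchem$ (equivalently on $\hgpe_{\A,\om}$) precise enough to make $U$ nonnegative of the right order on the hole; the supersolution verification and the endpoint bookkeeping are routine once that is secured.
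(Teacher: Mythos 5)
You are right that this bound is not proved in the present paper: it is imported verbatim from \cite{CRY} (Proposition 2.5 there), so there is no internal proof to compare against. Your strategy -- read \eqref{hgpm var} as saying that $g>0$ is a subsolution of $-\Delta+U$ with $U(r)=B(r)^2-\Om^2r^2-\hchem$, bound $U$ from below in the hole through estimates on $\om$ and $\hchem$, and compare with an explicit Gaussian (resp.\ WKB) supersolution -- is the standard route and is in the spirit of the original argument for $\gpm$ and for $g$. Your numerology ($B(\rtf)^2=\frac{16}{9\pi}\ep^{-2}(1+o(1))$, $U\sim\ep^{-20/7}|\log\ep|^{-2}$ at $\rt$, the formula for $\Delta\bar g/\bar g$, and $\int\sqrt U\gtrsim\ep^{-1/4}|\log\ep|^{-1}\gg\ep^{-1/6}$) checks out. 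Two steps, however, need repair. First, the chemical-potential input: the claim $\hchem\leq\tfchem+o(\ep^{-2})$ is stronger than what is true, since the giant-vortex term $B^2\sim\ep^{-2}$ contributes at order $\ep^{-2}$ to $\hgpe_{\A,\om}$ and hence to $\hchem$; consequently your assertion that $U\geq c\,\ep^{-2}$ all the way up to $r=\rtf$ would require a sharp-constant comparison of $\hchem-\tfchem$ with $B(\rtf)^2$, which you do not provide. This is repairable because it is not needed: on $\A$ the first bound \eqref{cry eq:g exp small} is actually equivalent to the sup bound of Proposition \ref{cry pro:g prelim} (since $\rt=\rtf-\ep^{8/7}$ while $1-\rtf^2\sim\ep|\log\ep|$, the exponential factor is bounded below by $e^{-1-o(1)}$ on all of $\A$), and for $r\leq\rtf-\OO(\ep^{7/6})$ the term $\Om^2(\rtf^2-r^2)\gtrsim\ep^{-17/6}|\log\ep|^{-2}$ dominates the crude bound $|\hchem-\tfchem|=\OO(\ep^{-5/2}|\log\ep|^{-1/2})$ that follows from $\|g^2-\tfm\|_{L^2}=\OO(1)$ and an energy comparison. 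So you should run the comparison only on a region strictly inside the hole and dispose of the rest by the sup bound.

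Second, the Neumann endpoint at $r=\rt$ is more delicate than your parenthetical fix suggests. The mixed-boundary comparison needs $\partial_r\bar g(\rt)\leq 0$; the natural perturbation $\delta e^{-k(r-\rt)}$ with $k\lesssim\sqrt{\min U}\sim\ep^{-1}$ forces $\delta\gtrsim\ep^{-1/2}|\log\ep|^{-3/2}$, which is harmless for \eqref{cry eq:g exp small} but is \emph{not} exponentially small and would destroy the improved bound \eqref{cry eq:g improved exp small} precisely near $\rt$. One must instead take a perturbation adapted to the WKB weight (admissible because $\partial_r$ of the Agmon supersolution at $\rt$ is itself exponentially small, so an exponentially small $\delta$ suffices), or, more cleanly, run an integrated Agmon estimate -- multiply \eqref{hgpm var} by $e^{2\phi}g$ and integrate, so that the Neumann boundary terms vanish identically -- and upgrade to a pointwise bound afterwards. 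Note also that your alternative of ``importing the argument used for $\gpm$ on the full disc'' does not directly apply: $g$ is only defined on $\A$, so the inner boundary cannot be removed by enlarging the domain.
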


Finally, it is very useful to know that $g^2$ is very close to $\tfm$ in a $L^{\infty}$ sense \cite[Proposition 2.6]{CRY}.

\begin{pro}[\textbf{Pointwise estimate for $ g $}]
		\label{cry pro:point GP dens}
		\mbox{}	\\
		As $ \eps \to 0 $ and for any $ \omega \in \Z $ with $ |\omega| \leq \OO(\eps^{-1}) $
		\beq
			\label{cry eq:pointwise bounds}
			\lf| g^2(r) - \tfm(r) \ri| \leq C \eps^{2} |\log\eps|^{2} (r^2 - \rtf^2)^{-3/2} \tfm(r)
		\eeq
		for any $ \rv \in \tfd $ such that $ r \geq \rtf + \OO(\eps^{3/2} |\log\eps|^2) $.
	\end{pro}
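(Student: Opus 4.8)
The plan is to derive the estimate from the variational equation \eqref{hgpm var} by an elliptic comparison argument, bootstrapped from the cruder bounds of Propositions \ref{cry pro:g prelim} and \ref{cry pro:g exponential smallness}. First I record the identity that drives everything. Since $g=g_{\A,\om}$ is strictly positive on $\A$ (it is the ground state of a one-dimensional Schr\"odinger operator, cf.\ the proof of Lemma \ref{lem:rhob}), dividing \eqref{hgpm var} by $g$ gives $2\eps^{-2}g^2(r)=\hchem-V(r)+g^{-1}\Delta g(r)$, where $V(r)=([\Om]-\om)^2 r^{-2}-2\Om([\Om]-\om)$. On $\{\tfm>0\}$ the TF minimizer satisfies $2\eps^{-2}\tfm(r)=\Om^2(r^2-\rtf^2)=\Om^2 r^2+\tfchem$, and one checks that $V(r)+\Om^2 r^2=B(r)^2$ with $B(r)=\Om r-([\Om]-\om)r^{-1}$ as in \eqref{vec B}. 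Subtracting yields the exact relation
\[ g^2(r)-\tfm(r)=\frac{\eps^2}{2}\,(\hchem-\tfchem)-\frac{\eps^2}{2}\,B(r)^2+\frac{\eps^2}{2}\,\frac{\Delta g}{g}(r),\qquad r\in\A,\ \tfm(r)>0, \]
so the task splits into controlling the constant $\hchem-\tfchem$, the explicit function $B(r)^2$, and the kinetic term $g^{-1}\Delta g$.

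The first two are handled crudely. Multiplying \eqref{hgpm var} by $g$ and integrating gives $\hchem=\hgpe_{\A,\om}+\eps^{-2}\|g\|_{L^4(\A)}^4$, while $\tfchem=\tfe+\eps^{-2}\|\tfm\|_{L^2(\A)}^2$; the energy comparison $\hgpe_{\A,\om}=\tfe+\OO(\eps^{-2})$ (upper bound via the trial state $\propto\sqrt{\tfm}$, lower bound from the analysis of $\hgpf_{\A,\om}$ in \cite{CRY}) together with $\bigl|\|g\|_{L^4(\A)}^4-\|\tfm\|_{L^2(\A)}^2\bigr|\le\|g^2-\tfm\|_{L^2(\A)}\,\|g^2+\tfm\|_{L^2(\A)}$ and Proposition \ref{cry pro:g prelim} give $\eps^2|\hchem-\tfchem|=\OO(1)$; and $\eps^2 B(r)^2=\OO(1)$ pointwise on $[\rtf,1]$, since $|[\Om]-\om-\Om|+\Om(1-\rtf)=\OO(\eps^{-1})$. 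The right-hand side of the claimed inequality equals $C\eps^2|\log\eps|^2(r^2-\rtf^2)^{-3/2}\tfm(r)=\tfrac12 C\eps^4\Om^2|\log\eps|^2(r^2-\rtf^2)^{-1/2}$, which (using $\eps^4\Om^2|\log\eps|^2\sim 4/(9\pi^2)$ and $r^2-\rtf^2\le 1$) is bounded below by a positive constant, so these two contributions are absorbed once $C$ is taken large enough.

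The remaining term $\tfrac{\eps^2}{2}g^{-1}\Delta g$ is the heart of the matter, and the inner-edge region is where the difficulty lies. Here I would argue by comparison on the reduced annulus $\A'_\delta:=\{r\ge\rtf+\delta\}$ with $\delta=\OO(\eps^{3/2}|\log\eps|^2)$: for a constant $A$ large enough, the functions $\overline\rho:=\tfm+A\eps^2|\log\eps|^2(r^2-\rtf^2)^{-3/2}\tfm$ and $\underline\rho:=(\tfm-A\eps^2|\log\eps|^2(r^2-\rtf^2)^{-3/2}\tfm)_+$ provide, through $\sqrt{\overline\rho}$ and $\sqrt{\underline\rho}$, a super- and a sub-solution of the semilinear equation $-\Delta g+(V-\hchem)g+2\eps^{-2}g^3=0$ satisfied by $g$. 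Verifying this uses the explicit form of $\tfm$ to compute the relevant Laplacians — whose most singular contributions near $r=\rtf$ are exactly the $(r^2-\rtf^2)^{-3/2}$-weighted ones matched by the ansatz — the $\OO(\eps^{-2})$ bounds on $B^2$ and $\hchem-\tfchem$ above, a lower bound $g^2(r)\ge c\,\tfm(r)$ on $\A'_\delta$ (again from Proposition \ref{cry pro:g prelim}), and the fact that the linearised operator has positive principal eigenvalue because $g>0$ is the ground state of $-\Delta+V+2\eps^{-2}g^2-\hchem$. At the outer boundary $r=1$ the $L^\infty$/$L^2$ bounds of Proposition \ref{cry pro:g prelim} together with the Neumann condition, and at the inner circle $r=\rtf+\delta$ the exponential smallness of both $g^2$ (Proposition \ref{cry pro:g exponential smallness}) and $\tfm$, show $\sqrt{\underline\rho}\le g\le\sqrt{\overline\rho}$ there; the maximum principle then propagates these inequalities across $\A'_\delta$, which is the assertion. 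The choice $\delta=\OO(\eps^{3/2}|\log\eps|^2)$ is dictated by the requirement that these barriers still dominate the forcing produced by $\Delta g$ at that scale: immediately below it one enters the Airy-type transition layer at the TF edge, where $g^2$ departs from $\tfm$ already at leading order and no comparison of this kind survives, so showing that the stated range stays on the good side of that layer is the delicate part of the argument.
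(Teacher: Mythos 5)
This proposition is not proved in the present paper at all: it is recalled verbatim from \cite{CRY} (Proposition 2.6 there), so there is no proof here to compare yours with, and I can only judge your argument on its own merits. Your starting identity $g^2-\tfm=\frac{\eps^2}{2}(\hchem-\tfchem)-\frac{\eps^2}{2}B^2+\frac{\eps^2}{2}\Delta g/g$ and the treatment of the $B^2$ term are correct, but already the chemical-potential step is off: Cauchy--Schwarz with $\|g^2-\tfm\|_{L^2(\A)}=\OO(1)$ and $\|\tfm\|_{L^2}\sim(\eps|\log\eps|)^{-1/2}$ only gives $\eps^2|\hchem-\tfchem|=\OO\big((\eps|\log\eps|)^{-1/2}\big)$, not $\OO(1)$. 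This can still be absorbed, but only because the right-hand side of \eqref{cry eq:pointwise bounds} equals $\frac{C}{2}\eps^4\Om^2|\log\eps|^2(r^2-\rtf^2)^{-1/2}\geq C'(\eps|\log\eps|)^{-1/2}$, i.e.\ it is itself large, and not for the reason you give (``bounded below by a positive constant'').

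The genuine gaps are in the comparison argument near the inner edge, precisely where you locate ``the heart of the matter''. First, the boundary data: Proposition \ref{cry pro:g exponential smallness} gives exponential smallness only for $r\leq\rtf-\OO(\eps^{7/6})$, i.e.\ inside the hole; at $r=\rtf+\delta$ with $\delta=\OO(\eps^{3/2}|\log\eps|^2)$ one has $\tfm\sim\eps^{-1/2}$ and nothing is exponentially small there. Since $\overline{\rho}(\rtf+\delta)\sim A\,\eps^{-3/4}|\log\eps|^{-1}$ while the only available pointwise information is $\|g\|_\infty^2\leq C(\eps|\log\eps|)^{-1}\gg\eps^{-3/4}|\log\eps|^{-1}$, the inequality $g^2\leq\overline{\rho}$ on the inner circle of your comparison domain is not provided by Propositions \ref{cry pro:g prelim} and \ref{cry pro:g exponential smallness}; it is essentially part of what is to be proved. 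Second, and worse, $\sqrt{\overline{\rho}}$ is not a supersolution on the stated range: writing $x=r^2-\rtf^2$ and $\kappa=A\eps^2|\log\eps|^2$, the supersolution inequality reduces (after the exact cancellations you already used) to $B^2-(\hchem-\tfchem)+\Om^2\kappa\,x^{-1/2}\geq\Delta\sqrt{\overline{\rho}}/\sqrt{\overline{\rho}}$; but for $x\ll\kappa^{2/3}\sim\eps^{4/3}|\log\eps|^{4/3}$ the barrier is dominated by its correction, $\overline{\rho}\approx\frac{\eps^2\Om^2}{2}\kappa x^{-1/2}$, so that $\Delta\sqrt{\overline{\rho}}/\sqrt{\overline{\rho}}\approx\frac54 x^{-2}$, and the requirement $\Om^2\kappa\,x^{-1/2}\gtrsim x^{-2}$, i.e.\ $x\gtrsim\eps^{4/3}$, fails on the whole band $\eps^{3/2}|\log\eps|^2\lesssim x\lesssim\eps^{4/3}$ — which is exactly the part of the claimed range lying inside the healing layer of width $\Om^{-2/3}$. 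So the ``matching of the most singular contributions'' does not hold at the scale $\eps^{3/2}|\log\eps|^2$, and there is in fact no choice of inner radius at which both the boundary ordering and the supersolution property can be verified from your listed ingredients. With extra input (a local a priori sup bound on $g^2$ near the edge, or barriers blowing up at the inner boundary of the comparison domain) your scheme should yield the estimate for, say, $r\geq\rtf+\eps^{4/3-}$ — which is all the present paper actually uses, since it invokes the bound only for $r\geq\rtf+\eps^{5/6}$ or farther — but as written it does not prove the proposition on the range stated.
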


\vspace{1cm}
\noindent{\bf Acknowledgments.} I thank Xavier Blanc and Sylvia Serfaty for their suggestions on the manuscript, along with Michele Correggi, Jakob Yngvason and Vincent Millot for interesting discussions. The hospitality of the {\it Erwin Schr\"{o}dinger Institute} (ESI) is also gratefully acknowledged. This work is supported by {\it R\'{e}gion Ile-de-France} through a PhD grant.
\vspace{1cm}

\end{document}